\DeclareMathOperator*{\argmax}{arg\,max}
\crefname{equation}{}{}
\newcommand\remove[1]{}
\newtheorem{lemma}{Lemma}[section]
\newtheorem*{lemma*}{Lemma}
\newtheorem{theorem}[lemma]{Theorem}
\newtheorem{corollary}[lemma]{Corollary}
\newtheorem{remark}[lemma]{Remark}
\newtheorem*{corollary*}{Corollary}
\newtheorem*{theorem*}{Theorem}
\newtheorem*{inducthyp*}{Inductive Hypothesis}
\newtheorem*{definition*}{Definition}
\newtheorem{definition}[lemma]{Definition}
\newtheorem*{rem*}{Remark}
\newtheorem{problem}{Problem}
\newcommand\N{\mathbb{N}}
\newcommand\Z{\mathbb{Z}}
\renewcommand{\forall}{\mathrm{\text{ for all }}}
\newcommand{\degree}{\mathtt{degree}}
\newcommand{\splay}{\textsc{Splay}}
\newcommand{\roott}{\mathtt{root}}
\newcommand{\subtree}{\mathtt{subtree}}
\newcommand{\subtreeT}{\mathtt{subtreeT}}
\newcommand{\parent}{\mathtt{parent}}
\newcommand{\son}{\mathtt{children}}
\newcommand{\sonT}{\mathtt{childrenT}}
\newcommand{\neighbor}{\mathtt{neighbor}}
\newcommand{\size}{\mathtt{size}}
\newcommand{\pre}{\mathtt{pred}}
\newcommand{\suc}{\mathtt{succ}}
\newcommand{\leftson}{\mathtt{left}}
\newcommand{\rightson}{\mathtt{right}}
\newcommand{\father}{\mathtt{parentT}}
\newcommand{\final}{\mathtt{final}}
\newcommand{\fire}{\mathtt{fire}}
\newcommand{\nil}{\mathtt{nil}}
\newcommand{\ovf}{\mathtt{overflow}}
\newcommand{\bucket}{\mathtt{Bucket}}
\newcommand{\res}{\mathtt{res}}
\newcommand{\ta}{\mathtt{a}}
\newcommand{\tb}{\mathtt{b}}
\newcommand{\num}{\mathtt{num}}
\newcommand{\bc}{\mathbf{c}}
\newcommand{\bd}{\boldsymbol{d}}
\renewcommand{\bf}{\bm{f}}
\newcommand{\bp}{\boldsymbol{p}}
\newcommand{\bq}{\boldsymbol{Q}}
\newcommand{\vq}{\boldsymbol{q}}
\newcommand{\bx}{\boldsymbol{x}}
\newcommand{\by}{\boldsymbol{y}}
\newcommand{\ba}{\boldsymbol{a}}
\newcommand{\bb}{\boldsymbol{b}}
\newcommand{\findmin}{\textsc{FindMin}}
\newcommand{\Delete}{\textsc{Delete}}
\newcommand{\Insert}{\textsc{Insert}}
\newcommand{\rotate}{\textsc{Rotate}}
\newcommand{\zig}{\textsc{Zig}}
\newcommand{\zigzag}{\textsc{Zig-Zag}}
\newcommand{\zigzig}{\textsc{Zig-Zig}}
\DeclareFontFamily{U}{mathb}{\hyphenchar\font45}
\DeclareFontShape{U}{mathb}{m}{n}{<5> <6> <7> <8> <9> <10> gen * mathb
<10.95> mathb10 <12> <14.4> <17.28> <20.74> <24.88> mathb12}{}
\DeclareSymbolFont{mathb}{U}{mathb}{m}{n}
\DeclareMathSymbol{\rcirclearrow}{\mathbin}{mathb}{'367}
\xdef\csname m\x\endcsname{\noexpand\mathbf{\x}}
\xdef\csname c\x\endcsname{\noexpand\mathcal{\x}}
\newif\ifrandom
\newcommand{\polylog}{{\mathrm{polylog}}}
\newcommand{\rank}{\mathsf{rank}}
\newcommand{\todolater}[1]{}
\definecolor{pinky}{RGB}{255, 51, 204}
\newcommand{\down}{\downarrow}
\newcommand{\upward}{\textsc{SolvePartial}}
\newcommand{\downward}{\textsc{SolveComplete}}
\newcommand{\cnt}{\textsc{ComputeC}}
\newcommand{\updatedsupward}{\textsc{Update}}
\newcommand{\revertds}{\textsc{Revert}}
\newcommand{\mergeupward}{\textsc{Merge}}
\newcommand{\deltasum}{\textsc{DeltaSum}}
\newcommand{\pushup}{\textsc{PushUp}}
\newcommand{\pushdown}{\textsc{PushDown}}
\newcommand{\simulation}{\textsc{Simulation}}
\newcommand{\findoneintree}{\textsc{FindOneInTree}}
\newcommand{\findonebeforetree}{\textsc{FindOneBeforeTree}}
\newcommand{\splitds}{\textsc{Split}}
\newcommand{\deltaquery}{\textsc{DeltaQuery}}
\newcommand{\initds}{\textsc{Initialize}}
\newcommand{\inctime}{\textsc{IncTime}}
\newcommand{\newnode}{\textsc{NewNode}}
\newcommand{\moment}{\texttt{moment}}
\newcommand{\diff}{\texttt{diff}}
\newcommand{\dist}{\texttt{dist}}
\newcommand{\sumdiff}{\texttt{sumdiff}}
\newcommand{\timestamp}{\texttt{timestamp}}
\newcommand{\timemin}{\texttt{timemin}}
\newcommand{\timemax}{\texttt{timemax}}
\newcommand{\indexx}{\texttt{visit\_time}}
\newcommand{\dfn}{\texttt{dfs\_order}}
\newcommand{\presigma}{\Tilde{\sigma}}
\newcommand{\findoneA}{\textsc{FindOneInTree}}
\newcommand{\findoneB}{\textsc{FindOneBeforeTree}}
\newcommand{\True}{\texttt{true}}
\tikzset{edge/.style={->,> = latex'}}
\tikzset{special/.style={fill=red!50,circle,minimum size=0.6cm,inner sep=1pt}}
\newcommand{\downwardm}{\textsc{PathComplete}}
\newcommand{\revertm}{\textsc{PathRevert}}
\newcommand{\delquem}{\textsc{PathQuery}}
\begin{document}

\title{Sandpile Prediction on Undirected Graphs}
\author{
Ruinian Chang\\ Tsinghua University \\ Ruinian127@gmail.com
\and
Jingbang Chen\\ University of Waterloo\\ j293chen@uwaterloo.ca
\and 
Ian Munro \\ University of Waterloo \\ imunro@uwaterloo.ca
\and
Richard Peng \\ Carnegie Mellon University \\ yangp@cs.cmu.edu
\and
Qingyu Shi\\ Peking University\\ qingyuqwq@gmail.com
\and
Zeyu Zheng\\ Carnegie Mellon University\\
zeyuzhen@andrew.cmu.edu
}

\maketitle
\begin{abstract}
The \textit{Abelian Sandpile} model is a well-known model used in exploring \textit{self-organized criticality}. Despite a large amount of work on other aspects of sandpiles, there have been limited results in efficiently computing the terminal state, known as the \textit{sandpile prediction} problem. 

On graphs with special structures, we present algorithms that compute the terminal configurations for sandpile instances in $O(n \log n)$ time on trees and $O(n)$ time on paths, where $n$ is the number of vertices.  Our algorithms improve the previous best runtime of $O(n \log^5 n)$ on trees [Ramachandran-Schild SODA '17] and $O(n \log n)$ on paths [Moore-Nilsson '99]. To do so, we move beyond the simulation of individual events by directly computing the number of firings for each vertex. The computation is accelerated using splittable binary search trees. In addition, we give algorithms in $O(n)$ time on cliques and $O(n \log^2 n)$ time on pseudotrees.

On general graphs, we propose a fast algorithm under the setting where the number of chips $N$ could be arbitrarily large. 
We obtain a $\log N$ dependency, improving over the $\mathtt{poly}(N)$ dependency in purely simulation-based algorithms.
Our algorithm also achieves faster performance on various types of graphs, including regular graphs, expander graphs, and hypercubes. We also provide a reduction that enables us to decompose the input sandpile into several smaller instances and solve them separately. 

\end{abstract}

\tableofcontents
\newpage
\section{Introduction}

The concept of \textit{self-organized criticality} was first proposed by Bak, Tang, and Wiesenfeld in 1987 \cite{bak1987self}. It helps to understand how power-law distributions arise and how complex systems inherently exhibit critical behavior, encapsulating the interaction between local activities and global dynamics. 
It is often referenced when studying many natural phenomena, such as earthquakes, forest fires, and avalanches \cite{bak2013nature}. It has also been identified and scrutinized across a diverse range of disciplines such as sociology \cite{dmitriev2021identification, kron2009society}, geophysics \cite{smalley1985renormalization,smyth2019self}, and neuroscience \cite{linkenkaer2001long,beggs2003neuronal,chialvo2004critical}. Self-organized criticality has also played a significant role in the understanding of economic systems \cite{biondo2015modeling,scheinkman1994self}, evolutionary biology \cite{phillips2014fractals}, materials science \cite{ramos2009avalanche}, astrophysics \cite{aschwanden2011self}, statistical physics \cite{dhar2006theoretical}, and epidemiology \cite{saba2014self}.


The \textit{Abelian sandpile} model, which is the first discovered dynamical system exhibiting self-organized criticality, is frequently utilized as a comprehensible and intuitive model for the study of self-organized criticality. Dhar~\cite{dhar1990self} offers a generalized interpretation of the Abelian sandpile model on finite graphs, also known as the chip-firing game on graphs \cite{bjorner1991chip}. In this model, chips are added to the vertices of the graph in the beginning, referred to as the initial configuration. If any vertex $x$ has at least $\degree(x)$ chips, it may distribute a single chip to each neighboring vertex. This distributing process is called a ``firing''. The instance either terminates after all possible firings or loops infinitely. 

The sandpile model has attracted considerable attention \cite{klivans2018mathematics}. Contemporary research has delved into various aspects of the model, encompassing topics such as sandpile groups \cite{chen2019sandpile,meszaros2020distribution,zhou2021sandpile,alfaro2021structure}, predictability \cite{montoya2019abelian,montoya2022predictability}, special variants of the model \cite{dukes2019abelian,kim2020stochastic,dukes2021sandpile,eckmann2023abelian}, algebraic connections \cite{abrams2023connections}, and its impact on real-world scenarios \cite{martucci2021hospital}.

Bjorner et al. \cite{bjorner1991chip} showed any firing order leads to the same result.
This raises a natural algorithmic problem: \textit{Sandpile Prediction}.

\begin{problem}[Sandpile Prediction]
\label{problem:prediction}
Given a graph $G$ and an initial configuration $\sigma$, the sandpile prediction problem is to determine whether the sandpile instance $S(G,\sigma)$ terminates and to compute its corresponding terminal configuration if it exists.
\end{problem}

This prediction problem holds significant importance in the fields of physics \cite{garber2009predicting}, computer science \cite{montoya2011computational}, and mathematics \cite{biggs1997algebraic}. Moreover, the sandpile prediction problem has direct connections with practical applications such as load balancing \cite{rabani1998local} and the derandomization of models like internal diffusion-limited aggregation \cite{diaconis1991growth,lawler1992internal}. In general, sandpile prediction unfolds into two different lines of research, one focusing on mathematically bounding the number of firings and the other on algorithmically predicting the result faster than mere simulation (\textit{prediction algorithms}). 

Despite the abundant literature on other aspects of the sandpile model, there have been limited results in developing prediction algorithms. 
On structured graphs including trees \cite{goles1996sand} and high dimensional grids \cite{moore1999computational}, the prediction problem has been shown to be P-Complete, which means it is difficult to develop parallel algorithms. 
On general graphs, there is no algorithm that works faster than simulation.

\subsection{Results}
In this paper, our work is divided into two types: solving the sandpile prediction problem on structured graphs and general graphs. 
\subsubsection{Sandpile Prediction on Structured Graphs}
In this paper, we solve the sandpile prediction problem on various structured graphs. We believe studying solving sandpile prediction on structured graphs is a necessary step for developing efficient algorithms on arbitrary graphs. Starting from trees and paths, we propose new algorithms that outperform the previous best runtimes. Our algorithm for sandpile prediction on trees (\cref{sec:overview}) achieves a time complexity of $O(n \log n)$ and requires only $O(n)$ memory, where $n$ represents the number of vertices in the tree. 

\begin{restatable}[Sandpile Prediction on Trees]{theorem}{maintheorem}
\label{theorem:main}
Given a sandpile instance $S(G, \sigma)$ such that $G$ is a tree, there is an algorithm that determines whether $S$ terminates and computes the terminal configuration of $S$ in $O(n \log n)$ time, with $O(n)$ memory.
\end{restatable}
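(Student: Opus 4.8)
The plan is to avoid simulating firings one at a time and instead directly compute, for each vertex $v$, the total number of times $f(v)$ that $v$ fires during stabilization. Once the firing vector $\bf = (f(v))_{v \in V}$ is known, the terminal configuration is obtained in $O(n)$ time from the identity $\sigma'(v) = \sigma(v) - \deg(v) f(v) + \sum_{u \sim v} f(u)$, which on a tree means $\sigma'(v) = \sigma(v) - \deg(v) f(v) + \sum_{u \in \son(v) \cup \{\parent(v)\}} f(u)$. Termination itself can be detected as a byproduct: the instance fails to terminate iff every vertex would have to fire infinitely often, which on a tree corresponds to a global counting obstruction (e.g. the total number of chips being at least the number of edges), so I would first dispatch the non-terminating case by a simple chip count and then assume termination in what follows.

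The core idea is a recursion over the rooted tree. Root the tree arbitrarily. For a vertex $v$ with subtree $T_v$, consider how $T_v$ behaves as a function of the number of chips it receives from its parent edge: if the parent fires $k$ times, vertex $v$ (and recursively its subtree) will fire some number of times $g_v(k)$ and will send back up the parent edge some number of chips. The key structural fact I expect to establish — this is the heart of the argument — is that this response function is piecewise linear with slopes in a bounded set (essentially $g_v$ is a nondecreasing piecewise-linear function whose breakpoints can be maintained compactly), so that the ``interface'' between $T_v$ and the rest of the tree is summarized by a small amount of data rather than the full history of firings. Concretely, I would process vertices from the leaves up, at each vertex $v$ combining the response functions of its children into a single response function for $v$; a leaf has a trivial response function, and the combination at an internal vertex is a convolution-like merge of the children's piecewise-linear responses together with $v$'s own threshold $\deg(v)$. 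At the root there is no parent, so the root's response function, evaluated at the actual initial configuration, pins down the root's firing count; then a second top-down pass propagates the now-known number of firings of each parent down to its children, evaluating each child's response function at that argument to recover $f(v)$ for every $v$.

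To get the claimed $O(n \log n)$ time and $O(n)$ memory, the piecewise-linear response functions must be stored and merged efficiently — this is where the splittable binary search trees advertised in the abstract come in. I would represent each response function by its sorted list of breakpoints in a balanced BST that supports split and merge (and lazy affine updates on ranges, to account for shifting a whole subtree's response when chips pass through $v$), so that merging the responses of $v$'s children costs $O((\text{smaller size}) \cdot \log n)$ by the standard ``merge small into large'' / weighted-union heuristic; summed over the tree this telescopes to $O(n \log n)$. The top-down evaluation pass is then $O(\log n)$ per vertex for a point query into the corresponding BST, again $O(n \log n)$ total, and each vertex's data structure is destroyed once consumed so total memory stays $O(n)$.

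The main obstacle I anticipate is proving the structural claim cleanly: that the subtree response function $g_v$ really is piecewise linear with a controlled number of pieces and that the merge operation preserves this with only an amortized-small increase in complexity. In particular I would need to argue carefully that breakpoints are created at a rate that telescopes (each vertex contributes $O(1)$ new breakpoints, or an amortized argument bounds the total), that the slopes stay integral and bounded (so the ``chips returned up the parent edge'' function is well-behaved and monotone), and that the Abelian property legitimately lets us decouple the subtree's internal dynamics from the order in which parent-edge chips arrive. Handling the boundary/termination edge cases — a subtree that absorbs all incoming chips without ever saturating, versus one that ``overflows'' back up — and making the lazy affine range updates on the BSTs consistent with these cases is the delicate bookkeeping that the rest of the proof reduces to.
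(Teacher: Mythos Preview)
Your plan is essentially the paper's: the response function $g_v(k)$ is the paper's $\delta(v,k)$, the two passes are \textsc{SolvePartial} and \textsc{SolveComplete}, the breakpoints are the paper's ``key pairs,'' and the merge-by-size with lazy affine tags matches the data structure in Section~5. The structural fact you anticipate is exactly Lemma~3.2: $\delta(u,k+1)-\delta(u,k)\in\{0,1\}$, so the slopes are not just bounded but literally $\{0,1\}$; this is what makes the breakpoint representation clean and what bounds the number of new breakpoints created at $u$ by $\deg(u)-1$, hence $O(n)$ in total.

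There is one genuine gap in your time analysis. You write that merging the smaller tree of size $s$ into the larger costs $O(s\log n)$ and that this ``telescopes to $O(n\log n)$.'' It does not: each element participates in $O(\log n)$ merges as the smaller side, so $O(s\log n)$ per merge sums to $O(n\log^2 n)$. The paper closes this gap by using splay trees and invoking the Dynamic Finger Theorem (Theorem~5.3 / Theorem~C.2): because the smaller set's elements are inserted in sorted order, inserting $s$ items into a tree of size $m$ costs $O\!\bigl(s\log\frac{m+s}{s}\bigr)$ rather than $O(s\log m)$, and it is \emph{this} bound that telescopes to $O(n\log n)$ (Corollary~C.1). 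If you want a generic balanced BST rather than splay, you need one with the finger-search property; otherwise your argument as stated only yields $O(n\log^2 n)$.
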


Compared to the previous fastest algorithm \cite{RS17} that runs in $O(n \log^5 n)$ time, our algorithm takes a distinct approach, not relying on the decomposition of trees into paths. Instead, we compute the number of firings that occur at any given vertex $u$. The terminal configuration can in turn be constructed.

When the input graph is a path, we can also slightly modify our algorithm to run in linear time (\cref{sec:path}). This improvement surpasses the previous result presented in \cite{moore1999computational}, which required $O(n \log n)$ time to compute the terminal configuration. We also provide an algorithm for cliques that runs in linear time as well (\cref{sec:clique}).

\begin{restatable}[Sandpile Prediction on Paths]{theorem}{paththeorem}
\label{theorem:main-path}
Given a sandpile instance $S(G, \sigma)$ such that $G$ is $\text{Path}_n$, there is an algorithm that determines whether $S$ terminates and computes the terminal configuration of $S$ in $O(n)$ time, with $O(n)$ memory.
\end{restatable}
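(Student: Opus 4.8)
The plan is to specialize the tree algorithm of Theorem~\ref{theorem:main} to the case $G = \text{Path}_n$ and argue that on a path the per-vertex overhead drops from $O(\log n)$ to $O(1)$ amortized, yielding $O(n)$ total time. The underlying idea, as in the tree case, is not to simulate individual firings but to compute directly, for each vertex $u$, the number of times $u$ fires; the terminal configuration is then recovered by the linear relation $\sigma'(u) = \sigma(u) - \degree(u)\cdot f(u) + \sum_{v \sim u} f(v)$, which on a path reads $\sigma'(u) = \sigma(u) - 2f(u) + f(u-1) + f(u+1)$ at interior vertices (with the obvious boundary modifications). So everything reduces to computing the firing vector $f$ and, first, deciding termination.

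First I would recall the termination test: a sandpile on a connected graph terminates iff the total number of chips $N$ is strictly less than the number of edges $m$ (with appropriate handling of sinks / the standard argument that otherwise firing never stops). On $\text{Path}_n$ this is just $N < n-1$, checkable in $O(n)$ time, so assume from now on that the instance terminates. Next I would set up the structure exploited on a path: process the path left to right, and maintain the invariant that after we have ``settled'' the prefix $1,\dots,k$, the firing counts $f(1),\dots,f(k)$ are determined as affine functions of the single unknown $f(k+1)$ (the number of times the next vertex fires), because the only chips entering the prefix from outside come from vertex $k+1$. Concretely, $f(i) = a_i + b_i\, f(k+1)$ for coefficients $a_i,b_i$ obtained by unrolling the stabilization condition $0 \le \sigma'(i) < 2$ at each vertex in turn; each unrolling step is $O(1)$ arithmetic, so after one left-to-right sweep we have all the $f(i)$ as affine functions of $f(n)$ (the rightmost vertex's count, which interacts only with vertex $n-1$).

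Then a symmetric constraint sweep — or simply imposing the boundary stabilization condition at vertex $n$ together with the requirement that every $f(i)\ge 0$ and every $\sigma'(i)\in\{0,1\}$ — pins down the free parameter. The key combinatorial observation is that the feasible set of the integer parameter is an interval (monotonicity of the affine relations in the single variable), so the correct value can be found by taking the tightest of the $O(n)$ linear inequalities, again in $O(n)$ total. Having fixed $f(n)$, back-substitute to get every $f(i)$ and then every $\sigma'(i)$, all in $O(n)$. The memory is clearly $O(n)$ since we store only the arrays $a_i,b_i,f(i),\sigma'(i)$.

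The main obstacle I expect is \emph{not} the arithmetic but making the ``affine in one unknown'' invariant actually correct: on a path a vertex may fire, receive chips back from its neighbor, fire again, etc., so the naive unrolling must account for the fact that $\sigma'(i)$ depends on $f(i-1)$, $f(i)$, and $f(i+1)$ simultaneously, not just on the prefix. The resolution is to use the global relation from the tree algorithm — where the firing count at $u$ is governed by the flow across the single tree edge separating $u$'s subtree from the rest — which on a path means the number of chips that ever cross the edge $(k,k+1)$ equals $f(k)$ minus (chips absorbed in the prefix), an $O(1)$-updatable quantity as $k$ increases. I would cite the corresponding structural lemma from \cref{sec:overview} and then verify that, because each vertex of a path has at most two neighbors, the binary-search-tree bookkeeping that costs $O(\log n)$ per event in the tree algorithm collapses to maintaining two running counters, giving the $O(n)$ bound. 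The remaining work is a careful but routine case analysis of the two endpoints and of the degenerate cases $n \le 2$.
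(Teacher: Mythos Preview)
Your central structural claim is incorrect: the firing counts $f(1),\ldots,f(k)$ are \emph{not} affine functions of the single unknown $f(k+1)$. The correct relation from the tree framework is \Cref{lemma:value-of-c}, which on a path (rooted at $n$) reads $f(k)=\bc^{\down}(k)+\delta(k,f(k+1))$, and $\delta(k,\cdot)$ is a step function with increments in $\{0,1\}$ (\Cref{lemma:delta-differs-at-most-one}), not a linear map. For instance, on the $3$-path with $\sigma=(0,0,0)$ one has $\delta(2,0)=\delta(2,1)=0$, $\delta(2,2)=1$, $\delta(2,3)=2$, so $f(2)=\max(0,f(3)-1)$ is piecewise linear with a break, not of the form $a_2+b_2 f(3)$. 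Equivalently, in your unrolling $f(i{+}1)=2f(i)-f(i{-}1)-\sigma(i)+\sigma'(i)$ you accumulate one binary unknown $\sigma'(i)\in\{0,1\}$ per interior vertex, so the ``single free parameter then tightest of $O(n)$ linear inequalities'' plan cannot work as written.

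Your fallback --- that the splay-tree bookkeeping ``collapses to maintaining two running counters'' --- is also not what happens, and you give no argument for it. The paper keeps the splay tree of key pairs intact and obtains $O(n)$ by a different mechanism: on a path every \mergeupward{} is with a single child (hence free), every insertion in \updatedsupward{} is at rank~$1$, and every access in \cnt{} and in the modified downward pass is to the minimum-rank node; thus consecutive accesses have rank difference $O(1)$, and the Dynamic Finger Theorem (\Cref{theorem:dynamic-finger-theorem}) gives $O(n)$ total. The one place a running counter genuinely replaces a search is in the modified \downward{}, via a monotonicity lemma (\Cref{chain:property}) stating that key pairs counted at the parent remain counted at the child --- but this sits on top of the splay tree and the dynamic-finger analysis, not in place of them. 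Your proposal is missing both of these ingredients.
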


We also believe our method of using splittable search trees accelerating dynamic programming (\cref{sec:ds}) is of independent interest for developing algorithms on structured graphs.

\subsubsection{Sandpile Prediction on General Graphs} 
\paragraph{Simulation-based Algorithms} We first study the performance of simulation-based approaches on general graphs. Such approaches can still enjoy speedups: vertices with a lot of chips can fire multiple times at once. Although there are results on bounding the number of firings or moving chips, there is no prior work on analyzing the performance of these simulation-based algorithms on either general graphs or special graphs.


\cite{tardos1988polynomial} shows that the number of firings on any sandpile that terminates is $\Theta(n^4)$, which can be regarded as a bound for simulation as well. However, it is much different if we consider the generalized sandpile model with sinks. Sinks are vertices that cannot fire and do not affect the uniqueness of the terminal configuration. When a sandpile contains sinks, it always terminates \cite{klivans2018mathematics}, and the number of firings can be $\mathtt{poly}(N)$ \cite{holroyd2008chip} where $N$ denotes the total number of chips. Therefore, the performance of simulation-based approaches could be significantly worsened.

We propose a new simulation-based algorithm that works reasonably even with sinks (\cref{sec:simu}), addressing the above two concerns. It follows a simple greedy strategy and is very easy to implement. To better capture the performance of simulation-based algorithms, we analyze in terms of the number of iterations. An iteration means we execute one or several firings simultaneously on a single vertex. Note that as all sinks can be merged into one without affecting the result, we assume there is one sink in the graph.

\begin{restatable}[Sandpile Prediction on General Graphs]{theorem}{simulation}
\label{theo:simu}
Given a sandpile instance $S(G,\sigma)$ that contains exactly one sink, there is a simulation-based algorithm that terminates in $O(Rm^2\log(nN))$ iterations, where $m$ denotes the number of edges, $N$ denotes the total number of chips, and $R$ denotes the maximum effective resistance between the sink and any other vertex.    
\end{restatable}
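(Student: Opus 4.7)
The plan is to design a greedy simulation-based algorithm and bound its iteration count via a potential argument that exploits the effective resistance to the sink. The algorithm itself should be the natural batched simulation: maintain the current configuration $\sigma_t$; at each iteration, select a non-sink vertex $v$ with $\sigma_t(v)\ge \degree(v)$ (say, the one maximizing $\sigma_t(v)-\degree(v)$, or $\sigma_t(v)/\degree(v)$), and fire $v$ exactly $\lfloor \sigma_t(v)/\degree(v)\rfloor$ times in that single iteration. By the abelian property, this yields the correct terminal configuration regardless of selection; the rule only affects the iteration count.

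The first step is to establish termination and a raw firing bound. Writing $x\in \mathbb{Z}_{\ge 0}^{V\setminus\{s\}}$ for the firing vector and $L_{VV}$ for the grounded Laplacian, the terminal $x$ solves $L_{VV}\,x = \sigma - \sigma^{\ast}$, so $\|x\|_\infty \le \|L_{VV}^{-1}\|_\infty\,N = O(R\cdot N)$ via the standard identification $(L_{VV}^{-1})_{vv}=r_{\text{eff}}(v,s)$ and $(L_{VV}^{-1})_{uv}\le r_{\text{eff}}(v,s)$. Combined with $|V|=O(n)$, this gives a finite but $\mathrm{poly}(N)$ bound on total firings, confirming termination and setting the target: replace the $N$-dependence with $\log N$.

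The main step is the potential analysis. I would introduce a potential of the form
\[
\Phi_t \;=\; \sum_{v\ne s} r_{\text{eff}}(v,s)\, g(\sigma_t(v)),
\]
where $g$ is either the identity (for a linear/additive analysis) or a concave function like $\log(1+\cdot)$ (for a multiplicative analysis that naturally produces a $\log(nN)$ factor). When vertex $v$ fires $k$ times, the harmonic identity $\degree(v)\,r_{\text{eff}}(v,s) - \sum_{u\sim v} r_{\text{eff}}(u,s)=1$ controls the change in $\Phi_t$, guaranteeing a decrease proportional to $k$ in the additive version, and a multiplicative decrease of factor $1-\Omega(1/(Rm^2))$ in the multiplicative version, because the greedy rule ensures the fired vertex carries a constant fraction of the total weighted overflow, and the weights sum to $O(Rm)$ while the overflow contributes an $O(m)$ amplification through the neighbor-propagation term. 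The bound $\Phi_0 = O(RnN)$ and $\Phi_{\min}\ge 1$ then yield $O(Rm^2\log(nN))$ iterations.

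The hardest part will be coupling the greedy rule to the multiplicative decrease cleanly, since firings can temporarily concentrate at far-from-sink vertices where the potential weight is largest. I expect to handle this by a phase argument: in phase $k$, fire only vertices whose weighted overflow $r_{\text{eff}}(v,s)(\sigma_t(v)-\degree(v))$ exceeds a threshold $T_k$, show that emptying the phase takes $O(Rm^2)$ iterations using the harmonic identity above (each such iteration drains a $1/\Theta(m)$-fraction of the threshold mass along an electrical flow of total energy $O(R)$), and then halve $T_k$, giving $O(\log(nN))$ phases. The remaining verification — that a vertex above threshold always exists until the configuration is stable, and that the per-iteration work is $O(1)$ against the stated potential accounting — is mostly bookkeeping once the harmonic identity and the resistance bound on $\|x\|_\infty$ are in hand.
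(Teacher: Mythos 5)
Your algorithm and the overall architecture (batched greedy firing, a resistance-flavored quantity that decreases multiplicatively, hence $O(Rm^2\log(nN))$ iterations) match the paper, but the key step of your potential argument does not go through as stated. The ``harmonic identity'' $\degree(v)\,r_{\mathrm{eff}}(v,s)-\sum_{u\sim v}r_{\mathrm{eff}}(u,s)=1$ is false: it holds for the expected hitting time $w(v)=\E_v T_s$ (where $(Lw)(v)=\degree(v)$ for $v\neq s$), not for $v\mapsto r_{\mathrm{eff}}(v,s)$, which is the \emph{diagonal} of $L_{VV}^{-1}$ rather than a column and is not harmonic off the sink. Concretely, on the path $s-a-b$ one gets $2\cdot r_{\mathrm{eff}}(a,s)-\bigl(r_{\mathrm{eff}}(s,s)+r_{\mathrm{eff}}(b,s)\bigr)=2-2=0$, so firing $a$ leaves your potential $\sum_v r_{\mathrm{eff}}(v,s)\,\sigma(v)$ unchanged; the additive version therefore fails, and the multiplicative version inherits the same problem. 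Moreover, the claim that the greedily chosen vertex ``carries a constant fraction of the total weighted overflow'' is not justified and is not what actually drives the bound.

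The paper's proof is simpler and you are only one reduction away from it. Take as potential the number $t$ of chip moves still remaining to reach the terminal configuration; by Holroyd et al.\ this satisfies $t\le 2mN_cR$ where $N_c$ is the current chip count (your $\|x\|_\infty\le O(RN)$ bound via $L_{VV}^{-1}$ is a valid substitute here, and the hitting-time potential with weights $w(v)=\E_v T_s\le 2mR$ gives yet another derivation). Then a pure averaging argument, with no weights at all, shows the greedy batch size satisfies $k=\max_v\lfloor\sigma_v/\degree(v)\rfloor\ge \sum_v\sigma_v/\sum_v\degree(v)-1=N_c/(2m)-1$, so whenever $N_c\ge 4m$ one iteration removes at least a $\Omega(1/(Rm^2))$ fraction of the remaining moves; the regime $N_c<4m$ is handled by the raw bound. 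Your threshold/phase scheme is unnecessary once this averaging step is in place, and without it (or the hitting-time identity) the multiplicative decrease you assert is unproven.
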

\cref{theo:simu} shows a logarithmic dependency on $N$, which is largely distinct from the $\mathtt{poly}(N)$ bound of chips moving \cite{holroyd2008chip}. Thus, our algorithm highly reduces the effect when the number of chips becomes extremely large. Moreover, our algorithm has a better performance guarantee if the input graph has special structures. As shown in \cref{table:simulation}, we provide several results for our algorithm on several special graph classes.

\renewcommand{\arraystretch}{2}
\begin{table}[H]
\centering
\begin{tabular}{|c|l|c|}\hline
\textbf{}                                  \textbf{Graph Classes} & \makecell[c]{\textbf{Number of Iterations}}                 & \textbf{Formal Statement}                      \\ \hline
general graphs                              & $O(Rm^2\log(nN))$                   &     \cref{theo:simu}          \\ \hline
$d$-regular graphs                          &   $O(n^3\log(nN))$                   &   \cref{theo:simuRegular}     \\ \hline
$\epsilon$-vertex expanders with minimum degree $\delta$                &     $O(m^2\log(nN)/(\delta +1))$     &     \cref{coro:simuExpander} \\ \hline
$d$-regular $\epsilon$-vertex expanders     &    $O(n^2\log n\log(nN))$            &      \cref{theo:simuRegularExpander}  \\ \hline
hypercubes                                   &       $O(n^2\log(nN)) $              &      \cref{theo:hyperCube}            \\ \hline
graphs with maximum degree at most $\Delta$ &     $O(\Delta^2n^3\log(nN))   $      &      \cref{coro:simuBoundedDegree}    \\ \hline
planar graphs                               &    $O(n^3\log(nN))$                  &       \cref{coro:simuPlanar}       \\ \hline
\end{tabular}
\caption{Algorithmic Result on Various Structured Graphs}
\label{table:simulation}
\end{table}

\paragraph{Interactions with Graph Decomposition}
When the input graph is large, a common idea is to decompose it into several subgraphs and solve the problem on them separately. Moreover, if subgraphs in the decomposition have special structures, we might be able to apply specific algorithms to them. Therefore, it is natural to consider if we can solve the sandpile prediction problem in such a way. We answer this question affirmatively in \cref{sec:reduction}. We develop a reduction scheme that works on general graphs (\cref{theorem:general}). 

For computing the terminal configuration on a general graph, we are able to reduce the problem into predicting multiple sandpile instances with sink vertices by removing some vertices. Specifically, we provide a reduction that transforms the prediction problem on an arbitrary graph into problems on its subgraphs separated by any vertex set $P$. The reduction gives a time complexity of $O(\log^{|P|} n \cdot T)$ where $T$ denotes the total time to solve the prediction on each subgraph.


\subsection{Related Work}
\paragraph{Bounding the Number of Firings} Numerous studies estimate the number of chip firings necessary to arrive at a terminal configuration. This aspect has been examined for various classes of directed graphs with sinks \cite{montoya2009complexity}. Eriksson et al. \cite{eriksson1991no} showed that no polynomial bound exists for general directed graphs without sinks. Considering undirected graphs without sinks, Tardos et al. \cite{tardos1988polynomial} proposed a bound of $\Theta(n^4)$ for the firing number in a graph with $n$ vertices and $m$ edges. An alternative bound was offered by Bjorner et al. \cite{bjorner1991chip}, suggesting that a maximum of $nk/\lambda_2$ firings can occur, where $k$ represents the total number of chips and $\lambda_2$ stands for the smallest non-trivial eigenvalue of the graph Laplacian. Holroyd et al. \cite{holroyd2008chip} presented an improved bound for sandpiles with sinks, stating that the number of chip movements can be at most $2NmR$, where $N$ is the number of chips, and $R$ is the maximum effective resistance between the sink and any vertex. For an $n \times n$ grid, Babai et al. \cite{babai2007sandpile} introduced the concept of the \textit{transience class} to explore the maximum number of chips to be added to a sandpile instance with sinks before entering a recurrent state, initially providing an $O(n^{30})$ polynomial bound. Choure et al. \cite{choure2012random} enhanced the upper bound to $O(n^7)$ and also proved a lower bound of $\Omega(n^3)$. Durfee et al. \cite{durfee2018nearly} used techniques from electrical networks to offer a nearly tight upper bound of $O(n^4 \log^4 n)$ and a lower bound of $\Omega(n^4)$. This work was also extended to $n^d$-sized $d$-dimensional grids, providing an upper bound of $O(n^{3d-2}\log^{d+2}n)$ and a lower bound of $\Omega(n^{3d-2})$.

\paragraph{Non-Simulation Approaches}
This line of work is dedicated to calculating the terminal configuration of sandpile instances without sinks faster than by simple simulation.
Moore and Nilsson \cite{moore1999computational} proposed an algorithm that solves the prediction on a path of length $n$ in $O(n \log n)$ time.
They also provided a parallel algorithm that runs in $O(\log^3 n)$, showing that the sandpile prediction on a path is in $\mathbf{NC}^3$.

In contrast, the sandpile prediction problem has been classified as P-Complete for various classes of graphs, including tree structures \cite{goles1996sand} and grids with a dimension exceeding three \cite{moore1999computational}. Thus, an $O(\polylog(n))$ depth parallel algorithm would imply $\mathbf{P}=\mathbf{NC}$.
Ramachandran and Schild proposed an algorithm that solves the sandpile prediction problem on trees in $O(n \log^5 n)$ time \cite{RS17}.


\section{Preliminaries}
\label{sec:prelim}

We assume all graphs are undirected, unweighted, and simple (no self-loop or duplicate edge). For a graph $G$, we use $V(G)$ and $E(G)$ to denote the vertex set and edge set, respectively. For any vertex $v \in V(G)$, we define $\degree(v)$ as the number of neighbors, and $\neighbor(v)$ as the set of the neighbor vertices of vertex $v$. $n=|V(G)|$ refers to the number of vertices. As is standard, we assume the word-RAM model with $\Theta(\lg n)$-size words. 



Given a graph $G$ and a configuration vector $\sigma\in \mathbb{N}^{n}$, we define the sandpile instance on them as $S(G,\sigma)$. Configurations represent the number of chips on each vertex. A vertex $v$ is said to be full if and only if $\sigma_v \geq \degree(v)$. A firing operation is defined on any full vertex $v$, which will change $\sigma$ in the following way:

\[\sigma'_u = \begin{cases}\sigma_u - \degree(u) & u = v \\ \sigma_u + 1 & u \in \neighbor(v) \\ \sigma_u & \text{otherwise}\end{cases}.\]

The configuration $\sigma'$ obtained by firing any vertex $u$ in a configuration $\sigma$, denoted by $\fire(\sigma, u)$, is called the successor of $\sigma$. By definition, the sum of chips will remain constant after any firing operation, i.e. $\sum_{u \in V} \sigma_u = \sum_{u \in V} \sigma'_u$ for any configuration $\sigma$ and its successor $\sigma'$.

The firing operation can be viewed as adding a vector to the configuration vector. We use $F(u)$ to denote the following vector of length $n$:

$$F(u)_v = \begin{cases} 1 & v \in \neighbor(u) \\
-\degree(u) & v = u\\
0 & \text{otherwise} \end{cases}, v \in V(G)$$ Then the configuration obtained by firing vertex $u$ is $\sigma + F(u)$. Note that $F(u)$ is a column vector of $G$'s Laplacian matrix.

For a given sandpile instance $S = (G, \sigma)$, if there is no full vertex, we say $\sigma$ is a terminal configuration. A sandpile instance is a terminal instance if it is possible to perform a finite number of firing operations to obtain a terminal configuration. Otherwise, we call it a recurrent instance.

In solving the sandpile prediction problem, there is one key background theorem:
\begin{theorem}[\cite{bjorner1991chip}]
\label{theorem:orders}
    For any terminal instance of the sandpile prediction problem, the terminal configuration and the number of times that each vertex fires are both unique and independent of the order of firings.
\end{theorem}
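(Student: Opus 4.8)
The plan is to prove Theorem~\ref{theorem:orders} --- the uniqueness of the terminal configuration and of the per-vertex firing counts --- via the classical ``abelian'' diamond/confluence argument, and I would organize it around a single local exchange lemma.

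First I would set up notation for \emph{firing sequences}. A legal firing sequence from $\sigma$ is a (finite or infinite) word $w = v_1 v_2 \cdots$ over $V(G)$ such that each $v_i$ is full in the configuration reached after applying $v_1, \dots, v_{i-1}$. For a finite sequence $w$ let $n_w \in \N^{V(G)}$ be its \emph{firing vector}, $(n_w)_u$ = number of occurrences of $u$ in $w$; note the resulting configuration is exactly $\sigma + \sum_u (n_w)_u F(u)$ by the additivity remark in the preliminaries, so the endpoint depends on $w$ only through $n_w$. The key local step is the exchange (or ``strong confluence'') lemma: \emph{if $u \neq v$ are both full in $\sigma$, then firing $u$ then $v$, and firing $v$ then $u$, are both legal and reach the same configuration.} Commutativity of the endpoint is immediate from $\fire(\fire(\sigma,u),v) = \sigma + F(u) + F(v) = \fire(\fire(\sigma,v),u)$. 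Legality of the second firing in each order uses that a firing at $u$ can only \emph{increase} $\sigma_v$ for $v \in \neighbor(u)$ and leaves $\sigma_v$ unchanged for $v \notin \neighbor(u) \cup \{u\}$, so if $v$ was full in $\sigma$ it is still full in $\fire(\sigma,u)$. I would also record the monotone ``permutation'' corollary: any legal finite sequence can be reordered to any other ordering of the same multiset of vertices and stays legal with the same endpoint --- this follows by the standard bubble-sort argument, swapping adjacent unequal letters using the exchange lemma (adjacent equal letters need no swap).

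Next comes the main combinatorial core, the \emph{truncation} (or ``prefix domination'') lemma: \emph{if $w$ is any legal sequence from $\sigma$ (possibly the one reaching a terminal configuration) and $w'$ is any finite legal sequence from $\sigma$ whose firing vector is coordinatewise $\le n_w$ restricted to prefixes --- more precisely, for every prefix $p'$ of $w'$ and every vertex $u$, $(n_{p'})_u \le (n_w)_u$ --- then $w'$ can be extended, using only fires of vertices with ``remaining budget'', to a sequence with firing vector exactly $n_w$ (when $w$ is finite) or to arbitrarily long legal sequences (when $w$ is infinite).} The heart is: given a finite legal $w'$ with $n_{w'}$ dominated by some legal $w$, if $w'$ is not yet ``saturated'', there is a vertex that is still legal to fire and still has budget. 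I would prove this by comparing $w'$ against the reordering of $w$ (via the permutation corollary) that begins with the letters of $w'$: take the first place where $w$ and this reordered copy diverge; the divergent letter of $w$ is full at that point in $w$, and by the domination hypothesis its ``under-firing'' in $w'$ together with the fact that every vertex's current chip count is $\sigma_u + \sum_v (n_{w'})_v F(u)_v$ forces it to still be full after $w'$. This is the step I expect to be the main obstacle: getting the bookkeeping exactly right so that ``$w'$ under-fires relative to $w$ coordinatewise'' genuinely implies the divergent vertex is full, handling the in/out-of-neighborhood cases of $F$ carefully, and doing it uniformly for finite and infinite $w$.

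With the truncation lemma in hand the theorem is short. For uniqueness of firing counts: suppose $w$ and $\tilde w$ are two legal sequences from $\sigma$ reaching terminal configurations. Apply the truncation lemma with target $w$ to the empty sequence, then keep appending letters of $\tilde w$ one at a time: each stays dominated by $n_w$ because if some prefix of $\tilde w$ ever tried to fire a vertex $u$ more than $(n_w)_u$ times, at that moment $u$ is full in the $\tilde w$-run but has been fired $(n_w)_u$ times already, and comparing against the reordering of $w$ starting with the preceding letters shows $u$ is \emph{not} full there --- contradiction. Hence $n_{\tilde w} \le n_w$ coordinatewise; by symmetry $n_w \le n_{\tilde w}$, so $n_w = n_{\tilde w}$. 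Then the terminal configuration is $\sigma + \sum_u (n_w)_u F(u)$, which is therefore also unique. The same domination argument shows no infinite legal sequence can coexist with a terminating one (an infinite sequence would eventually violate domination by a fixed finite $n_w$, forcing a vertex to be fired while not full), so the notions of terminal/recurrent instance are well defined. I would close by remarking that $O(n)$-time bookkeeping details are irrelevant here --- this is a purely structural statement --- and cite \cite{bjorner1991chip} for the original argument, presenting the above as the self-contained version used throughout the paper.
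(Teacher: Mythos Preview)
Your proposal is correct in outline, and the exchange lemma you start from is exactly the paper's Lemma~D.1. However, your route to the conclusion differs from the paper's. The paper (in its proof of the local version, Lemma~\ref{lemma:unique-terminal-configuration}, which specializes to Theorem~\ref{theorem:orders} when $S=V(G)$) argues by direct induction on $\max(a,b)$ for two terminating sequences $u_1\cdots u_a$ and $v_1\cdots v_b$: if $u_1=v_1$, strip the first letter and induct; otherwise, some $u_i$ equals $v_1$ (since $v_1$ is full initially and only a fire at $v_1$ can lower its chip count), and a ``move-to-front'' lemma (Lemma~D.2) lets you bring $u_i$ to position $1$ without changing legality or endpoint, reducing to the first case. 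No domination or truncation machinery is needed.

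Your approach instead proves the pointwise inequality $n_{\tilde w}\le n_w$ via a monotonicity/domination argument, then concludes by symmetry. This is also standard and correct; it is essentially the Least Action Principle that the paper later states separately as Lemma~\ref{theorem:inequality}. What you gain is that the domination statement is reusable (and indeed the paper uses it in Section~\ref{sec:reduction}); what you pay is a heavier setup than the two-line induction actually requires. Two minor points: your ``truncation lemma'' is stated more strongly than you need (you never use the extension, only that each prefix of $\tilde w$ stays dominated by $n_w$), and the cleanest way to verify the key step is not via ``reordering $w$ to start with the preceding letters'' but by the direct chip-count comparison $\sigma_u - (n_{w'})_u\deg(u)+\sum_{v\in\neighbor(u)}(n_{w'})_v \le \sigma_u - (n_{w})_u\deg(u)+\sum_{v\in\neighbor(u)}(n_{w})_v < \deg(u)$ when $(n_{w'})_u=(n_w)_u$ and $n_{w'}\le n_w$ elsewhere.
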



\Cref{theorem:orders} shows that the number of firings is independent of the order of firings. Thus, for a sandpile instance, we can well-define the firing number, indicating the number of firings performed on each vertex to make the configuration terminal. Formally:

\begin{definition}[Firing number]
\label{def:firing-number}
Given a terminal sandpile instance $S(G, \sigma)$, consider the process of firing all full vertices until the configuration is terminal. The number of firings performed on each vertex $v$ is denoted by $\bc(v)$.
\end{definition}

Because of the commutativity of configuration addition, if we can calculate the firing number $\bc(v)$ for each vertex $v$, then we can easily find that the terminal configuration is 
\begin{align}
\sigma + \sum_{v \in V(G)} \bc(v) \cdot F(v)\label{formula:recover-from-firing-number}
\end{align}

\subsection{Local Behavior on Trees}
\label{sec:overview:definitions-for-local-behaviors}
We root the tree at an arbitrary vertex $r$ and further define $\subtree(v),\son(v),\parent(v)$ for each vertex $v$ as the vertex set of its subtree (including $v$), children and its direct ancestor, respectively. Our algorithm relies on the concept of computing the outcomes after all firings in subtrees have occurred. Thus, it is crucial to establish clear definitions for events and configurations within a subtree. Furthermore, we also need to prove that they are consistent with the global behavior of the entire tree.

\begin{definition}[Local terminal configuration]
\label{def:localterminalconf}
Let $S(G, \sigma)$ be a sandpile instance. For $S \subseteq V(G)$, if all the vertices $v \in S$ satisfy $\sigma_v < \degree(v)$, then $\sigma$ is said to be local terminal in $S$.

Specially, if $G$ is a tree rooted at $r$. For a vertex $u \in V(G)$, if all the vertices $v \in \subtree(u)$ satisfy $\sigma_v < \degree(v)$, then $\sigma$ is local terminal in $\subtree(u)$.
\end{definition}

\Cref{theorem:orders}, which shows that the terminal configuration is unique for any sandpile instance, can be generalized to any local subset of vertices $S \subseteq V$. Formally we have the following two lemmas:



\begin{lemma}[Unique local terminal configuration]
\label{lemma:unique-terminal-configuration}
Let $S \subseteq V(G)$ be any subset of vertices. Suppose the process that keeps firing all the full vertices in $S$ until $\sigma$ is local terminal in $S$. Then: 

\begin{enumerate}
    \item Any firing order will reach the same local terminal configuration.
    \item For each vertex $u$, any firing order will fire $u$ the same number of times.
\end{enumerate}

\end{lemma}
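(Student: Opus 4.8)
The plan is to prove both statements simultaneously by a standard ``diamond'' or local confluence argument, adapted from the classical proof of \Cref{theorem:orders}. The key structural fact I would exploit is that a firing is a local operation: firing a full vertex $u \in S$ only decreases $\sigma_u$ and increases $\sigma_v$ for $v \in \neighbor(u)$, so it can never make a previously full vertex non-full, nor does it interfere with the fireability of any vertex other than $u$ itself in a way that cannot be recovered. Concretely, I would first establish the \emph{one-step diamond property}: if $\sigma$ has two distinct full vertices $u \neq v$ in $S$, then firing $u$ then $v$, and firing $v$ then $u$, both yield the configuration $\sigma + F(u) + F(v)$ (this is immediate from commutativity of vector addition, after checking that $v$ is still full in $\sigma + F(u)$ and $u$ is still full in $\sigma + F(v)$, which holds since firing only adds chips to neighbors).

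Next I would run a ``shortest counterexample'' / exchange argument, essentially the Newman's lemma style induction used in \cite{bjorner1991chip}. Suppose $(w_1, w_2, \dots, w_p)$ and $(w_1', w_2', \dots, w_q')$ are two complete firing sequences for the process of firing full vertices in $S$ until local terminality. I claim $p = q$ and the multiset of vertices fired is the same. The standard argument: since $\sigma$ is not local terminal (if it is, both sequences are empty and we are done), $w_1$ is full in $\sigma$. I would show $w_1$ must appear somewhere in the second sequence --- otherwise, at the end of the second sequence $\sigma$ would be local terminal in $S$ yet $w_1$ would never have lost chips and never been fired, so $w_1$ is still full (firings of other vertices only add chips to it), contradicting local terminality. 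Let $w_j' = w_1$ be its first occurrence. Using the one-step diamond property repeatedly, I can commute that firing of $w_1$ to the front of the second sequence without changing the final configuration or the multiset of fired vertices. Now both sequences begin with a firing of $w_1$, reaching the common configuration $\sigma + F(w_1)$; I would then induct on the total length (or on $\sum_v \bc(v)$, which is finite for a terminal instance, or more carefully on a suitable potential) to conclude the remaining sequences agree in length and multiset, hence so do the originals. This proves part 2; part 1 then follows because the final configuration is $\sigma + \sum_u (\text{number of times } u \text{ fired}) \cdot F(u)$, which depends only on the multiset of firings.

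The main obstacle is ensuring the induction is well-founded, i.e.\ that the ``firing full vertices in $S$ until local terminal'' process actually terminates so that $p$ and $q$ are finite --- the lemma statement implicitly assumes we are in a setting where this process does terminate (analogous to the terminal-instance hypothesis in \Cref{theorem:orders}). I would handle this by noting that the restriction of the sandpile to $S$, with all vertices outside $S$ treated as sinks that absorb chips, is itself a sandpile-with-sinks type process on which the global uniqueness theorem (\Cref{theorem:orders}) or a direct potential argument applies; alternatively, one shows that if one firing order terminates in finitely many steps then all do, via the commutation argument above, so finiteness of $p$ is inherited from finiteness of $q$ and vice versa once a single terminating order is known to exist. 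A secondary subtlety is the precise bookkeeping when commuting $w_1$ to the front of the second sequence: I must verify at each swap that the two adjacent vertices being exchanged are distinct (they are, since consecutive entries before reaching $w_j'$ are not equal to $w_1$ by minimality of $j$, but could a vertex equal its own successor? yes, but then no swap is needed at that position) and that both are full at the relevant intermediate configuration, which again follows from the monotonicity of chip counts at non-fired vertices under firing.
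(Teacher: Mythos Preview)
Your proposal is correct and follows essentially the same approach as the paper: both establish the one-step diamond/commutation lemma, argue that the first fired vertex of one sequence must appear in the other (else it would remain full at the end), commute it to the front, and induct on sequence length. The paper's proof is slightly terser and does not explicitly discuss the termination/well-foundedness issue you flag, but otherwise the structure is identical.
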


\Cref{lemma:unique-terminal-configuration} is proved in \Cref{sec:proofinprelim}.

\begin{definition}[Local finalize operation]
\label{def:local-final}
Let $S(G, \sigma)$ be a sandpile instance where $G$ is a tree rooted at $r$. For a vertex $u \in V(G)$, let $\final(\sigma, u)$ be the configuration obtained by firing all full vertices in the subtree of $u$ until every vertex in $\subtree(u)$ is not full. Formally, 
$$\final(\sigma, u) = \begin{cases}
    \sigma & \sigma \text{ is local terminal in the subtree of }$u$ \\
    \final(\fire(\sigma, v), u) & v \in \subtree(u) \wedge \sigma_v \geq \degree(v)
\end{cases}$$

We also define the partial firing numbers, denoted by $\bc^{\down}(\sigma, u)$, as the number of the firing operations performed on vertex $u$ to make $\sigma$ local terminal in the subtree of $u$. Formally,

$$\bc^{\down}(\sigma, u) = \begin{cases} 0 
    & \sigma \text{ is local terminal in the subtree of }$u$ \\
    \bc^{\down}(\fire(\sigma, v), u) + [u=v] & v \in \subtree(u) \wedge \sigma_v \geq \degree(v)\end{cases}$$
    
\end{definition}
By \Cref{lemma:unique-terminal-configuration}, for any $S \subseteq V(G)$, any firing order in the set $S$ will lead to the same local terminal configuration in $S$, and the value of $\bc^{\downarrow}(\sigma,u)$ is also independent of the order of the firings. Thus the definitions in \Cref{def:local-final} are well-defined. We will use the notation $\bc^{\down}(u)$ to denote $\bc^{\down}(\sigma, u)$ as we are only considering a single given sandpile instance $S(G, \sigma)$.

By \Cref{theorem:orders}, the terminal configuration is independent of the order of firings. Thus any orders of the firings will obtain the same final configuration, which gives us:

\begin{lemma}
\label{lemma:fire-in-final}
Let $\sigma$ be a configuration and $\sigma'$ be a configuration obtained by performing several firing operations in $\subtree(u)$ on $\sigma$. For any configuration $\sigma^{*}$, $\final(\sigma + \sigma^{*}, u) = \final(\sigma' + \sigma^{*}, u)$.
\end{lemma}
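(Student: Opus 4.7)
The plan is to leverage the uniqueness of the local terminal configuration (\Cref{lemma:unique-terminal-configuration}) by explicitly constructing a firing order starting from $\sigma + \sigma^{*}$ that first replays the firings turning $\sigma$ into $\sigma'$, and then continues with any firings that reduce $\sigma' + \sigma^{*}$ to $\final(\sigma' + \sigma^{*}, u)$. Since both resulting firing sequences live inside $\subtree(u)$ and end in a configuration that is local terminal in $\subtree(u)$, \Cref{lemma:unique-terminal-configuration} will force them to agree.

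Concretely, I would let $(v_1, \ldots, v_k)$ denote a sequence of firings in $\subtree(u)$ witnessing $\sigma \to \sigma'$, so that at each step $v_i$ is full in the intermediate configuration $\sigma + \sum_{j<i} F(v_j)$. Because $\sigma^{*}$ is a configuration and hence nonnegative on every coordinate, starting instead from $\sigma + \sigma^{*}$ leaves every intermediate chip count at least as large as before. In particular every $v_i$ is still full in $\sigma + \sigma^{*} + \sum_{j<i} F(v_j)$ when it is its turn to fire, so the same sequence is a valid firing sequence from $\sigma + \sigma^{*}$ and, by the additivity of firings as vector updates, it arrives at $\sigma' + \sigma^{*}$.

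Next, I would take any firing order within $\subtree(u)$ that takes $\sigma' + \sigma^{*}$ to $\final(\sigma' + \sigma^{*}, u)$ and append it after $(v_1, \ldots, v_k)$. The concatenated sequence is a valid firing order in $\subtree(u)$ that starts at $\sigma + \sigma^{*}$ and terminates at a configuration that is local terminal in $\subtree(u)$. By \Cref{lemma:unique-terminal-configuration}, the local terminal configuration reached from $\sigma + \sigma^{*}$ by firings within $\subtree(u)$ is unique and must therefore equal $\final(\sigma + \sigma^{*}, u)$, giving the claimed equality $\final(\sigma + \sigma^{*}, u) = \final(\sigma' + \sigma^{*}, u)$.

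The only subtle point is the monotonicity observation that adding the nonnegative vector $\sigma^{*}$ preserves the fullness of every $v_i$ throughout the replayed sequence. I expect this to be the main (and essentially only) obstacle worth flagging; once it is in hand, \Cref{lemma:unique-terminal-configuration} absorbs all remaining work, and no induction or delicate bookkeeping is needed.
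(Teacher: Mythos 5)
Your proposal is correct and follows essentially the same route as the paper: the paper also replays the firing sequence $\sigma \to \sigma'$ starting from $\sigma + \sigma^{*}$ (packaging the monotonicity observation as the corollary $\fire(\sigma+\sigma',u)=\fire(\sigma,u)+\sigma'$ for full $u$) and then concludes via the order-independence of the local terminal configuration, phrased there as the recursive identity $\final(\sigma,u)=\final(\fire(\sigma,v),u)$ applied step by step along the sequence. The only difference is presentational — you concatenate the two firing sequences and invoke \Cref{lemma:unique-terminal-configuration} once at the end, whereas the paper inducts one firing at a time — so no gap remains.
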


\begin{lemma}
\label{lemma:final-sigma-add}
Let $\sigma$ and $\sigma'$ be any two configurations and $u \in V(G)$ be any vertex. Then $\final(\sigma + \sigma', u) = \final(\final(\sigma, u) + \final(\sigma', u), u)$
\end{lemma}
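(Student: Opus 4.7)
The plan is to prove the equality by two successive applications of \cref{lemma:fire-in-final}, stripping off one local finalization at a time. The intuition is clean: since $\final$ is achieved by firings inside $\subtree(u)$, replacing a summand by its local-finalized version is exactly the kind of modification that Lemma~\ref{lemma:fire-in-final} says is invisible to an outer $\final$ operation.

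Concretely, I would proceed as follows. First, observe that by \cref{def:local-final}, the configuration $\final(\sigma, u)$ is obtained from $\sigma$ by performing some sequence of firing operations at vertices in $\subtree(u)$. Applying \cref{lemma:fire-in-final} with the original ``$\sigma$'' being $\sigma$, its evolved version ``$\sigma'$'' being $\final(\sigma, u)$, and the auxiliary ``$\sigma^{*}$'' being $\sigma'$, we get
\[
\final(\sigma + \sigma', u) \;=\; \final(\final(\sigma, u) + \sigma', u).
\]
Second, repeat the argument on the other summand: $\final(\sigma', u)$ is obtained from $\sigma'$ by firings in $\subtree(u)$, so applying \cref{lemma:fire-in-final} again (now with the auxiliary configuration being $\final(\sigma, u)$) yields
\[
\final(\sigma' + \final(\sigma, u), u) \;=\; \final(\final(\sigma', u) + \final(\sigma, u), u).
\]
Chaining these two identities (using commutativity of configuration addition) gives the claimed equality.

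There is no real obstacle here; the entire content of the lemma has already been isolated in \cref{lemma:fire-in-final}. The only subtlety worth spelling out is the type-checking of the Lemma~\ref{lemma:fire-in-final} invocations, namely that in each step the configuration being ``upgraded'' to its $\final$ form differs from the original only by firings strictly inside $\subtree(u)$, which is immediate from \cref{def:local-final}. I would therefore keep the proof to these two short lines with a brief justification of the applicability of \cref{lemma:fire-in-final} in each case.
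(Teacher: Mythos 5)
Your proof is correct and follows essentially the same route as the paper: both arguments apply \cref{lemma:fire-in-final} twice, once to each summand, using the fact from \cref{def:local-final} that $\final(\cdot,u)$ is reached by firings inside $\subtree(u)$. No gap to report.
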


\Cref{lemma:fire-in-final} and \Cref{lemma:final-sigma-add} are proved in \Cref{sec:proofinprelim}.

We use $\final(\sigma)$ to refer to $\final(\sigma, r)$, and \Cref{problem:prediction} is equivalent to find $\final(\sigma)$.












\section{Sandpile Prediction on Trees}
\label{sec:overview}

The main idea of our algorithm is to compute the value of $\bc(v)$ for all $v \in V(G)$. After that, we can apply \Cref{formula:recover-from-firing-number} to retrieve the terminal configuration. It is difficult to calculate the value of $\bc(v)$ directly. However, we are able to complete the calculation by two steps: Partial Firing and Complete Firing.

\subsection{Partial Firing} 

We root the tree at an arbitrary vertex $r$. The first phase reduces the configuration $\sigma$ to a state which is local terminal in all vertices excluding $r$. In other words, after this round of firings, all the vertices other than $r$ are not full. This is done by firing from bottom to top, and as a result, we compute $\bc^{\down}(v)$ for all vertices $v \in V(G)\setminus\{r\}$. We propose \Cref{algorithm:upward} to correctly and efficiently finish this phase, which will be discussed later in this section. The following lemma summarizes the process:

\begin{restatable}{lemma}{upwardtheorem}\label{theorem:upward}
    \upward($u$, $G$, $\sigma'$) computes the value of all $\bc^{\down}(v)$ for all $v \in \subtree(u)$. In particular, it computes the value of $\bc^{\down}(v)$ for all $v \in V(G)$ in $O(n \log n)$ time. It also converts the initial configuration $\sigma$ into another configuration that is local terminal in the subtree of $u$ for any non-root vertex $u$. 
\end{restatable}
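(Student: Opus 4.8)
The plan is to describe \upward as a recursive bottom-up procedure and then prove the three claims by structural induction on the tree rooted at $u$. First I would pin down what \upward must do at a single vertex $v$: assuming all proper descendants of $v$ have already been processed so that the current working configuration $\sigma'$ is local terminal in every $\subtree(w)$ for $w \in \son(v)$, the only vertex in $\subtree(v)$ that can still be full is $v$ itself. If $\sigma'_v \geq \degree(v)$, then we must fire $v$ some number of times $k$; each such firing decreases $\sigma'_v$ by $\degree(v)$ and increases each child's count by $1$, which may make a child full again, forcing a cascade back down. The key structural observation (which I would isolate as the heart of the argument) is that on a tree the effect of firing $v$ and then re-stabilizing each child's subtree is clean: after firing $v$ exactly $k$ times and re-finalizing the children, the only vertex of $\subtree(v)$ that can be full is still $v$, so the process on $\subtree(v)$ is captured by: repeatedly (fire $v$ once, then \final each child). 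This is exactly the recursion in \cref{def:local-final}, and \cref{lemma:unique-terminal-configuration} guarantees the outcome and the per-vertex firing counts are order-independent, so $\bc^{\down}(v)$ is well-defined and equals the number of times this loop fires $v$.

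Next I would set up the induction. Base case: a leaf $v$ of the rooted tree has $\subtree(v) = \{v\}$ and $\degree(v) \geq 1$; if $v \neq r$ then $v$ has a parent, and firing $v$ means $\sigma'_v \mapsto \sigma'_v - \degree(v)$, so $\bc^{\down}(v) = \lfloor \sigma'_v / \degree(v) \rfloor$ and we leave $\sigma'_v \bmod \degree(v) < \degree(v)$, which is local terminal in $\subtree(v)$. Inductive step: process all children $w \in \son(v)$ recursively (by the inductive hypothesis each call correctly computes $\bc^{\down}$ for all vertices in $\subtree(w)$ and leaves $\sigma'$ local terminal in each $\subtree(w)$); then handle $v$ itself as above. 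Here I would invoke \cref{lemma:fire-in-final} / \cref{lemma:final-sigma-add} to argue that interleaving the "fire $v$ / re-finalize child" steps in any convenient order gives the same result as the canonical order in \cref{def:local-final}, so the value our procedure records as $\bc^{\down}(v)$ really is $\bc^{\down}(\sigma,v)$ with respect to the \emph{original} $\sigma$ — the subtlety being that $\sigma'$ at the time we reach $v$ is not $\sigma$ but a configuration reachable from $\sigma$ by firings inside $\subtree(v)$, and \cref{lemma:unique-terminal-configuration} together with commutativity bridges that gap. For the root $r$: we simply do not fire $r$ in this phase, so every non-root vertex ends up not full, which is the claimed postcondition; note $\bc^{\down}(r)$ is not what we want here (Complete Firing handles $r$), consistent with the lemma only asserting $\bc^{\down}(v)$ for $v \in V(G)$ via the full-tree call — I'd be careful to state that the full-tree guarantee is about $\bc^{\down}(v) = \bc^{\down}(\sigma,v)$ for all $v$ including possibly $r$ under its own subtree, matching \cref{theorem:upward}'s phrasing.

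The remaining and genuinely non-routine part is the $O(n \log n)$ running time, which I expect to be the main obstacle and which is where the data structure of \cref{sec:ds} enters. A naive implementation that literally simulates each firing is far too slow, because a single vertex may fire $\poly(N)$ times and each firing can trigger a long cascade. Instead, for each vertex $v$ we want to compute in near-constant amortized time the total number of firings $\bc^{\down}(v)$ directly, given the already-computed information about its children's subtrees — essentially solving, in closed form or via a search, "how many times must $v$ fire so that after each child's subtree re-stabilizes, $v$ is no longer full". This is the dynamic-programming-over-subtrees step; the claim is that the relevant per-child functions (how many chips flow into a child's subtree as a function of how many chips are pushed in, equivalently how the child's eventual push-back to $v$ grows) are piecewise-linear/monotone and can be maintained and merged across children using splittable balanced search trees, giving $O(\log n)$ amortized per vertex and $O(n\log n)$ overall, with $O(n)$ memory since each element lives in exactly one tree after splitting/merging. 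I would therefore structure the time analysis as: (i) a combinatorial claim that the number of "breakpoints" created over the whole run is $O(n)$; (ii) each breakpoint is touched $O(\log n)$ times by split/merge/search operations on the treaps; (iii) everything else is $O(1)$ per vertex. The delicate points are bounding the total number of breakpoints (so the cost does not blow up multiplicatively with firing counts) and verifying the monotonicity/convexity properties that make the binary search over "number of firings of $v$" valid; I would defer the detailed statements and proofs of these to \cref{sec:ds} and cite them here, keeping the present proof at the level of correctness plus the reduction to the data-structure primitives.
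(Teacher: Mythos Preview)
Your proposal is correct and follows essentially the same approach as the paper: structural induction on the rooted tree for correctness (via the local-terminal machinery and the monotone per-child ``push-back'' functions, which the paper makes explicit as $\delta(v,k)$ with the key $1$-Lipschitz property of \cref{lemma:delta-differs-at-most-one}, leading to \cref{lemma:f-is-monotonically-decreasing} and \cref{lemma:c-down}), with the $O(n\log n)$ bound deferred to the data-structure analysis of \cref{sec:ds} and \cref{theorem:ds}. One small correction: the root $r$ \emph{is} processed by \upward{} and $\bc^{\down}(r)$ is computed there (the only special-casing is that $r$ has no parent to receive chips), so drop the claim that ``we simply do not fire $r$ in this phase.''
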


\IncMargin{1em}
\begin{algorithm}[H]
  \SetKwData{Left}{left}\SetKwData{This}{this}\SetKwData{Up}{up}
  \SetKwFunction{SolveUpward}{\upward}
  \SetKwFunction{SolveDownward}{\downward}
  \SetKwFunction{CalculateNumberOfFirings}{\cnt}
  \SetKwFunction{UpdateDataStructure}{\updatedsupward}
  \SetKwFunction{MergeDataStructure}{\mergeupward}
  \SetKwFunction{DeltaSum}{\deltasum}
  \SetKwFunction{InitializeDataStructure}{\initds}
  \SetKwFunction{PushDown}{\pushdown}
  \SetKwInOut{Input}{input}\SetKwInOut{Output}{output}  
  \SetKwFunction{FSolveUpward}{SolveUpward}\SetKwFunction{FDFSUpward}{DFS-Upward}

  \SetKwProg{Fn}{Function}{:}{}
  \BlankLine
    $D_u \leftarrow \varnothing$\; \label{algorithm:upward:init}
  $\dfn_u \leftarrow \indexx$\;\label{algorithm:upward:set-dfn}
  $\indexx \leftarrow \indexx + 1$\;\label{algorithm:upward:index}
  \If{$\son(u) = \varnothing$}{ \label{algorithm:upward:leaf-begin}
    $\sigma'_{parent(u)} \leftarrow \sigma'_{parent(u)} + \sigma'_u$\;
    $\bc^{\down}(u) \leftarrow \sigma'_u$\;
    $\sigma'_u \leftarrow 0$\;
    \Return\;
  }\label{algorithm:upward:leaf-end}
  \For{$v \in \son(u)$ in arbitrary order $\mathcal{I}$}{\label{algorithm:upward:merge-ds-begin}
    \SolveUpward($v$, $G$, $\sigma'$)\;
    \MergeDataStructure($u$, $v$)\;\label{algorithm:upward:merge-ds-mid}
  }\label{algorithm:upward:merge-ds-end}
  $\presigma_u\leftarrow \sigma'_u$\;
  $k \leftarrow$ \CalculateNumberOfFirings($u$, $\sigma'_u$)\; \label{algorithm:upward:k}
  $\bc^{\down}(u) \leftarrow k$\;\label{algorithm:upward:c-down}
  $\sigma'_u \leftarrow \sigma'_u + \DeltaSum(u) - \bc^{\down}(u) \cdot \degree(u)$\; \label{algorithm:upward:psi}
  \If{$u$ is not the root of $G$}{\label{algorithm:upward:update-parent-begin}
    $\sigma'_{parent(u)} \leftarrow \sigma'_{parent(u)} + k$\;
  }\label{algorithm:upward:update-parent-end}
  \UpdateDataStructure($D_u$)\;\label{algorithm:upward:update-ds}
  \caption{\upward($u$, $G$, $\sigma'$)}\label{algorithm:upward}
\end{algorithm}\DecMargin{1em}

Assume that we are currently visiting vertex $u$. Since the process is from bottom to top, we further assume the computation has already been done on $u$'s children. The main difficulty of such recursive computation is that after we fire $u$, chips will be sent to its subtree and might cause further firings in the subtree. What's worse, the firing in the subtree might cause another firing on $u$ if they return enough chips back to $u$. Such repetition could happen many times before every vertex in $\subtree(u)$ (including $u$ itself) becomes not full. 

We want to figure out a way to avoid these repetitions. By maintaining extra information of each child, we can safely compute the state after firing $u$ to not full without going down into $u$'s subtree again. More precisely, for any vertex $u$, we maintain how many chips will be returned to the parent of $u$ after $x$ chips are added to the vertex $u$ and all full vertices in $\subtree(u)$ were fired so that the configuration becomes local terminal in $\subtree(u)$. The formal definition of such quantity is as follows.

\begin{definition}[Local Upward Contribution]
\label{def:delta}
Let $S(G, \sigma)$ be a sandpile instance where $G$ is a tree rooted at $r$. For a vertex $u \in V(G)$ ($u \ne r$) such that $\sigma$ is local terminal in the subtree of $u$, the local upward contribution of adding $x$ chips to the vertex $u$ is denoted as $\delta(u,x)$, where $\delta(u, x) = \final(\sigma + x_u, u)_{\parent(u)} - \final(\sigma, u)_{\parent(u)}$. $x_u$ denotes a vector of all zeros except the value of the $u$-th term is $x$.
\end{definition}

\Cref{lemma:remaining-chips} is proved in \Cref{sec:proofinprelim}. It shows that the number of the remaining chips on vertex $u$ after firing $u$ exactly $k$ times and make $\sigma$ be local terminal in all $\subtree(v_i)$ for $v_i \in \son(u)$ is exactly $\psi_u(k) \stackrel{\text{def}}{=} \sigma_u - k \cdot \degree(u) + \sum_{v \in \son(u)} \delta(v, k)$. 

\begin{lemma}
\label{lemma:remaining-chips}
Let $u \in V(G)$ and $\sigma$ be local terminal in the subtree of all its children $v_i \in \son(u)$. For any positive integer $k$, if $\psi_u(k-1) \geq \degree(u)$, then 
\begin{itemize}
    \item It is possible to fire vertex $u$ at least $k$ times without firing any vertex not in $\subtree(u)$.
    \item Assume we fired vertex $u$ exactly $k$ times, and fired all full vertices in $\subtree(v_i)$ for all $v_i \in \son(u)$, while not firing any vertex outside $\subtree(u)$. Then the number of chips at vertex $u$ is exactly $\psi_u(k)$.
\end{itemize}
\end{lemma}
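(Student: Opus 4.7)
The plan is to prove both conclusions by induction on $k$, using the Abelian property (\Cref{lemma:unique-terminal-configuration}) to reorder firings freely and the definition of $\delta$ to quantify chips returning from subtrees.

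For the base case $k = 1$: since $\sigma$ is local terminal in $\subtree(v_i)$ for every $v_i \in \son(u)$, we have $\delta(v_i, 0) = 0$, so $\psi_u(0) = \sigma_u$; the hypothesis gives $\sigma_u \geq \degree(u)$, so $u$ is full and can be fired. For the chip count in the second bullet, consider the specific schedule that fires $u$ once and then runs each $\subtree(v_i)$ to local terminal: after $u$ fires, $u$ has $\sigma_u - \degree(u)$ chips and each $v_i$ has received one chip; by the definition of $\delta$, subsequently making each $\subtree(v_i)$ local terminal returns $\delta(v_i, 1)$ chips to $u$, yielding $\sigma_u - \degree(u) + \sum_{v \in \son(u)} \delta(v, 1) = \psi_u(1)$. \Cref{lemma:unique-terminal-configuration} applied to $\subtree(u)$ ensures any valid alternative order achieving the same totals produces the same local configuration.

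For the inductive step $k \geq 2$: I would first argue that $\psi_u$ is non-increasing, so that $\psi_u(k-2) \geq \psi_u(k-1) \geq \degree(u)$. This reduces to showing $\delta(v, j+1) - \delta(v, j) \leq 1$ for each child $v$ (since $|\son(u)| \leq \degree(u)$), which follows from a monotone coupling of the two processes starting from $\sigma + j e_v$ and $\sigma + (j+1) e_v$: the one extra chip can trigger at most one extra firing of $v$, and each firing of $v$ contributes exactly one chip to $u$ (the only vertex outside $\subtree(v)$ adjacent to $\subtree(v)$). With monotonicity in hand, the inductive hypothesis supplies a state in which $u$ has been fired exactly $k-1$ times, every $\subtree(v_i)$ is local terminal, and $u$ holds $\psi_u(k-1) \geq \degree(u)$ chips; firing $u$ once more and re-terminalizing each subtree brings the total firings of $u$ to $k$, establishing bullet~1. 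Because each $v_i$ has then received exactly $k$ chips from $u$ over the course of the process and $\subtree(v_i)$ begins and ends local terminal, \Cref{lemma:final-sigma-add} (applied to collapse the interleaved single-chip deliveries into one addition of $k \cdot e_{v_i}$) shows that the cumulative chips returned to $u$ from $\subtree(v_i)$ is precisely $\delta(v_i, k)$, giving the final count $\sigma_u - k \cdot \degree(u) + \sum_{v \in \son(u)} \delta(v, k) = \psi_u(k)$.

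The main obstacle is the per-child monotonicity $\delta(v, j+1) - \delta(v, j) \leq 1$, which is the crux of the existence statement in bullet~1; without it, the hypothesis $\psi_u(k-1) \geq \degree(u)$ does not immediately guarantee $\psi_u(j) \geq \degree(u)$ for earlier $j$, and the iterated ``fire $u$, then terminalize subtrees'' schedule could in principle stall before reaching $k$ firings. I expect to establish this bound by maintaining the invariant throughout the coupling that at every intermediate state the two runs differ by exactly one chip sitting either at $v$ itself or somewhere in $\subtree(v)$ that cannot escape $\subtree(v)$ without first passing through $v$, bounding the excess odometer at $v$ by one.
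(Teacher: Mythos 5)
Your proof follows essentially the same route as the paper's: induction on $k$, with the existence claim resting on the monotonicity of $\psi_u$, which reduces to the per-child bound $\delta(v,j+1)\le\delta(v,j)+1$, and the chip count then following from the Abelian property and the definition of $\delta$. The only structural difference is that the paper factors the two ingredients you re-derive inline into \Cref{lemma:delta-differs-at-most-one} and \Cref{lemma:f-is-monotonically-decreasing} and simply cites the latter.

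One caution on the sublemma you correctly identify as the crux: the coupling invariant you propose --- that the two runs always differ by exactly one chip confined to $\subtree(v)$ --- cannot be maintained literally, because the moment $v$ performs its one extra firing the difference between the two configurations is no longer a single chip (every neighbour of $v$, including $u$, gains one while $v$ loses $\degree(v)$). The clean version, which is what the paper's proof of \Cref{lemma:delta-differs-at-most-one} does, is to use the Abelian property to first stabilize $\subtree(v)$ with $j$ extra chips and only then drop the $(j+1)$-st chip on $v$: either $v$ is still not full (so $\delta$ is unchanged), or $v$ fires exactly once, after which each child of $v$ returns at most one chip by induction on the tree, so $v$ recovers at most $|\son(v)|<\degree(v)$ chips and cannot fire a second time. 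Restated this way, your argument closes the gap.
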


We further show $\delta(u,k)$ has monotonicity:

\begin{lemma}
\label{lemma:delta-differs-at-most-one}
For any vertex $u \in V(G) \notin r$ and integer $k \geq 0$, $\delta(u,k) \leq \delta(u,k+1) \leq \delta(u,k)+1$.
\end{lemma}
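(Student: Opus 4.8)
The plan is to prove the two inequalities separately, both by inductively peeling off one extra chip and tracking how it propagates through the subtree of $u$. For the lower bound $\delta(u,k)\le\delta(u,k+1)$: by \Cref{lemma:remaining-chips} and \Cref{def:delta}, $\delta(u,k)$ counts the number of firings of $u$ (equivalently, the chips pushed up to $\parent(u)$) when $k$ chips are added to a locally-terminal-in-$\subtree(u)$ configuration and $\subtree(u)$ is then stabilized. I would argue by a coupling/monotonicity-of-stabilization argument: starting from the configuration $\sigma + (k{+}1)_u$ versus $\sigma + k_u$, the former dominates the latter coordinatewise, and since stabilization (firing full vertices until local terminal in $\subtree(u)$) is a monotone operation on configurations — more chips never decreases the number of times any vertex fires, which follows from \Cref{lemma:unique-terminal-configuration} together with a standard ``least action'' / abelian argument — the number of firings of $u$ in the $(k{+}1)$-chip instance is at least that in the $k$-chip instance. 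That gives $\delta(u,k+1)\ge\delta(u,k)$. Alternatively, and perhaps cleaner given the machinery already set up, I would use $\psi_u$: since $\delta(v,\cdot)$ is nondecreasing for each child $v$ (induction on subtree size / depth), $\psi_u(k-1)=\sigma_u-(k-1)\degree(u)+\sum_{v\in\son(u)}\delta(v,k-1)$ governs whether $u$ can fire a $k$-th time, and feeding in one more chip at $u$ only raises every relevant $\psi$ value, so $u$ fires at least as many times.

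For the upper bound $\delta(u,k+1)\le\delta(u,k)+1$: the intuition is that one extra chip placed on $u$ can trigger at most one extra firing of $u$. I would prove this via the following invariant. Consider stabilizing $\sigma+k_u$ in $\subtree(u)$; call the resulting firing vector $\bc_k$ restricted to $\subtree(u)$, with $\bc_k(u)=\delta(u,k)$ (the last equality by the reasoning under \Cref{def:delta} applied to a leaf-side subtree — more precisely $\delta(u,k)$ equals the firing count of $u$). Now add one more chip to $u$ and continue firing. I claim that throughout the continued stabilization, the firing vector increase $\bc_{k+1}-\bc_k$ stays a $0/1$ vector, i.e., each vertex in $\subtree(u)$ fires at most one more time. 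The argument is a ``unit perturbation'' induction down the tree: $u$ fires at most once more because after its (possibly one) extra firing it has sent one chip to each child and lost $\degree(u)$; each child $v$ receives at most one extra chip from $u$ (plus, recursively, at most one from each of its own children that fired an extra time), and — here is the key local computation — if $v$ has at most one surplus chip relative to the $\bc_k$-stabilized state, $v$ fires at most once more and again emits at most one chip per neighbor. One must check this is consistent: when $v$ fires its extra time it sends a chip back up to $u$, but $u$ has already been accounted for (it fires at most once more total, and that extra firing, if it happens, is exactly what is ``paid for'' by the returning chips, via the $\psi_u$ identity). Formally I would phrase this as: the difference configuration $(\sigma+(k{+}1)_u) \text{ stabilized} - (\sigma+k_u)\text{ stabilized}$ plus one unit of ``firing potential'' behaves like a single chip being pushed around $\subtree(u)$, and a single chip causes each vertex to fire at most once.

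I expect the main obstacle to be making the upper-bound argument rigorous, because the extra firing of $u$ and the chips returning from children are mutually dependent — it is circular to say ``$u$ fires once more, which feeds the children, which feed $u$ back.'' The clean way to break the circularity is to invoke abelianness (\Cref{theorem:orders} / \Cref{lemma:unique-terminal-configuration}): choose a convenient firing order, namely, first fully stabilize $\sigma+k_u$, then add the chip, then fire $u$ as many times as possible (call it $j$ extra times) before touching anything else, then recurse into the children. With $\delta(v,\cdot)$ monotone and (by the inductive hypothesis applied to each child) increasing by at most $1$ per unit input, a short computation with $\psi_u$ shows that after the children stabilize they return fewer than $\degree(u)$ additional chips unless $u$ can still be re-fired, and a counting argument bounds the total extra firings of $u$ by $1$; I would set up the induction on the height of $u$ so that the child bound $\delta(v,k+1)\le\delta(v,k)+1$ is available. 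The base case (leaves, or vertices all of whose children are leaves) is a direct calculation. I would also double-check the edge case where $\sigma+k_u$ is already non-terminal in a child's subtree before adding the extra chip — but since we assumed $\sigma$ is local terminal in $\subtree(u)$ and $\delta$ is defined by stabilizing after each addition, the inductive setup keeps every configuration we consider already stabilized in the children, so this does not actually arise.
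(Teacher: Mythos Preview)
Your proposal is correct and lands on essentially the same induction-on-subtree argument as the paper: stabilize $\sigma+k_u$ in $\subtree(u)$, add one more chip, fire $u$ (at most once), and use the inductive hypothesis on the children to bound the return flow by $|\son(u)|<\degree(u)$. The paper's version is tighter in that it collapses your two separate inequalities into a single two-case split on whether $\sigma'_u=\final(\sigma+k_u,u)_u$ equals $\degree(u)-1$ or is strictly smaller, which immediately shows $\delta(u,k+1)\in\{\delta(u,k),\delta(u,k)+1\}$ and dissolves the circularity you worried about without needing a separate monotonicity argument for the lower bound.
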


\begin{proof}


We prove the lemma by induction. For all the leaf vertices $u$, $\delta(u,k) = k$ must be held. So the lemma is correct for all the leaf vertices.

Consider any vertex $u \in V(G)$, and for all vertices $v \in \son(u)$ the inequality $\delta(v,k) \leq \delta(v,k+1) \leq \delta(v,k)+1$ holds for all non-negative integers $k$ by the inductive hypothesis. Consider $\sigma' = \final(\sigma + k_{u}, u)$, there are two cases.

\begin{enumerate}
    \item $\sigma'_u < \degree(u) - 1$. Then putting one more chip on the vertex $u$ does not make more firing operations available, since $\sigma'_u + 1 < \degree(u)$. So $\delta(u,k+1) = \delta(u,k)$ in this case.
    \item $\sigma'_u = \degree(u) - 1$. Then we perform one firing operation on vertex $u$. All the children $v \in \son(u)$ will receive one more chip after the operation, but since $\delta(v,k+1) \leq \delta(v,k) + 1$ holds for all $k$ on vertex $v$, there will be at most one more chip received from vertex $v$ after making $\sigma$ being local terminal in $\subtree(v)$ again. So there will be no more than $|\son(u)|$ chips after doing all firing operations in $\subtree(u)$. Since $|\son(u)| < \degree(u)$ for all $u \ne r$, no more fire operations on vertex $u$ are possible. So there will be exactly one additional firing operation performed on vertex $u$, thus $\delta(u,k+1) = \delta(u,k) + 1$ in this case.
\end{enumerate}

This shows $\delta(u,k+1) \in \{\delta(u,k), \delta(u,k) + 1\}$. Thus $\delta(u,k) \leq \delta(u,k+1) \leq \delta(u,k)+1$
\end{proof}

\begin{lemma}
\label{lemma:f-is-monotonically-decreasing}
For any vertex $u \in V(G)$ the $\psi_u(k)$ is monotonically non-increasing. In other words, $\psi_u(k) \geq \psi_u(k+1)$ for all $k \in \N$.
\end{lemma}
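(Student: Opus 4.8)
The plan is to prove the claim directly from the definition $\psi_u(k) = \sigma_u - k\cdot\degree(u) + \sum_{v\in\son(u)}\delta(v,k)$ by comparing $\psi_u(k)$ and $\psi_u(k+1)$ term by term. Since $\sigma_u$ is a constant and the term $-k\cdot\degree(u)$ decreases by exactly $\degree(u)$ when $k$ increases by $1$, it suffices to bound the increase of the sum $\sum_{v\in\son(u)}\delta(v,k)$. Here I would invoke \Cref{lemma:delta-differs-at-most-one}, which gives $\delta(v,k+1)\le\delta(v,k)+1$ for every child $v$, so the sum grows by at most $|\son(u)|$ when $k$ becomes $k+1$.

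Putting these together, I would write
\begin{align*}
\psi_u(k+1)-\psi_u(k) &= -\degree(u) + \sum_{v\in\son(u)}\bigl(\delta(v,k+1)-\delta(v,k)\bigr) \\
&\le -\degree(u) + |\son(u)|.
\end{align*}
The final step is the observation that $|\son(u)| \le \degree(u)$ for every vertex — equality holds only at the root $r$, and strictly $|\son(u)| < \degree(u)$ for all non-root $u$ (since a non-root vertex also has an edge to its parent). In either case $-\degree(u)+|\son(u)|\le 0$, which yields $\psi_u(k+1)\le\psi_u(k)$ for all $k\in\N$, as desired.

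I do not anticipate a genuine obstacle here; the lemma is essentially a one-line consequence of \Cref{lemma:delta-differs-at-most-one} combined with the degree bound. The only point requiring a small amount of care is the edge case $u=r$: the sum over children can then equal $\degree(r)$, so the bound becomes $\psi_r(k+1)-\psi_r(k)\le 0$ with the weak inequality still holding, which is exactly what "monotonically non-increasing" requires — so no separate argument is needed. I would also double-check that $\delta(v,k)$ is well-defined for all $v\in\son(u)$ at the point this lemma is used, i.e.\ that $\sigma$ is local terminal in each $\subtree(v)$; but since the statement quantifies over all $k\in\N$ and the lemma is used downstream precisely in that regime, this is consistent with the setup established in \Cref{def:delta} and \Cref{lemma:remaining-chips}.
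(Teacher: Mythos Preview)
Your proposal is correct and follows essentially the same approach as the paper: compute $\psi_u(k+1)-\psi_u(k)$, bound each $\delta(v,k+1)-\delta(v,k)$ by $1$ via \Cref{lemma:delta-differs-at-most-one}, and finish with $|\son(u)|\le\degree(u)$. Your extra remarks on the root case and well-definedness are fine but not needed for the argument.
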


\begin{proof}
By the definition $\psi_u(k) = \sigma_u - k \cdot \degree(u) + \sum_{v \in \son(u)} \delta(v, k)$, we have $\psi_u(k+1) - \psi_u(k) = -\degree(u) + \sum_{v \in \son(u)} \left(\delta(v,k+1) - \delta(v,k)\right)$. By \Cref{lemma:delta-differs-at-most-one}, $\delta(v,k+1) - \delta(v,k) \leq 1$, so $\sum_{v \in \son(u)} \left(\delta(v,k+1) - \delta(v,k)\right) \leq |\son(u)| \leq \degree(u)$. This proves $\psi_u(k+1) - \psi_u(k) \leq 0$, thus $\psi_u(k) \geq \psi_u(k+1)$.
\end{proof}

\begin{lemma}
\label{lemma:c-down}
Let $k$ be the smallest non-negative integer such that $\psi_u(k) < \degree(u)$. Then $\bc^{\downarrow}(u) = k$.
\end{lemma}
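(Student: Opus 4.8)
The plan is to show the two-sided inequality $\bc^{\downarrow}(u) \le k$ and $\bc^{\downarrow}(u) \ge k$, where $k$ is the smallest non-negative integer with $\psi_u(k) < \degree(u)$. Note that $k$ is well-defined: since $\psi_u$ is monotonically non-increasing by \Cref{lemma:f-is-monotonically-decreasing} and $\psi_u(k) \le \sigma_u - k\cdot\degree(u) + \sum_{v\in\son(u)}(\delta(v,0)+k) = \psi_u(0) - k(\degree(u)-|\son(u)|)$, with $|\son(u)| < \degree(u)$ for $u \ne r$, the value $\psi_u(k)$ eventually drops below $\degree(u)$ (for the root one argues separately, or the statement is understood for non-root $u$ as in the surrounding development). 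Throughout I will work with a fixed firing order that first brings all of $u$'s children's subtrees to local terminal, then repeatedly fires $u$ and re-terminates the children's subtrees; by \Cref{lemma:unique-terminal-configuration} this computes the same $\bc^{\downarrow}(u)$ as any other order.

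First I would prove $\bc^{\downarrow}(u) \ge k$. Suppose we fire $u$ exactly $j$ times for some $j < k$, interleaved with re-terminating the child subtrees. By minimality of $k$, $\psi_u(j) \ge \degree(u)$. By the second bullet of \Cref{lemma:remaining-chips} — applicable since $\psi_u(j-1) \ge \psi_u(j) \ge \degree(u)$ by monotonicity (and the base case $j=0$ is immediate) — after these $j$ firings of $u$ and the re-terminations of the child subtrees, the number of chips at $u$ equals $\psi_u(j) \ge \degree(u)$, so $u$ is still full and the configuration is not yet local terminal in $\subtree(u)$. Hence at least $k$ firings of $u$ are needed: $\bc^{\downarrow}(u) \ge k$.

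Next I would prove $\bc^{\downarrow}(u) \le k$. Apply \Cref{lemma:remaining-chips} with the integer $k$: the hypothesis $\psi_u(k-1) \ge \degree(u)$ holds by minimality of $k$ (if $k = 0$ there is nothing to fire and $\psi_u(0) < \degree(u)$ already, so $\bc^{\downarrow}(u) = 0 = k$ and we are done). The lemma guarantees we can fire $u$ exactly $k$ times and, after re-terminating all child subtrees $\subtree(v_i)$, the chip count at $u$ is exactly $\psi_u(k) < \degree(u)$. At this point $u$ is not full, and every $v \in \son(u)$ has its subtree local terminal, so the whole $\subtree(u)$ is local terminal; no further firing of $u$ occurs. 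Therefore $\bc^{\downarrow}(u) \le k$. Combining the two inequalities gives $\bc^{\downarrow}(u) = k$.

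The main obstacle is a bookkeeping one rather than a conceptual one: I must be careful that the auxiliary firing order (fire $u$, then drain its children, repeat) is legitimate at every stage — i.e., that whenever I claim $u$ can be fired again, it genuinely is full — and that \Cref{lemma:remaining-chips} is being invoked with its hypotheses met. Both reduce to the monotonicity chain $\psi_u(0) \ge \psi_u(1) \ge \cdots$ together with the definition of $k$ as the first index where $\psi_u$ falls below $\degree(u)$; the edge case $k=0$ should be dispatched explicitly. No new estimate is needed beyond what \Cref{lemma:remaining-chips} and \Cref{lemma:f-is-monotonically-decreasing} already supply.
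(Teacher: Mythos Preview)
Your proposal is correct and follows essentially the same approach as the paper: invoke \Cref{lemma:remaining-chips} (with the hypothesis $\psi_u(k-1)\ge\degree(u)$ supplied by the minimality of $k$) to exhibit a firing sequence that fires $u$ exactly $k$ times and reaches a configuration local terminal in $\subtree(u)$, then conclude via uniqueness (\Cref{lemma:unique-terminal-configuration}). The paper's proof is slightly terser in that it argues only the $\le k$ direction and lets uniqueness deliver equality immediately, whereas you spell out the $\ge k$ direction separately; both are valid, and your added well-definedness remark for $k$ is a welcome detail the paper omits.
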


\begin{proof}
By the definition of $k$ we have either $k = 0$ or $\psi_u(k-1) \geq \degree(u)$. 

If $k = 0$, we have $\psi_u(0) = \sigma_u < \degree(u)$, which means we can not perform any operation on vertex $u$. Otherwise, by \Cref{lemma:remaining-chips}, we can perform $k$ firing operations on vertex $u$, and there are $\psi_u(k)$ chips located on vertex $u$ after all these firings. Since $\psi_u(k) < \degree(u)$ and $\sigma$ became local terminal in all the subtree of $v_i$ for $v_i \in \son(u)$, the current configuration must be local terminal in the subtree of $u$. Thus $\bc^{\downarrow}(u) = k$. 
\end{proof}

By \Cref{lemma:c-down}, our task is to find the smallest integer $k$ such that $\psi_u(k) < \degree(u)$ on a monotonically non-increasing function $\psi_u$. This integer $k$ is exactly the value of $\bc^{\down}(u)$. We use a data structure $D_u$ to maintain the following value: 
\begin{itemize}
    \item The value of $\sum_{v \in \son(u)} \delta(v, k)$ for a given vertex $v$ and integer $k$.
    \item The smallest integer $k$ such that $\psi_u(k) < \degree(u)$.
\end{itemize}

Since the function $\psi_u(k)$ is monotonically non-increasing (\Cref{lemma:f-is-monotonically-decreasing}), the value of $\bc^{\down}(u)$ can be found by performing a binary search procedure. To check if a specific value $k_0$ meets the inequality $\sigma_u - k \cdot \degree(u) + \sum_{v \in \son(u)} \delta(v,k) < \degree(u)$, we need to find the value of $\sum_{v \in \son(u)} \delta(v, k)$ efficiently. This is done by maintaining and querying on $D_u$. It supports the following queries, allowing us to speed up the calculation process:

\begin{itemize}
    \item \cnt($u$,$\sigma'_u$): return the value of $\bc^{\downarrow}$. In the function, we use the merged data structure $D_u$ to speed up the computation.
    \item \deltasum($u$): return the value of $\sum_{v \in \son(u)} \delta(v, \bc^{\down}(u))$, where $k$ is the returned value of $\cnt(u)$ (in other words, $k =\bc^{\down}(u)$).
\end{itemize}

Furthermore, it supports the following modifications to update the status of the data structures.

\begin{itemize}
    \item \mergeupward($u$, $v$): merge all information from $D_v$ to $D_u$. 
     \item \updatedsupward($u$): Update the information in $D_u$ to match the current vertex $u$.
\end{itemize}

\Cref{theorem:ds} ensures that the data structure costs $O(n \log n)$ time in total in the whole procedure of our algorithm to handle all the requests. Implementation details will be discussed in \Cref{sec:ds}.

Throughout the algorithm, we maintain two global arrays $\dfn$ and $\num$ and two global variables $\indexx$ and $r$. Note that these will be used in both phases and data structures. $r$ denotes the root and $\indexx$ is set to keep track of the visit order in $\upward$, which will be stored to $\dfn_u$ when visiting $u$. 

If $u$ is a leaf, since $\son(u) = \varnothing$, firing vertex $u$ is equivalent to moving a chip from vertex $u$ to $\parent(u)$. So, we have $\bc^{\down}(u) = \sigma_u$, and the algorithm will update $\sigma'_{\parent(u)}$ and $\sigma'_u$ correspondingly (\Cref{algorithm:upward:leaf-begin} to \Cref{algorithm:upward:leaf-end}).

Otherwise, the procedure initializes the data structure $D_u$ (\Cref{algorithm:upward:init}). It maintains the visit order for each vertex (\Cref{algorithm:upward:set-dfn} to \Cref{algorithm:upward:index}). It ensures that the earlier the vertex is accessed, the smaller $\dfn$ value is given. After that, we merge all the children of $u$ together to get $D_u$ (\Cref{algorithm:upward:merge-ds-begin} to \Cref{algorithm:upward:merge-ds-end}; here, $\mathcal{I}$ is an arbitrary order of merging the children). By \Cref{lemma:c-down}, the variable $k$ computed in \Cref{algorithm:upward:k} is exactly the value of $\bc^{\down}(u)$. Then, it computes the number of the remaining chips on $u$ after finishing all firings in $\subtree(u)$ (\Cref{algorithm:upward:psi}). By \Cref{lemma:remaining-chips}, the number of the chips on vertex $u$ will be changed to $\psi_u(k)$. The value of $\sum_{v \in \son(u)} \delta(v, k)$ can be computed by \deltasum($u$, $k$). Finally, we update the value of $\sigma'_{\parent(u)}$ (\Cref{algorithm:upward:update-parent-begin} to \Cref{algorithm:upward:update-parent-end}) and the data structure $D_u$ (\Cref{algorithm:upward:update-ds}), so that $D_u$ has the full information in $\subtree(u)$.

Note that we did not update the value of $\sigma'_v$ for $v \in \son(u)$ explicitly. This is because the number of chips moved from $v$ to $u$ is already calculated as $\delta(v, k)$ (\Cref{algorithm:upward:psi}). After visiting vertex $u$, we will no longer use the value of $\sigma'_v$ for all $v \in \subtree(u)$. Thus,the value of $\sigma'_v$ for $v \in \subtree(u)$ can be ignored.

\subsection{Complete Firing} 

After we calculate the values of $\bc^{\down}(u)$ for all $u \in V(G)$, we will recover all the $\bc(u)$ from the top to the bottom. This process is based on the relationship between firing numbers and partial firing numbers described in \Cref{lemma:value-of-c}  (Proof can be found in \Cref{sec:proofinprelim}).

\begin{lemma}
\label{lemma:value-of-c}
For each vertex $u \in V(G)$ such that $u \ne r$, $\bc(u) = \bc^{\down}(u) + \delta(u, \bc(\parent(u)))$. Specially, for the root vertex, $\bc(r) = \bc^{\down}(r)$,
\end{lemma}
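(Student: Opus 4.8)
The plan is to establish the identity by analyzing the total flow of chips into vertex $u$ over the entire firing process, split according to two stages: the partial firing stage that makes $\sigma$ local terminal in $\subtree(u)$ (contributing $\bc^{\down}(u)$ firings of $u$), followed by whatever additional chips arrive at $u$ from its parent. First I would fix a convenient firing order justified by \Cref{theorem:orders} and \Cref{lemma:unique-terminal-configuration}: namely, first fire everything in $\subtree(u)$ until it is local terminal (this is exactly the \textsc{SolvePartial}/$\final(\cdot,u)$ process and fires $u$ exactly $\bc^{\down}(u)$ times), and only afterwards let the rest of the tree act. The key observation is that in an undirected tree, the only edge by which chips can enter $\subtree(u)$ from outside is the edge $\{u,\parent(u)\}$, and the number of chips sent along it into $u$ equals exactly the number of times $\parent(u)$ fires in the full process, which is $\bc(\parent(u))$.

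Next I would apply \Cref{lemma:fire-in-final}: after the partial stage the configuration restricted to $\subtree(u)$ is local terminal, and the remaining global process adds, among other things, $\bc(\parent(u))$ chips to $u$ (plus possibly chips elsewhere outside $\subtree(u)$, which do not affect $\subtree(u)$). By \Cref{lemma:fire-in-final} the local terminal state of $\subtree(u)$ reached in the end depends only on the total added chip vector restricted to $\subtree(u)$, so it is as if we started from the local-terminal configuration and added $\bc(\parent(u))$ chips to $u$. By \Cref{def:delta} (applied with the configuration being the post-partial-stage configuration, which is local terminal in $\subtree(u)$), the number of \emph{additional} times $u$ fires in this second stage is exactly $\delta(u, \bc(\parent(u)))$ — I would note here that $\delta$ as defined counts the change at $\parent(u)$, so I would need a short companion observation (or an appeal to the monotonicity argument in \Cref{lemma:delta-differs-at-most-one}) that the number of extra firings of $u$ when $x$ chips are added to an already-local-terminal $\subtree(u)$ equals $\delta(u,x)$; this follows because each chip $u$ sends up to $\parent(u)$ corresponds to one firing of $u$, net of chips returned, and in a tree the net upward flow across $\{u,\parent(u)\}$ per firing bookkeeping gives exactly this count. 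Summing the two stages yields $\bc(u) = \bc^{\down}(u) + \delta(u, \bc(\parent(u)))$.

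For the root, $r$ has no parent, so no chips ever enter from outside its (entire) subtree; the partial stage at $r$ already produces the global terminal configuration, hence $\bc(r) = \bc^{\down}(r)$.

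The main obstacle I anticipate is making rigorous the claim that ``chips entering $\subtree(u)$ from outside $=\bc(\parent(u))$'' and that ``extra firings of $u$ in stage two $=\delta(u,\bc(\parent(u)))$'' — both require carefully matching the informal flow argument with the recursive definitions of $\final$, $\bc^{\down}$, and $\delta$, and invoking \Cref{lemma:fire-in-final} and \Cref{lemma:final-sigma-add} to reorder firings legitimately. Care is also needed to confirm that reordering — doing all of $\subtree(u)$ first — is valid even though the full process may interleave firings inside and outside $\subtree(u)$; this is exactly what \Cref{lemma:unique-terminal-configuration} and the Abelian property guarantee, but the write-up should state explicitly which firing order is being used and why the counts $\bc(\cdot)$ are order-independent.
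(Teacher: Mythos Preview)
Your proposal is correct and follows essentially the same two-stage argument as the paper: first stabilize $\subtree(u)$ to account for $\bc^{\down}(u)$, then observe that all $\bc(\parent(u))$ firings of $\parent(u)$ occur afterward and inject exactly $\bc(\parent(u))$ chips into $u$, yielding $\delta(u,\bc(\parent(u)))$ additional firings of $u$. The paper's proof is slightly more concrete about the second-stage firing order (it re-stabilizes $\subtree(u)$ after each outside firing), whereas you invoke \Cref{lemma:fire-in-final}/\Cref{lemma:final-sigma-add} to abstract this away; your explicit flagging of why $\delta(u,x)$ equals the number of extra firings of $u$ (since each firing of $u$ sends exactly one chip to $\parent(u)$, which does not fire during $\final(\cdot,u)$) is a point the paper passes over in one clause.
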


By \Cref{lemma:value-of-c}, as long as we maintained $c(u)$ and the value of $\delta(u, i)$ for all required $i$ correctly, we can recursively recover all the values of $c(v)$ for $v \in \subtree(u)$. To speed up the whole process, we need to extend the data structure we described in the first phase of our algorithm. Specifically, we subsequently traverse the tree and compute the firing number $\bc(u)$ based on the results obtained from $\parent(u)$. We also need to restore information for vertices in $\subtree(u)$ before visiting them, which is done by reverting operations on the data structure.

In \Cref{algorithm:upward} we use $\mergeupward(D_u, D_v)$ to make every $D_u$ store the information of $\subtree(u)$. Now we need to revert all these changes. For each vertex $u \in V(G)$, we will revert the changes that were made in $\updatedsupward(D_u)$. After that for each child $v \in \son(u)$ we will recover the structure of $D_v$ by splitting $D_u$.

We denote $\deltaquery(u, k)$ as a function that returns the value of $\delta(u, k)$. In this function, we will use the data structure $D_u$ to speed up the computation.

In addition, we need the following interface to update the data structure:

\begin{itemize}
    \item \revertds($u$): revert the data structure $D_u$ to the one before the procedure $\cnt(u)$ called.
    \item \splitds($u$, $v$): split the data structure $D_v$ from $D_u$. The data structure $D_v$ will become the one before the procedure \mergeupward($u$, $v$) called.
\end{itemize}

Using the data structure described in \Cref{theorem:ds} (proved in \Cref{sec:ds:overall-analysis-of-ds}), we can prove that the values of $\bc(u)$ for all $u \in V(G)$ are computed correctly after calling $\downward(r, G)$.

\IncMargin{1em}
\begin{algorithm}
  \SetKwData{Left}{left}\SetKwData{This}{this}\SetKwData{Up}{up}
  \SetKwComment{Comment}{$\triangleright$\ }{}
  \SetKwFunction{SolveUpward}{\upward}
  \SetKwFunction{SolveDownward}{\downward}
  \SetKwFunction{CalculateNumberOfFirings}{\cnt}
  \SetKwFunction{UpdateDataStructure}{\updatedsupward}
  \SetKwFunction{MergeDataStructure}{\mergeupward}
  \SetKwFunction{DeltaSum}{\deltasum}
  \SetKwFunction{InitializeDataStructure}{\initds}
  \SetKwFunction{DeltaQuery}{\deltaquery}
  \SetKwFunction{RevertDataStructure}{\revertds}
  \SetKwFunction{SplitDataStructure}{\splitds}
  \SetKwInOut{Input}{input}\SetKwInOut{Output}{output}  
  \SetKwFunction{FSolveDownward}{SolveDownward}

  \SetKwProg{Fn}{Function}{:}{}
  \If{$u$ is the root of $G$}{\label{algorithm:downward:k-begin}
    $k \leftarrow 0$\; \label{algorithm:downward:c-root}
  }
  \Else {
    $k \leftarrow \DeltaQuery(u, \bc(\parent(u)))$\; \label{algorithm:downward:calck}
    \Comment*[r]{Use the maintained data structure $D_u$ to compute $\delta(u, \bc(parent(u)))$.}\label{algorithm:downward:c-other}
  }\label{algorithm:downward:k-end}
  \RevertDataStructure($u$)\; \label{algorithm:downward:revertds}
  $\bc(u) \leftarrow \bc^{\down}(u) + k$\;\label{algorithm:downward:c-update}\label{algorithm:downward:c}
  \For{$v \in \son(u)$ in the reversed order of $\mathcal{I}$}{\label{algorithm:downward:iterate-sons-begin}
  
  \Comment*[l]{Iterate the children of $u$ in the reversed order}
    \SplitDataStructure($u$, $v$)\;
    \SolveDownward($v$, $G$)\;
  }\label{algorithm:downward:iterate-sons-end}
  \BlankLine

  \caption{\downward($u$, $G$)}\label{algorithm:downward}
\end{algorithm}\DecMargin{1em}

The following theorem can summarize the whole phase.

\begin{restatable}{lemma}{downwardtheorem}\label{theorem:downward}
    \downward($u$, $G$) computes the value of all $\bc(v)$ for all $v \in \subtree(u)$, based on the value of $\bc^{\down}(u)$ found in the \upward{} part. In particular, it can compute the value of $\bc(v)$ for all $v \in V(G)$ in $O(n \log n)$ time. 
\end{restatable}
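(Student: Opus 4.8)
The plan is to prove correctness by induction on the depth of $u$ in the recursion tree, and to prove the runtime by appealing to the amortized data-structure guarantee of \Cref{theorem:ds}. For correctness, the base case is $u = r$: the algorithm sets $k \leftarrow 0$ and $\bc(r) \leftarrow \bc^{\down}(r) + 0$, which matches the special case of \Cref{lemma:value-of-c}. For the inductive step, assume $\downward$ is invoked on $u$ after $\bc(\parent(u))$ has already been computed correctly by the parent's call (this precondition holds for the top-level call vacuously, and is propagated by the recursion since the parent computes $\bc(\parent(u))$ before recursing into its children). The key claim is that when control reaches \Cref{algorithm:downward:calck}, the data structure $D_u$ is in exactly the state it was in right after $\mergeupward$ finished merging all children of $u$ and right before $\cnt(u)$ and $\updatedsupward(D_u)$ were called during $\upward$ --- i.e.\ $D_u$ holds the information of the children's subtrees $\{\delta(v,\cdot) : v \in \son(u)\}$ but not yet $u$'s own update. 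Granting this, $\deltaquery(u, \bc(\parent(u)))$ returns $\delta(u, \bc(\parent(u)))$, so \Cref{algorithm:downward:c-update} sets $\bc(u) = \bc^{\down}(u) + \delta(u,\bc(\parent(u)))$, which is correct by \Cref{lemma:value-of-c}.

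The heart of the argument is therefore establishing that the data-structure state is correctly ``rewound.'' I would argue this by tracking the sequence of destructive operations performed on $D_u$ in $\upward$ and showing $\downward$ undoes them in reverse: in $\upward$, after the children are merged into $D_u$ (via $\mergeupward(u,v)$ over $v \in \son(u)$ in order $\mathcal{I}$), the procedure calls $\cnt(u,\cdot)$ and then $\updatedsupward(D_u)$. In $\downward$, we first call $\revertds(u)$ (\Cref{algorithm:downward:revertds}), which by its specification reverts $D_u$ to its state before $\cnt(u)$ was called --- equivalently, before $\updatedsupward$ --- thus restoring $D_u$ to hold exactly the merged children information; crucially this revert happens \emph{after} $\deltaquery$ is called, but $\deltaquery$ only reads and the data structure at that point still carries enough information, so one must check the ordering in the pseudocode is consistent (indeed $\deltaquery$ at \Cref{algorithm:downward:calck} precedes $\revertds$ at \Cref{algorithm:downward:revertds}, and \Cref{theorem:ds} must guarantee $\deltaquery(u,k)$ is answerable from the post-merge, post-update state as well --- this is the subtle point to verify against the data-structure contract). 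Then for each child $v$ in the \emph{reversed} order of $\mathcal{I}$, $\splitds(u,v)$ peels $D_v$ back off, restoring it to its pre-$\mergeupward(u,v)$ state; since merges were done in order $\mathcal{I}$ and splits in reverse order, each $D_v$ is correctly reconstructed, and the precondition for the recursive call $\downward(v,G)$ --- namely that $\bc(\parent(v)) = \bc(u)$ is known and $D_v$ is in the right state --- is met. By induction all $\bc(v)$, $v \in \subtree(u)$, are computed correctly.

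For the running time, observe that $\downward(r,G)$ performs exactly one $\revertds$ per vertex, one $\splitds$ per non-root vertex, and one $\deltaquery$ per non-root vertex, plus $O(1)$ additional work per vertex; these are precisely the operations accounted for in the amortized bound of \Cref{theorem:ds}, which bundles the costs of $\mergeupward, \updatedsupward, \cnt, \deltasum$ (from $\upward$) together with $\revertds, \splitds, \deltaquery$ (from $\downward$) into a total of $O(n\log n)$ over the entire two-phase algorithm. Hence $\downward(r,G)$ runs in $O(n\log n)$ time. The main obstacle I anticipate is not the induction skeleton, which is routine, but pinning down exactly what ``state'' $D_u$ is in at each program point and confirming that $\deltaquery$, $\revertds$, and $\splitds$ as specified in \Cref{theorem:ds} compose correctly with the $\upward$-phase operations --- in particular that $\revertds$ and $\splitds$ form a consistent LIFO discipline with the $\updatedsupward$/$\mergeupward$ calls, and that the split order being the reverse of the merge order is essential. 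That bookkeeping is where an error would most plausibly hide, so I would make the invariant on $D_u$'s state explicit and carry it through both algorithms.
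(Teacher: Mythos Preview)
Your overall strategy---induction on the order vertices are visited, combined with an appeal to \Cref{theorem:ds} for the running time---matches the paper's approach exactly. The runtime argument is correct as stated.

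However, your correctness invariant is misstated at the crucial point. You claim that ``when control reaches \Cref{algorithm:downward:calck}, the data structure $D_u$ is in exactly the state it was in right after $\mergeupward$ finished merging all children of $u$ and right before $\cnt(u)$ and $\updatedsupward(D_u)$ were called.'' This is backwards. At that moment $D_u$ is in the state \emph{after} $\updatedsupward(D_u)$ was called in $\upward$---that is, $D_u$ holds the key pairs of vertex $u$ itself, not the merged key pairs of its children. Trace it through: when the parent's $\splitds(\parent(u),u)$ restores $D_u$, it restores it to the state right before $\mergeupward(\parent(u),u)$, which is the last thing that happened to $D_u$ in $\upward$---and that last thing was $\updatedsupward(D_u)$. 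This is precisely what $\deltaquery(u,k)$ needs in order to return $\delta(u,k)$: by \Cref{lemma:deltaequal}, computing $\delta(u,k)$ requires counting key pairs of $u$, not of $u$'s children. If $D_u$ were instead in the merged-children state you describe, $\deltaquery$ would be computing the wrong quantity.

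You then correctly note that $\revertds(u)$ rewinds $D_u$ to the merged-children state (before $\cnt$), after which the $\splitds$ calls in reverse-$\mathcal{I}$ order peel off each $D_v$; that part of your LIFO reasoning is fine. So the fix is just to swap which state you assert $D_u$ is in at the $\deltaquery$ line versus after $\revertds$. You flagged this as ``the subtle point to verify''---and indeed it is where your stated invariant goes wrong.
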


In \Cref{algorithm:downward}, we first handle the case if $u$ is the root, in which we have $\bc(u) = \bc^{\down}(u)$. If $u$ is not the root, then we calculate the value of $\delta(\parent(u), \bc(\parent(u)))$ using the data structure (\Cref{algorithm:downward:c-other}). After that, we are able to update the value of $\bc(u)$ (\Cref{algorithm:downward:c}) and revert the data structure $D_u$ to allow us to proceed with queries on the vertex $u$ (\Cref{algorithm:downward:revertds}). Finally, we will recursively process all the children of $u$ in the reversed order of $\mathcal{I}$ (\Cref{algorithm:downward:iterate-sons-begin} to \Cref{algorithm:downward:iterate-sons-end}), where $\mathcal{I}$ is the same as the one in \Cref{algorithm:upward}. This makes sure the algorithm reverts all the merging in the correct order.



\subsection{Overall Analysis} 
\IncMargin{1em}
\begin{algorithm}
  \SetKwData{Left}{left}\SetKwData{This}{this}\SetKwData{Up}{up}
  \SetKwComment{Comment}{$\triangleright$\ }{}
  \SetKwFunction{SolveUpward}{\upward}
  \SetKwFunction{SolveDownward}{\downward}
  \SetKwFunction{CalculateNumberOfFirings}{\cnt}
  \SetKwFunction{UpdateDataStructure}{\updatedsupward}
  \SetKwFunction{MergeDataStructure}{\mergeupward}
  \SetKwFunction{DeltaSum}{\deltasum}
  \SetKwFunction{InitializeDataStructure}{\initds}
  \SetKwFunction{DeltaQuery}{\deltaquery}
  \SetKwFunction{RevertDataStructure}{\revertds}
  \SetKwFunction{SplitDataStructure}{\splitds}
  \SetKwInOut{Input}{input}\SetKwInOut{Output}{output}  
  \SetKwFunction{FSolve}{Solve}

  \SetKwProg{Fn}{Function}{:}{}
  \Input{tree $G$, configuration $\sigma$}
  \Output{the terminal configuration $\sigma^{T}$ of the instance $S(T, \sigma)$}
  \BlankLine 
  \If{$\sum_{u \in V(G)} \sigma_u > |V(G)| - 2$}{ \label{algorithm:main:sp-start}
    \Comment*[l]{$S(G, \sigma)$ must be a recurrent instance}
    \Return $\bot$\;
  }\label{algorithm:main:sp-end}
  $r$ $\leftarrow$ arbitrary vertex in $V(G)$\; \label{algorithm:main:root}
  $\sigma'$ $\leftarrow$ $\sigma$\; \label{algorithm:main:init}
  $\indexx \leftarrow 1$\;\label{algorithm:main:index}
  \SolveUpward($r$, $G$, $\sigma'$)\; \label{algorithm:main:upward}
  \SolveDownward($r$, $G$)\;\label{algorithm:main:downward}
  \For{$u \in V(G)$}{\label{algorithm:main:recover-start}
    $\sigma_u \leftarrow \sigma_u - \bc(u) \cdot \degree(u)$\;
    \For{$v \in \neighbor(u)$}{
        $\sigma_v \leftarrow \sigma_v + \bc(u)$\;
    }
  }\label{algorithm:main:recover-end}
  \Return{$\sigma$}\;
  
  \caption{\textsc{Solve}($G$, $\sigma$)}\label{algorithm:main}
\end{algorithm}\DecMargin{1em}
Finally, we present the main structure of our proposed algorithm in \Cref{algorithm:main}. In the beginning of the algorithm, we skip the case if the given instance is recurrent. This is done by applying the following lemma, which is proved by directly applying Theorem 3.3 in~\cite{bjorner1991chip}. 

\begin{lemma}
\label{lemma:bounds-of-terminal-instance}
$S(G,\sigma)$ be a terminal instance if and only if $\sum_{v \in V(G)} \sigma_v \leq |V(G)| - 2$.
\end{lemma}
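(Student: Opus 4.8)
The plan is to derive this as an immediate specialization of the Björner--Lovász--Shor dichotomy (Theorem~3.3 of \cite{bjorner1991chip}) to the case of trees. The one structural fact we use is that a tree on $n = |V(G)|$ vertices has exactly $m = n - 1$ edges. We also recall that firing conserves the total number of chips $N \defeq \sum_{v \in V(G)} \sigma_v$, and that by \Cref{theorem:orders} the property of reaching a terminal configuration in finitely many firings is independent of the firing order; thus ``$S(G,\sigma)$ is a terminal instance'' is equivalent to ``the chip-firing game on $(G,\sigma)$ is finite''.

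First I would quote the relevant dichotomy for a connected graph with $n$ vertices and $m$ edges: the game is finite whenever $N \le m - 1$, and infinite whenever $N \ge 2m - n + 1$, the intermediate range $m \le N \le 2m - n$ being the regime in which the outcome can depend on $\sigma$. A tree is connected, so this applies directly. Substituting $m = n - 1$, the ``finite'' condition becomes $N \le n - 2$ and the ``infinite'' condition becomes $N \ge n - 1$. Since these two sets of integers are complementary --- indeed $n - 1 = (n - 2) + 1$, so the would-be intermediate range $n - 1 \le N \le n - 2$ is empty --- every nonnegative integer value of $N$ lands in exactly one of them. Hence $S(G,\sigma)$ is terminal if and only if $N \le n - 2$, as claimed.

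I do not expect a genuine obstacle here: the entire content sits in the cited theorem, and the lemma is precisely the observation that trees are the extremal case where the ``ambiguous'' band of the general statement collapses to nothing. The only things to check are bookkeeping ones: that the tree is connected (so the theorem is applicable) and that the edge count used by \cite{bjorner1991chip} agrees with $|E(G)| = n - 1$ for our simple, unweighted tree. If desired, one can additionally confirm the degenerate small cases ($n \le 2$) by hand, where $N \le n - 2$ continues to give the correct verdict.
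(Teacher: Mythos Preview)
Your proposal is correct and takes essentially the same approach as the paper, which also simply invokes Theorem~3.3 of \cite{bjorner1991chip} specialized to trees (where $m=n-1$ collapses the ambiguous range). Your write-up just spells out the substitution in more detail than the paper does.
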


We root the tree at an arbitrary vertex $r$ (\Cref{algorithm:main:root}). We initialize $\sigma'$ as the input configuration (\Cref{algorithm:main:init}) and set a global variable $\indexx$ to $1$ (\Cref{algorithm:main:index}). After that, we call $\upward$ to compute all $\bc^\down$ values, the partial firing numbers when we only consider the final state of the subtree (\Cref{algorithm:main:upward}). Subsequently, we further call $\downward$ to compute all $\bc$ values (\Cref{algorithm:main:downward}), which can be converted to the final terminal configuration (\Cref{algorithm:main:recover-start} to \Cref{algorithm:main:recover-end}).

\paragraph{Performance Analysis} The performance of our algorithm depends on the data structure we use throughout \Cref{algorithm:upward} and \Cref{algorithm:downward}. We show that by using splittable binary search trees, we can implement a data structure that supports all operations in $O(n \log n)$ time in total, with $O(n)$ memory. The implementation details of such a data structure are discussed in \Cref{sec:ds}. As a result, we prove \Cref{theorem:main} in \Cref{sec:ds:overall-analysis-of-ds} in the end.


\paragraph{Sandpile Prediction on Paths} Paths can be considered a special variant of trees, and our algorithm successfully demonstrates the unification of these two graph structures. Furthermore, based on the \textit{Dynamic Optimality Conjecture} \cite{sleator1985self} of the splay tree, it is conceivable to conjecture that our algorithm on trees could potentially achieve a linear runtime if the input graph is a path. As a result, we have successfully modified our algorithm to leverage the \textit{Dynamic Finger Theorem} \cite{cole2000dynamic,cole2000dynamic2} instead, leading to a provable linear runtime in \Cref{theorem:main-path}. Details are analyzed in \Cref{sec:path}.
%

\section{Sandpile Prediction on General Graphs}
\label{sec:general}

In this section, we discuss developing efficient sandpile prediction algorithms on general graphs. The general idea is to transform the prediction problem on an arbitrary graph into problems on its subgraphs separated by any vertex set. Since our reduction creates sink vertices, we first introduce sinks to the sandpile model in \Cref{sec:sanpilewithsink}. Then, we propose a simulation-based algorithm that works on arbitrary graphs with sinks in \Cref{sec:simu}. We discuss its performance on various types of graphs, including regular graphs, expander graphs, and hypercubes. In the end, we propose the reduction scheme that decomposes the graphs by vertex removal in \Cref{sec:reduction}.

\subsection{Sandpile with Sinks}
\label{sec:sanpilewithsink}

\begin{figure}[h]
\centering
\begin{tikzpicture}
 \tikzstyle{every node}=[fill=black!30,circle,minimum size=0.6cm,inner sep=1pt]

  \node (1) at (0,0) {2}; 
  \node (2) at (-1,-1){0}; 
  \node[fill=blue!50] (3) at (-1,1){}; 
  \node (4) at (1,-1){1}; 
  \node (5) at (1,1){1}; 
  \node[fill=blue!50] (6) at (2,0){}; 
  \node (7) at (3,-2){1};
  \node[fill=blue!50] (8) at (3,-1){};
  \node (9) at (3,0){0};
  \node (10) at (3,1){0};

  \foreach \from/\to in {1/2,1/3,1/4,1/5,1/6,4/6,5/6,4/7,6/9,6/7,6/8,7/8,8/9,5/10,6/10}
    \draw (\from) -- (\to);

\end{tikzpicture}
\caption{A sandpile instance with sinks. The sinks in the figure are all marked as blue.}
\label{figure:sinks-example}
\end{figure}
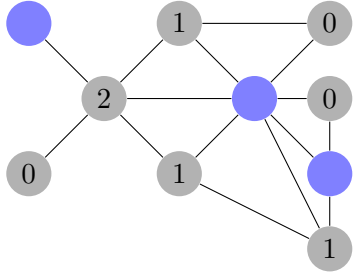

\begin{figure}[h]
\centering
\begin{tikzpicture}

\begin{scope}[shift={(-4,0)}]
 \tikzstyle{every node}=[fill=black!30,circle,minimum size=0.6cm,inner sep=1pt]
  \node (1) at (0,0) {2}; 
  \node (2) at (-1,-1){0}; 
  \node[fill=blue!50] (3) at (-1,1){}; 
  \node[fill=red!60,label=$u$] (4) at (1,-1){3}; 
  \node (5) at (1,1){1}; 
  \node[fill=blue!50] (6) at (2,0){}; 
  \node (7) at (3,-2){1};
  \node[fill=blue!50] (8) at (3,-1){};
  \node (9) at (3,0){0};
  \node (10) at (3,1){0};

  \foreach \from/\to in {1/2,1/3,1/4,1/5,1/6,4/6,5/6,4/7,6/9,6/7,6/8,7/8,8/9,5/10,6/10}
    \draw (\from) -- (\to);
\end{scope}

\begin{scope}[shift={(0,0)}]
\draw[line width=2pt,->] (0,0) -- (2,0);
\node at (1,-0.5){$\fire(u)$};
\end{scope}

\begin{scope}[shift={(4,0)}]
 \tikzstyle{every node}=[fill=black!30,circle,minimum size=0.6cm,inner sep=1pt]
  \node (1) at (0,0) {3}; 
  \node (2) at (-1,-1){0}; 
  \node[fill=blue!50] (3) at (-1,1){}; 
  \node[label=$u$] (4) at (1,-1){0}; 
  \node (5) at (1,1){1}; 
  \node[fill=blue!50] (6) at (2,0){}; 
  \node (7) at (3,-2){2};
  \node[fill=blue!50] (8) at (3,-1){};
  \node (9) at (3,0){0};
  \node (10) at (3,1){0};

  \foreach \from/\to in {1/2,1/3,1/4,1/5,1/6,4/6,5/6,4/7,6/9,6/7,6/8,7/8,8/9,5/10,6/10}
    \draw (\from) -- (\to);
  \foreach \from/\to in {4/1,4/6,4/7}
    \draw[edge](\from) -- (\to);
\end{scope}

\end{tikzpicture}
\caption{A firing operation on a full vertex. The chips transferred to a sink vertex are ignored, and no firing operation can happen on a sink vertex.}
\label{figure:sinks-firing}
\end{figure}
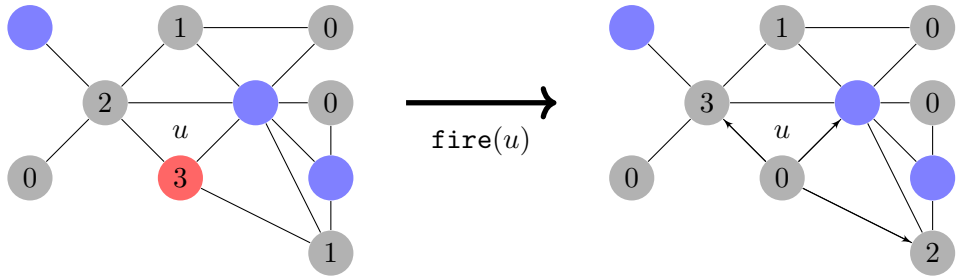

Although we focus on solving sandpile prediction on graphs without sinks (\Cref{problem:prediction}), our reduction scheme adds sinks into the graph to replace the removed vertices. Therefore, we need to introduce sinks to the sandpile model as a supportive tool to solve the prediction problem.

To begin with, we define the sandpile model on undirected graphs with multiple sinks. The model is generalized from Chapter 2.5 in \cite{klivans2018mathematics} which only defines the sandpile model with one sink in the graph. We study the first type of chip-firing process from Definition 2.5.1 in \cite{klivans2018mathematics}, in which the sinks do not fire. 
Correspondingly, we define the terminal configuration as follows: 
\begin{definition}[Terminal with Sinks]
For a given sandpile instance $S(G, \sigma, M)$ with a non-empty set of sinks $M$, we call the configuration $\sigma$ a terminal configuration if and only if for all vertices $u \in V(G) \setminus M$, we have $\sigma_u < \degree(u)$.

Call the instance $S(G, \sigma,M)$ a terminal instance if and only if it is possible to perform firing operations on any vertex not in $M$ to make the configuration terminal. A non-terminal instance is called a recurrent instance.
\end{definition}

\Cref{figure:sinks-example} shows an example of a sandpile instance with sinks. The integer marked on each vertex corresponds to the number of chips on the vertices. Since no firing operation could happen on the sink vertex, we can ignore the number of chips on the sink vertex and only care about the other vertices. Whenever a firing operation happens, as in \Cref{figure:sinks-firing}, the chips transferred on sink vertices can be treated as removed from the graph.

Like the model without sinks, sandpile models with sinks still preserve global and local uniqueness. That is, any order of the firing operations leads to the same (local) terminal configuration. We provide an elaborate analysis in \Cref{sec:sinkmodel}.

Moreover, any sandpile instance with sinks always terminates: 

\begin{lemma}[\cite{klivans2018mathematics}]
\label{lemma:sandpilesinkterminate}
For any sandpile instance with sinks $S(G, \sigma, M)$ such that $M \ne \varnothing$ and $G$ is a connected graph,  $S$ is always a terminal instance. That is, the terminal configuration always exists.
\end{lemma}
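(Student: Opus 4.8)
# Proof Proposal for Lemma~\ref{lemma:sandpilesinkterminate}

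\textbf{Overall strategy.} The plan is to establish termination by exhibiting a \emph{potential function} that strictly decreases (by at least a fixed positive amount) with every firing, while remaining bounded below. Since each firing is a discrete step and the graph is finite, such a function certifies that only finitely many firings can occur, hence the process must reach a configuration with no full non-sink vertex — a terminal configuration, which by the uniqueness results cited in \Cref{sec:sinkmodel} is independent of the firing order.

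\textbf{Choosing the potential.} The natural candidate comes from the connection between firing and the graph Laplacian noted in the preliminaries: $F(u)$ is $-1$ times the $u$-th column of the Laplacian $L$. I would track, for each non-sink vertex $v$, the cumulative number of firings $c_t(v)$ performed up to time $t$, and consider a quantity measuring ``how far the chips have been pushed toward the sinks.'' Concretely, since the graph is connected and $M \ne \varnothing$, fix one sink $s \in M$ and let $h(v) \ge 0$ denote the graph distance (or, more robustly, the expected hitting time / effective-resistance-type quantity) from $v$ to the set $M$; set $h(v) = 0$ for $v \in M$. Define $\Phi(\sigma) = \sum_{v \in V(G)\setminus M} h(v)\,\sigma_v$. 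When a non-sink vertex $u$ fires, $\sigma_u$ drops by $\degree(u)$ and each neighbor $w$ gains one chip (chips sent to sinks simply vanish since $h = 0$ there), so
\[
\Delta\Phi \;=\; -\,h(u)\,\degree(u) \;+\; \sum_{w \in \neighbor(u)\setminus M} h(w).
\]
The key algebraic fact I want is that $h$ can be chosen so that $\sum_{w \in \neighbor(u)} h(w) \le h(u)\,\degree(u) - 1$ for every non-sink $u$; i.e., $h$ is a \emph{strict superharmonic} function off $M$ with a uniform gap. Such an $h$ exists precisely because the sink set is nonempty and the graph is connected — for instance, $h(v) = \mathbb{E}_v[\tau_M]$, the expected time for a simple random walk started at $v$ to hit $M$, satisfies $h(u) = 1 + \frac{1}{\degree(u)}\sum_{w \in \neighbor(u)} h(w)$, which rearranges exactly to $\sum_{w \in \neighbor(u)} h(w) = \degree(u)(h(u) - 1)$. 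Plugging this in gives $\Delta\Phi = -\degree(u) \le -1 < 0$ at each firing (after accounting for the dropped sink terms, which only help). Since $\Phi \ge 0$ always (all $h(v) \ge 0$, all $\sigma_v \ge 0$) and $\Phi(\sigma_{\mathrm{init}})$ is a fixed finite number, at most $\Phi(\sigma_{\mathrm{init}})$ firings can ever occur.

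\textbf{Concluding.} With the total number of firings bounded, any maximal firing sequence is finite and terminates in a configuration where no non-sink vertex is full — a terminal configuration by definition. Existence follows; uniqueness is already handled by the local/global uniqueness discussion referenced for the sink model. Alternatively, one can cite Definition/Proposition from Chapter 2.5 of \cite{klivans2018mathematics} directly, since the statement is attributed there, but I would prefer to include the self-contained potential argument above for completeness.

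\textbf{Main obstacle.} The one nontrivial point is justifying the existence of the superharmonic function $h$ with a uniform decrease gap — i.e., that expected hitting times to $M$ are finite and satisfy the stated identity. This relies on connectivity of $G$ together with $M \ne \varnothing$ (so that $M$ is reachable from every vertex, making $\tau_M$ almost surely finite with finite expectation on a finite graph). Once that is in hand, everything else is a short computation. A reader wanting to avoid random-walk language can instead take $h$ to be any solution of the linear system $L_{VV}\,h = \vone$ restricted to non-sink coordinates (with $h|_M = 0$), which is solvable and positive because $L_{VV}$ — the Laplacian minor obtained by deleting sink rows/columns — is positive definite for a connected graph with at least one deleted vertex; this is the purely linear-algebraic route and avoids probability entirely.
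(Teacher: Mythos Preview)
Your argument is correct. The hitting-time potential $\Phi(\sigma)=\sum_{v\notin M} h(v)\,\sigma_v$ with $h(v)=\mathbb{E}_v[\tau_M]$ does exactly what you claim: since $h\equiv 0$ on $M$, the identity $\sum_{w\in\neighbor(u)} h(w)=\degree(u)\,(h(u)-1)$ yields $\Delta\Phi=-\degree(u)\le -1$ at every firing of a non-sink vertex, and nonnegativity of $\Phi$ then bounds the total number of firings by $\Phi(\sigma)$. Connectivity and $M\ne\varnothing$ on a finite graph ensure $h$ is finite and satisfies the one-step recurrence, so there is no gap.

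As for comparison with the paper: the paper does not prove this lemma at all---it is simply attributed to \cite{klivans2018mathematics}. Interestingly, however, the paper does deploy precisely your potential function in \Cref{sec:simu}, where it sets $w(v)=\mathbb{E}_v T_s$ and uses the same one-step identity $\Delta w(v)=-1$ (equivalently, total weight drops by $\degree(u)$ per firing) to bound the number of iterations of the simulation algorithm. So your proposal is not only correct but also aligns with machinery the paper itself relies on elsewhere; you are just making explicit the termination argument that the paper outsources to the citation. The paper's other route to finiteness, via the auxiliary-graph construction in \Cref{sec:sinkmodel} combined with the Tardos $n^4$ bound, gives a (much weaker) explicit bound but is less direct than your potential argument.
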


Therefore, the sandpile prediction problem on sandpile with sinks only needs to compute the terminal configuration, which does not need to determine the recurrent case. However, the number of chips can be arbitrarily large, which is likely to affect the time complexity if it becomes too large. We formally state such prediction problem as follows:

\begin{problem}[Sandpile Prediction with Sinks]
\label{pro:sink}
For a given sandpile instance $S = (G, \sigma,M)$, the sandpile prediction with sinks Problem is to compute the terminal configuration of $S$.
\end{problem}

In most cases, we can assume $|M|=1$. That is, there is only one sink in the graph. For a graph with multiple sinks, we can merge all sink vertices into one, and this does not affect the terminal configuration. However, such a merge may destroy the original structure of the graph. For example, if there is a tree with multiple sinks, after merging them, the graph has multiple cycles and is no longer a tree. In \Cref{sec:sinktree}, we show how to modify our tree algorithm to solve \Cref{pro:sink} on a tree that contains at most three sinks. On the other hand, in \Cref{sec:simu}, we provide a simulation-based algorithm that works on arbitrary graphs with only one sink. Together, these two algorithms actually demonstrate a trade-off between the number of sinks and the graph structure.

\subsection{Simulation-based Algorithm}

\label{sec:simu}
By merging all sinks altogether, we can ensure there is only one sink $s$ in the graph. In the following analysis, we always assume there is only one sink in the graph $G$. We propose \Cref{algorithm:simulation} that can be applied on any graph to solve \Cref{pro:sink}.

\IncMargin{1em}
\begin{algorithm}
  \SetKwData{Left}{left}\SetKwData{This}{this}\SetKwData{Up}{up}
  \SetKwComment{Comment}{$\triangleright$\ }{}
  \SetKwFunction{SolveUpward}{\upward}
  \SetKwFunction{SolveDownward}{\downward}
  \SetKwFunction{CalculateNumberOfFirings}{\cnt}
  \SetKwFunction{UpdateDataStructure}{\updatedsupward}
  \SetKwFunction{MergeDataStructure}{\mergeupward}
  \SetKwFunction{DeltaSum}{\deltasum}
  \SetKwFunction{InitializeDataStructure}{\initds}
  \SetKwFunction{DeltaQuery}{\deltaquery}
  \SetKwFunction{RevertDataStructure}{\revertds}
  \SetKwFunction{SplitDataStructure}{\splitds}
  \SetKwInOut{Input}{input}\SetKwInOut{Output}{output}  
  \SetKwFunction{FSolveDownward}{SolveDownward}

  \SetKwProg{Fn}{Function}{:}{}

    \While{$\sigma$ is not a terminal configuration}{ \label{simu:repeat}
        $u \leftarrow \argmax_{v\neq s} \left\lfloor\frac{\sigma_v}{\degree(v)}\right\rfloor$\; \label{simu:findmin}
        $k \leftarrow \left\lfloor \frac{\sigma_u}{\degree(u)}\right\rfloor$\; \label{simu:k}
        $\sigma_u \leftarrow \sigma_u - k \cdot \degree(u)$\; \label{simu:fire1}
        \For{$v\neq s$ such that $(u,v) \in E(G)$}{ \label{simu:fire2}
            $\sigma_v \leftarrow \sigma_v + k$\; \label{simu:fire3}
        }
    }
  
  \BlankLine

  \caption{\textsc{Solve}($G$, $\sigma$)}\label{algorithm:simulation}
\end{algorithm}\DecMargin{1em}

In every iteration, we pick a non-sink vertex $u$ with the maximum ratio of $\left\lfloor\sigma_u/\degree(u)\right\rfloor$ to fire (\Cref{simu:findmin}). $\left\lfloor\sigma_u/\degree(u)\right\rfloor$ is the number of firings that could happen on $u$ with the current number of chips, denoted as $k$ (\Cref{simu:k}). We apply all these firings at once and add $k$ to its neighbors (\Cref{simu:fire1} to \Cref{simu:fire3}). We repeat this process until there is no more vertex to fire (\Cref{simu:repeat}), thus we have the terminal configuration. Maintaining the vertex with maximum ratio can be done by heaps or other data structures that support insertions and the \textsc{findmin} operation.

To better capture the performance of this simulation-based algorithm, we provide analysis on the number of iterations that  \Cref{algorithm:simulation} needs to terminate on various types of graphs. Throughout the performance analysis of this algorithm, we use $N$ to denote the total number of chips on the non-sink vertices in the initial configuration.

\subsubsection{Performance Analysis on General Graphs}

\simulation*

To prove \cref{theo:simu}, we use the following result from \cite{holroyd2008chip} that gives an upper bound of the firing number.
\begin{lemma}[\cite{holroyd2008chip}]\label{lemma:boundsink}
    Let $G$ be an $m$-edge graph in which a special vertex is chosen to be a sink, and the maximum effective resistance between the sink and any other vertex is $R$. For any chip configuration $\sigma$ with a total number of $N$ chips on the non-sink vertices, the number of chips moves needed to stabilize $\sigma$ is bounded by $2mNR$.
\end{lemma}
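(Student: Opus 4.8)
The plan is to sidestep the step-by-step dynamics and work directly with the firing vector (odometer) via the reduced Laplacian, then to recognize the total number of chip moves as a weighted sum of expected random-walk hitting times, which the classical commute-time identity bounds by $2mR$.

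First I would fix the setup. We may assume $G$ is connected (otherwise run the argument on the component containing $s$; components with no sink are irrelevant to the claim). By the abelian property, during any stabilizing sequence each vertex $v$ fires a well-defined number of times $\bc(v)\ge 0$, with $\bc(s)=0$, and since one firing of $v$ pushes $\degree(v)$ chips across edges, the quantity to bound is $\sum_{v\ne s}\degree(v)\,\bc(v)$. Let $L_s$ denote the Laplacian of $G$ with the row and column of $s$ deleted; it is symmetric positive definite and its inverse $L_s^{-1}$ has nonnegative entries. If $\sigma^{\circ}$ is the terminal configuration (which exists by \Cref{lemma:sandpilesinkterminate}), then on the non-sink coordinates the firing rule reads $\sigma^{\circ}=\sigma-L_s\bc$ — here $\bc(s)=0$ is what lets us replace the full Laplacian by $L_s$ — hence $\bc=L_s^{-1}(\sigma-\sigma^{\circ})$ and
\[
\sum_{v\ne s}\degree(v)\,\bc(v)=\vone^{\top}D\,L_s^{-1}(\sigma-\sigma^{\circ}),\qquad D:=\diag\bigl(\degree(v)\bigr)_{v\ne s}.
\]

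Next I would identify the row vector $\vone^{\top}DL_s^{-1}$ probabilistically. Since $I-P=D^{-1}L_s$ for $P$ the transition matrix of simple random walk killed at $s$, the matrix $G_s:=L_s^{-1}D=(I-P)^{-1}$ is exactly the Green's function of the killed walk, so its row sums are $(G_s\vone)_v=\sum_w(G_s)_{vw}=\E_v[\tau_s]$, the expected time to reach $s$ from $v$; by symmetry of $L_s$, $\vone^{\top}DL_s^{-1}=(G_s\vone)^{\top}$. Plugging in and discarding the nonnegative term $\sum_v\E_v[\tau_s]\,\sigma^{\circ}_v$,
\[
\sum_{v\ne s}\degree(v)\,\bc(v)=\sum_{v\ne s}\E_v[\tau_s]\bigl(\sigma_v-\sigma^{\circ}_v\bigr)\le\sum_{v\ne s}\E_v[\tau_s]\,\sigma_v\le\Bigl(\max_{v\ne s}\E_v[\tau_s]\Bigr)N.
\]
The proof then closes with the commute-time identity $\E_v[\tau_s]+\E_s[\tau_v]=2m\,R_{\mathrm{eff}}(v,s)$ ($\tau_v$ the hitting time of $v$; commute time equals $2m$ times effective resistance): nonnegativity of hitting times gives $\E_v[\tau_s]\le 2m\,R_{\mathrm{eff}}(v,s)\le 2mR$, and combining the two displays yields $\sum_{v\ne s}\degree(v)\bc(v)\le 2mNR$.

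\textbf{Main obstacle.} For this (known) statement there is no genuinely hard step once the objects above are in place; the two facts that carry it are that $\bc$ is monotone in the initial chips — equivalently, $L_s^{-1}$ is entrywise nonnegative, which is what licenses discarding $\sigma^{\circ}$ — and the commute-time identity, which I would invoke as a black box and which is exactly where the factor $2m$ appears. The one place to be careful is the sink bookkeeping, i.e.\ justifying $\sigma^{\circ}=\sigma-L_s\bc$ in terms of the reduced Laplacian. A purely linear-algebraic variant avoids random walks altogether: from $\bc\le L_s^{-1}\sigma$ (entrywise) one gets $\bc(v)\le N\max_{v,w}(L_s^{-1})_{vw}\le N\max_v(L_s^{-1})_{vv}=N\max_{v}R_{\mathrm{eff}}(v,s)=NR$, using Cauchy--Schwarz for the positive definite form $L_s^{-1}$ and the identity $(L_s^{-1})_{vv}=R_{\mathrm{eff}}(v,s)$, whence $\sum_{v\ne s}\degree(v)\bc(v)\le NR\sum_{v\ne s}\degree(v)\le 2mNR$.
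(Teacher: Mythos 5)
The paper does not prove this lemma; it imports it verbatim from Holroyd et al.\ \cite{holroyd2008chip}, so there is no in-paper argument to compare against. Your proof is correct and is essentially the standard one behind that citation: writing the odometer as $\bc=L_s^{-1}(\sigma-\sigma^{\circ})$, dropping the $\sigma^{\circ}$ term via entrywise nonnegativity of $L_s^{-1}$, identifying $\vone^{\top}DL_s^{-1}$ with the vector of expected hitting times of the sink, and finishing with the commute-time identity; your linear-algebraic variant via $(L_s^{-1})_{vv}=R_{\mathrm{eff}}(v,s)$ and Cauchy--Schwarz is also sound.
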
 

\begin{proof}[Proof of \cref{theo:simu}]
Let $t$ denote the current remaining number of firings. Let $N_c$ denote the total number of chips in the current configuration. By \cref{lemma:boundsink}, we have 
\begin{align} \label{formu:127}
    t \leq 2mN_cR
\end{align} at any time. The algorithm terminates when $t=0$. Let $t_0$ denote the initial total number of firings. Consider the current configuration $\sigma$, let $k=\max_{v \neq s} \lfloor \frac{\sigma_{v \neq s}}{\degree(v)} \rfloor$. If $k=0$, the algorithm terminates. Otherwise, we have 
\begin{align} \label{formu:128}
    k \geq \frac{\sum_{v\neq s}\sigma_v}{\sum_{v\neq s}\degree(v)}-1=\frac{N_c}{2m}-1.
\end{align} When $N_c \geq 4m$, $k \geq 1$, by applying \cref{formu:127} and \cref{formu:128}, we have 
\[t' = t-k  \leq t(1-\frac{1}{4Rm^2}+\frac{1}{2RN_cm})\leq t(1-\frac{1}{8Rm^2}).\] Otherwise, if $N_c \leq 4m$, we can apply the naive bound in \cref{lemma:boundsink} directly. Therefore, the algorithm terminates in $$
\log_{1-1/8Rm^2}\frac{1}{t_0} + 8Rm^2=
O(Rm^2\log(t_0))$$ iterations. Plugging in $t_0\le 2mN_c R\le 2n^3N$ gives us the result.
\end{proof}

\begin{remark}
    Note that the logarithmic dependency on the total number of chips $N$ is essential. Consider an $n$-vertex graph $G$ of max degree $\Delta$, our algorithm takes at least $\log_\Delta N$ iterations.
\end{remark}

\subsubsection{Performance Analysis on Structured Graphs}

Our proposed algorithm performs better on graphs with some special properties. Below, we present the method we will be using in performance analysis.

Let $t$ denote the current number of firings and let $\sigma^0,\sigma^1,\ldots, \sigma^t$ be a sequence of chip configurations, where $\sigma^{i+1}$ is got by firing some non-sink vertex in $\sigma^i$, $\sigma^0$ is the initial configuration and $\sigma^t$ is the terminal configuration. Define the weight $w(v)$ of vertex $v$ to be the expected time $\mathbb{E}_vT_s$ of a random walk $T_s$ that starts from $v$ to hit the sink $s$, and the weight of a chip configuration $$
w(\sigma^i)=\sum_{v\neq s}\sigma^i_vw(v).
$$ Conditioning on the first step $X_1$ of the random walk, we have $$
\Delta w(v) = \mathbb{E}_v(\mathbb{E}_{X_1} T_s -T_s) = -1.
$$ This shows that if we fire $v$ in $\sigma^i$ to get $\sigma^{i+1}$, the total weight goes down by $\degree(v)$.

\paragraph{Regular Graphs}

A $d$-regular graph is a graph in which every vertex has degree $d$. Note that in $d$-regular graphs, the number of edges is $nd/2$, and $R$ can be bounded by $n$. Plugging in these in \cref{theo:simu} gives us $O(d^2n^3\log(dn^2N)) = O(d^2n^3\log(nN))$. The following theorem shows the algorithm actually does better than that, especially when $d$ is large.

\begin{theorem}\label{theo:simuRegular}
    Given a sandpile instance $S(G,\sigma)$ such that $G$ is $d$-regular and contains exactly one sink, \cref{algorithm:simulation} terminates in $O(n^3\log(nN))$ iterations.
\end{theorem}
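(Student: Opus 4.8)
I would run the amortized potential argument from the proof of \cref{theo:simu}, but replace the chip-movement bound of \cref{lemma:boundsink} by the ``weight'' potential already set up above, and exploit $d$-regularity twice: each firing drops the weight by exactly $d$, and the weight of every vertex is only $O(n^2)$ with \emph{no} dependence on $d$. Concretely, recall $w(v)=\E_v T_s$ and $w(\sigma)=\sum_{v\neq s}\sigma_v w(v)$. Since every vertex has degree $d$, firing any non-sink vertex decreases $w(\sigma)$ by exactly $d$, and $w(\cdot)\geq 0$ always; hence if $t$ denotes the number of firings still remaining from the current configuration $\sigma$, then $t\le w(\sigma)/d$ at every point of the run, and in particular the total number of firings $t_0$ satisfies $t_0\le w(\sigma^0)/d$.

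\textbf{The $O(n^2)$ weight bound.} The key quantitative input is a bound $w(v)=O(n^2)$ uniform over all $d$-regular graphs. I would obtain it from the commute-time identity $\E_v T_u+\E_u T_v = 2m\,R(u,v)$ (effective resistance), with $m=nd/2$: taking $u=s$ and using that effective resistance is at most graph distance, $w(v)\le 2m\,R(v,s)\le nd\cdot\mathrm{dist}_G(v,s)\le nd\cdot\mathrm{diam}(G)$. Then I invoke the classical fact that a connected graph of minimum degree $d$ has diameter $O(n/d)$ (group a BFS layering into triples, each of which must contain at least $d+1$ vertices, so there are $O(n/d)$ triples). This gives $w(v)=nd\cdot O(n/d)=O(n^2)$, hence $w(\sigma^0)\le N\cdot O(n^2)$ and $t_0\le O(Nn^2/d)\le O(Nn^2)$, so $\log t_0=O(\log(nN))$.

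\textbf{Geometric decay.} Now I would repeat the decay step of \cref{theo:simu} verbatim. In one iteration on a configuration $\sigma$ with $N_c$ chips on the $n-1$ non-sink vertices, the algorithm fires the chosen vertex $k=\max_{v\neq s}\lfloor\sigma_v/d\rfloor\ge \frac{N_c}{(n-1)d}-1$ times, so $t\mapsto t-k$. Since $t\le w(\sigma)/d\le \frac{N_c}{d}\cdot O(n^2)$, we get $N_c\ge\Omega(td/n^2)$ and therefore $k\ge\Omega(t/n^3)-1$; so whenever $t=\Omega(n^3)$ a single iteration shrinks $t$ by a factor $1-\Omega(1/n^3)$. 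Consequently $O(n^3\log t_0)$ iterations bring $t$ below $O(n^3)$, after which each of the remaining $\le O(n^3)$ iterations removes at least one firing. The total is $O(n^3\log t_0)=O(n^3\log(nN))$, as claimed.

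\textbf{Main obstacle.} The only place the argument genuinely departs from \cref{theo:simu} is the $d$-independent hitting-time bound: the generic estimate $w(v)\le 2mR\le O(dn^2)$ would leave an extra factor of $d$ in both $t_0$ and the decay rate, yielding $O(dn^3\log(nN))$; it is exactly the $O(n/d)$ diameter estimate for regular graphs that removes it, which is why the algorithm improves on \cref{theo:simu} when $d$ is large. A minor point to verify is that the sink is itself a degree-$d$ vertex of $G$ (true by hypothesis), so that $m=nd/2$ and the diameter bound apply to the whole graph on which the random walk lives.
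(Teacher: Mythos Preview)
Your proposal is correct and follows essentially the same approach as the paper: both use the hitting-time potential $w$, exploit that each firing in a $d$-regular graph decreases $w$ by exactly $d$, and plug the resulting bound $t\le O(n^2N_c/d)$ into the geometric-decay template from \cref{theo:simu}. The only difference is that the paper obtains the $O(n^2)$ hitting-time bound for regular graphs by citing \cite{lovasz1993random}, whereas you derive it yourself via the commute-time identity and the $O(n/d)$ diameter bound for graphs of minimum degree $d$; both routes are standard and yield the same constant-free conclusion.
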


\begin{proof}[Proof of \cref{theo:simuRegular}] By $d$-regularity, the total weight goes down by exactly $d$ each time when firing a vertex. By \cite{lovasz1993random}, the weight of each vertex is bounded by $2n^2$ for regular graphs. Let $N_c$ denote the total number of chips in the current configuration. We have \begin{align}
    t=\frac{1}{d}\sum_{i=0}^{t-1}(w({\sigma^i})-w({\sigma^{i+1}}))  =\frac{1}{d}(w(\sigma^0)-w(\sigma^t))\le \frac{2n^2N_c}{d}.
    \end{align}

    The rest of the proof proceeds the same as in the proof of \cref{theo:simu}. Above, we showed that $t\le 2n^2N_c/d$ at any point in the algorithm. The algorithm terminates when $t=0$. Let $t_0$ denote the initial total number of firings. Let $k=\max_{v \neq s} \lfloor \frac{\sigma_v}{d} \rfloor$. Consider the current configuration $\sigma$. If $k=0$, the algorithm terminates. Otherwise, we have 
    \begin{align}
    k \geq \frac{N_c}{dn}-1.
    \end{align} When $N_c \geq 2dn$, combining above, we have 
    \[t' = t-k  \leq t(1-\frac{1}{2n^3}+\frac{1}{4n^3})\leq t(1-\frac{1}{4n^3}).\] Therefore, the algorithm terminates in $$
    \log_{1-1/4n^3}\frac{1}{t_0} + \frac{2n^2\cdot 2dn}{d}=
    O(n^3\log(t_0))
    =O(n^3\log(2n^2N/d))=O(n^3\log(nN))$$ iterations.
\end{proof}

\paragraph{Expander Graphs}

The notion of expanders arises frequently in many areas of math and CS theory. It has wide applications from constructing error-correcting codes \cite{sipser1996expander}, designing robust networks \cite{song2002expander} to serving as a tool to prove results in complexity theory \cite{ajtai1987deterministic} and number theory \cite{kowalski2019introduction}. Expanders are important and also interesting graph objects because they can be defined in many different languages: combinatorial, probabilistic, and algebraic. In particular, combinatorially speaking, expander graphs are graphs in which every small set of vertices has a (relatively) large boundary. The measure of expansion in an expander can be defined with respect to the number of edges or vertices on the boundary. We will stick with vertex expansion, which is more related to its probabilistic properties.

\begin{definition}
    An $\epsilon$-vertex-expander is a graph $G$ such that every vertex set $X\subset V(G)$ satisfying $|X|\le n/2$, has $|\neighbor(X)-X|\ge \epsilon |X|$.
\end{definition}

A result in \cite{chandra1989electrical} gives an upper bound on the maximum effective resistance of $\epsilon$-vertex-expanders.

\begin{lemma}[\cite{chandra1989electrical}]
    A connected $\epsilon$-vertex-expander $G$, with minimum degree $\delta$, has maximum effective resistance at most $24/(\epsilon^2(\delta+1))$.
\end{lemma}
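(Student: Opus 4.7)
The plan is to bound $R_{uv}$ for arbitrary $u, v \in V(G)$ by exhibiting an explicit low-energy unit flow from $u$ to $v$ and invoking Thomson's principle, $R_{uv} = \min_f \sum_e f(e)^2$ over unit $u$-$v$ flows $f$ in the unit-conductance graph.

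The flow I would construct has three pieces. First, at $u$, spread the unit of flow uniformly across the $\degree(u) \geq \delta$ edges incident to $u$; this contributes energy $1/\degree(u) \leq 1/\delta$ and places the flow on the seed set $S_0 = \{u\} \cup N(u)$ of size at least $\delta+1$. Second, grow this seed layer by layer using vertex expansion: by induction, the $t$-th iterated neighborhood $S_t$ satisfies $|S_t| \geq (\delta+1)(1+\epsilon)^t$ as long as $|S_t| \leq n/2$. Push the entire unit of flow from $S_{t-1}$ to its boundary $\partial S_{t-1}$ by spreading uniformly over the crossing edges. Since $|\partial S_{t-1}| \geq \epsilon |S_{t-1}|$ and every boundary vertex contributes at least one crossing edge, we have $|E(S_{t-1}, \partial S_{t-1})| \geq \epsilon(\delta+1)(1+\epsilon)^{t-1}$, so each crossing edge carries flow at most $1/(\epsilon(\delta+1)(1+\epsilon)^{t-1})$ and the layer-$t$ energy is at most $1/(\epsilon(\delta+1)(1+\epsilon)^{t-1})$. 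The geometric series over $t \geq 1$ sums to $(1+\epsilon)/(\epsilon^2(\delta+1))$. Third, perform the mirror-image construction from $v$, and since both expanded sets have size at least $n/2$, they must overlap; glue the two half-flows through the overlap via a short matching-style routing whose energy contribution is again $O(1/(\delta+1))$, subsumed by the main geometric sum.

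Collecting all contributions and carefully tracking the prefactors yields the claimed bound $R_{uv} \leq 24/(\epsilon^2(\delta+1))$.

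The main obstacle I anticipate is the gluing step: bounding the energy of the crossover portion of the flow in the middle overlap without incurring a bottleneck that swamps the geometric-series savings. One natural approach is to terminate both expansions exactly when the sets first reach size $\geq n/2$ and then route the unit flow through a matching or a small set of short paths within the intersection $S_T \cap T_T$ guaranteed by pigeonhole; alternatively, one can argue via a common mid-layer and a double-counting bound on the number of paths through it. Extracting the specific constant $24$, rather than merely $O(1)$, additionally requires carefully tracking the geometric-series constants together with the lower-order contributions from the initial spread at $u$ and $v$.
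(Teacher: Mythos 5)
First, a point of comparison: the paper does not prove this lemma at all — it is imported verbatim from \cite{chandra1989electrical} — so your proposal is not competing with an in-paper argument. Your strategy (Thomson's principle plus an expansion-driven layered flow) is the standard one and essentially the route of the cited reference, and your bookkeeping of the geometric series is correct \emph{assuming the object you build is a flow}. That assumption is the genuine gap. "Push the entire unit of flow from $S_{t-1}$ to its boundary by spreading uniformly over the crossing edges" does not define a unit $u$--$v$ flow: conservation must hold at every intermediate vertex, but under the uniform-over-edges rule a frontier vertex $w$ emits an amount proportional to its number of crossing edges, which need not equal the amount it received, and a vertex in the interior of $S_{t-1}$ still holding flow from an earlier round may have no crossing edges at all. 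The per-layer energy bound $1/|E(S_{t-1},\partial S_{t-1})|$ is therefore asserted for a pseudo-flow to which Thomson's principle does not apply. Repairing this requires working with BFS layers (so that flow resides only on the current frontier), specifying a conservation-respecting splitting rule at each frontier vertex, and then controlling the resulting non-uniform edge loads, e.g.\ by Cauchy--Schwarz; this is where much of the slack absorbed by the constant $24$ actually lives, and it is the technical heart of the lemma rather than a detail.

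The gluing step contains a second gap, which you flag but do not close. Two sets of size at least $n/2$ need not intersect (they may partition $V$), so you must run each expansion until its ball \emph{strictly} exceeds $n/2$. More importantly, the claim that the crossover can be routed with energy $O(1/(\delta+1))$ is unsubstantiated: after the last expansion step the two half-flows terminate on boundary layers $\partial S_{T-1}(u)$ and $\partial S_{T-1}(v)$, whose sizes, shapes, and relative positions are uncontrolled, and a "matching-style routing" between two arbitrary distributions on them has no a priori energy bound. You need either to arrange that both half-flows terminate as (near-)uniform distributions on large overlapping sets and route through the intersection with an explicit bound, or to restructure the argument (as the original proof effectively does) so that no ad hoc gluing is needed. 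As it stands the proposal is a correct plan with the two decisive steps — a valid conservative flow per layer, and a quantified crossover — left open, and the constant $24$ is consequently not derived.
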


Hence, we have the following corollary of \cref{theo:simu}.

\begin{corollary}\label{coro:simuExpander}
    Let $\epsilon$ be a constant. \cref{algorithm:simulation} terminates in $O(m^2\log (nN)/(\delta+1))$ iterations for $\epsilon$-vertex-expander $G$ with minimum degree $\delta$,  where $m$ denotes the number of edges.
\end{corollary}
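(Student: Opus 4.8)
The plan is to obtain \Cref{coro:simuExpander} as a direct consequence of \Cref{theo:simu} by plugging in the effective-resistance bound for expanders. First I would invoke the lemma of \cite{chandra1989electrical} quoted just above the corollary: for a connected $\epsilon$-vertex-expander $G$ with minimum degree $\delta$, the maximum effective resistance between any vertex and the sink satisfies $R \le 24/(\epsilon^2(\delta+1))$. Since $\epsilon$ is a constant by hypothesis, this reads $R = O(1/(\delta+1))$.

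Next I would substitute this bound on $R$ into the iteration count $O(Rm^2\log(nN))$ furnished by \Cref{theo:simu}. This immediately yields a bound of
\[
O\!\left(\frac{m^2\log(nN)}{\epsilon^2(\delta+1)}\right) = O\!\left(\frac{m^2\log(nN)}{\delta+1}\right),
\]
using once more that $\epsilon=\Theta(1)$ to absorb the $\epsilon^{-2}$ factor into the $O(\cdot)$. That is exactly the claimed bound, so the proof is essentially one line.

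There is essentially no obstacle here: the content is entirely in \Cref{theo:simu} and in the cited resistance estimate, and the corollary is just an instantiation. The only things to be slightly careful about are (i) confirming that \Cref{theo:simu} requires exactly one sink, which matches the standing assumption in \Cref{sec:simu} (all sinks merged into one), and (ii) noting that $G$ must be connected for the resistance bound to apply, which is implicit in the expander definition (an $\epsilon$-vertex-expander with $\epsilon>0$ is automatically connected, since a disconnected graph would have a component $X$ with $|\neighbor(X)\setminus X|=0$). I would state both of these in passing and then conclude. If one wanted a cleaner statement one could also remark that for bounded-degree expanders $m=O(n\delta)$ so the bound becomes $O(n^2\delta\log(nN))$, but that is not needed for the corollary as stated.
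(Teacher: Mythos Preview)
Your proposal is correct and matches the paper's approach exactly: the paper presents this as an immediate corollary of \Cref{theo:simu} obtained by substituting the Chandra et al.\ bound $R \le 24/(\epsilon^2(\delta+1))$ and absorbing $\epsilon$ into the constant. Your additional remarks on connectedness and the single-sink assumption are sound but not needed beyond what the paper already assumes.
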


Oftentimes, expanders are explicitly constructed with the additional property that is regular. For regular $\epsilon$-vertex-expander $G$, we have the following stronger result:

\begin{theorem}\label{theo:simuRegularExpander}
    Let $\epsilon$ be a constant. \cref{algorithm:simulation} terminates in $O(n^2\log n\log(nN))$ iterations for $d$-regular $\epsilon$-vertex-expander $G$.
\end{theorem}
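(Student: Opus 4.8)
The plan is to run exactly the weight-function argument behind the proof of \cref{theo:simuRegular}, but to replace the generic $O(n^2)$ bound on expected hitting times in regular graphs by a near-linear bound that exploits vertex expansion. Recall the setup introduced before \cref{theo:simuRegular}: put $w(v)=\mathbb{E}_v T_s$, the expected time for the simple random walk started at $v$ to reach the sink $s$, set $w(\sigma)=\sum_{v\ne s}\sigma_v w(v)$, and note that since $G$ is $d$-regular each firing decreases $w(\sigma)$ by exactly $d$. Because the terminal weight is nonnegative, from any configuration with chip count $N_c$ the number of firings still to be performed is at most $w(\sigma)/d\le (\max_v w(v))\,N_c/d$.

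The one new ingredient is a good bound on $\max_v w(v)$, and this is where I would use the effective-resistance estimate for vertex expanders stated just above (\cite{chandra1989electrical}): with minimum degree $d$ and constant $\epsilon$, the effective resistance between $s$ and any vertex is $O(1/d)$. Feeding this into the commute-time identity of the same paper, $\mathbb{E}_v T_s+\mathbb{E}_s T_v=2m\cdot R_{\mathrm{eff}}(v,s)$, and using $2m=nd$, I get $w(v)\le nd\cdot O(1/d)=O(n)$, hence $w(v)=O(n\log n)$ — already a factor of roughly $n/\log n$ better than the $2n^2$ bound used for general regular graphs. (The $O(n)$ bound would even give the stronger $O(n^2\log(nN))$; I keep $O(n\log n)$ below to match the statement exactly.)

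Writing $t$ for the number of firings still remaining, the bound above gives $t\le O(n\log n)\,N_c/d$ at all times. For the progress per iteration, the vertex $u$ chosen in \cref{algorithm:simulation} satisfies $k=\lfloor\sigma_u/\degree(u)\rfloor\ge \frac{N_c}{dn}-1$ by averaging over the non-sink vertices, and that iteration performs $k$ firings, so $t$ drops by $k$. As long as $N_c\ge 2dn$ we have $k\ge N_c/(2dn)$, so $k/t=\Omega\!\big(1/(n^2\log n)\big)$ and $t$ shrinks by a factor $1-\Omega(1/(n^2\log n))$ each iteration; thus $O(n^2\log n)$ iterations halve $t$. Once $N_c$ first drops below $2dn$ we have $t\le O(n^2\log n)$, and from then on every non-terminal iteration fires at least once, so $O(n^2\log n)$ further iterations finish the job. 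Summing gives $O(n^2\log n\cdot\log t_0+n^2\log n)$, and since $t_0\le O(n\log n)N/d\le\mathrm{poly}(n)\cdot N$ we have $\log t_0=O(\log(nN))$, which yields the claimed $O(n^2\log n\log(nN))$.

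The main obstacle is the hitting-time bound: everything else is bookkeeping identical to the proof of \cref{theo:simuRegular} with $n^2$ replacing $n^3$. One genuinely needs that constant vertex expansion forces near-linear hitting times to the sink — the naive $O(n^2)$ bound valid for all regular graphs would only reproduce the weaker $O(n^3\log(nN))$. The Chandra et al. resistance bound together with the commute-time identity supplies exactly this improvement.
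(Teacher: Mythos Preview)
Your argument is correct and follows the same weight-function framework as the paper: bound the remaining firings by $w(\sigma)/d$, lower-bound the per-iteration progress by $N_c/(dn)-1$, and combine. The only substantive difference is how the hitting-time bound $\max_v w(v)$ is obtained. The paper simply cites \cite{rubinfeld1990cover} for the bound $w(v)\le cn\log n$ on $d$-regular $\epsilon$-vertex-expanders and plugs it in. You instead derive the bound internally: take the effective-resistance estimate $R=O(1/d)$ already quoted from \cite{chandra1989electrical}, apply the commute-time identity $\mathbb{E}_vT_s+\mathbb{E}_sT_v=2m\,R_{\mathrm{eff}}(v,s)$ with $2m=nd$, and read off $w(v)=O(n)$. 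Your route is more self-contained (it reuses a lemma the paper already states) and in fact yields the stronger bound $O(n^2\log(nN))$; you noticed this and deliberately slackened to $O(n\log n)$ to match the theorem as stated. Everything downstream---the truncation at $N_c=2dn$, the geometric decrease of $t$, and the final $\log t_0=O(\log(nN))$ estimate---is identical to the paper's proof.
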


\begin{proof} By \cite{rubinfeld1990cover}, the weight of each vertex is bounded by $cn\log n$ for $d$-regular $\epsilon$-vertex-expander, where $c$ is a constant. The rest part of the proof proceeds the same as the proof of \cref{theo:simuRegular}. We have $$
    t=\frac{1}{d}\sum_{i=0}^{t-1}(w({\sigma^i})-w({\sigma^{i+1}}))  =\frac{1}{d}(w(\sigma^0)-w(\sigma^t))\le \frac{cnN_c\log n}{d}.
    $$ Combining this with the fact that
    \begin{align}
        \max_{v \neq s} \lfloor \frac{\sigma_v}{\degree(v)} \rfloor \geq \frac{N_c}{dn}-1,
    \end{align} and truncate at $N_c = 2dn$. We have 
    \[t' = t-\max_{v\neq s} \lfloor \frac{\sigma_v}{\degree(v)} \rfloor  \leq t(1-\frac{1}{cn^2\log n}+\frac{1}{2cn^2\log n})\leq t(1-\frac{1}{2cn^2\log n}).\] Hence, let $t_0$ denote the firing number of the initial configuration, and the algorithm terminates in $$
    \log_{1-1/2cn^2\log n}\frac{1}{t_0}+\frac{cn\log n \cdot2dn }{d}=
    O(n^2\log n \log t_0)
    =O(n^2\log n\log(cnN\log n/d))=O(n^2\log n\log(nN))$$ iterations.
\end{proof}

\paragraph{Other Structured Graphs}

A $d$-dimensional hypercube is a graph defined on the vertex set $\{0,1\}^d$, in which two vertices are connected if and only if they are different in exactly one of the $d$ coordinates. We have the following result for hypercube graphs.

\begin{theorem}\label{theo:hyperCube}
    \cref{algorithm:simulation} terminates in $O(n^2\log(nN))$ iterations for $d$-dimensional hypercube $G$.
\end{theorem}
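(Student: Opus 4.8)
The plan is to mirror the weight-function argument used for regular graphs in \cref{theo:simuRegular}, the only genuinely new ingredient being a sharp bound on random-walk hitting times in the hypercube. Throughout, $n = 2^d$, we fix an arbitrary vertex $s$ of $G$ as the sink, and $w(v) = \mathbb{E}_v T_s$ denotes the expected time for a random walk started at $v$ to hit $s$. As observed before the regular-graph analysis, since $G$ is $d$-regular, firing any non-sink vertex decreases $w(\sigma) = \sum_{v \ne s}\sigma_v w(v)$ by exactly $d$.

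The first and main step is to establish $\max_v w(v) = O(2^d) = O(n)$. By the symmetry of the hypercube, $w(v)$ depends only on the Hamming distance from $v$ to $s$ and is nondecreasing in it, so the maximum is attained at the antipode of $s$. Projecting the walk onto its Hamming distance to $s$ yields a birth--death chain on $\{0,1,\dots,d\}$ that from state $j$ moves to $j-1$ with probability $j/d$ and to $j+1$ with probability $(d-j)/d$; either by invoking a standard bound on hypercube hitting times or by evaluating the explicit birth--death formula, the expected time to reach $0$ from $d$ is at most $c\cdot 2^d$ for an absolute constant $c$. Hence $w(v) \le cn$ for every $v$, and consequently $w(\sigma) \le cnN_c$, where $N_c = \sum_{v \ne s}\sigma_v$ is the current number of chips.

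The rest is a transcription of the proof of \cref{theo:simuRegular} with the weight bound $2n^2$ replaced by $cn$. At any point of the algorithm, writing $\sigma^0$ for the current configuration and $t$ for the number of firings still to be performed, the weight identity gives $t = \tfrac1d\big(w(\sigma^0) - w(\sigma^t)\big) \le cnN_c/d$, while the vertex chosen in \cref{simu:findmin} fires $k = \max_{v\ne s}\lfloor\sigma_v/\degree(v)\rfloor \ge N_c/(dn) - 1$ times. Truncating at $N_c < 2dn$, a regime in which $t \le w(\sigma^0)/d < 2cn^2$ so that only $O(n^2)$ iterations remain, we get that whenever $N_c \ge 2dn$ the inequalities above yield $k \ge N_c/(2dn) \ge t/(2cn^2)$, so a single iteration multiplies $t$ by a factor at most $1 - 1/(2cn^2)$. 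Therefore the algorithm terminates within $\log_{1-1/(2cn^2)}(1/t_0) + O(n^2) = O(n^2\log t_0)$ iterations, and plugging in $t_0 \le cnN/d \le cnN$ gives $O(n^2\log(nN))$.

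The one real obstacle is the hitting-time bound of the second paragraph. Its whole purpose is to exhibit $O(n)$ hitting times for the hypercube, strictly better than the generic $O(n\log n)$ bound for $d$-regular $\epsilon$-vertex-expanders behind \cref{theo:simuRegularExpander}; this improvement is exactly what separates the claimed $O(n^2\log(nN))$ from the weaker $O(n^2\log n\log(nN))$ one would obtain by treating the hypercube as a regular expander (or from feeding $m = \Theta(n\log n)$ and $R = \Theta(1/\log n)$ into \cref{theo:simu}). Once the $O(n)$ estimate is in place, the counting in the last paragraph is routine.
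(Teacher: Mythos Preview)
Your proof is correct and matches the paper's approach almost verbatim: both invoke the $O(n)$ hitting-time bound for the hypercube and then rerun the weight-function/truncation argument of \cref{theo:simuRegular} with $cn$ in place of $2n^2$. The only cosmetic difference is that the paper cites a reference for $\max_v w(v)=O(n)$ whereas you sketch it via the birth--death projection onto Hamming distance, but the structure of the argument is identical.
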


\begin{proof}
    Since $d$-dimensional hypercube has $2^d$ vertices, we know $d=\log n$. Note that the $d$-dimensional hypercube is $d$-regular, and the total weight goes down by $d$ when firing a vertex. By \cite{sauerwaldsun2011spectral}, the weight of each vertex is bounded by $cn$ for some constant $c$. Plug in these numbers to the framework of the rest proof of \cref{theo:simuRegular}, we get $$
    t'\le t(1-\frac{1}{2cn^2})
    $$ in each iteration after truncating at $N=2dn$. Hence, the number of iterations is in $$
    O(\log_{1-1/2cn^2}\frac{1}{t_0}+\frac{cn\cdot 2dn}{d})
    =O(n^2\log(t_0))=O(n^2\log (cnN/d))=O(n^2\log(nN)).
    $$
\end{proof}

In addition, we list the number of iterations of graphs having some other interesting properties. These bounds can be obtained by simply plugging in the number of edges in terms of the number of vertices in \cref{theo:simu}.

\begin{corollary}[of \cref{theo:simu}]\label{coro:simuBoundedDegree}
    \cref{algorithm:simulation} terminates in $O(\Delta^2n^3\log(nN))$ iterations for graph $G$ with maximum degree at most $\Delta$.
\end{corollary}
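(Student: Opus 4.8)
The plan is to obtain this as a direct specialization of \cref{theo:simu}, which asserts that \cref{algorithm:simulation} terminates in $O(Rm^2\log(nN))$ iterations, where $m$ is the number of edges and $R$ is the maximum effective resistance between the sink and any other vertex. Since $G$ has maximum degree at most $\Delta$, it therefore suffices to bound both $m$ and $R$ in terms of $n$ and $\Delta$ and substitute.

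First I would bound the edge count by the handshake identity: $2m = \sum_{v\in V(G)} \degree(v) \le n\Delta$, so $m \le n\Delta/2 = O(n\Delta)$. Next I would bound the effective resistance. Because $G$ is connected, any two vertices are joined by a simple path using at most $n-1$ unit-resistance edges; by Rayleigh monotonicity (deleting the remaining edges can only increase the effective resistance between them) the effective resistance between any two vertices is at most $n-1$. In particular $R \le n-1 = O(n)$. Substituting these bounds into \cref{theo:simu} gives
\[
O(Rm^2\log(nN)) = O\!\left(n\cdot (n\Delta)^2\cdot \log(nN)\right) = O(\Delta^2 n^3\log(nN)),
\]
which is the claimed bound.

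There is no real obstacle here: the statement is a routine plug-in into \cref{theo:simu}, and the only step needing a word of justification is the crude estimate $R = O(n)$, which follows from the unit-resistance path bound and Rayleigh monotonicity. (For specific bounded-degree subfamilies one could sharpen $R$, as is done elsewhere in this section for expanders and hypercubes, but for the general bounded-degree claim the above argument is all that is required.)
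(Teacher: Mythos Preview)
Your proposal is correct and follows the same approach as the paper: the paper simply states that the bound is obtained by plugging the edge count (in terms of $n$ and $\Delta$) into \cref{theo:simu}, and you carry this out explicitly with the handshake bound $m\le n\Delta/2$ and the crude estimate $R\le n-1$.
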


\begin{corollary}[of \cref{theo:simu}]\label{coro:simuPlanar}
    \cref{algorithm:simulation} terminates in $O(n^3\log(nN))$ iterations for planar graph $G$.
\end{corollary}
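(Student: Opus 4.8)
The plan is to simply instantiate \cref{theo:simu}, whose iteration bound $O(Rm^2\log(nN))$ depends on the input graph only through the number of edges $m$ and the maximum effective resistance $R$ between the sink and any other vertex. For planar graphs both of these quantities are $O(n)$, which immediately yields the claimed $O(n^3\log(nN))$. So the whole proof reduces to two standard estimates plus a substitution.

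First I would bound $m$: since $G$ is planar and simple, Euler's formula gives $m \le 3n-6 = O(n)$ as soon as $n \ge 3$, and the cases $n \le 2$ need no argument. Next I would bound $R$. Fix any non-sink vertex $v$; because $G$ is connected it contains a spanning tree, hence a $v$-to-$s$ path of length at most $n-1$. A single path of length $\ell$ has effective resistance exactly $\ell$ between its endpoints, and adding the remaining edges of $G$ only creates additional parallel routes, which by Rayleigh monotonicity cannot increase the effective resistance; therefore $R \le n-1 = O(n)$.

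Finally, substituting $m = O(n)$ and $R = O(n)$ into $O(Rm^2\log(nN))$ gives $O(n\cdot n^2\cdot\log(nN)) = O(n^3\log(nN))$ iterations, as claimed. The only minor point to keep in mind is that the logarithmic factor in \cref{theo:simu} is already stated as $\log(nN)$ (the estimate $t_0 \le 2n^3N$ having been folded in there), so no further manipulation of the logarithm is required. I do not expect any genuine obstacle here: the entire content is the textbook planar edge bound together with the elementary monotonicity bound on effective resistance, exactly as the remark preceding the corollary anticipates.
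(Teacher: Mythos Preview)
Your proposal is correct and matches the paper's approach: the paper simply remarks that the corollary follows by plugging the planar edge bound $m=O(n)$ into \cref{theo:simu}, implicitly using the generic bound $R\le n$ noted earlier in the section. Your argument spells out both estimates (Euler's formula and Rayleigh monotonicity) explicitly, which is if anything more complete than the paper's one-line justification.
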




\subsection{Reduction Scheme by Vertex Removal}
\label{sec:reduction}

While the original sandpile instance does not contain any sink, we have to add sinks to the graph to decompose it into instances with smaller sizes or special structures. To begin with, we need to properly define the \textit{vertex removal} in our reduction scheme.

\begin{definition}[Vertex Removal]
\label{definition:vertex-removal}
Let $S(G, \sigma, M)$ be a sandpile instance with a non-empty set of sinks $M$. The instance obtained by removing a set of vertices $T \in V(G)$ is defined using the following procedure:

\begin{itemize}
    \item For every vertex $v \in T$ and each edge $(v, w)$ in the graph, create a new vertex $v'$ as a sink vertex, and add an edge $(v', w)$ into the graph.
    \item Remove the vertex $v$ and all the edges connecting vertex $v$ for all $v \in T$.
\end{itemize}

The graph we obtained after the removal of the vertices in $T$ is denoted as $G \setminus T$, and the instance we obtained is denoted as $S \setminus T$.
\end{definition}

 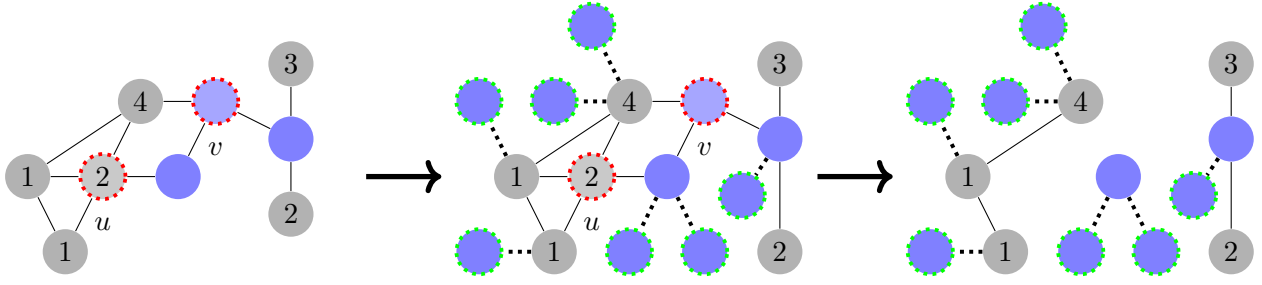
\begin{figure}[H]
    \centering
    \begin{tikzpicture}

 \tikzstyle{every node}=[fill=black!30,circle,minimum size=0.6cm,inner sep=1pt]
 \tikzset{removed/.style={fill=black!21,circle,minimum size=0.6cm,inner sep=1pt,line width=0.05cm,draw=red,dotted}}
 \tikzset{sink/.style={fill=blue!50,circle,minimum size=0.6cm,inner sep=1pt}}
 \tikzset{removed sink/.style={fill=blue!35,circle,minimum size=0.6cm,inner sep=1pt,line width=0.05cm,draw=red,dotted}}
 \tikzset{new sink/.style={fill=blue!50,circle,minimum size=0.6cm,inner sep=1pt,line width=0.05cm,draw=green,dotted}}

\begin{scope}[shift={(-3.5,0)}]
 
  \node[removed,label=below:{$u$}]    (1) at (0,0)            {2}; 
  \node             (2) at (-1,0)           {1}; 
  \node[sink]       (3) at (1,0)            {};
  \node             (4) at (-0.5,-1)        {1};
  \node             (5) at (0.5,1)          {4};
  \node[removed sink,label=below:{$v$}]       (7) at (1.5,1)  {};
  \node[sink]       (8) at (2.5,0.5)        {};
  \node             (9) at (2.5,1.5)        {3};
  \node             (10) at (2.5,-0.5)      {2};

  \foreach \from/\to in {1/2,1/3,1/4,2/5,1/5,2/4,5/7,7/8,8/9,8/10,3/7}
    \draw (\from) -- (\to);
\end{scope}

\begin{scope}[shift={(0,0)}]
\draw[line width=2pt,->] (0,0) -- (1,0);
\end{scope}

\begin{scope}[shift={(3,0)}]

  \node[removed,label=below:{$u$}]    (1) at (0,0)            {2}; 
  \node             (2) at (-1,0)           {1}; 
  \node[sink]       (3) at (1,0)            {};
  \node             (4) at (-0.5,-1)        {1};
  \node             (5) at (0.5,1)          {4};
  \node[removed sink,label=below:{$v$}]       (7) at (1.5,1)  {};
  \node[sink]       (8) at (2.5,0.5)        {};
  \node             (9) at (2.5,1.5)        {3};
  \node             (10) at (2.5,-1)        {2};

  \node[new sink]       (a) at (-1.5,1)     {};
  \node[new sink]       (b) at (-1.5,-1)    {};
  \node[new sink]       (c) at (-0.5,1)     {};
  \node[new sink]       (d) at (0.5,-1)     {};
  \node[new sink]       (e) at (1.5,-1)     {};
  \node[new sink]       (f) at (0,2)        {};
  \node[new sink]       (g) at (2,-0.25)    {};

  \foreach \from/\to in {1/2,1/3,1/4,1/5,2/5,2/4,5/7,7/8,8/9,8/10,3/7}
    \draw (\from) -- (\to);
  \foreach \from/\to in {2/a,4/b,5/c,3/d,3/e,5/f,8/g}
    \draw[line width=1.5pt,dotted] (\from) -- (\to);
\end{scope}

\begin{scope}[shift={(6,0)}]
\draw[line width=2pt,->] (0,0) -- (1,0);
\end{scope}

\begin{scope}[shift={(9,0)}]
  \node             (2) at (-1,0)           {1}; 
  \node[sink]       (3) at (1,0)            {};
  \node             (4) at (-0.5,-1)        {1};
  \node             (5) at (0.5,1)          {4};
  \node[sink]       (8) at (2.5,0.5)        {};
  \node             (9) at (2.5,1.5)        {3};
  \node             (10) at (2.5,-1)        {2};

  \node[new sink]       (a) at (-1.5,1)         {};
  \node[new sink]       (b) at (-1.5,-1)        {};
  \node[new sink]       (c) at (-0.5,1)         {};
  \node[new sink]       (d) at (0.5,-1)         {};
  \node[new sink]       (e) at (1.5,-1)         {};
  \node[new sink]       (f) at (0,2)            {};
  \node[new sink]       (g) at (2,-0.25)        {};

  \foreach \from/\to in {2/5,2/4,8/9,8/10}
    \draw (\from) -- (\to);
  \foreach \from/\to in {2/a,4/b,5/c,3/d,3/e,5/f,8/g}
    \draw[line width=1.5pt,dotted] (\from) -- (\to);
\end{scope}

\end{tikzpicture}
    \caption{A vertex removal with $T=\{u,v\}$. The blue vertices are sinks, and the gray vertices are normal vertices. }
    \label{figure:vertex-removal}
\end{figure}

We demonstrate this procedure in \Cref{figure:vertex-removal} by setting $T=\{u,v\}$. To remove $T$ (vertices with red dashes), we first add sinks (vertices with greed dashes) to all neighbors of $T$. Then, we remove $T$ and disconnect the graph into components, as shown on the right.



To remove vertices, we need to predict their firings on the graph and execute them first. Therefore, we need to figure out how to determine their firing numbers since it tells how many chips will be sent to their neighbors. On the other hand, if we already know how many times they will fire, we can safely ignore any chips sent from their neighbors, which is done by replacing their original positions with sinks. The formal reduction is given in \cref{theorem:general}. Removing any vertex needs a $O(\log n)$ factor multiplied to the total time complexity. In the following, we discuss the algorithmic details.

\begin{restatable}[Reduction by Vertex Removal]{theorem}{generalreduction}
\label{theorem:general}
Given a sandpile instance $S(G,\sigma)$ and a vertex set $P \subseteq V(G)$, let $\mathcal{G}$ be the set of connected components in $G \setminus P$. There is an algorithm that determines whether $S$ terminates and computes the terminal configuration of $S$ in $O\left(\log^{|P|} n \cdot \sum_{g \in \mathcal{G}} T(g)\right)$ time and $O\left(\sum_{g \in \mathcal{G}}M(g)\right)$ memory.
$T(g)$ and $M(g)$ denote the time and space complexity to solve a sandpile prediction on $G[V(g)\cup P]$ with $P$ being the set of sinks, respectively. The total number of chips in each subproblem is guaranteed to be at most $n^6+\lvert\lvert\sigma\rvert\vert_1$.
\end{restatable}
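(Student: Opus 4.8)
The plan is to prove \Cref{theorem:general} by induction on $|P|$, reducing the removal of $|P|$ vertices to the removal of a single vertex applied $|P|$ times. The base case $|P| = 0$ is trivial (the graph is already its own set of connected components, each of which is solved by the assumed subroutine). For the inductive step it suffices to establish the $|P| = 1$ case in a form that composes: given an instance $S(G, \sigma, M)$ (possibly already having sinks $M$) and a single non-sink vertex $p$, show that one can compute the terminal configuration of $S$ by solving $O(\log n)$ many prediction instances on $G \setminus \{p\}$ (where $p$ is replaced by sinks at its neighbors per \Cref{definition:vertex-removal}), with each such instance carrying a controlled number of chips, and then argue that the components of $G \setminus P$ are exactly the components obtained by peeling off the vertices of $P$ one at a time.

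The core of the single-vertex argument is a monotonicity-plus-binary-search idea analogous to the tree algorithm. The key observation is that if we \emph{knew} the true firing number $\bc(p)$ of the removed vertex $p$, we could: (i) subtract $\bc(p)\cdot\degree(p)$ chips from $p$ and add $\bc(p)$ to each neighbor of $p$, (ii) freeze $p$ as a sink (so it never fires again and absorbs incoming chips), and (iii) solve the resulting sink instance on $G \setminus \{p\}$ — which decomposes into the components $g$ — using the assumed subroutine; by the uniqueness theorems (\Cref{theorem:orders} and its local/sink versions in \Cref{sec:sinkmodel}) this reconstructs the global terminal configuration correctly, \emph{provided} that after stabilizing the rest, $p$ indeed receives back exactly enough chips to have fired $\bc(p)$ times and no more, i.e. the guess is self-consistent. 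So define $g(x)$ to be the number of chips that flow \emph{into} $p$ from its neighbors when we hypothesize that $p$ fires $x$ times, fix $p$ as a sink, and stabilize $G\setminus\{p\}$; then the number of chips on $p$ at the end of such a run is $\sigma_p - x\,\degree(p) + g(x)$, and $\bc(p)$ is characterized as the correct fixpoint. One shows $g$ is monotone non-decreasing in $x$ with bounded increments (each extra firing of $p$ pushes one extra chip to each neighbor, and by the sink-model analogue of \Cref{lemma:delta-differs-at-most-one} each neighbor returns at most one extra chip), so that $\phi(x) := \sigma_p - x\,\degree(p) + g(x)$ is monotone non-increasing; hence $\bc(p)$ is the least $x$ with $\phi(x) < \degree(p)$, and this can be located by binary search over $x \in [0, \text{poly bound}]$. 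Each probe of $g(x)$ is one call to the subroutine on each component, giving the $O(\log n)$ factor per removed vertex and the $O(\log^{|P|} n)$ factor overall; memory is not multiplied because the probes are sequential and reuse space.

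It remains to (a) pin down the range of the binary search and the chip bound, and (b) handle termination/recurrence detection. For (a): the firing number of any single vertex in an instance that terminates is at most the Tardos-type bound, and more simply, using $\sum_v \sigma_v \le n$-type bounds from \Cref{lemma:bounds-of-terminal-instance} together with the polynomial firing bounds cited in the related work (e.g. $\Theta(n^4)$ for sinkless undirected graphs, \cite{tardos1988polynomial}), one gets $\bc(p) = \mathrm{poly}(n)$; iterating the removal $|P|$ times, each stage adds at most $\mathrm{poly}(n)$ chips to the neighbors, so after all removals the total number of chips in any subproblem is bounded by $n^6 + \|\sigma\|_1$ as claimed (the exact exponent comes from bounding each stage's contribution by $n^4\cdot\max\degree \le n^5$ and summing over $\le n$ removals). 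For (b): a sinkless instance $S(G,\sigma)$ is recurrent iff $\sum_v \sigma_v > n - 2$ (\Cref{lemma:bounds-of-terminal-instance}), which we can check up front; once a sink has been introduced the instance always terminates by \Cref{lemma:sandpilesinkterminate}, so no recurrence check is needed in the recursive calls. The main obstacle I anticipate is (a) — making the self-consistency / fixpoint argument fully rigorous when $G\setminus\{p\}$ is disconnected and $p$ has neighbors in several components, since one must argue that the contributions $g(x)$ decompose additively over components and that the overall stabilization order (interleaving firings across components and at $p$) does not matter; this is exactly where the strong uniqueness theorems for the sink model (\Cref{sec:sinkmodel}) do the heavy lifting, so the proof will invoke them carefully rather than re-deriving commutativity by hand.
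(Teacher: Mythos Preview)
Your plan matches the paper's approach closely: both binary-search on the firing count of the removed vertex, verify each guess by solving the sink instance on the remaining components, and iterate over $P$ to accumulate the $\log^{|P|} n$ factor. The paper frames the identical argument through the Least Action Principle (\Cref{theorem:inequality}, \Cref{theorem:monofiring}, \Cref{lemma:generalsink}): the firing vector is the pointwise-minimal nonnegative integer solution of the Laplacian inequality system, and the predicate ``the $p$-inequality is satisfied after substituting $\bf(p)=mid$ and solving the rest'' is monotone in $mid$. Your $\phi(x)<\degree(p)$ check is literally this same predicate.

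Two points need repair. First, your recurrence test is wrong: \Cref{lemma:bounds-of-terminal-instance} is a tree-only statement (it relies on $|E|=n-1$), and for general graphs there is no sharp threshold in $\sum_v\sigma_v$ alone. The paper handles recurrence differently: it binary-searches $\bc(p)$ in $[0,n^4]$ and declares the instance recurrent if no feasible value is found, which is sound because (i) Tardos's bound guarantees $\bc(p)\le n^4$ whenever the sinkless instance terminates, and (ii) by \Cref{theorem:inequality} the inequality system has \emph{no} feasible solution when the instance is recurrent, so the feasibility check fails for every $mid$.

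Second, your justification for the monotonicity of $\phi$ is a dangling pointer: \Cref{lemma:delta-differs-at-most-one} is proved by induction on the tree and has no automatic ``sink-model analogue'' for general graphs. The right argument, which is exactly what underlies \Cref{theorem:monofiring} and the proof of \Cref{lemma:generalsink}, is the shift-by-$\mathbf{1}$ trick: if $\bd(x)$ is the firing vector of the sink instance with $x$ extra chips at each neighbor of $p$, then $\bd(x)+\mathbf{1}$ is feasible for the $(x{+}1)$-instance (every neighbor of $p$ touches the sink $p$, so the reduced Laplacian applied to $\mathbf{1}$ dominates $\mathbf{1}_{N(p)}$), whence $\bd(x{+}1)\le \bd(x)+\mathbf{1}$ and $g(x{+}1)-g(x)\le\degree(p)$. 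Once you state it this way your $\phi$-monotonicity follows, and the rest of your plan goes through.
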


\subsubsection{Capturing Firing Number by Linear Inequalities} We first need to show that the relationships between graphs and firing numbers can be captured through a system of inequalities. Specifically, we prove that an integral feasible solution of this system, exhibiting the smallest partial order, corresponds to a vector constructed by the firing numbers of the vertices:

\begin{restatable}{lemma}{inequality}
\label{theorem:inequality}
Given any sandpile instance $S(G,\sigma)$, let $\bc\in \mathbb{N}^n$ be the firing number vector, in which $\bc(v)$ is the firing number of vertex $v$. Consider the following system of linear inequalities:

\begin{align}
    \left(\sum_{v\in \neighbor(u)}\bf(v)\right)-\bf(u)\cdot \degree(u)+\sigma_u<\degree(u), \forall v \in V(G) \label{linearsystem}.
\end{align}

Among every non-negative integer solution of \Cref{linearsystem}, $\bc$ is the one with the minimum partial order. If there is no feasible solution, the instance is recurrent.
\end{restatable}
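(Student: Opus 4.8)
The plan is to establish two things: (1) the firing number vector $\bc$ is a feasible non-negative integer solution of \Cref{linearsystem}, and (2) it is coordinate-wise dominated by every other feasible non-negative integer solution. For (1), recall that the terminal configuration is $\sigma + \sum_{v} \bc(v) F(v)$ by \Cref{formula:recover-from-firing-number}, and its $u$-th entry is exactly $\sigma_u - \bc(u)\degree(u) + \sum_{v \in \neighbor(u)} \bc(v)$. Since the terminal configuration has no full vertex, this quantity is $< \degree(u)$ for every $u$, which is precisely \Cref{linearsystem}; non-negativity and integrality of $\bc$ are immediate from its definition. (For the recurrent case, if the instance does not terminate then $\bc$ is undefined; I would argue separately — or fold into the statement — that a recurrent instance admits no feasible solution at all, which follows because a feasible integer solution would let one run a firing sequence that reaches a terminal configuration, contradicting recurrence; see below.)

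For (2), the main work is a monotonicity/minimality argument. I would let $\bf \in \mathbb{N}^n$ be any feasible solution and show $\bc(v) \le \bf(v)$ for all $v$ by an exchange/induction argument on firing sequences. The clean way: think of $\bf$ as prescribing a ``budget'' of firings per vertex. Run the standard stabilization process but only allow vertex $v$ to fire if it has fired fewer than $\bf(v)$ times so far; call this the $\bf$-restricted process. One shows this process still terminates (it is a sub-process of an unrestricted one, and by \Cref{lemma:unique-terminal-configuration}-type least-action reasoning it cannot get stuck with a full non-budget-exhausted vertex) and that when it halts, the resulting configuration has $u$-th entry $\sigma_u - a(u)\degree(u) + \sum_{v\in\neighbor(u)} a(v)$ where $a(v) \le \bf(v)$ is the number of times $v$ actually fired. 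Feasibility of $\bf$ combined with the fact that any vertex that did \emph{not} exhaust its budget is not full (else the restricted process would continue) forces the configuration to be genuinely terminal: for a budget-exhausted vertex $u$ we get $\sigma_u - \bf(u)\degree(u) + \sum_{v} \bf(v) \ge \sigma_u - a(u)\degree(u) + \sum_v a(v)$ is... — here I need the feasibility inequality to pin down that the true entry is $< \degree(u)$. Having produced a legal firing sequence reaching a terminal configuration, uniqueness (\Cref{theorem:orders}) gives $a(v) = \bc(v)$ for all $v$, hence $\bc(v) = a(v) \le \bf(v)$.

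An alternative, possibly smoother route to (2) is the classical \emph{least action principle} for chip-firing: if $\bf$ is any vector such that firing each $v$ exactly $\bf(v)$ times (in some order, allowing intermediate ``illegal'' firings of a vertex that is temporarily not full is \emph{not} needed) yields a configuration with no full vertex, then $\bc \le \bf$ pointwise. The standard proof fires greedily and legally; the first time the greedy legal process would exceed the budget at some vertex $u$ (i.e., tries to fire $u$ for the $(\bf(u)+1)$-th time while $u$ is full), one derives a contradiction with the feasibility inequality at $u$ by comparing chip counts. I would adapt exactly this argument, using that $G$ is connected so the Laplacian structure behaves well, and using \Cref{lemma:unique-terminal-configuration} to know the legal greedy process is well-defined and order-independent. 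The recurrent case then follows as the contrapositive: existence of a feasible $\bf$ yields, via least action, that the greedy process terminates within $\sum \bf(v)$ total firings, so the instance is terminal.

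\textbf{Main obstacle.} The delicate point is making the minimality argument fully rigorous when $\bf$ is merely a feasible solution of the inequality system rather than ``the number of firings in some actual legal run'' — one must show that a feasible $\bf$ always \emph{permits} a legal firing sequence realizing a terminal state without overshooting any coordinate, and this is exactly the content of the least-action principle that needs care about the order of firings and about vertices that sit at exactly $\degree(u)-1$ chips. I expect to spend most of the proof setting up the greedy legal process against the budget $\bf$ and extracting the contradiction at the first over-budget vertex; the feasibility case (part (1)) and the recurrent-case remark should be short once this is in place.
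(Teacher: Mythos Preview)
Your proposal is correct, and its core is the same least-action/budget argument the paper ultimately uses. The paper takes a slightly longer route: it first shows the feasible set is closed under pointwise minimum (hence a meet-semilattice), deduces existence of a minimum element $\bp$, and only then runs your budget-restricted firing argument to identify $\bp$ with $\bc$. You apply the budget argument directly to an arbitrary feasible $\bf$, which is more economical---the lattice step is redundant once the least-action argument is in hand, since that argument works verbatim with $\bp$ replaced by any feasible $\bf$. The paper's semilattice observation is a pleasant structural fact, but for the statement itself your direct route is cleaner. Your ``main obstacle'' (handling a budget-exhausted vertex $u$ with $a(u)=\bf(u)$) is resolved exactly as you sketch and exactly as the paper does: the current chip count at $u$ is
\[
\sigma_u - a(u)\,\degree(u) + \sum_{v\in\neighbor(u)} a(v)
\;\le\;
\sigma_u - \bf(u)\,\degree(u) + \sum_{v\in\neighbor(u)} \bf(v)
\;<\; \degree(u)
\]
by $a\le\bf$ and feasibility of $\bf$, so $u$ is not full and the restricted process has genuinely reached a terminal configuration.
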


The proof of \cref{theorem:inequality} can be found in \cref{sec:proofinprelim}. This is also known as the \textit{Least Action Principle)} \cite{klivans2018mathematics}.
We also give a version on sandpile with sinks in \cref{corollary:inequality}:

\begin{corollary}[Corollary of \Cref{theorem:inequality}]
\label{corollary:inequality}
Given any sandpile instance $S(G,\sigma, M)$ with the non-empty set of sinks $M$, let $\bc\in \mathbb{N}^{n-|M|}$ be the firing number vector, in which $\bc(v)$ is the firing number of vertex $v$, for $v\notin M$. Consider the following system of linear inequalities:

\begin{align}
    \left(\sum_{v\in \neighbor(u) \setminus M}\bf(v)\right)-\bf(u)\cdot \degree(u)+\sigma_u<\degree(u), \forall u \in V(G) \setminus M\label{formula:linearsystemwithsinks}.
\end{align}

Among every non-negative integer solution of \Cref{formula:linearsystemwithsinks}, $\bc$ is the one with the minimum partial order.
\end{corollary}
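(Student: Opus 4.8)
The plan is to prove \cref{corollary:inequality} by adapting the argument behind \cref{theorem:inequality} (the Least Action Principle) to the model with sinks, establishing the two required properties separately: that $\bc$ is a non-negative integer solution of \cref{formula:linearsystemwithsinks}, and that $\bc \le \bf$ coordinatewise for every non-negative integer solution $\bf$.

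For feasibility, I would start from the fact that $S(G,\sigma,M)$ always terminates (\cref{lemma:sandpilesinkterminate}) and that its terminal configuration and per-vertex firing counts are well defined and order-independent (the sink analogue of \cref{theorem:orders} discussed in \cref{sec:sinkmodel}). Tracking chips: when a non-sink vertex $v$ fires it loses $\degree(v)$ chips and each non-sink neighbour gains one, while chips sent along edges into $M$ are discarded; hence the terminal chip count at a non-sink vertex $u$ equals $\sigma_u - \bc(u)\degree(u) + \sum_{v \in \neighbor(u)\setminus M} \bc(v)$. Since the configuration is terminal, this quantity is $< \degree(u)$ for every $u \notin M$, which is exactly \cref{formula:linearsystemwithsinks} with $\bf = \bc$. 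Thus $\bc$ is feasible.

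For minimality, I would fix any non-negative integer solution $\bf$ and run any legal sequence of firings that stabilizes $S$. I claim no non-sink vertex ever fires more than $\bf$ times along this sequence. Suppose not, and consider the first moment at which some vertex $u$ is about to fire for the $(\bf(u)+1)$-th time. By minimality of this moment, every non-sink $v$ has so far fired at most $\bf(v)$ times (and sinks have fired zero times). Because the chip count at $u$ is non-decreasing in the number of firings of each neighbour and decreases by $\degree(u)$ per firing of $u$ itself, the number of chips currently at $u$ is at most $\sigma_u + \sum_{v \in \neighbor(u)\setminus M}\bf(v) - \bf(u)\degree(u) < \degree(u)$ by \cref{formula:linearsystemwithsinks}; hence $u$ cannot fire, a contradiction. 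Applying this to a stabilizing sequence, in which $v$ fires exactly $\bc(v)$ times, yields $\bc(v) \le \bf(v)$ for all $v \notin M$. Together with feasibility, $\bc$ is the minimum solution in the partial order.

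The main obstacle is purely bookkeeping: keeping the monotonicity/``first overflow'' argument airtight while respecting the asymmetry of the sink model — namely that $\degree(u)$ counts all edges at $u$ (including those to $M$), whereas only non-sink neighbours return chips, so the $\degree(u)$ appearing in $\bc(u)\degree(u)$ and the restricted neighbour sum must not be conflated. Unlike in \cref{theorem:inequality}, there is no recurrent case to treat here, since \cref{lemma:sandpilesinkterminate} guarantees termination whenever $M \ne \varnothing$; alternatively one could merge $M$ into a single sink and localize \cref{theorem:inequality}, but the direct argument above is shorter.
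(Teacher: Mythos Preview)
Your proof is correct. Both the feasibility step and the first-overflow minimality argument are standard Least Action Principle reasoning, carried out directly in the sink model; the bookkeeping about $\degree(u)$ counting all incident edges while the neighbour sum excludes $M$ is handled correctly.

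The paper does not write out a separate proof of \cref{corollary:inequality}: it is stated as an immediate corollary of \cref{theorem:inequality}, and the paper's machinery for transferring sinkless results to the sink model is the auxiliary-graph construction of \cref{def:auxiliary-graph} (attach enough leaves to each sink so that it can never fire, then work in the resulting sinkless instance). Under that reduction the inequality system \cref{linearsystem} on $G'$, restricted to the original non-sink vertices, coincides with \cref{formula:linearsystemwithsinks}, and \cref{theorem:inequality} gives the conclusion. Your route is genuinely different: you argue directly in $S(G,\sigma,M)$ without any auxiliary graph, and your minimality proof (first moment a vertex would exceed $\bf$) is also structurally different from the paper's proof of \cref{theorem:inequality}, which goes through a meet-semilattice argument on the solution set before identifying the minimum with $\bc$. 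The paper's reduction is economical in that one sinkless proof is reused wholesale; your direct argument is shorter, self-contained, and avoids introducing the auxiliary instance altogether.
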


Our proposed tree algorithm can be viewed as a means of identifying the feasible solution with the smallest partial order of \Cref{linearsystem} on the given sandpile instance. This interpretation highlights the connection between the algorithm's execution and the problem's mathematical formulation.

\subsubsection{Independent Monotonicity of Firing Number} Furthermore, in \cref{linearsystem}, we show that for every $u \in V(G)$, there is a threshold value $\bp(u)$ such that if and only if $\bf(u) \geq \bp(u)$, the feasible solution exists. We formally state and prove such monotonicity in \cref{theorem:monofiring}. Another version on sandpile with sinks is given as \cref{corollary:monofiring}.

\begin{restatable}{lemma}{monofiring}
\label{theorem:monofiring}
Given any sandpile instance $S(G,\sigma)$ that terminates, for each vertex $u \in V(G)$, there exists a non-negative threshold value $\bp(u)$, such that for any non-negative integer $k$, if and only if $k\geq \bp(u)$, there exists a feasible solution $\bf$ satisfying \Cref{linearsystem} and $\bf(u)=k$. Moreover, $\bp$ is equal to the firing number vector $\bc$.
\end{restatable}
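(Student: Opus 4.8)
The plan is to derive \Cref{theorem:monofiring} as a consequence of the two structural facts already in hand: \Cref{theorem:inequality}, which says that the firing number vector $\bc$ is the coordinatewise-minimum non-negative integer solution of \Cref{linearsystem}, and the \emph{Abelian} (order-independence) property of \Cref{theorem:orders}. First I would fix a vertex $u$ and set $\bp(u) \defeq \bc(u)$, which is the value the lemma ultimately claims; the whole content is then to show that \Cref{linearsystem} admits a feasible solution $\bf$ with $\bf(u)=k$ \emph{exactly} when $k \ge \bc(u)$.

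The ``only if'' direction is immediate from minimality: if $\bf$ is any non-negative integer solution of \Cref{linearsystem}, then by \Cref{theorem:inequality} we have $\bc \preceq \bf$ coordinatewise, so in particular $\bf(u) \ge \bc(u)$; hence no feasible solution can have $\bf(u) = k < \bc(u)$. For the ``if'' direction I would argue constructively. Given $k \ge \bc(u)$, consider the modified sandpile process in which we first \emph{force} vertex $u$ to fire $k - \bc(u)$ extra times beyond what the ordinary stabilization prescribes, equivalently run the chip-firing game on the configuration $\sigma$ but artificially treating $u$ as ``full'' until it has fired $k$ times total, and then stabilize the rest of the graph in the usual way. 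Because the sandpile is Abelian, this still terminates (each forced firing only moves chips; the resulting instance on $V(G)\setminus\{u\}$ with $u$ acting as a bounded-multiplicity source still stabilizes since the original instance terminates and adding chips to a terminating configuration that still terminates keeps the firing vector finite — here I would invoke the least-action / monotonicity machinery, or simply note $k$ is finite so only finitely many extra chips are injected). Let $\bf$ be the resulting firing-count vector. By construction $\bf(u) = k$, and by the same argument that proves \Cref{theorem:inequality} — namely that the terminal configuration has every non-sink vertex non-full — the vector $\bf$ satisfies every inequality of \Cref{linearsystem}: for $v \ne u$ this is just the termination condition at $v$, and for $v = u$ it holds because after the forced firings $u$ is left non-full as well (we stop forcing once it would no longer be able to fire on its own, and stabilizing the rest can only send back chips that, were $u$ still full, would have been part of $\bc(u)$; choosing $k \ge \bc(u)$ and stabilizing fully makes $u$ non-full). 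This yields a feasible $\bf$ with $\bf(u) = k$, completing the equivalence, and the final sentence ``$\bp$ is equal to $\bc$'' is then just the definition $\bp(u) = \bc(u)$ made for every $u$.

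The step I expect to be the main obstacle is making the ``force $u$ to fire $k$ times, then stabilize'' construction rigorous and checking that the $u$-th inequality of \Cref{linearsystem} is actually satisfied by the resulting $\bf$. The subtlety is that after forcing extra firings of $u$ and then stabilizing $V(G)\setminus\{u\}$, neighbors return chips to $u$, and one must rule out that these returned chips make $u$ full again (which would violate the inequality at $u$, or equivalently mean the process hasn't really ``stopped''). I would handle this by a careful monotonicity argument in the spirit of \Cref{lemma:delta-differs-at-most-one} and \Cref{lemma:remaining-chips}: the number of chips on $u$ after firing it $k$ times and fully stabilizing everything else is a non-increasing function of $k$ (each extra firing removes $\degree(u)$ chips and the neighbors return at most $\degree(u)$ in total, by the local upward-contribution bound), so once $k \ge \bc(u)$ the count stays below $\degree(u)$. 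Equivalently — and perhaps cleaner to write — I could bypass the explicit construction entirely and instead take the known solution $\bc$, observe that $\bc + (k-\bc(u))\cdot(\text{relaxation at }u)$ need not be feasible directly, so the monotone-stabilization argument really is needed; alternatively one can cite \Cref{corollary:inequality} applied to the instance with $u$ adjoined as a partial source. I would present whichever of these is shortest, but the monotonicity-of-remaining-chips observation is the crux and should be stated as its own short claim before assembling the proof.
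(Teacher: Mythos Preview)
Your ``only if'' direction is exactly the paper's argument. The ``if'' direction, however, is vastly more complicated than it needs to be, and you are missing the one-line observation that the paper uses.

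The key point you overlook is that the linear system \Cref{linearsystem} is \emph{shift-invariant}: if $\bf$ is any feasible solution and $c \ge 0$ is an integer, then $\bf + c\cdot\vone$ is again feasible. Indeed, for each vertex $w$ the left-hand side of its inequality changes by $|\neighbor(w)|\cdot c - \degree(w)\cdot c = 0$, since $|\neighbor(w)| = \degree(w)$. So given $k \ge \bc(u)$, the paper simply takes $\bf := \bc + (k-\bc(u))\cdot\vone$; this is feasible and has $\bf(u)=k$. That is the entire ``if'' direction.

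You nearly stumble onto this when you write ``take the known solution $\bc$, observe that $\bc + (k-\bc(u))\cdot(\text{relaxation at }u)$ need not be feasible directly'' --- but you are considering adding only at $u$, which indeed can fail, rather than adding the constant everywhere, which trivially succeeds. Because of this, you commit to the forced-firing-plus-stabilization construction, correctly identify the obstacle (showing that after forcing $k$ firings at $u$ and stabilizing $V(G)\setminus\{u\}$, the returned chips do not make $u$ full again), and propose to resolve it via a monotonicity argument in the style of \Cref{lemma:remaining-chips}. That argument can probably be made to work, but it is substantially harder than the problem warrants, and the lemmas you cite are proved only for trees, not general graphs, so you would need to redo them. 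Replace the whole construction with the shift-invariance observation and the proof becomes two sentences.
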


\begin{proof}
    For any integral non-negative feasible solution $\bf$ of \Cref{linearsystem}, if we let $\bf'(u)=\bf(u)+1$ for each $u \in V(G)$, $\bf'(u)$ is still a feasible solution. Since the firing number vector $\bc$ is also a feasible solution, for any $u \in V(G)$ and integer $k \geq \bc(u)$, we can construct a feasible solution $\bc'(v)=\bc(v)+(k-\bc(u), v\in V(G)$. By \Cref{theorem:inequality}, if $k<\bc(u)$, there is no feasible solution in which $\bq(u)=k$. Otherwise, it contradicts the assumption that $\bc$ takes the minimum value of all feasible solutions on each index $u \in V(G)$. Thus, we have proved our lemma.
\end{proof}

\begin{corollary}[Corollary of \Cref{theorem:monofiring}]
\label{corollary:monofiring}
Given any sandpile instance $S(G,\sigma,M)$ that terminates, for each vertex $u \in V(G)\setminus M$, there exists a non-negative threshold value $\bp(u)$, such that for any non-negative integer $k$, if and only if $k\geq \bp(u)$, there exists a feasible solution $(\bf(v))_{v\in V\setminus M}$ satisfying \Cref{formula:linearsystemwithsinks} and $\bf(u)=k$. Moreover, $(\bp(v))_{v \in V \setminus M}$ is equal to the firing number vector $\bc$ of $S$.
\end{corollary}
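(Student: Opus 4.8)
The plan is to mirror the proof of \Cref{theorem:monofiring} step by step, substituting the sink analogues of each ingredient. Two facts do the work: \Cref{lemma:sandpilesinkterminate}, which guarantees $S(G,\sigma,M)$ terminates so that the firing number vector $\bc$ is well-defined, and \Cref{corollary:inequality}, which identifies $\bc$ as the componentwise-minimum non-negative integer solution of \Cref{formula:linearsystemwithsinks}.

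The one place that needs genuine (if tiny) attention is the closure-under-$\vone$-shift property. First I would verify: if $\bf$ is a non-negative integer feasible solution of \Cref{formula:linearsystemwithsinks} and $\bf'(v) := \bf(v) + 1$ for all $v \in V(G) \setminus M$, then for each non-sink $u$ the left-hand side of its inequality changes by exactly $|\neighbor(u)\setminus M| - \degree(u) = -|\neighbor(u)\cap M| \le 0$, using that the graph is simple so $\degree(u) = |\neighbor(u)|$. Hence the shift can only decrease each left-hand side, so $\bf'$, and inductively $\bf + c\vone$ for every integer $c \ge 0$, remains feasible. This differs slightly from the sink-free case, where the shift leaves the left-hand sides unchanged; here it is the presence of sink neighbors that makes each inequality go slack rather than stay tight under the shift.

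With that in hand, I would set $\bp := \bc$ and argue both directions. For $k \ge \bc(u)$: the vector $\bc'(v) := \bc(v) + (k - \bc(u))$ is feasible by the closure property and satisfies $\bc'(u) = k$, so a feasible solution with value $k$ at $u$ exists. For $k < \bc(u)$: any feasible $\bf$ satisfies $\bc \preceq \bf$ componentwise by \Cref{corollary:inequality}, so $\bf(u) \ge \bc(u) > k$, and no feasible solution can take the value $k$ at $u$. Combining, a feasible solution with $\bf(u) = k$ exists if and only if $k \ge \bp(u)$, with $\bp(v) = \bc(v)$ for all $v \in V(G) \setminus M$, which is exactly the claim. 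I do not foresee any real obstacle; the argument is essentially a transcription of the sink-free proof, with the closure step being the only line that is not completely verbatim.
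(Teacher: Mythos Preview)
Your proposal is correct and follows essentially the same approach as the paper, which treats the corollary as immediate from \Cref{theorem:monofiring} by substituting the sink analogues (\Cref{corollary:inequality} in place of \Cref{theorem:inequality}). Your explicit verification that the $+\vone$-shift only slackens each inequality by $|\neighbor(u)\cap M|$ is the one detail the paper leaves implicit, and it is handled correctly.
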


\subsubsection{Vertex Removal by Binary Search} With \Cref{corollary:inequality} and \Cref{corollary:monofiring}, we can remove a vertex at the cost of a $O(\log n)$-factor in the overall complexity. Here, we reduce the problem to the bounded sandpile prediction problem (\cref{prob:bounded}). It is a special prediction problem defined with two parameters, $L_1$ and $L_2$, indicating restrictions to the maximum numbers of firings and chips. If the firing number exceeds the limit, we terminate our algorithm and report with $\ovf$, where $\ovf$ is a special vector that exceeds the firing vector $\bc$ on each non-sink vertex. It signals that the restriction has been violated.

\begin{problem}[Bounded Sandpile Prediction with Sinks]
\label{prob:bounded}
For a given sandpile instance $S=(G,\sigma,M)$ and two parameters $L_1$ and $L_2$ such that $\lvert\lvert \sigma \rvert\rvert_1 \leq L_1$. The bounded sandpile prediction problem with sinks is to determine whether the firing vector $\bc$ is uniformly bounded by $L_2$. If yes, compute the terminal configuration; otherwise, return $\ovf$.
\end{problem}

In the following lemma, we state how to reduce the sandpile with sinks problem (\cref{pro:sink}) into several bounded subproblems (\cref{prob:bounded}) by removing a single vertex $p$.

\begin{restatable}{lemma}{generalsink}
\label{lemma:generalsink}
Given a sandpile instance $S(G,\sigma,M)$ and a vertex $p \in V(G)$, let $\mathcal{G}$ be the set of connected components in $G \setminus p$. There is an algorithm that solves the bounded sandpile prediction problem with sinks with parameters $L_1$ and $L_2$ in $O\left(\log L_1 \cdot \sum_{g \in \mathcal{G}} T(g)\right)$ time and $O\left(\sum_{g \in \mathcal{G}}M(g)\right)$ memory. $T(g)$ and $M(g)$ denote the time and space complexity to solve the bounded sandpile prediction problem with sinks on $g$ where $L_1'=L_1+\degree(p)\cdot L_2$ and $L_2'=L_2$, respectively.
\end{restatable}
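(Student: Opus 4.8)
# Proof Proposal for Lemma~\ref{lemma:generalsink}

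The plan is to treat the removed vertex $p$ as an unknown firing count $k = \bc(p)$ and search for it by binary search. The key structural insight is \Cref{corollary:monofiring}: once we fix how many times $p$ fires, the remaining instance decomposes completely, and the feasibility of the inequality system (restricted to the non-sink, non-$p$ vertices) is monotone in $k$. Concretely, I would first set up the reduced instance: for a candidate value $k$, replace $p$ by a sink and add $k$ chips along each edge incident to $p$ (equivalently, add $k$ to $\sigma_w$ for each neighbor $w$ of $p$ that survives). Because $p$ becomes a sink, the graph splits into the connected components $\mathcal{G}$ of $G \setminus p$, and each component $g$ together with the sink set $M$ (plus the new sink replacing $p$) is an independent bounded sandpile-with-sinks instance. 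Solving each component yields a firing vector on $V(g)$; concatenating these over all $g\in\mathcal{G}$ gives a candidate firing vector $\bf^{(k)}$ on $V(G)\setminus(\{p\}\cup M)$.

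Next I would argue the \emph{correctness test}: the candidate $k$ equals the true firing number $\bc(p)$ iff $\bf^{(k)}$, extended by $\bf^{(k)}(p) := k$, satisfies the single remaining inequality of \Cref{formula:linearsystemwithsinks} at vertex $p$, namely $\left(\sum_{v \in \neighbor(p)\setminus M}\bf^{(k)}(v)\right) - k\cdot\degree(p) + \sigma_p < \degree(p)$. The "if" direction: if this holds, then $\bf^{(k)}$ together with $\bf^{(k)}(p)=k$ is a feasible solution of the full system \Cref{formula:linearsystemwithsinks} for $S$ (the component solutions handle all inequalities away from $p$, by \Cref{corollary:inequality} applied to each $g$), and by minimality it must be $\bc$ — wait, more carefully, I would use \Cref{corollary:monofiring}: feasibility with $\bf(p)=k$ exists iff $k \ge \bc(p)$, and among feasible solutions the component solver returns the minimal one on each component, so the inequality at $p$ is \emph{tightest} exactly at $k=\bc(p)$ and fails for $k$ strictly smaller (no feasible completion exists) while at $k > \bc(p)$ the left side only grows — here I need the monotonicity of $\sum_v \bf^{(k)}(v)$ in $k$, which follows because increasing the chips pushed onto each component can only increase each component's minimal firing vector (again \Cref{corollary:monofiring} / \Cref{corollary:inequality} applied componentwise, since adding chips preserves feasibility and weakly increases the minimal solution). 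Thus the predicate "inequality at $p$ holds for $\bf^{(k)}$" is monotone in $k$ with threshold $\bc(p)$, so binary search over $k \in [0, L_1']$ locates it, or detects $\ovf$ if no $k \le L_2$ works. Actually the search range for $k$ is $[0, L_2]$ since we only care whether $\bc(p) \le L_2$; and the chip bound passed to each subproblem is $L_1' = L_1 + \degree(p)\cdot L_2$, matching the statement, since we add at most $k \le L_2$ chips across each of the $\le \degree(p)$ edges.

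For the \textbf{complexity}, each binary-search step runs the bounded solver once on every component, costing $\sum_{g\in\mathcal{G}} T(g)$ time; there are $O(\log L_2) = O(\log L_1)$ steps (using $L_2 \le \poly(L_1)$, or stating it directly in terms of $\log L_1$ as in the lemma — I'd keep $O(\log L_1)$ since that is what is claimed and the range of $k$ is polynomially bounded in the relevant parameters). The overflow reporting for the whole instance is handled by: if any component returns $\ovf$, or if no $k \le L_2$ satisfies the $p$-inequality, return $\ovf$. Memory is reused across binary-search iterations, so the space is $O\!\left(\sum_{g\in\mathcal{G}} M(g)\right)$. Finally, once $k^\star = \bc(p)$ is found, one more run of the solvers (or the last successful run) yields all firing numbers on $V(G)\setminus\{p\}$, and $\bc(p)=k^\star$, from which the terminal configuration of $S$ is recovered via \Cref{formula:recover-from-firing-number}.

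\textbf{Main obstacle.} The subtle point — and the part I expect to need the most care — is justifying that the component solvers return the \emph{minimal} feasible firing vector and that this minimal vector is \emph{monotone} in the number of injected chips $k$, so that the $p$-inequality predicate is genuinely a threshold predicate. This rests on \Cref{corollary:inequality} (minimality characterization) and \Cref{corollary:monofiring} (monotone feasibility), applied componentwise; one must check that adding chips to boundary vertices of a component never destroys feasibility (true, since more chips can only cause more firings, and the instance-with-sinks always terminates by \Cref{lemma:sandpilesinkterminate}) and that the $\ovf$ signal propagates correctly through the binary search. The rest is bookkeeping about edge multiplicities and the chip/firing bounds $L_1', L_2'$.
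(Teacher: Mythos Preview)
Your proposal is essentially the paper's proof: binary search on $k=\bc(p)$, turn $p$ into a sink, solve the components, and test the single remaining inequality at $p$. The decomposition, the use of \Cref{corollary:inequality} and \Cref{corollary:monofiring}, the $L_1' = L_1 + \degree(p)\cdot L_2$ bookkeeping, and the space reuse across binary-search rounds all match.

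The one place where your argument is looser than the paper's is the threshold step, which you correctly flag as the main obstacle but do not quite resolve. You write that for $k>\bc(p)$ ``the left side only grows'' via monotonicity of $\sum_v \bf^{(k)}(v)$; but that inequality has the form $\sum_v \bf^{(k)}(v) - k\cdot\degree(p) + \sigma_p < \degree(p)$, and if the sum grows you are pushing the left side \emph{up}, which is the wrong direction. Monotonicity of the component firing vectors alone does not give you a threshold predicate. The paper closes this with a concrete shift: if $k\ge \bc(p)$, take the true firing vector $\bc$ and set $\bf'(v)=\bc(v)+(k-\bc(p))$ for all $v$; this $\bf'$ is still feasible for the full system (adding a constant to every coordinate leaves every inequality of \Cref{formula:linearsystemwithsinks} unchanged) and has $\bf'(p)=k$. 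Since the component solvers return the \emph{minimal} solution $\bd$ of the subsystem without the $p$-inequality, $\bd_v\le \bf'(v)$ for all $v\neq p$, hence $\sum_{v\in N(p)\setminus M}\bd_v \le \sum_{v\in N(p)\setminus M}\bf'(v)$, and since $\bf'$ satisfies the $p$-inequality, so does $(\bd,k)$. This is the missing ingredient that turns your ``monotone in $k$'' intuition into an actual proof that the predicate ``$p$-inequality holds'' is equivalent to $k\ge\bc(p)$. With that fix, your proposal is complete and coincides with the paper's argument.
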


The proof of \Cref{lemma:generalsink} can be found in \cref{sec:proofinprelim}.








\begin{remark}
Given two sandpile instances $S(G,\sigma,M)$ and $S(G,\sigma',M)$, we denote the corresponding firing number vector computed in the bounded sandpile prediction as $\bc$ and $\bc'$. If $\sigma \leq \sigma'$ pointwisely, we have $\bc \leq \bc'$ pointwisely.
\end{remark}


\subsubsection{Overall Analysis}
Now, we are ready to prove \Cref{theorem:general}.



\begin{proof}[Proof of \Cref{theorem:general}]


To begin with, we apply \Cref{theorem:monofiring} to an arbitrary vertex $u \in P$, utilizing a binary search to calculate its firing number. This approach reduces the problem to a bounded sandpile prediction with sinks. We conduct a search for the firing number $\bc(u)$ in the range $[0,n^4]$, and set $n^4$ as the $L_2$ bound for the remaining sandpile prediction problem with sinks. By \cite{tardos1988polynomial}, if we are unable to find a feasible value for $\bf(u)$ within this range, we can conclusively say the instance is recurrent.

To determine if $mid$ is legal, by \Cref{corollary:monofiring}, if $mid$ is at least $\bc(u)$, there should be a feasible solution where $\bf(u)=mid$. We apply $mid$ times of firings on $u$. Then we turn $u$ into a sink vertex and replace $\bf(u)$ with $mid$. In this way, we reduce the problem to a bounded prediction problem where $L_1=\lvert\lvert \sigma \rvert\rvert + \degree(u) \cdot mid$ and $L_2=n^4$. After computing the terminal configuration of this problem and its corresponding firing number vector $\bd$, by \Cref{corollary:inequality}, $\bd$ must be a feasible solution to the reduced problem with the smallest partial order. Therefore, if there is any feasible solution where $\bf(u)=mid$, $\bd$ should also satisfy the inequality of $u$. Thus, we determine $\{\bf(u)=mid\} \bigcup \{\bf(v)=\bd_v \mid v\in V(G),v\neq u\}$ is a feasible solution. In this way, we can continue the binary search by properly narrowing the range down.

Therefore, we reduce the problem with an extra cost of $O(\log n)$ runtime to a new bounded sandpile prediction with sinks with $L_1=o(n^6)$ and $L_2=n^4$. If we continue applying \Cref{lemma:generalsink} on another vertex in $P$, we pay another $O(\log L_1)=O(\log n)$ cost to reduce to a new instance where $L_2' \leftarrow L_2 + \degree \cdot L_1$ and $L_1' \leftarrow L_1$. One can observe that $L_2'$ stays in $o(n^6)$. We can prove our theorem by repeatedly applying \Cref{lemma:generalsink}. Since the binary search does not cost extra space, the space complexity is the same as the summation of all subproblems.
\end{proof}

\paragraph{Application by Decomposing into Trees} Combined with our tree algorithm shown in \cref{sec:sinktree}, we give the following corollary, which provides a more specific algorithmic result.

\begin{restatable}[Reduction to Trees with Sinks]{corollary}{reductiontree}
\label{coro:reductiontree}
Given a sandpile instance $S(G,\sigma)$ and a vertex set $P \subseteq V(G)$, let $\mathcal{G}$ be the set of connected components in $G \setminus P$. If for any $g \in \mathcal{G}$, $g$ is a tree and is adjacent to at most $3$ vertices in $P$, there is an algorithm that determines whether $S$ terminates and computes the terminal configuration of $S$ in $O\left(n \log^{|P|+1} n \right)$ time and $O\left(n\right)$ memory.
\end{restatable}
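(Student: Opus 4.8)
The plan is to simply combine Theorem \ref{theorem:general} with the tree-with-sinks algorithm promised in Section \ref{sec:sinktree}. First I would invoke Theorem \ref{theorem:general} with the given separator set $P$: this reduces the problem of computing the terminal configuration of $S(G,\sigma)$ to solving, for each connected component $g \in \mathcal{G}$ of $G \setminus P$, a bounded sandpile prediction with sinks on the induced graph $G[V(g) \cup P]$ where $P$ plays the role of the sink set. The runtime incurred is $O(\log^{|P|} n \cdot \sum_{g \in \mathcal{G}} T(g))$ and the memory is $O(\sum_{g \in \mathcal{G}} M(g))$, where $T(g)$ and $M(g)$ are the time and space to solve one such subproblem. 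Crucially, Theorem \ref{theorem:general} also guarantees that the total number of chips in each subproblem is at most $n^6 + \|\sigma\|_1$, so the chip count stays polynomially bounded.

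Next I would bound $T(g)$ and $M(g)$ for each component. By hypothesis each $g$ is a tree and is adjacent to at most $3$ vertices of $P$; therefore the subproblem graph $G[V(g) \cup P]$ is a tree $g$ together with at most $3$ sink vertices attached to it. This is exactly the setting handled by the modified tree algorithm of Section \ref{sec:sinktree}, which (as stated there) solves sandpile prediction with at most three sinks on a tree in $O(n \log n)$ time and $O(n)$ memory. One subtlety: the reduction produces a \emph{bounded} prediction problem (Problem \ref{prob:bounded}) rather than the plain one, but since the $L_2$ bound used inside Theorem \ref{theorem:general} is polynomial in $n$ and the chip totals are polynomial as well, the firing numbers we encounter are $O(\poly(n))$; the tree algorithm's running time has only logarithmic dependence on the magnitude of the numbers it manipulates (they fit in $O(\log n)$-size words), so $T(g) = O(|V(g) \cup P| \log n) = O((|V(g)| + 3)\log n)$ and $M(g) = O(|V(g)| + 3)$. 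I would also note that we only apply the tree-with-sinks algorithm when it applies, i.e., when there really are at most $3$ sinks, which is precisely the adjacency hypothesis.

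Finally I would sum over components. Since $P$ has constant overlap across components — each component $g$ contributes $|V(g)|$ to a partition of $V(G) \setminus P$, plus at most $3$ sink copies — we get $\sum_{g \in \mathcal{G}} |V(g)| \le n$ and $|\mathcal{G}| \le n$, hence $\sum_{g} T(g) = O(n \log n)$ and $\sum_g M(g) = O(n)$. Plugging into Theorem \ref{theorem:general} yields total time $O(\log^{|P|} n \cdot n \log n) = O(n \log^{|P|+1} n)$ and memory $O(n)$, as claimed. The recurrence/termination determination is inherited directly from Theorem \ref{theorem:general}, which already reports recurrence when the binary search for a firing number fails.

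The main obstacle I anticipate is not any of the above bookkeeping but rather the legitimacy of the chaining step: one must be careful that repeatedly applying the single-vertex removal of Lemma \ref{lemma:generalsink} (as done inside the proof of Theorem \ref{theorem:general}) keeps the parameter $L_1$ — and hence the chip totals fed to the tree algorithm — polynomially bounded throughout, so that the per-component cost truly stays $O(n \log n)$ and does not blow up by a factor depending on $|P|$ in the numbers' bit-length. The excerpt's proof of Theorem \ref{theorem:general} asserts $L_1, L_2 = o(n^6)$ are maintained, so I would lean on that; the remaining work is just confirming that the tree-with-sinks subroutine from Section \ref{sec:sinktree} indeed tolerates such polynomially-large chip counts within the claimed time bound, which it does since all arithmetic is on $O(\log n)$-bit words.
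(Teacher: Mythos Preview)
Your proposal is correct and follows essentially the same route as the paper: invoke Theorem~\ref{theorem:general}, solve each component via the tree-with-sinks algorithm of Theorem~\ref{theorem:treesink}, and sum. The only cosmetic difference is that the paper first bounds the original chip total $\lVert\sigma\rVert_1 \le 2|E(G)|$ via the terminality criterion (rather than leaning on the word-RAM assumption), so that the $O(n\log n + \log\lVert\sigma\rVert_1\log n)$ cost of Theorem~\ref{theorem:treesink} collapses to $O(n\log n)$, and it explicitly notes that the bounded subproblem returns $\ovf$ when $L_1$ is exceeded.
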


\begin{proof}
By \Cref{theorem:treesink}, we need $O(n \log n + \log \lvert\lvert \sigma \rvert\rvert _1 \log n)$ time to compute a sandpile with sinks problems on a tree with at most $3$ sinks. In the sandpile prediction model without sinks, $\lvert\lvert\sigma\rvert\rvert_1$ is not changed in the whole process. Thus, if $\lvert\lvert\sigma\rvert\rvert_1 \geq \sum_{v \in V(G)} \degree(v) = 2|E(G)|$, the instance does not terminate. After ruling out this case in the beginning, we can assume $\lvert\lvert \sigma \rvert\rvert = O(|E|) = O(n^2)$ while the algorithm proceeds in the instance. Thus, the time it consumed is $O(n \log n + \log^2 n) = O(n \log n)$. Note that for the bounded version, we need to check if $L_1$ is exceeded after computation. If so, we need to return $\ovf$ instead. Combine with \Cref{theorem:general}, the corollary follows.
\end{proof}

To demonstrate, we apply the theorem to solve the sandpile prediction problem on a special structured graph: the Pseudotree.

\begin{definition}[Pseudotree \cite{gabow1988linear}]
    Pseudotree is defined as an undirected connected graph that contains at most one cycle. Equivalently, it is an undirected connected graph in which the number of edges is at most the number of vertices.
\end{definition}

\begin{restatable}[Sandpile Prediction on a Pseudotree]{theorem}{theoremtreewithedge}    Given a sandpile instance $S(G,\sigma)$ in which $G$ is a pseudotree, there is an algorithm that determines whether $S$ terminates and compute the terminal configuration in $O(n \log^2 n)$ time and $O(n)$ memory. 
\end{restatable}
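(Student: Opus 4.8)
The plan is to split on the two possible shapes of a pseudotree. If $G$ contains no cycle it is a tree, and \cref{theorem:main} already solves the instance in $O(n\log n)$ time and $O(n)$ memory, so assume $G$ has exactly one cycle $C$. A single depth-first search finds $C$ (equivalently, the unique non-tree edge) in $O(n)$ time; fix any vertex $p$ lying on $C$. I would then invoke the vertex-removal reduction of \cref{theorem:general} (in its packaged form, \cref{coro:reductiontree}) with the singleton set $P=\{p\}$. Because $\abs{P}=1$, the reduction multiplies the total cost of solving the pieces of $G\setminus P$ by only an $O(\log^{\abs P}n)=O(\log n)$ factor, and \cref{theorem:general} also takes care of the termination decision: it rules out the case $\norm{\sigma}_1\ge 2\abs{E(G)}$ (which is recurrent by the bound behind \cref{lemma:bounds-of-terminal-instance}), so that for a terminal instance $\norm{\sigma}_1=O(n)$ and all subproblem parameters stay polynomial, and it searches $\bc(p)$ in $[0,n^4]$, reporting recurrence when no feasible value is found, which is sound by the $O(n^4)$ firing bound of \cite{tardos1988polynomial}.

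The structural point I need is that every component of $G\setminus p$ is a tree carrying at most three sinks, so that the tree-with-sinks algorithm (\cref{theorem:treesink}) applies to each piece. Deleting $p$ removes the at-least-two edges of $C$ incident to $p$ and hence destroys the unique cycle, so the simple graph $G-p$ is acyclic; thus each of its connected components is a tree, and after $p$ is replaced by pendant sink copies as in \cref{definition:vertex-removal}, each component is still a tree with some pendant sinks attached. The number of attached sinks per component is small: in a pseudotree each off-cycle edge at $p$ enters a subtree whose only exit was through $p$, so it forms its own component with exactly one sink, whereas the two cycle edges at $p$ both enter the single ``core'' component obtained by opening $C$ into a path, which therefore carries exactly two sinks. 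In all cases there are at most two, which satisfies the hypotheses of \cref{theorem:treesink} and of \cref{coro:reductiontree}.

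For the running time, each component $g$ together with its $\le 2$ sinks is solved in $O(\abs{g}\log\abs{g}+\polylog n)$ time and $O(\abs{g})$ space by \cref{theorem:treesink}, the $\polylog n$ overhead coming from the bounded-prediction parameters $L_1,L_2=\poly(n)$ supplied by the reduction (concretely an $O(\log\norm{\sigma}_1\cdot\log\abs{g})$ term). Summing over components, $\sum_g \abs{g}\log\abs{g}=O(n\log n)$ and $\log\norm{\sigma}_1\cdot\sum_g\log\abs{g}=O(\log n)\cdot O(n)=O(n\log n)$ since $\sum_g\log\abs{g}\le\sum_g\abs{g}=O(n)$; hence solving all pieces costs $O(n\log n)$, and the $O(\log n)$ reduction factor gives $O(n\log^2 n)$ overall, with $O(n)$ memory as the pieces may be processed one at a time and reassembled. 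I expect the only real friction to be clerical rather than conceptual: carefully checking the per-component sink count of the previous paragraph, and verifying that the $\polylog n$ per-piece overhead telescopes (rather than accumulating to $\omega(n\log^2 n)$) even when $G\setminus p$ has $\Theta(n)$ tiny pendant components. An alternative route — first running the tree-style upward/downward passes on the subtrees hanging off $C$ to reduce to a residual instance on the cycle, then solving that — is possible but more involved and yields no better bound, so I would not pursue it.
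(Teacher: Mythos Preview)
Your proposal is correct and follows essentially the same route as the paper: pick a vertex $p$ on the unique cycle, apply the vertex-removal reduction with $P=\{p\}$, and invoke \cref{coro:reductiontree} to solve each resulting tree-with-sinks component. Your structural and running-time analysis is more detailed than the paper's terse proof, but the argument is the same; the only minor quibble is that \cref{lemma:bounds-of-terminal-instance} is stated for trees, so the chip-count check you sketch should appeal to the general fact $\sum_v\sigma_v\le 2\abs{E}-n$ in a terminal configuration (which still gives $\norm{\sigma}_1=O(n)$ for a pseudotree), or simply rely on the $n^4$ firing bound as you already do.
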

\begin{proof}
By definition, a pseudotree is either a tree or a tree with an extra edge. If it is a tree, we can apply \Cref{theorem:main} directly. For a tree with an extra edge, exactly one cycle exists in the graph. If we remove an arbitrary vertex $u$ on the cycle, the graph is reduced to trees. Therefore, we let $P=\{u\}$ and apply \Cref{coro:reductiontree}. This gives an algorithm that runs in $O(n \log^2 n)$ time, with $O(n)$ memory. 
\end{proof}

\begin{remark}
The time complexity can be improved to $O(n \log n)$ if the given graph is only a cycle of size $n$. Removing a vertex reduces the input graph to multiple path instances with sinks. We can modify algorithms in \Cref{sec:path} in a similar way.
\end{remark}


\section{Data Structure for Sandpiles on Trees}
\label{sec:ds}

In this section, we will introduce the data structure by proving \Cref{theorem:ds}.

\begin{restatable}[Data Structure Theorem]{theorem}{dstheorem}
\label{theorem:ds}
There exists a series of data structures $\mathcal{D}=\{D_u, u \in V(G)\}$ that satisfies the following:
\begin{itemize}
        \item All operations of $\mergeupward{}$, $\updatedsupward{}$, $\revertds$, $\splitds$, $\cnt{}$, $\deltasum{}$ and $\deltaquery{}$ are correctly called and produce correct results among the entire execution of \Cref{algorithm:main}.
        \item All operations of $\mergeupward{}$, $\updatedsupward{}$, $\revertds$, $\splitds$, $\cnt{}$, $\deltasum{}$ and $\deltaquery{}$ cost $O(n \log n)$ time in total among the entire execution of \Cref{algorithm:main}.
        \item $\mathcal{D}$ takes $O(n)$ memory in total at any moment among the entire execution of \Cref{algorithm:main}.
\end{itemize}

\end{restatable}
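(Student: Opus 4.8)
The plan is to build each $D_u$ as an augmented, splittable balanced search tree holding a compact description of a single monotone integer staircase: either $\sum_{v\in\son(u)}\delta(v,\cdot)$ (the state after all the $\mergeupward$ calls and before $\updatedsupward$) or $\delta(u,\cdot)$ (the state after $\updatedsupward$, which is what a later $\deltaquery$ and the merge at $\parent(u)$ consume). Concretely I would store the breakpoints of the current staircase in the tree, keyed by location, with each node augmented by (i) subtree aggregates — count, total increment, and the staircase value at a node's right end — so that ``evaluate the staircase at $k$'' and ``find the first $k$ at which a given affine function of $k$ and of the staircase value drops below a threshold'' are single root-to-leaf walks, and (ii) lazy tags for adding an affine function to all stored locations and to all stored values, together with a lazy ``reverse'' flag; the affine-plus-reverse tags are exactly what is needed to convert the graph of $\psi_u$ into the graph of its functional inverse. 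This rests only on facts already proved: $\delta(v,\cdot)$ is non-decreasing with unit steps (\Cref{lemma:delta-differs-at-most-one}), $\psi_u$ is non-increasing (\Cref{lemma:f-is-monotonically-decreasing}) with $\bc^{\down}(u)$ the first $k$ satisfying $\psi_u(k)<\degree(u)$ (\Cref{lemma:c-down}), and the identity $\delta(u,x)=\min\{k:\psi_u(k)<\degree(u)-x\}-\bc^{\down}(u)$, which follows from \Cref{lemma:remaining-chips} and exhibits $\delta(u,\cdot)$ as an affine-reflected copy of $\psi_u$ restricted to $k\ge\bc^{\down}(u)$ and floored at $0$.

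Given this representation I would implement the seven operations as follows. $\cnt(u,\sigma'_u)$ is one descending walk of $D_u$ that simultaneously locates $\bc^{\down}(u)$ via \Cref{lemma:c-down} and records $\sum_{v\in\son(u)}\delta(v,\bc^{\down}(u))$, so $\deltasum(u)$ is essentially free; $\deltaquery(u,k)$ is one rank-style walk of $D_u$ (which now holds $\delta(u,\cdot)$). $\mergeupward(u,v)$ replaces the staircase in $D_u$ by its pointwise sum with the staircase in $D_v$; I would implement this by the standard ``merge two ordered search trees'' routine — repeatedly split off a run and join — but crucially without destroying $D_v$, since keeping $D_v$ intact lets $\splitds(u,v)$ recover the pre-merge $D_u$ by the symmetric pointwise subtraction, so no undo log is needed. $\updatedsupward(D_u)$ folds the linear part $\sigma_u-k\cdot\degree(u)$ into the stored values, then takes the functional inverse (a reverse tag plus affine tags swapping the roles of the location and value coordinates), then cuts off the branch $k<\bc^{\down}(u)$, prepends one flat segment to enforce $\delta(u,0)=0$, and shifts by $-\bc^{\down}(u)$; $\revertds(u)$ simply applies the inverse sequence of these tag operations, all of whose parameters ($\degree(u)$, $\presigma_u$, $\bc^{\down}(u)$) are already stored by \Cref{algorithm:upward}, so again no log is needed. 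Correctness then follows by induction on the recursion, paralleling the invariants of \Cref{theorem:upward} and \Cref{theorem:downward}: \Cref{algorithm:upward} calls $\mergeupward$ on the children in the fixed order $\mathcal{I}$ while \Cref{algorithm:downward} calls $\splitds$ in the reverse order, and each $\revertds(u)$ is matched to the $\updatedsupward(D_u)$ that precedes it, so every inverse operation is both legally called and sees precisely the state it expects.

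For the time bound I would show that every per-vertex operation other than the merges costs $O(\log n)$, contributing $O(n\log n)$ in total, and that a merge of staircases of sizes $a\le b$ costs $O\!\left(a\log(1+b/a)\right)$ amortized; summing the latter over the whole tree telescopes, as in the analysis of mergesort / ordered-tree merging, to $O(n\log n)$, and the matching $\splitds$ calls are charged against their merges. For the memory bound I would argue that at every instant the tree nodes alive across all $D_u$ are, up to a constant factor, in bijection with the breakpoints of a single global staircase, hence $O(n)$: the structures that coexist during the up-pass lie along one root-to-leaf path and hold pairwise-disjoint families of breakpoints, and during the down-pass $\splitds$ partitions $D_u$ into $D_v$ and a (smaller) $D_u$ rather than duplicating it.

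The hard part, and what I expect to occupy most of \Cref{sec:ds}, is making the merge and its inverse simultaneously (a) reversible with no auxiliary bookkeeping — which is what forces the ``do not destroy $D_v$, subtract to undo'' discipline and the recompute-from-stored-scalars implementation of $\revertds$ — and (b) cheap enough that the per-vertex merge cost sums to $O(n\log n)$ rather than $O(n\log^2 n)$, which rules out naive small-to-large insertion and needs the run-based ordered-tree merge with the telescoping amortized argument (a splay-tree realization also works and is what later yields the linear bound on paths). A secondary technical point is the $O(n)$ bound on the number of live breakpoints, i.e. verifying that neither the summation in $\mergeupward$ nor the inversion in $\updatedsupward$ ever blows up the representation; this is where the unit-step and convexity structure of $\delta(u,\cdot)$ is used.
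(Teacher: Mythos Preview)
Your high-level picture — store a compact description of the staircase $\delta(u,\cdot)$ (or $\sum_{v\in\son(u)}\delta(v,\cdot)$) in an augmented balanced tree, and realize $\updatedsupward$ as an affine/inversion transformation — is close in spirit to what the paper does. But the specific merge/split mechanism you propose has a genuine gap, and it is exactly the point you flag as ``the hard part''.

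\medskip
\noindent\textbf{The memory bound fails.} You insist that $\mergeupward(u,v)$ be non-destructive (``keeping $D_v$ intact lets $\splitds(u,v)$ recover the pre-merge $D_u$ by the symmetric pointwise subtraction''). This forces every $D_v$ to survive from the moment it is built in the up-pass until it is consumed by $\splitds$ in the down-pass — i.e.\ through the entire execution of \textsc{SolvePartial}. At that instant the total live storage is $\sum_{u\in V(G)}|D_u|$. Whether you count key pairs or slope-change breakpoints, one has $|D_u|\le\sum_{v\in\son(u)}|D_v|+O(1)$, hence $|D_u|=O(|\subtree(u)|)$, and on a path rooted at an endpoint $\sum_u|\subtree(u)|=\Theta(n^2)$. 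Your claimed justification — ``the structures that coexist during the up-pass lie along one root-to-leaf path and hold pairwise-disjoint families of breakpoints'' — is false on both counts: with non-destructive merges \emph{all} $D_w$ coexist, not just those on one path, and $D_u$ contains (affinely transformed copies of) every breakpoint in every $D_v$, $v\in\son(u)$, so the families are anything but disjoint. The $O(n)$ memory claim of \Cref{theorem:ds} therefore does not follow from your plan.

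\medskip
\noindent\textbf{How the paper avoids this.} The paper's merge is \emph{destructive}: nodes are physically moved from $D_v$ into $D_u$, never copied, so at all times the total number of nodes equals the number ever created by $\newnode$, which is $\sum_u(\degree(u)-1-\sigma'_u)\le\sum_u\degree(u)=O(n)$ (\Cref{lemma:dsmemory}). Reversibility is obtained not by subtraction but by tagging each node with a $\timestamp$ (the $\dfn$ of the vertex at which it was created); $\splitds(u,v)$ then extracts from $D_u$ exactly those nodes with $\timestamp\ge\dfn_v$, using subtree $\timemin/\timemax$ aggregates to locate them in rank order. This timestamp trick is the missing idea: it gives an undo with no duplicated storage and no log.

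\medskip
\noindent\textbf{A second discrepancy.} You write that the $O(n\log n)$ merge bound ``rules out naive small-to-large insertion and needs the run-based ordered-tree merge''. The paper does the opposite: it \emph{uses} small-to-large insertion, but inserts the smaller tree's nodes in increasing key order and invokes the Dynamic Finger Theorem (\Cref{theorem:dynamic-finger-theorem}, \Cref{theorem:merge-two-set}) to get cost $O\bigl(a\log\frac{a+b}{a}\bigr)$ per merge, which telescopes to $O(n\log n)$. The same dynamic-finger accounting covers the symmetric deletions in $\splitds$. So the mechanism you thought too weak is precisely what is used; the run-based merge you propose would also work for time, but is not what the paper does.

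\medskip
\noindent\textbf{Minor differences that are fine.} Your realization of $\updatedsupward$ as ``functional inverse via reverse tag plus affine tags'' is a different (and elegant) route to the same effect as the paper's explicit calculation (\Cref{lemma:updatea}, \Cref{lemma:updateb}), which instead shows directly that the new key pairs are an initial block $[1,\degree(u)-1-\sigma'_u]$ followed by an affine shift of the surviving old ones, implemented by two $\inctime$ calls and a batch of insertions. Either approach is correct; neither affects the memory issue above.
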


To describe how we maintain the data structure, we will first introduce the concept of \textit{key pairs}. 
\subsection{Overview}
\begin{definition}[Key Pairs]
\label{def:keypair}
A pair $(u, k)$ such that $u \in V(G)$ and $k \in \N_{+}$ is said to be a key pair if and only if $\delta(u, k) = \delta(u, k-1)$.
\end{definition}

By \Cref{lemma:delta-differs-at-most-one}, the value of $\delta(u,k) - \delta(u,k-1)$ will be either $0$ or $1$. If, for a given vertex $u \in V(G)$ and integer $k$, we can find the number of key pairs $(u,k')$  such that $k' \leq k$, denoted as $C$, then we can calculate the value of $\delta(u,k)$ which would be exactly $k-C$. Formally:

\begin{lemma}
\label{lemma:deltaequal}
Let $u$ be an arbitrary vertex $u \in V(G)$ and $k$ be a non-negative integer. Then 
\[
\delta(u,k) = k - \sum_{i=1}^{k}[\delta(u,i) = \delta(u,i-1)]
\]
\end{lemma}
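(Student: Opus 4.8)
The plan is to prove this by a straightforward telescoping of consecutive differences, using the monotonicity established in \Cref{lemma:delta-differs-at-most-one}.

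First I would record the base case $\delta(u, 0) = 0$, which is immediate from \Cref{def:delta}: since $0_u$ is the all-zeros vector, $\final(\sigma + 0_u, u) = \final(\sigma, u)$, so $\delta(u,0) = \final(\sigma, u)_{\parent(u)} - \final(\sigma, u)_{\parent(u)} = 0$. Next I would write $\delta(u,k)$ as a telescoping sum,
\[
\delta(u,k) = \delta(u,0) + \sum_{i=1}^{k}\bigl(\delta(u,i) - \delta(u,i-1)\bigr) = \sum_{i=1}^{k}\bigl(\delta(u,i) - \delta(u,i-1)\bigr).
\]

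Then I would invoke \Cref{lemma:delta-differs-at-most-one}, which gives $\delta(u,i) - \delta(u,i-1) \in \{0,1\}$ for every $i \ge 1$. A quantity in $\{0,1\}$ equals $1 - [\text{it is } 0]$, i.e. $\delta(u,i) - \delta(u,i-1) = 1 - [\delta(u,i) = \delta(u,i-1)]$. Substituting this into the telescoped sum yields
\[
\delta(u,k) = \sum_{i=1}^{k}\bigl(1 - [\delta(u,i) = \delta(u,i-1)]\bigr) = k - \sum_{i=1}^{k}[\delta(u,i) = \delta(u,i-1)],
\]
which is exactly the claimed identity.

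There is essentially no obstacle here; the only thing to be careful about is that the two prerequisites are genuinely available, namely that \Cref{lemma:delta-differs-at-most-one} applies (it requires $u \ne r$, which is implicit since $\delta(u,\cdot)$ is only defined for non-root $u$ in \Cref{def:delta}) and that the base value $\delta(u,0)=0$ is correct. Both hold, so the argument is a one-line telescoping computation once those are stated.
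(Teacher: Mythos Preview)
Your proof is correct and is essentially the same argument as the paper's: both use $\delta(u,0)=0$ together with \Cref{lemma:delta-differs-at-most-one} to write each increment as $1-[\delta(u,i)=\delta(u,i-1)]$, the paper phrasing this as induction and you as a telescoping sum.
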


\begin{proof}
We can prove the lemma by induction. 

First, the lemma is trivial for $k = 0$. For a positive integer $k \in \N_{+}$, we have $\delta(u,k-1) = (k-1) - \sum_{i=1}^{k-1} [\delta(u,i) = \delta(u,i-1)]$ by inductive hypothesis. Then $\delta(u,k) = \delta(u,k-1) + (1 - [\delta(u,k) = \delta(u,k-1)]$ since $\delta(u,k) - \delta(u,k-1)$ could be either $0$ or $1$. By substituting $\delta(u,k-1) = (k-1) - \sum_{i=1}^{k-1} [\delta(u,i) = \delta(u,i-1)]$, we have $\delta(u,k) = (k-1) - \sum_{i=1}^{k-1} [\delta(u,i) = \delta(u,i-1)] + 1 - [\delta(u,k) = \delta(u,k-1)] = k - \sum_{i=1}^k [\delta(u,i)=\delta(u,i-1)]$.
\end{proof}

We will use the splay tree $D_u$ to maintain all the pairs of $(u, k)$ for a fixed vertex $u \in V(G)$. 

Each node $x$ in the splay tree $D_u$ represents a key pair $(u, k)$. Let $\moment_x$ denote the value $k$ for the node $x$. For any two different nodes $x_1, x_2 \in D_u$, we define $x_1 < x_2$ if and only if $\moment_{x_1} < \moment_{x_2}$. This is obviously a well-defined partial order. 

Therefore, we design each $D_u (u \in V(G))$ to be a structure that maintains the orders of key pair nodes implemented by a splay tree. In addition, we maintain some arrays of length $n$ to store necessary information for algorithms as well as operations on $\mathcal{D}$. This part takes $O(n)$ memory to store. Each node $x$ on $D_u$ is mapped to the exact position on these arrays, which means the corresponding information can be accessed in $O(1)$ while accessing node $x$.

\begin{itemize}
    \item $\moment$: $\moment_x$ denotes the value of $k$ where $(u, k)$ is the key pair corresponding to the node $x$.
    \item $\timestamp$: $\timestamp_x$ denotes the value of $\dfn_u$, where the tree vertex $u$ satisfying node $x$ was first added to $D_u$.
    \item $\timemin$ and $\timemax$: $\timemin_x$ and $\timemax_x$ denotes the minimum and the maximum value of $\timestamp_{x'}$, where $x' \in \subtreeT(x)$. 
    \item $\ta,\tb$: two supportive integer arrays to store tags for the lazy propagation.
\end{itemize}

\begin{figure}[H]
    \centering
    \begin{tikzpicture}[->]

 \tikzstyle{special}=[rounded corners=3pt, draw, fill=black!0, align=center,minimum width=90pt, minimum height=35pt]
 
    \node[special] (2) at (3,0) {$\varnothing$};
    \node[special] (3) at (10,0) {$\{(v, k) \mid (v, k) \in D_v\}$\\$v \in \son(u)$};
    \node[special] (4) at (17,0) {$\{(u, k) \mid k \in \Z^{+}\}$\\$\delta(u,k)=\delta(u,k-1)$};
    
    \draw[->,line width=2pt,draw=blue, shorten <=5mm, shorten >=5mm] (2) to [bend left] node[midway, label = {[text=blue]above:{$\mergeupward(u, v)$}}]{} (3) ;
    \draw[->,line width=2pt,draw=blue,align=center, shorten <=5mm, shorten >=5mm] (3) to [bend left] node[midway, label = {[text=blue]above:{$\cnt(u)$\\$\mergeupward(u,v)$}}]{} (4) ;

    \draw[->,line width=2pt,draw=red, shorten <=5mm, shorten >=5mm] (4) to [bend left] node[midway, label = {[text=red]below:{$\revertds(u)$}}]{} (3) ;
    \draw[->,line width=2pt,draw=red, shorten <=5mm, shorten >=5mm] (3) to [bend left] node[midway, label = {[text=red]below:{$\splitds(u)$}}]{} (2) ;
    
\end{tikzpicture}
    \caption{This figure shows the life cycle of $D_u$ where arrows describe the calling order and blocks contain the current maintained information. Blue ones are called in $\upward(u,G,\sigma')$ and red ones are in $\downward(u, G)$.}
    \label{fig:enter-label}
\end{figure}
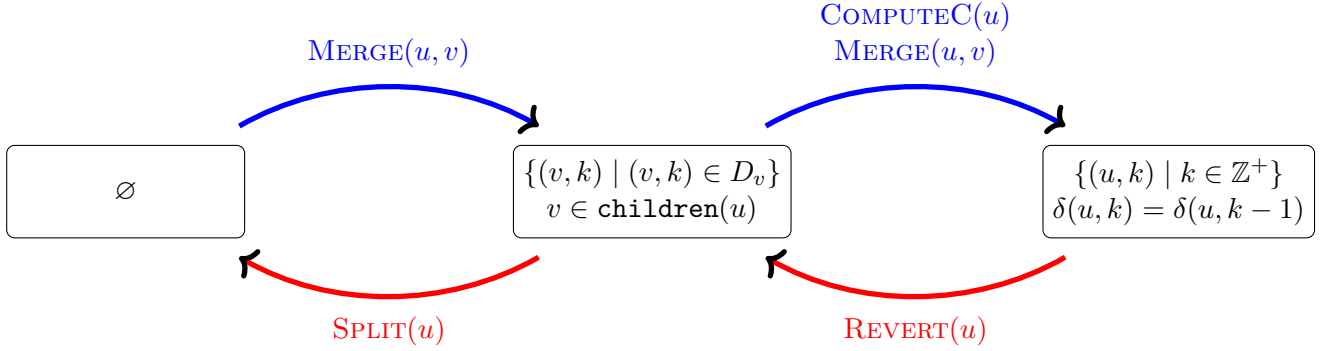

\subsection{Splay Trees}

A splay tree supports accessing, inserting and deleting a node in an amortized $O(\log n)$ time  \cite{sleator1985self}. We define the following basic interfaces for our splay tree:

\begin{itemize}
    \item $\newnode(k, t)$: Create a new node $x$. It will also create a  mapping to $\timestamp_x$, $\timemax_x$, $\timemin_x$, $\ta_x$ and $\tb_x$. Then it initializes
    \begin{itemize}
        \item $\moment_x \leftarrow k$
        \item $\timestamp_x \leftarrow t, \timemin_x \leftarrow t, \timemax_x \leftarrow t$
        \item $\ta_x \leftarrow 0, \tb_x \leftarrow 0$
    \end{itemize}
    \item $\Insert_{T}(x)$: Insert the node $x$ into the splay tree $T$.
    \item $\Delete_{T}(x)$: Remove the node $x$ from the splay tree $T$. Note that after removing the node $x$, we won't delete the information in $\timestamp_x, \timemax_x, \timemin_x, \ta_x$ and $\tb_x$. 
\end{itemize}

We give some notations on any splay tree $T$:
\begin{itemize}
    \item $\subtreeT(x)$ denotes the union of nodes where node $x$ is on the path to the splay tree root.
    \item $\size(x)$ denotes the number of nodes in $\subtreeT(x)$; Specially, $\size(T)$ denotes the number of nodes in the splay tree $T$.
    \item $x \in T$ denotes that $T$ contains the node $x$;
    \item For any node $x \in T$, $\father(x)$, $\leftson(x)$ and $\rightson(x)$ denotes the father, left child and right child of node $x$ on $T$ respectively. If not exists, the value will be $\nil$ by default. Furthermore, define $\sonT(x)$ as the set of the children for the node $x$. Note that $\nil$ is not considered as an element of $\sonT(x)$. 
    \item $\roott(T)$ denotes the root node of $T$. If $T$ is empty, $\roott(T)$ will be $\nil$.
    \item $\splay(x)$ denotes the operation to make node $x$ as the root by a series of rotations. 
    \item $\rank_{T}(x)$ denotes the rank of the node $x$ in the splay tree $T$. Here, the rank of a node is defined as the number of nodes that precede it while performing an in-order walk.
    \item $\pre_{T}(x)$ and $\suc_{T}(x)$ denotes the predecessor and successor of the node $x$, respectively. Formally, if $\rank_{T}(x) = k$, then $\pre_{T}(x)$ is the node of the rank $k - 1$ and $\suc_{T}(x)$ is the node of the rank $k+1$. Specially, if there's no such node, then the predecessor (or the successor) of the node $x$ will be considered as $\nil$.
\end{itemize}

On a splay tree, we define $\findmin(u)$ as a subroutine to find the node $x$ with the minimum rank on $D_u$. This subroutine is described in \Cref{algorithm:findmin}. It is known that such a process runs in $O(\log n)$ amortized time.

\IncMargin{1em}
\begin{algorithm}[H]
  \SetKwData{Left}{left}\SetKwData{This}{this}\SetKwData{Up}{up}

  \SetKwProg{Fn}{Function}{:}{}
  \BlankLine

  $x\leftarrow \roott(D_u)$\;
  
  \While{$\leftson(x)\neq \nil$}{
    $x\leftarrow \leftson(x)$\;
  }

  $\splay(x)$\;
  \Return $x$;
  
  \caption{\findmin($u$)}\label{algorithm:findmin}
\end{algorithm}\DecMargin{1em}

During the splay tree maintenance in our algorithm, we need to perform the following modification to update the information:

\begin{itemize}
    \item For a given node $x \in D_u$ and two parameters $a$ and $b$. Increase the value of $\moment_y$ by $\rank_{\subtreeT(x)}(y) \cdot a + b$ for all $y \in \subtreeT(x)$.
\end{itemize}

However, it is not efficient to perform such a change for a whole $\subtreeT(x)$ every time. Here we will use the classic \textit{lazy propagation} trick. In general, we defer the modification on the node to the time we actually visit it. Since we always visit $\father(x)$ before visiting any node $x$, we execute the modification on $x$ whenever we visit $\father(x)$, clearing the \textit{lazy tag} on $\father(x)$ afterward. Here we use \inctime($x$, $a$, $b$) to denote a modification for node $x$ with two parameter $a$ and $b$. Specifically, for each node $x \in D_u$, we maintain two lazy tags $\ta_{x}$ and $\tb_{x}$ indicating "it should perform a \inctime($y$, $\ta_x$, $\tb_x$) operation for any child $y \in \sonT(x)$ of the node $x$". 

\IncMargin{1em}
\begin{algorithm}
  \SetKwData{Left}{left}\SetKwData{This}{this}\SetKwData{Up}{up}
  \SetKwComment{Comment}{$\triangleright$\ }{}
  \SetKwFunction{SolveUpward}{\upward}
  \SetKwFunction{SolveDownward}{\downward}
  \SetKwFunction{CalculateNumberOfFirings}{\cnt}
  \SetKwFunction{UpdateDataStructure}{\updatedsupward}
  \SetKwFunction{MergeDataStructure}{\mergeupward}
  \SetKwFunction{DeltaSum}{\deltasum}
  \SetKwFunction{InitializeDataStructure}{\initds}
  \SetKwFunction{DeltaQuery}{\deltaquery}
  \SetKwFunction{RevertDataStructure}{\revertds}
  \SetKwFunction{SplitDataStructure}{\splitds}
  \SetKwInOut{Input}{input}\SetKwInOut{Output}{output}  
  \SetKwFunction{FSolveDownward}{SolveDownward}
  \SetKwFunction{IncTime}{IncTime}

  \SetKwProg{Fn}{Function}{:}{}

  $\moment_x \leftarrow \moment_x + \left(\size(\leftson(x)) + 1\right) \cdot a + b$\;
  $\ta_x \leftarrow \ta_x + a$\;
  $\tb_x \leftarrow \tb_x + b$\;
  \BlankLine

  \caption{\inctime($x$, $a$, $b$)}\label{algorithm:inctime}
\end{algorithm}\DecMargin{1em}

\begin{remark}
The reason we can use lazy propagation here is because the \inctime($x$, $a$, $b$) operation follows the associative law.
\end{remark}

Note that although the splay tree will change forms, as long as we push down lazy tags before changing, the correctness is guaranteed. We discuss the details of external information maintenance while node rotations occur based on this in \Cref{sec:dsrotatemaintain}.


Furthermore, the \textit{Dynamic Finger Theorem}, described in \Cref{theorem:dynamic-finger-theorem}, gives us a better bound for a sequence of the access operations. This theorem is vital to our time complexity analysis.

\begin{theorem}[Dynamic Finger Theorem \cite{cole2000dynamic,cole2000dynamic2}]
\label{theorem:dynamic-finger-theorem}
Let $T$ be a splay tree with $n$ nodes. Consider a sequence of accesses in splay tree $T$ (denoted as $a_1, a_2, \cdots, a_m$ and assume $a_0$ is the root of the splay). Then the cost of the access sequence is bounded by 
\[O\left(m + n + \sum_{i=1}^m \log(1+d(a_{j-1}, a_j))\right),\] where $d(x, y)$ denotes the difference between the ranks of the node $x$ and $y$.
\end{theorem}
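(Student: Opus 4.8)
The plan is to treat \Cref{theorem:dynamic-finger-theorem} as what it is -- the Dynamic Finger Theorem of Cole, Mishra, Schmidt and Siegel \cite{cole2000dynamic, cole2000dynamic2}, whose complete proof is long and technically delicate -- and to sketch the amortized-analysis skeleton while pinpointing where the genuine difficulty lies; I would not attempt to reprove it from scratch, since for this paper it is used as a black box.

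First I would set up the standard weighted potential for splay trees: assign to each node $x$ a positive weight $w(x)$, let $s(x)$ be the total weight of the subtree rooted at $x$, put $W = \sum_x w(x)$, and take the potential $\Phi = \sum_x \log s(x)$. The Access Lemma then gives that a single splay of $x$ has amortized cost $O\!\left(1 + \log(W/w(x))\right)$. The obstruction to applying this directly is that the weights are fixed, whereas the ``finger'' -- the previous access $a_{i-1}$ -- moves: a node cheap to reach for one access need not be cheap for the next.

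The second step is to let the weights track the finger. Before access $a_i$ one reweights so that $w(x)$ decays geometrically in the rank distance $|\rank(x) - \rank(a_{i-1})|$; with such an assignment $W/w(a_i)$ is polynomial in $1 + d(a_{i-1}, a_i)$, so the Access Lemma contributes precisely the claimed term $\log\!\left(1 + d(a_{i-1}, a_i)\right)$ for step $i$. One then has to bound the change in $\Phi$ caused by the reweighting, summed over all $m$ accesses; this is where the real combinatorics enters -- a decomposition of the access sequence into monotone ``runs'' together with an auxiliary potential -- and it is the heart of \cite{cole2000dynamic, cole2000dynamic2}. Finally, the additive $n$ absorbs the difference between the initial and final values of $\Phi$, and the additive $m$ absorbs the per-access constant and the fact that even an access at rank distance $0$ or $1$ still costs $\Omega(1)$.

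The main obstacle is exactly the reweighting bookkeeping of the second step: showing that the total potential cost of dragging the weights along with the finger is again only $O\!\left(m + n + \sum_i \log(1 + d(a_{i-1}, a_i))\right)$ rather than something larger. This is the technical core of the Cole--Mishra--Schmidt--Siegel argument, and for a paper whose focus is sandpile prediction the correct move is to invoke their theorem as a black box in exactly the stated form, as we do, and then apply it in the running-time analysis of the path algorithm.
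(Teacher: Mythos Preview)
Your proposal is correct and matches the paper's treatment: the paper states \Cref{theorem:dynamic-finger-theorem} with citations to \cite{cole2000dynamic,cole2000dynamic2} and gives no proof, using it purely as a black box in the running-time analyses (e.g., \Cref{corollary:seriessplay}, \Cref{chain:upward}, \Cref{chain:downward2}). Your sketch of the weighted-potential/reweighting argument is additional context the paper does not provide, but your bottom line---invoke the theorem as stated and apply it---is exactly what the paper does.
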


With such dynamic finger property, when merging multiple splay trees with the small-to-large trick, we can reach a total time complexity of $O(n \log n)$ where $n$ denotes the total number of nodes. Conversely, we can split the final splay trees back in the same complexity, which can be regarded as undoing the merging. In our tree algorithm, we design two interfaces $\mergeupward$ and $\splitds$ to support merging and splitting splay trees. Specifically, we have the following two lemmas:

\begin{restatable}{lemma}{mergelemma}
\label{lemma:merge}
    \mergeupward($u$, $v$) will merge all nodes from $D_v$ into $D_u$. Note that there won't be nodes in $D_v$ after merging. During the execution of \Cref{algorithm:main}, all $\mergeupward$ operations take $O(n \log n)$ time in total.
\end{restatable}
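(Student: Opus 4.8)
The plan is to implement $\mergeupward(u,v)$ as a \emph{finger merge} of the two splay trees $D_u$ and $D_v$ (both storing key pairs ordered by $\moment$) and then to bound the total running time of all $\mergeupward$ calls by a small-to-large accounting argument whose key ingredient is the Dynamic Finger Theorem (\Cref{theorem:dynamic-finger-theorem}). Concretely: compare $|D_u|$ and $|D_v|$, let $S$ be the smaller tree and $L$ the larger (if $L=D_v$ we rename the result to $D_u$ at the end, which is sound since $D_v$ must be empty afterwards), traverse $S$ in $\moment$-increasing order while pushing the lazy tags $\ta,\tb$ down so that every visited node carries its true $\moment$, and detach its nodes one by one. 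Each detached node $x$ is inserted into $L$ by an ordinary splay insertion, but with the search started from the node that the previous insertion splayed to the root of $L$ --- the ``finger''. Since the nodes of $S$ arrive sorted, successive insertion points advance monotonically through $L$; when $S$ is exhausted it is empty, $L$ holds the $\moment$-ordered union of the two collections, and we set $D_u:=L$. Correctness of the order is immediate; the delicate point is flushing every pending $\inctime$ tag on a node that is moved or that lies on a touched path before the structure changes there, which I would handle exactly as in \Cref{sec:dsrotatemaintain}.

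I would then bound the amortized cost of one call with $a=|S|\le b=|L|$. By \Cref{theorem:dynamic-finger-theorem} the $i$-th of the $a$ insertions into $L$ costs $O(1+\log(1+d_i))$, where $d_i$ is the rank distance in $L$ between consecutive insertion points; as the insertions are sorted, the $d_i$ are lengths of at most $a+1$ disjoint intervals of $L$, so $\sum_i d_i\le a+b$, and concavity of $\log$ gives $\sum_i\log(1+d_i)=O(a\log(1+b/a))$. Adding $O(a)$ for scanning $S$ and flushing its tags, one $\mergeupward$ call costs $O(a\log(1+b/a)+a)$ amortized, with the initial-tree potential absorbed into a single global potential $\Phi=O(K\log K)$ that bounds the potentials of all $D_u$ simultaneously, where $K$ is the total number of key-pair nodes ever created.

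Finally I would sum over the merge tree induced by the $\mergeupward$ calls in \Cref{algorithm:upward}: there are exactly $n-1$ such calls, one per non-root vertex. Under the small-to-large rule, whenever a node is on the small side of a merge the tree containing it at least doubles in size, so each node is on the small side of $O(\log K)$ merges; combining this with the per-call bound $O(a\log(1+b/a)+a)$ and summing telescopes to $O(K\log K)$, and since $K=O(n)$ by the analysis of $\updatedsupward$ (\Cref{sec:ds}) the total is $O(n\log n)$ --- using $\log(1+b/a)$ in place of $\log b$ is precisely what avoids a spurious extra logarithm. The memory claim is immediate: $\mergeupward$ creates no new node and empties $D_v$, so the total node count never exceeds $K=O(n)$.

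The main obstacle I anticipate is making the Dynamic Finger Theorem genuinely applicable in the presence of lazy tags: one must verify that the ``finger'' really equals the previously accessed node of $L$ after its splay, that no stale $\inctime$ tag ever participates in a $\moment$ comparison during an insertion, and that the cost of pushing tags down along touched paths is subsumed by the access cost charged by \Cref{theorem:dynamic-finger-theorem} rather than being an additional additive term. A secondary difficulty is the telescoping estimate itself: since the child-merge order $\mathcal{I}$ at each vertex is arbitrary, one has to check that $\sum a\log(1+b/a)$ over the whole merge tree is still $O(n\log n)$ and never $O(n\log^2 n)$.
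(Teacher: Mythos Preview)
Your proposal is correct and follows essentially the same approach as the paper: small-to-large merging by inserting the nodes of the smaller tree into the larger one in increasing $\moment$ order, with the per-merge cost bounded by $O(a\log((a+b)/a))$ via the dynamic finger property of splay trees. The paper cites this per-merge bound as a known result (\Cref{theorem:merge-two-set}) rather than rederiving it from \Cref{theorem:dynamic-finger-theorem}, and it resolves your ``secondary difficulty'' (why $\sum a\log(1+b/a)=O(n\log n)$ rather than $O(n\log^2 n)$) by a clean induction on the number of trees merged: if the last merge combines trees of sizes $|A|\ge|B|$ with $|A|+|B|=n$, then $C|A|\log|A|+C|B|\log|B|+C|B|\log(n/|B|)\le Cn\log n$.
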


\begin{restatable}{lemma}{splitlemma}
   \label{lemma:split}
    If the current $D_u$ contains all key pairs from $D_u$ and $D_v$ before calling $\mergeupward(u,v)$ and no key pair from $D_{v'}$ exists if $v'$ is after $v$ in $\mathcal{I}$, \splitds($u$, $v$) will extract nodes to $D_v$ from $D_u$, reverting $D_u,D_v$ from the corresponding call of $\mergeupward(u,v)$. After $\splitds(u,v)$, no key pair from $D_{v'}$ exists if $v'$ is no earlier than $v$ in $\mathcal{I}$. During the execution of \Cref{algorithm:main}, all $\splitds$ operations take $O(n \log n)$ time in total. 
\end{restatable}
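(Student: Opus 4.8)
The plan is to treat $\splitds(u,v)$ as the exact inverse of $\mergeupward(u,v)$ and to recover both the correctness claims and the $O(n\log n)$ global running time from \Cref{lemma:merge} and the Dynamic Finger Theorem (\Cref{theorem:dynamic-finger-theorem}). First I would pin down which nodes must leave $D_u$. Every key-pair node carries $\timestamp_x=\dfn_w$ for the tree vertex $w$ whose processing in \upward{} created it, and the $\dfn$ values assigned to $\subtree(v)$ by the Euler tour in \Cref{algorithm:upward} (lines~\ref{algorithm:upward:set-dfn}--\ref{algorithm:upward:index}) form a contiguous interval $I_v=[\dfn_v,\dfn_v+|\subtree(v)|-1]$. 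Hence the key pairs that must be split off are exactly the nodes $x\in D_u$ with $\timestamp_x\in I_v$, which are precisely the key pairs that $\mergeupward(u,v)$ (together with all merges performed inside $\subtree(v)$) had inserted; the surviving nodes are those with $\timestamp$ equal to $\dfn_u$ or lying in $I_{v'}$ for a child $v'$ processed before $v$ in $\mathcal I$, which yields the stated post-condition on surviving key pairs. Navigation to the nodes of $I_v$ uses the subtree aggregates $\timemin,\timemax$: descending the splay tree, a subtree whose $[\timemin,\timemax]$ misses $I_v$ is skipped, one contained in $I_v$ is detached wholesale, and otherwise we recurse, always pushing the lazy tags $\ta,\tb$ down to the children first via \inctime{} so $\moment$ values stay correct across the restructuring.

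Next I would argue that the extraction reconstitutes exactly the pre-merge $D_u$ and $D_v$. By the LIFO discipline of \Cref{algorithm:downward} --- $\revertds(u)$ undoes $\updatedsupward(D_u)$, and the children are split in the reverse of $\mathcal I$ --- at the moment $\splitds(u,v)$ runs, $D_u$ is in exactly the state it had immediately after the call $\mergeupward(u,v)$ in \upward{}: the intervening $\cnt$ and $\deltasum$ calls do not mutate the tree, and every mutation applied afterward has already been reverted. Since $\moment$ is the only field whose value depended on those later operations and it is now restored, carving out the $I_v$-timestamped nodes and rebuilding a splay tree over them (inserting in increasing $\moment$ order) gives back precisely the $D_v$ that existed before the merge, while $D_u$ loses exactly the nodes the merge had added; correctness of subsequent $\deltaquery$/$\deltasum$ queries then follows because $\moment,\timestamp,\timemin,\timemax$ all agree with their pre-merge values.

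For the time bound I would mirror the accounting used for \Cref{lemma:merge}. Writing $S$ and $L$ for the smaller and larger of $D_v$ and the pre-merge $D_u$, the merge inserted the $|S|$ nodes of $S$ into $L$ in sorted $\moment$ order; the split deletes exactly those $|S|$ nodes and re-inserts them into a fresh tree, again in sorted order, hence each at the current maximum. In both directions the successive access points form a monotone sequence of ranks whose consecutive differences sum to at most $|S|+|L|$, so by concavity of $\log$ and \Cref{theorem:dynamic-finger-theorem} the cost of the split is $O\!\left(|S|+|S|\log(1+|L|/|S|)\right)$, the same order as the matching merge. Summing over all $\splitds$ calls therefore gives the same bound as summing over all $\mergeupward$ calls, namely $O(n\log n)$ by \Cref{lemma:merge}; the $\timemin/\timemax$ recomputation and lazy pushes add only $O(1)$ per splay step and are absorbed. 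The main obstacle is the middle step: one must verify that the intervening operations on $D_u$ between the merge and the split are all undone in exactly the reverse order (so the split genuinely sees the post-merge tree), and that carving out a $k$-node subsequence from an $N$-node splay tree --- found by repeated $\timemin/\timemax$-guided finger searches in $\moment$ order --- really incurs consecutive rank gaps summing to $O(N)$ and hence cost $O(k\log(1+N/k))$ rather than $O(k\log N)$; this is where the contiguity of $I_v$ and the small-to-large choice inherited from the merge are essential.
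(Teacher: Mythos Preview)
Your plan is essentially the paper's approach---identify the nodes to move by their $\timestamp$ values, extract them in $\moment$ order, and charge the cost against the matching merge via the Dynamic Finger Theorem---but two points deserve correction.

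First, your identification criterion $\timestamp_x\in I_v$ always picks out the nodes originating from $\subtree(v)$, yet your time analysis assumes you move the \emph{smaller} of the two pre-merge trees. When $\size(D_v)>\size(D_u)$ held at merge time, the merge swapped the two trees and inserted the original $D_u$'s nodes; undoing that by extracting the $I_v$ nodes would move the larger set and break your $|S|\log(1+|L|/|S|)$ bound. The paper handles this by recording a one-bit flag $\res_v$ during $\mergeupward$ and, at split time, extracting nodes with $\timestamp\geq\dfn_v$ when $\res_v=0$ and with $\timestamp<\dfn_v$ when $\res_v=1$, then swapping back at the end. (The hypothesis that no later child's key pairs remain is exactly what makes the one-sided test $\geq\dfn_v$ suffice, so the full interval $I_v$ is unnecessary.) Your final sentence gestures at this issue but does not resolve it; without the flag, the argument has a genuine gap.

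Second, your claim that ``the intervening $\cnt$ and $\deltasum$ calls do not mutate the tree'' is false: $\cnt$ deletes nodes from $D_u$ into $\bq_u$. The reason $D_u$ is back to its post-merge contents when $\splitds(u,v)$ runs is that $\revertds(u)$ undoes \emph{both} $\updatedsupward$ and the deletions performed by $\cnt$ (by re-inserting everything in $\bq_u$). Also note that the splay tree's \emph{shape} is not restored, only the multiset of nodes and their $\moment$ values; that is all the argument needs, but your phrase ``exactly the state'' overstates it.
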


The implementation and analysis of $\mergeupward$ (\Cref{lemma:merge}) and $\splitds$ (\Cref{lemma:split}) are given in \Cref{sec:mergeds} and \Cref{sec:splitds} respectively.

\subsection{Difference Aggregation by Tree Walk}
\label{sec:ds:tree-walk}
We will analyze $\deltaquery$ (\Cref{algorithm:deltaquery}) by proving the following lemma.
\begin{lemma}
\label{lemma:deltaquery}
$\deltaquery(u,k)$ will return the correct value of $\delta(u, k)$. During the execution of \Cref{algorithm:main}, the $\deltaquery$ operation takes $O(n \log n)$ time in total. 
\end{lemma}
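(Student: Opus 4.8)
The plan is to reduce the correctness of $\deltaquery(u,k)$ (\Cref{algorithm:deltaquery}) to \Cref{lemma:deltaequal} together with a rank query on the splay tree $D_u$, and to bound the running time by a standard amortized analysis over the $n-1$ invocations. First I would pin down the contents of $D_u$ at the moment $\deltaquery(u,\bc(\parent(u)))$ is invoked inside $\downward(u,G)$: since the call precedes $\revertds(u)$ in \Cref{algorithm:downward}, and since the enclosing call $\downward(\parent(u),G)$ has just performed $\splitds(\parent(u),u)$, which by \Cref{lemma:split} restores $D_u$ to its state at the corresponding call of $\mergeupward(\parent(u),u)$ — that is, to the state right after $\updatedsupward(D_u)$ inside $\upward(u,\cdot)$ — the tree $D_u$ at this moment holds exactly the key pair nodes of $u$, i.e. one node with $\moment_x=i$ for each $i$ with $(u,i)$ a key pair. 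By \Cref{lemma:deltaequal}, $\delta(u,k)=k-|\{\,i\in[1,k]:(u,i)\text{ is a key pair}\,\}|$, so it suffices for $\deltaquery$ to count the nodes of $D_u$ with moment at most $k$ and return $k$ minus that count.

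Next I would spell out the tree walk. Starting at $\roott(D_u)$ we descend toward the node of largest moment not exceeding $k$: at each visited node $x$ we first push its lazy tags down, i.e. call $\inctime$ on its children with parameters $\ta_x,\tb_x$ and then reset $\ta_x,\tb_x\leftarrow 0$ — legitimate because $\inctime$ obeys the associative law (the Remark after \Cref{algorithm:inctime}) — so that the $\moment$ value read at every node on the search path is its true coordinate; we descend into $\rightson(x)$ if $\moment_x\le k$ and into $\leftson(x)$ otherwise, remembering the last node whose moment was $\le k$. Following the pattern of \Cref{algorithm:findmin} we then $\splay$ that node $x^*$ to the root, after which the number of key pairs with moment $\le k$ is exactly $\size(\leftson(x^*))+1$, and we return $k$ minus this quantity. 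If no node has moment $\le k$ — in particular when $D_u$ is empty, which happens for a leaf $u$ — the count is $0$ and we return $k$, matching $\delta(u,k)=k$ for leaves; if $k$ exceeds every stored moment the count is $\size(D_u)$, matching the fact that $\delta(u,\cdot)$ is eventually $k$ minus the total number of key pairs of $u$. Correctness of the lazy tags through the rotations performed by $\splay$ is precisely what is arranged in \Cref{sec:dsrotatemaintain}.

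For the running time, note that $\deltaquery$ is called exactly once per non-root vertex — one invocation per recursive call of $\downward$ on a non-root vertex — so there are $n-1$ invocations in total over \Cref{algorithm:main}. Each invocation consists of a single root-to-node descent, $O(1)$ work per node on the path, and one $\splay$, so its cost is dominated by that of one splay-tree access. Taking the standard potential $\Phi=c\sum_x\log_2\size(x)$ summed over all nodes currently in $\mathcal D$ — which is non-negative and always $O(n\log n)$ since $\mathcal D$ contains $O(n)$ nodes at any time — the access bound for splay trees \cite{sleator1985self} gives each such access amortized cost $O(\log n)$, irrespective of how the interleaved $\mergeupward$, $\splitds$, $\revertds$, $\cnt$, and $\deltasum$ calls have reshaped the trees. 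Summing over the $n-1$ invocations and adding the $O(n\log n)$ slack in $\Phi$ yields total real cost $O(n\log n)$; the Dynamic Finger Theorem is not needed here because each $D_u$ is queried only once.

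The step I expect to be the \emph{main obstacle} is the first one — establishing that at the precise moment $\deltaquery(u,\cdot)$ fires, $D_u$ really is the complete and exact set of $u$'s key pair nodes. This rests on $\updatedsupward$ converting the merged children's structure into $u$'s own key pair structure and on $\revertds$ and $\splitds$ being exact inverses of the operations performed on the way up, all of which are part of the global invariant proved for \Cref{theorem:ds}; once that is granted, the tree walk and its amortization are routine. A secondary subtlety is guaranteeing that the moments read during the descent equal the true $k$-coordinates, which is handled by the lazy-tag push-down above and by the rotation bookkeeping of \Cref{sec:dsrotatemaintain}.
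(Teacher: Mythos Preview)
Your proposal is correct and follows essentially the same approach as the paper: reduce to \Cref{lemma:deltaequal}, count the key-pair nodes with $\moment\le k$ via a single root-to-leaf walk with \textsc{PushDown}, splay to preserve the amortized bound, and sum $O(\log n)$ over the $n-1$ calls. The only cosmetic difference is that the paper accumulates the count on the way down (adding $\size(\leftson(now))+1$ each time it steps right) and splays the last \emph{visited} node, whereas you splay the predecessor $x^*$ and read off $\size(\leftson(x^*))+1$ afterward; both are valid, though the paper's choice of always splaying the last visited node avoids the edge case you flag (no node with $\moment\le k$) without a separate branch.
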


\IncMargin{1em}
\begin{algorithm}[H]
  \SetKwData{Left}{left}\SetKwData{This}{this}\SetKwData{Up}{up}
  \SetKwFunction{PushDown}{PushDown}

  \SetKwProg{Fn}{Function}{:}{}
  \BlankLine
  $now\leftarrow \roott(D_u)$\; \label{deltaquery:takeroot}
  $las\leftarrow \nil$\; \label{deltaquery:lasnil}
  $rank\leftarrow 0$\; \label{deltaquery:rankinit}

  \While{$now\neq \nil$}{ \label{deltaquery:while}
    $\PushDown(now)$\; \label{deltaquery:pushdown}
    $las\leftarrow now$\; \label{deltaquery:lasupdate}
    \If{$\moment_{now}\le k$}{ \label{deltaquery:if}
      $rank\leftarrow rank +\size(\leftson(now))+1$\; \label{deltaquery:add}
      $now\leftarrow \rightson(now)$\; \label{deltaquery:right}
    }
    \Else{
      $now\leftarrow \leftson(now)$\; \label{deltaquery:left}
    }
  }

  $\splay(las)$\; \label{deltaquery:splay}

  \Return $k-rank$\; \label{deltaquery:return}
  
  \caption{\deltaquery($u$, $k$)}\label{algorithm:deltaquery}
\end{algorithm}\DecMargin{1em}

\begin{proof}
First, by \Cref{lemma:deltaequal}, we have $\delta(u,k)=k-\left\lvert\left\{x\mid x\in D_u,\moment_x\leq k\right\}\right\rvert$.

Since we use a splay tree to maintain all these ordered key pairs, to count the number of key pairs for a certain prefix, we will conduct a top-down tree walk on the splay tree starting from the root. This process is described in \Cref{algorithm:deltaquery}. We use $now$ to represent the current visiting node, initialized as $\roott(D_u)$ (\Cref{deltaquery:takeroot}) since we begin from the root. Additionally, we use $las$ to store the previously visited node (updated in \Cref{deltaquery:lasupdate}). We also initialize $rank$ to be $0$ (\Cref{deltaquery:rankinit}), denoting the number of key pairs we should count.

During the loop (\Cref{deltaquery:while}), we are finding node $now$ with maximum rank such that $\moment_{now} \leq k$ while counting the number of key pairs. Specifically, if the current $\moment_{now} \leq k$ (\Cref{deltaquery:if}), by the property of the binary search tree, all key pairs in $\subtreeT(\leftson(now))$ together with $now$ satisfy the condition. Therefore, we directly increase $rank$ by $\size(\leftson(now))+1$ (\Cref{deltaquery:add}), then we go to $\rightson(now)$ for continue searching. Similarly, if $\moment_{now} > k$, all key pairs in $\subtreeT(\rightson(now))$ should be ignored. Thus we should go to $\leftson(now)$. Note that during the loop, whenever we visit a new node, we need to call $\PushDown(now)$ to guarantee the correctness of $\moment_{now}$.

After finding the pair with maximum rank, we also successfully count the number of key pairs $(u,k')$ satisfying the condition $k'\leq k$. Before returning the value computed by the formula (\Cref{deltaquery:return}), we also need to splay the last accessed node to the root by calling $\splay(las)$ (\Cref{deltaquery:splay}).

Since such a walk process is equivalent to the access on the splay tree, it has a $O(\log n)$ amortized cost and $\deltaquery$ will be called $O(n)$ times in \Cref{algorithm:downward}, the total time cost is $O(n \log n)$.
\end{proof}

\subsection{Computing Partial Firing Numbers by Pop-Up Mechanism}
We will analyze the $\cnt$ operation(\Cref{algorithm:calculatenum}) by proving the following lemma.
\begin{lemma}
\label{lemma:computeC}
    When calling $\cnt(u,\sigma'_u)$, if the current $D_u$ is the collection of all key pairs from $D_v,v\in \son(u)$, $\cnt(u,\sigma'_u)$ will compute the correct $\bc^{\down}(u)$. During the execution of \Cref{algorithm:main}, all $\cnt$ operations take $O(n \log n)$ time in total.
\end{lemma}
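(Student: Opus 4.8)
The plan is to reduce $\bc^{\down}(u)$ to a monotone search over the key pairs stored in $D_u$, show that the pop-up routine of \Cref{algorithm:calculatenum} realizes this search, and then charge its cost. First I would combine \Cref{lemma:c-down}, which identifies $\bc^{\down}(u)$ as the least $k$ with $\psi_u(k)<\degree(u)$, with \Cref{lemma:deltaequal}. Since by hypothesis $D_u$ is exactly the disjoint union of the key-pair sets of the children $v\in\son(u)$, summing \Cref{lemma:deltaequal} over these children gives $\sum_{v\in\son(u)}\delta(v,k)=k\,\abs{\son(u)}-\abs{\{x\in D_u:\moment_x\le k\}}$, hence
\[
\psi_u(k)=\sigma'_u-(\degree(u)-\abs{\son(u)})\,k-\abs{\{x\in D_u:\moment_x\le k\}},
\]
where $\degree(u)-\abs{\son(u)}$ equals $1$ if $u\ne r$ and $0$ if $u=r$. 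So $\bc^{\down}(u)$ is the least $k$ at which this explicit expression falls below $\degree(u)$; by \Cref{lemma:f-is-monotonically-decreasing} the expression is non-increasing in $k$, so such a $k$ is unique when it exists — it always exists for $u\ne r$, while its non-existence for $u=r$ would mean the instance is recurrent, a case already filtered out in \Cref{algorithm:main}.

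Next I would check that the pop-up mechanism evaluates this threshold. Between two consecutive distinct moments of $D_u$ the counting term is constant, so $\psi_u$ is affine in $k$ with slope $-(\degree(u)-\abs{\son(u)})$, and at each distinct moment $m$ it drops by that amount plus the multiplicity of $m$. The routine therefore repeatedly calls \findmin on $D_u$ to pull out the smallest-moment key pair, deletes it (saving it so the later \revertds$(u)$ can restore $D_u$), and keeps a counter $j$ of the number removed; for $k$ at least the current moment, $\abs{\{x\in D_u:\moment_x\le k\}}$ equals $j$ plus what remains in $D_u$ below $k$, which is how the formula above is maintained. After removing all pairs sharing the current minimum moment $m$ it tests whether $\psi_u(m)<\degree(u)$; the first time the test succeeds, $\bc^{\down}(u)$ lies in $(m',m]$ where $m'$ is the previous distinct moment, and is obtained by solving the affine inequality on $(m',m)$ (so $\bc^{\down}(u)=m$ in particular when $u=r$, since then the segment is flat and only the jump at $m$ can cross). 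Correctness of "first success'' follows from the monotonicity of $\psi_u$ (\Cref{lemma:f-is-monotonically-decreasing}). Edge cases: $\sigma'_u<\degree(u)$ returns $0$ with no pops, and if $D_u$ is exhausted before any test succeeds then (for $u\ne r$) the final affine segment still crosses $\degree(u)$ and gives the answer. Along each \findmin descent the lazy tags are pushed down so that $\moment_x$ is correct when inspected.

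For the running time, one iteration of \cnt does $O(1)$ splay accesses and one deletion, each $O(\log n)$ amortized, so \cnt$(u)$ costs $O((1+p_u)\log n)$, where $p_u$ is the number of key pairs it extracts (and $p_u\le\abs{D_u}$). The crux is $\sum_u p_u=O(n)$, which I would get by tracing a single key-pair node through \Cref{algorithm:main}: it is created once by \updatedsupward at some vertex, possibly carried up the tree by \mergeupward, extracted by \cnt only at its current parent, and thereafter held on that call's restoration stack until \revertds replays it in the complete-firing phase; since \downward (\Cref{algorithm:downward}) never calls \cnt, no node is extracted twice, and the total number of key-pair nodes ever created is $O(n)$ — the same bound underlying the $O(n)$ memory part of \Cref{theorem:ds}. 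Hence $\sum_u p_u=O(n)$ and all \cnt calls cost $O(n\log n)$ in total. The main obstacle I anticipate is pinning down the affine-crossing case analysis precisely (root versus non-root, ties among moments, and $\bc^{\down}(u)$ coinciding with a stored moment), together with cleanly justifying the "each node extracted at most once'' invariant and the $O(n)$ bound on created nodes, which is really established jointly with the \mergeupward/\splitds and memory accounting rather than in isolation.
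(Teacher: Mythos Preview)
Your proposal is correct and follows essentially the same approach as the paper: derive $\psi_u(k)=\sigma'_u-[u\ne r]\,k-\abs{\{x\in D_u:\moment_x\le k\}}$ from \Cref{lemma:c-down} and \Cref{lemma:deltaequal}, pop minima from $D_u$ until the threshold is located, and bound the total cost via the fact that each of the $O(n)$ key-pair nodes (\Cref{lemma:dsmemory}) is popped into some $\bq_u$ at most once. The only cosmetic difference is the timing of the test in the loop---the paper's \Cref{algorithm:calculatenum} checks \emph{before} advancing whether $\psi_u(\moment_x-1)\ge\degree(u)$ still holds, whereas you test $\psi_u(m)<\degree(u)$ \emph{after} absorbing all ties at $m$---but the invariants, the affine-segment resolution (your anticipated case analysis corresponds to the final line computing $p$), and the amortized accounting are identical.
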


\IncMargin{1em}
\begin{algorithm}[H]
  \SetKwData{Left}{left}\SetKwData{This}{this}\SetKwData{Up}{up}
  \SetKwFunction{FindMin}{FindMin}
    \SetKw{Break}{break}

  \SetKwProg{Fn}{Function}{:}{}
  \BlankLine
  $now\leftarrow 0$\; \label{initk}
  $count \leftarrow 0$\; \label{computeC:initcount}
  $\bq_u \leftarrow \varnothing$ \; \label{computeC:initq}
  \While{$D_u\neq \nil$}{ \label{computeC:loop}
    $x\leftarrow \FindMin(D_u)$\; \label{computeC:getmin}
    \If{$\moment_x=now \textbf{ or } count+[u\neq r]\cdot (\moment_x-1-now)\leq \sigma'_u-\degree(u)$}{ \label{computeC:if}
      \Delete$(D_u,x)$\; \label{computeC:delete}
      $\bq_u\text{.append}(x)$\; \label{computeC:append}
      $count\leftarrow count+1+[u\neq r]\cdot (\moment_x-now)$\; \label{computeC:accumulate}
      $now\leftarrow \moment_x$\; \label{computeC:increasek}
    }
    \Else{\Break\;} \label{computeC:break}
  }
  $p\leftarrow [u\neq r]\cdot \max\left(0,\sigma'_u-count-(\degree(u)-1)\right)$\; \label{computeC:cornercase}
  \Return $now+p$\; \label{computeC:return}
  \caption{\cnt($u$,$\sigma'_u$)}\label{algorithm:calculatenum}
\end{algorithm}\DecMargin{1em}

\begin{proof}
By \Cref{lemma:c-down}, we know that the value of $\bc^{\down}(u)$ is exactly the non-negative smallest integer $k$ such that $\psi_u(k) < \degree(u)$. Recall that 
\[
\psi_u(k) = \sigma_u - k \cdot \degree(u) + \sum_{v \in \son(u)} \delta(v, k).
\]

Then, we have

\begin{align}
\psi_u(k) = \sigma_u  - \left(k\cdot |\son(u)|-\left( \sum_{v \in \son(u)} \delta(v, k)\right) \right) - k\cdot [u\neq r].\label{formula:dquery1}
\end{align}


Because we assume that $\cnt$ is called when $D_u$ is updated to the collection of all key pairs in $\son(u)$. This should eliminate the summation sign with regard to children in \Cref{formula:dquery1} via introducing the formula below. That is, by \Cref{lemma:deltaequal}, we have
\begin{align} 
    \sum_{v\in \son(u) } \delta(v,k) = k \cdot |\son(u)| - \left\lvert\left\{x \mid x \in D_u, \moment_x \le k\right\}\right\rvert. \label{formula:deltaalter}
\end{align}

By substituting the term in \Cref{formula:dquery1} we have
\begin{align}
\psi_u(k) = \sigma_u  - \left\lvert\left\{x \in D_u \mid \moment_x \le k\right\}\right\rvert - k\cdot [u\neq r].
\end{align}

Now we will prove that \Cref{algorithm:calculatenum} is sufficient to find the minimum $k$. We use $now$ to denote the current value of $k$. Since in $D_u$ the nodes are ordered by the value of $\moment$ in increasing order, we can repeatedly extract the node with the smallest rank from $D_u$ and see if it could increase the value of $now$ while $\psi_u(now) \geq \degree(u)$ holds.

In the beginning, we initialize $now$ as $0$ since $k$ is non-negative (\Cref{initk}). As $now$ increases, we use the variable $count$ to keep track of $\left\lvert\left\{x \mid x \in D_u, \moment_x \le now\right\}\right\rvert + now\cdot [u\neq r]$. Since we extract the minimum rank node repeatedly (\Cref{computeC:loop}), we need to delete it from $D_u$ (\Cref{computeC:delete}) to prevent redundant enumeration. We use an additional list $\bq_u$ for each $u$ to store the deleted node temporarily (\Cref{computeC:append}). The list $\bq_u$ is initialized as empty (\Cref{computeC:initq}) in the beginning. 

In every turn, we first extract the minimum rank node $x$ (\Cref{computeC:getmin}). There are two cases to consider (\Cref{computeC:if}):
\begin{itemize}
    \item If $\moment_x=now$, since the $\moment_x\leq now$ in the definition of $count$ is hold for the $x$ now, we need to increase $count$ by $1$.
    \item If $\moment_x \neq now$, we first test if $now$ can be increased to $\moment_x-1$ without triggering the terminate condition $\psi_u(now) < \degree(u)$. Since there is no $y \in D_u$ such that $now<\moment_y\leq \moment_x-1$, by definition, $\left\lvert\left\{x \mid x \in D_u, \moment_x \le k\right\}\right\rvert + k\cdot [u\neq r]$ equals to $count+[u\neq r] \cdot (\moment_x-1-k)$. If such value $\leq \sigma'_u - \degree(u)$, then the desired $k$ is at least larger than $\moment_x-1$, which allows us to increase $now$ to $\moment_x$.
\end{itemize}

For both cases, $now$ will be increased (or kept) to $\moment_x$ (\Cref{computeC:increasek}). We update $count$ correspondingly (\Cref{computeC:accumulate}) and continue enumeration.

When the loop terminates, it is known that the term $\left\lvert\left\{x \mid x \in D_u, \moment_x \le k\right\}\right\rvert$ has already been accumulated correctly. Now if we increase $now$ by $p$, the increment of $count$ will be $p \cdot [u \neq r]$.  Assuming $u \neq r$, we can compute the maximum possible $p$ to make $\psi_u(now) < \degree(u)$ by simple calculation, which is the difference between $\sigma'_u-count$ and $\degree(u)-1$. We take the max value between this difference and $0$ to avoid corner cases (\Cref{computeC:cornercase}). The desired value $k$ will be $now+p$ (\Cref{computeC:return}).

Since for each node $x$, it will only be deleted from any $D_u$ and inserted into the corresponding $\bq_u$ once. By \Cref{lemma:dsmemory}, there are $O(n)$ nodes in total. Since each deletion in $D_u$ costs $O(\log n)$ in amortized, the total time cost will be $O(n \log n)$.
\end{proof}

\subsubsection{\texorpdfstring{$\deltasum$}{DeltaSum} Calculation}
We will analyze the $\deltasum$ (\Cref{algorithm:deltasum}) operation by proving \Cref{lemma:deltasum}.

\IncMargin{1em}
\begin{algorithm}[H]
  \SetKwData{Left}{left}\SetKwData{This}{this}\SetKwData{Up}{up}
  \SetKwFunction{PushDown}{PushDown}

  \SetKwProg{Fn}{Function}{:}{}
  \BlankLine
  \Return $\bc^{\down}(u) \cdot |\son(u)| - |\bq|$\; \label{deltasum:return}
  
  \caption{\deltasum($u$)}\label{algorithm:deltasum}
\end{algorithm}\DecMargin{1em}

\begin{lemma}
\label{lemma:deltasum}
$\deltasum(u)$ will return the correct value of $\sum_{v \in \son(u)} \delta(v, \bc^{\downarrow}(u))$ which equals to $\bc^{\downarrow}(u)\cdot |\son(u)|-\size (Q_u)$ in the \Cref{algorithm:upward}. Each $\deltasum$ operation takes $O(1)$ time. 
\end{lemma}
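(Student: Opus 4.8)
The plan is to reduce everything to facts already established in \Cref{sec:ds}: the closed form $\psi_u(k)=\sigma'_u-\bigl|\{x\in D_u:\moment_x\le k\}\bigr|-k\cdot[u\neq r]$ and the identity \Cref{formula:deltaalter} derived in the proof of \Cref{lemma:computeC}, together with \Cref{lemma:c-down} and \Cref{lemma:f-is-monotonically-decreasing}, which say that $\bc^{\down}(u)$ is exactly the threshold of the up-set $\{k:\psi_u(k)<\degree(u)\}$. Since \Cref{formula:deltaalter} with $k=\bc^{\down}(u)$ already gives
\[
\sum_{v\in\son(u)}\delta(v,\bc^{\down}(u))\;=\;\bc^{\down}(u)\cdot|\son(u)|\;-\;\bigl|\{x\in D_u:\moment_x\le\bc^{\down}(u)\}\bigr|,
\]
where $D_u$ is the tree at the moment $\cnt(u,\sigma'_u)$ is invoked (the union of the children's key pairs), the only thing left to prove is the combinatorial claim that the list $\bq_u$ built inside \Cref{algorithm:calculatenum} contains exactly the nodes $x$ of that $D_u$ with $\moment_x\le\bc^{\down}(u)$; then $\size(Q_u)=\size(\bq_u)$ equals the cardinality above and the value returned on \Cref{deltasum:return} is correct. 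The $O(1)$ bound is immediate, since $\deltasum$ only reads $\bc^{\down}(u)$, $|\son(u)|$ (known from the adjacency structure), and the length of $\bq_u$, which can be kept as a counter while \Cref{algorithm:calculatenum} runs.

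To establish the claim I would track the loop of \Cref{algorithm:calculatenum}. First, $\findmin$ returns the remaining node of smallest $\moment$, so nodes are consumed in nondecreasing $\moment$ order, and each deletion sets $now\leftarrow\moment_x$; hence $now$ is nondecreasing, and since the returned value $now+p$ has $p\ge0$ and equals $\bc^{\down}(u)$ by \Cref{lemma:computeC}, every node ever appended to $\bq_u$ has $\moment_x=now\le\bc^{\down}(u)$. Second, I would maintain the loop invariant $count=|\bq_u|+now\cdot[u\neq r]$ together with ``$\bq_u$ currently equals $\{x\in D_u:\moment_x\le now\}$''; the first part is a one-line induction on the update of $count$, and the second uses that once $now$ reaches a value $m$ through deletion of a node of $\moment=m$, every other node of $\moment=m$ becomes the next minimum and is deleted in the $\moment_x=now$ branch (no break can occur there), so nothing with $\moment\le now$ survives. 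Combining these, at the iteration where the loop breaks at a node $y$ we have $\moment_y>now$ and no node of $D_u$ has $\moment$ strictly between $now$ and $\moment_y$, so
\[
count+[u\neq r]\cdot(\moment_y-1-now)\;=\;\bigl|\{x\in D_u:\moment_x\le\moment_y-1\}\bigr|+[u\neq r](\moment_y-1)\;=\;\sigma'_u-\psi_u(\moment_y-1),
\]
using the closed form of $\psi_u$. Thus the break test $count+[u\neq r](\moment_y-1-now)>\sigma'_u-\degree(u)$ is equivalent to $\psi_u(\moment_y-1)<\degree(u)$, which by \Cref{lemma:f-is-monotonically-decreasing} and \Cref{lemma:c-down} holds iff $\moment_y-1\ge\bc^{\down}(u)$, i.e. $\moment_y>\bc^{\down}(u)$. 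Hence the loop deletes precisely the nodes with $\moment\le\bc^{\down}(u)$ and stops at the first node with $\moment>\bc^{\down}(u)$ (or empties $D_u$, in which case all its nodes had $\moment\le\bc^{\down}(u)$ anyway), which is the claim. The case $\bc^{\down}(u)=0$ is covered since all $\moment$-values are positive, so nothing is deleted and $\bq_u=\varnothing$.

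The main obstacle is the second paragraph: making the bookkeeping around $count$, $now$, and $\bq_u$ precise when several key pairs of distinct children share the same $\moment$. In that situation $now$ stagnates across consecutive iterations while $count$ keeps incrementing, so the clean statement ``deleted $=\{x:\moment_x\le now\}$'' only holds between ``moment groups''; I would therefore phrase the invariant at the start of each iteration and verify the two branches ($\moment_x=now$ versus $\moment_x>now$) separately. This is routine, but it is the spot where an off-by-one is easiest to introduce; everything else is direct substitution into identities already proved earlier in \Cref{sec:ds}.
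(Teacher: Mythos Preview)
Your proposal is correct and follows essentially the same approach as the paper: invoke \Cref{formula:deltaalter} at $k=\bc^{\down}(u)$ and then argue that $\bq_u$ is exactly $\{x\in D_u:\moment_x\le\bc^{\down}(u)\}$. The paper's own proof simply cites ``by the proof of \Cref{lemma:computeC}'' for this latter fact, whereas you spell it out via the loop invariant $count=|\bq_u|+now\cdot[u\ne r]$ and the equivalence of the break test with $\psi_u(\moment_y-1)<\degree(u)$; your verification is more explicit than the paper's but not a different argument.
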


\begin{proof}
First, by \Cref{formula:deltaalter}, we have $\sum_{v \in \son(u)} \delta(v, k)=k\cdot |\son(u)|-\left\lvert\left\{x\mid x\in D_u,\moment_x\leq k\right\}\right\rvert$ when $D_u$ is exactly the union of $D_v,v\in \son(u)$. After we merge all $D_v, v \in \son(u)$ to $D_u$, we execute the $\cnt$ before calling $\deltasum$. The term $\left\lvert\left\{x\mid x\in D_u,\moment_x\leq \bc^{\downarrow}(x)\right\}\right\rvert$ should be the number of node $x$ such that $moment_x\le \bc^{\downarrow}(x)$ in the original $D_u$ before $\cnt$ is called. By the proof of \Cref{lemma:computeC}, we know that $\cnt$ splits out all these nodes $x$s and stores in $\bq_u$ temporarily. Therefore, we only need to return $k \cdot |\son(u)| - |\bq_u|$ by the definition, which costs $O(1)$ calculation.
\end{proof}

\subsection{Moment Updating and Reverting}
\label{sec:updaterevert}

We will analyze the $\updatedsupward$(\Cref{algorithm:update}) and $\revertds$(\Cref{algorithm:revert}) operation by proving \Cref{lemma:update} and \Cref{lemma:revert}. First of all, let's assume $u$ is an arbitrary vertex other than $r$. We will prove the following lemma for any non-root vertex $u$. In the last part of this section, we will prove the correctness of the root vertex.


\begin{lemma}
\label{lemma:update} 
    For any vertex $u \in V(G)$, if the current $D_u$ is the union of all key pairs $(v,k), v \in \son(u), k > \bc^{\down}(u)$, $\updatedsupward(u)$ will update $D_u$ correctly that it contains all key pairs $(u,k)$.
    During the execution of \Cref{algorithm:main}, all $\updatedsupward$ operations take $O(n \log n)$ time in total.
\end{lemma}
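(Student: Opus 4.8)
The plan is to reduce the correctness half of \Cref{lemma:update} to a purely combinatorial description of the key pairs of $u$ in terms of the nodes $D_u$ currently holds, and then to check that \Cref{algorithm:update} realizes exactly that transformation by one $\inctime$ call followed by a bounded number of insertions; throughout I take $u\ne r$, the root being handled at the end of the section. Write $d=\degree(u)$, let $s=\sigma'_u$ be the stabilized value of $u$ at the moment $\updatedsupward(u)$ is invoked (so $0\le s<d$, and $s=\psi_u(\bc^{\down}(u))$ by the update in \Cref{algorithm:upward:psi}), and let $\tilde\delta_v$ denote the local upward contribution of each child $v\in\son(u)$ with respect to the current stabilized configuration. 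First I would establish the key characterization: putting $\Psi(k)=s-kd+\sum_{v\in\son(u)}\tilde\delta_v(k)$, applying \Cref{lemma:c-down} to the configuration obtained by adding $x$ chips at $u$ (whose $\psi$-function is $\Psi(\cdot)+x$) together with \Cref{lemma:f-is-monotonically-decreasing} gives $\delta(u,x)=\min\{k\ge 0:\Psi(k)<d-x\}$; since $|\son(u)|=d-1$ for the non-root $u$, \Cref{lemma:delta-differs-at-most-one} forces $\Psi$ to be strictly decreasing, and comparing the defining thresholds for $x$ and $x-1$ then yields that $(u,x)$ is a key pair if and only if $d-x\notin\{\Psi(k):k\ge 0\}$.

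Next I would connect the children's key pairs to the nodes of $D_u$. By \Cref{lemma:fire-in-final}, firing $u$ exactly $\bc^{\down}(u)$ times and restabilizing the subtree of a child $v$ turns its pre-firing upward contribution $\delta_v$ into $\tilde\delta_v(j)=\delta_v(\bc^{\down}(u)+j)-\delta_v(\bc^{\down}(u))$, so $(v,j)$ is a key pair of $\tilde\delta_v$ exactly when $(v,\bc^{\down}(u)+j)$ is a key pair of $\delta_v$; hence the nodes in $D_u$ --- the key pairs $(v,k)$ of children with $k>\bc^{\down}(u)$, where several children may share the same moment --- are in an order-preserving bijection with the key pairs of the $\tilde\delta_v$ via $\moment\mapsto\moment-\bc^{\down}(u)$. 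Using \Cref{lemma:deltaequal} to rewrite $\Psi(k)=s-k-C_k$, with $C_k$ the number of key pairs of the $\tilde\delta_v$ of level at most $k$, shows that the non-key-pairs of $u$ are exactly the strictly increasing sequence $\{\,d-s+k+C_k:k\ge 0\,\}$; consequently the key pairs of $u$ are the $d-s-1$ integers $1,\dots,d-s-1$ together with, strictly between the $(k{-}1)$st and $k$th term of that sequence, exactly as many consecutive integers as there are children-side key pairs of level exactly $k$. Matching these two lists, the children-side key pair of level $k$ that has overall rank $r$ in $D_u$ must become the key pair of $u$ at moment $(d-s-1)+k+r$; as the corresponding node $x$ satisfies $\moment_x=\bc^{\down}(u)+k$ and $\rank_{D_u}(x)=r$, this target is precisely $\moment_x+\rank_{D_u}(x)+(\degree(u)-1-\sigma'_u-\bc^{\down}(u))$.

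With this formula in hand, correctness follows by reading off \Cref{algorithm:update}: $\updatedsupward(u)$ performs $\inctime(\roott(D_u),1,\degree(u)-1-\sigma'_u-\bc^{\down}(u))$, which rewrites every moment as above, and then inserts (timestamped by $\dfn_u$) the $\degree(u)-1-\sigma'_u$ fresh nodes with moments $1,\dots,\degree(u)-1-\sigma'_u$. I would then check the small consistency points: the multiplier $1$ keeps the in-order sequence of $D_u$ monotone and splits tied moments into distinct consecutive ones, since along the in-order walk the rank is non-decreasing and strictly increases whenever the moment does; all rewritten moments stay $\ge 1$; and the fresh nodes all precede the rewritten ones, whose smallest moment is $d-s+1$. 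For the running time, a single $\updatedsupward(u)$ costs an $O(1)$ lazy-tag update by \Cref{algorithm:inctime} plus $\degree(u)-1-\sigma'_u\le\degree(u)$ splay-tree insertions at $O(\log n)$ amortized each; summing, $\sum_u(\degree(u)-1-\sigma'_u)\le\sum_u\degree(u)=2(n-1)=O(n)$, so all $\updatedsupward$ calls together take $O(n\log n)$ time, and since these insertions are the only source of new nodes in $\mathcal D$ this also supplies the $O(n)$ node bound used by \Cref{lemma:dsmemory}.

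The step I expect to be the main obstacle is the gap-structure computation in the second paragraph: correctly composing the shift by $\bc^{\down}(u)$ that $u$'s own firings impose on every child's upward-contribution function with the ``spreading'' that must send several children-side key pairs of the same level to distinct consecutive moments of $u$, and recognizing that the composition is exactly the affine rule $\moment_x\mapsto\moment_x+\rank_{D_u}(x)+\text{const}$ that $\inctime$ can execute in $O(1)$ via lazy propagation. The remaining parts --- the characterization through the image of $\Psi$, the order and positivity checks, and the amortized accounting --- are routine once the non-root identity $|\son(u)|=\degree(u)-1$ is used to make $\Psi$ strictly decreasing; the root case is separate because there $\delta(r,\cdot)$ is undefined and $D_r$ is never read by $\deltaquery$, so $\updatedsupward(r)$ need only leave $D_r$ in a form that $\revertds(r)$ and the later $\splitds(r,\cdot)$ calls can undo.
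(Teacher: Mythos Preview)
Your combinatorial derivation is correct and in fact a bit more direct than the paper's: the paper proves a somewhat opaque inequality criterion for $(u,k)$ to be a key pair (\Cref{lemma:keypaircondition}) and then splits into the prefix case (\Cref{lemma:updatea}) and the rank-correspondence case (\Cref{lemma:updateb}), whereas you go straight through the observation that the \emph{non}-key-pairs of $u$ are precisely the image $\{d-\Psi(k):k\ge 0\}=\{d-s+k+C_k:k\ge 0\}$ and read off the same target formula $\moment_x\mapsto\moment_x+\rank_{D_u}(x)+(\degree(u)-1-\sigma'_u-\bc^{\down}(u))$ for the old nodes together with the fresh moments $1,\dots,\degree(u)-1-\sigma'_u$. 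Both arguments crucially use $|\son(u)|=\degree(u)-1$ for the non-root $u$ to make $\Psi$ strictly decreasing; your phrasing via ``$d-x$ is not in the range of $\Psi$'' is a clean way to see why ties among children-side moments unfold into consecutive key pairs of $u$.

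There is one concrete gap in the ``reading off'' step: you misdescribe \Cref{algorithm:update}. It does \emph{not} perform a single $\inctime(\roott(D_u),1,\degree(u)-1-\sigma'_u-\bc^{\down}(u))$ followed by inserting nodes with moments $1,\dots,\degree(u)-1-\sigma'_u$. The actual code does three steps: (i) $\inctime(\roott(D_u),0,-\bc^{\down}(u))$, (ii) insert $\num_u=\degree(u)-1-\sigma'_u$ new nodes with $\moment=0$, (iii) $\inctime(\roott(D_u),1,0)$. The net effect agrees with your formula---after (iii) an old node $x$ of original rank $r$ has moment $(\moment_x-\bc^{\down}(u))+(r+\num_u)$ and the fresh nodes have moments $1,\dots,\num_u$---but your proof must verify this specific sequence, in particular that the final $\inctime$ sees the old nodes at ranks shifted by $\num_u$ because the fresh nodes are inserted \emph{before} it. This is also what makes $\revertds$ (\Cref{algorithm:revert}) the literal step-by-step inverse, which your single-call description would not match. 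The running-time argument ($\sum_u\num_u\le\sum_u\degree(u)=2(n-1)$ insertions at amortized $O(\log n)$) is the same as the paper's.
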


\IncMargin{1em}
\begin{algorithm}[H]
  \SetKwData{Left}{left}\SetKwData{This}{this}\SetKwData{Up}{up}  \SetKwComment{Comment}{$\triangleright$\ }{}
  \SetKwFunction{Swap}{swap}
  \SetKwFunction{insert}{insert}
  \SetKwFunction{delete}{delete}
  \SetKwFunction{IncTime}{IncTime}
  \SetKwFunction{DecTime}{DecTime}
  \SetKwFunction{UpdateDataStructure}{UpdateDataStructure}
  \SetKwFunction{NewNode}{NewNode}

  \SetKwProg{Fn}{Function}{:}{}
  \BlankLine
  
  $\IncTime (\roott(D_u), 0, -\bc^{\downarrow}(u))$\; \label{update:oper1}
  $\num_u\leftarrow \degree(u)-1-\sigma'_u$\; \label{update:oper2}
  \For{$i$ in $[1,\num_u]$}{ \label{update:oper4}
    \insert($D_u$, \NewNode($0$, $\dfn_u$))\; \label{update:oper5}
  }
  \IncTime($\roott(D_u)$, $1$, $0$)\; \label{update:oper7}
  \caption{\updatedsupward($u$)}\label{algorithm:update}
\end{algorithm}\DecMargin{1em}

To analyze $\updatedsupward$, we first observe the current status of $D_u$: It contains all nodes $x$ such that $(v,\moment_x) \in D_v, v \in \son(u), \moment_x > \bc^{\down}(u)$ and they are all sorted by $\moment_x$. Calling $\updatedsupward$ will modify the information represented by these nodes so that $D_u$ will contain all key pairs of the current vertex $u$. We will show that this can be done without reordering the nodes.


By \Cref{def:keypair}, $D_u$ should contain all key pairs $(u,k)$ such that $\delta(u,k)=\delta(u,k-1)$. We give the following lemma to verify any positive integer $k$.

\begin{lemma}
\label{lemma:keypaircondition}
For any positive integer $k$, $(u,k)$ is a key pair of $u$ if and only if
\begin{align}
b-\left\lvert\left\{x\in D_u\mid \moment_x\leq \bc^{\down}(u)+(k-1-b)\right\}\right\rvert  < \degree(u)-1-\sigma'_u
\label{formula:keypaircondition}
\end{align} where

\begin{align}
    b=k-1-\delta(u,k-1).
    \label{formula:b}
\end{align}.
\end{lemma}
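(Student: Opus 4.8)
The plan is to reduce the key-pair condition to a bound on how many chips remain on $u$ after a partial stabilization of $\subtree(u)$, and then expand that chip count in terms of quantities the data structure already records. Throughout, let $\tau$ denote the configuration reached after $\upward(u)$ has finished, so that $\tau$ is local terminal in $\subtree(u)$, $\tau_u=\sigma'_u=\psi_u(\bc^{\down}(u))$, and (by the convention of \Cref{def:delta}) $\delta(u,\cdot)$ as well as the key pairs of $u$ are taken with respect to $\tau$. Two facts will be used freely: since $u\ne r$ we have $\degree(u)=|\son(u)|+1$; and the $D_u$ appearing in \eqref{formula:keypaircondition} is the structure just before the call $\updatedsupward(u)$, which by the hypothesis of \Cref{lemma:update} is the union over $v\in\son(u)$ of the key pairs $(v,j)$ of $v$ having $j>\bc^{\down}(u)$, each stored with $\moment=j$.

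First I would isolate the elementary characterization already implicit in the proof of \Cref{lemma:delta-differs-at-most-one}: for $k\ge 1$, putting $\sigma^{\star}:=\final(\tau+(k-1)_u,u)$, the pair $(u,k)$ is a key pair if and only if $\sigma^{\star}_u<\degree(u)-1$. Indeed $\final(\tau+k_u,u)=\final(\sigma^{\star}+1_u,u)$ by \Cref{lemma:fire-in-final}, and $\sigma^{\star}$ being local terminal in $\subtree(u)$ forces $\sigma^{\star}_u\le\degree(u)-1$; if $\sigma^{\star}_u\le\degree(u)-2$, then adding the extra chip to $u$ makes nothing in $\subtree(u)$ full, so $\delta(u,k)=\delta(u,k-1)$, whereas if $\sigma^{\star}_u=\degree(u)-1$, then $u$ fires, and by the counting argument of \Cref{lemma:delta-differs-at-most-one} (the children collectively return at most $|\son(u)|<\degree(u)$ chips) it fires exactly once, so $\delta(u,k)=\delta(u,k-1)+1$.

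Next I would compute $\sigma^{\star}_u$ by chip conservation inside $\subtree(u)$. Writing $d:=\delta(u,k-1)$ for the number of times $u$ fires while stabilizing $\tau+(k-1)_u$, each child $v$ receives exactly $d$ chips from $u$ and nothing from outside $\subtree(v)$, hence fires $\delta(v,d)$ times (with $\delta(v,\cdot)$ measured with respect to the restriction of $\tau$ to $\subtree(v)$) and returns $\delta(v,d)$ chips to $u$; therefore
\[
\sigma^{\star}_u=\sigma'_u+(k-1)-d\cdot\degree(u)+\sum_{v\in\son(u)}\delta(v,d).
\]
Applying \Cref{lemma:deltaequal} to each $\delta(v,\cdot)$ rewrites $\sum_{v\in\son(u)}\delta(v,d)$ as $d\,|\son(u)|$ minus the number of key pairs $(v,j)$ of children $v$ with $1\le j\le d$, all taken relative to $\tau$. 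The crux is to identify that number with $\bigl|\{x\in D_u:\moment_x\le\bc^{\down}(u)+d\}\bigr|$: feeding $\bc^{\down}(u)$ chips into $v$ only shifts $v$'s upward-contribution function, so $(v,j)$ is a key pair of $v$ relative to $\tau$ precisely when $(v,\,j+\bc^{\down}(u))$ is a key pair of $v$ in $v$'s own post-$\upward(v)$ frame --- that is, precisely when the node with $\moment=j+\bc^{\down}(u)$ still lies in $D_u$; and since the pop-up phase of $\cnt$ has already removed every node with $\moment\le\bc^{\down}(u)$, the range $1\le j\le d$ translates exactly to $\moment_x\le\bc^{\down}(u)+d$ with no remaining lower bound.

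Substituting and using $-d\cdot\degree(u)+d\,|\son(u)|=-d$ (since $\degree(u)=|\son(u)|+1$), and abbreviating $b:=k-1-d=k-1-\delta(u,k-1)$, I obtain
\[
\sigma^{\star}_u=\sigma'_u+b-\bigl|\{x\in D_u:\moment_x\le\bc^{\down}(u)+(k-1-b)\}\bigr|.
\]
Feeding this into the first step --- $(u,k)$ is a key pair iff $\sigma^{\star}_u<\degree(u)-1$ --- and moving $\sigma'_u$ to the other side gives exactly \eqref{formula:keypaircondition}. The step I expect to be the real obstacle is the identification in the third paragraph: one must argue carefully that the nodes currently in $D_u$ are exactly the children's key pairs relative to $\tau$ shifted by $\bc^{\down}(u)$ --- in particular that the deliberately stale values $\sigma'_v$ for $v\in\son(u)$ are harmless because $\delta(v,\cdot)$ depends only on the local-terminal state of $\subtree(v)$ and on the number of chips pushed into $v$ --- and that $\cnt$'s pop-up deleted precisely the nodes of moment at most $\bc^{\down}(u)$, so that the displayed count acquires the clean one-sided form used above.
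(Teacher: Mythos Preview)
Your proof is correct and follows the same chip-counting strategy as the paper: characterize a key pair via $\sigma^{\star}_u<\degree(u)-1$, expand $\sigma^{\star}_u$ by conservation, and convert the children's contribution into a count of stored key pairs. The only difference is the anchoring configuration: the paper computes from $\presigma_u$ (before $u$ fires) and works with the full pre-$\cnt$ structure $T\supset D_u$ together with $|\bq_u|$ before simplifying to $\sigma'_u$, whereas you anchor directly at $\tau$ (after $u$'s $\bc^{\down}(u)$ firings), use the shifted $\delta^{\tau}(v,\cdot)$, and land on the formula without the $T$/$\bq_u$ detour---a slightly more direct route to the same identity.
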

\begin{proof}
By the definition of $\delta$ (\Cref{def:delta}), $\delta(u,k)=\delta(u,k-1)$ if and only if vertex $u$ will not be full after proceeding the following process:

\begin{itemize}
\item Add $k-1$ chips to the vertex $u$.
\item Firing the full vertices in $\subtree(u)$ until the configuration is local terminal in $\subtree(u)$.
\item Add one more chip to the vertex $u$.
\end{itemize}

Therefore, by the definition of firing, we have the following inequality:

\begin{align}
\presigma_u+(k-1)-\degree(u) \cdot \left(\bc^{\downarrow}(u)+\delta(u,k-1)\right) + \sum_{v \in \son(u)} \delta(v,\bc^{\downarrow}(u)+\delta(u,k-1)) <\degree(u)-1 \label{formula:stablize}
\end{align}

On the left-hand side, $\presigma_u$ denotes the number of chips left on $u$ after the configuration becomes local terminal in all $\subtree(v),v\in \son(u)$. Then if we add $k-1$ chips on $u$ (\Cref{def:delta}), all the $\presigma_u+(k-1)$ chips will cause $\bc^{\downarrow}(u)+\delta(u,k-1)$ firings on $u$ in total. In such scenario, every $v\in \son(u)$ will receive $\bc^{\downarrow}(u)+\delta(u,k-1)$ chips and then return $\delta(v,\bc^{\downarrow}(u)+\delta(u,k-1))$ (\Cref{def:delta}). Since adding one more chip will not cause a new firing, the left-hand side should be less than $\degree(u)-1$. 

So the left-hand side of \Cref{formula:stablize} is equal to 
\begin{align}
\presigma_u+(k-1) - \left(\bc^{\down}(u)+\delta(u,k-1)\right) - \sum_{v \in \son(u)} \left(\left(\bc^{\downarrow}(u)+\delta(u,k-1)\right)-\delta(v,\bc^{\downarrow}(u)+\delta(u,k-1))\right) \label{formula:stablize2}
\end{align}

Since $b=k-1-\delta(u,k-1)$, one can observe that $b$ is the number of nodes $x$ satisfying $\moment_x \leq k-1$ in the final $D_u$. Now we can write \Cref{formula:stablize2} as
\begin{align}
\presigma_u+b-\bc^{\down}(u)-\sum_{v \in \son(u)} \left(\left(\bc^{\downarrow}(u)+(k-1-b)\right)-\delta(v,\bc^{\downarrow}(u)+(k-1-b))\right) \\
\end{align}

By \Cref{lemma:deltaequal}, this is equal to 

\begin{align}
\presigma_u+b-\bc^{\down}(u)-\sum_{v\in \son(u)} \sum_{i=1}^{\bc^{\downarrow}(u)+(k-1-b)}[\delta(v,i)=\delta(v,i-1)] \label{formula:stablize3}
\end{align}

Assuming we have a data structure $T$ storing the union of all key pairs $(v,k), v \in \son(u)$. This is equivalent to $D_u$ before executing $\cnt(u)$. We can further transform \Cref{formula:stablize3} to 
\begin{align}
\presigma_u+b-\bc^{\down}(u)-\left\lvert\left\{x\in T\mid \moment_x\leq \bc^{\down}(u)+(k-1-b)\right\}\right\rvert \label{formula:stablize5}
\end{align}

Noticed that we have $D_u\subset T$ and

\begin{align}
\left\lvert\left\{x\in T\backslash D_u\mid \moment_x\leq \bc^{\down}(u)+(k-1-b)\right\}\right\rvert=|\bq_u|
\end{align}
also
\begin{align}
\presigma_u-\bc^{\down}(x)-|\bq_u|=\presigma_u+\deltasum(u)-\degree(u)\cdot \bc^{\down}(x)=\sigma'_u
\end{align}

Thus \Cref{formula:stablize5} is equivalent to

\begin{align}
\sigma'_u+b-\left\lvert\left\{x\in D_u\mid \moment_x\leq \bc^{\down}(u)+(k-1-b)\right\}\right\rvert \label{formula:stablize4}
\end{align}

In all, $(u,k)$ is a key pair if, and only if, we have
\begin{align}
b-\left\lvert\left\{x\in D_u\mid \moment_x\leq \bc^{\down}(u)+(k-1-b)\right\}\right\rvert  < \degree(u)-1-\sigma'_u
\end{align}

\end{proof}

\begin{lemma}
\label{lemma:updatea}
For any $k \in [1,\degree(u)-1-\sigma'_u]$, the inequality \Cref{formula:keypaircondition} is satisfied. Therefore, $(u,k)$ is a key pair of $u$ by \Cref{lemma:keypaircondition}.
\end{lemma}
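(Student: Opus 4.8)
The plan is to verify the defining inequality \Cref{formula:keypaircondition} directly for every $k$ in the stated range and then conclude via \Cref{lemma:keypaircondition}. First I would unpack $b = k-1-\delta(u,k-1)$ from \Cref{formula:b}. Since $\delta(u,0)=0$ by \Cref{def:delta} and $\delta(u,\cdot)$ is nondecreasing with unit increments by \Cref{lemma:delta-differs-at-most-one}, we have $0 \le \delta(u,k-1) \le k-1$; hence $0 \le b \le k-1$, and moreover $k-1-b = \delta(u,k-1) \ge 0$. Consequently the prefix bound $\bc^{\down}(u)+(k-1-b)$ appearing inside \Cref{formula:keypaircondition} is at least $\bc^{\down}(u) \ge 0$, so the cardinality term $\left\lvert\left\{x\in D_u \mid \moment_x\leq \bc^{\down}(u)+(k-1-b)\right\}\right\rvert$ is a well-defined nonnegative integer.

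Next I would bound the left-hand side of \Cref{formula:keypaircondition}: subtracting a nonnegative count gives $b - \left\lvert\left\{x\in D_u \mid \moment_x\leq \bc^{\down}(u)+(k-1-b)\right\}\right\rvert \le b \le k-1$. Now the range hypothesis $1 \le k \le \degree(u)-1-\sigma'_u$ yields $k-1 \le \degree(u)-2-\sigma'_u < \degree(u)-1-\sigma'_u$, which is exactly the right-hand side of \Cref{formula:keypaircondition}. Chaining these bounds produces the required strict inequality, so \Cref{formula:keypaircondition} holds, and \Cref{lemma:keypaircondition} immediately certifies that $(u,k)$ is a key pair of $u$.

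There is essentially no hard step: the argument is a short chain of inequalities once the monotonicity facts from \Cref{lemma:delta-differs-at-most-one} are in hand. The only points requiring care are bookkeeping. First, at the moment \updatedsupward($u$) is invoked the value $\sigma'_u$ equals $\psi_u(\bc^{\down}(u))$, which is strictly below $\degree(u)$ by the minimality defining $\bc^{\down}(u)$; hence the interval $[1,\degree(u)-1-\sigma'_u]$ is the intended one and may be empty, in which case the statement is vacuous. Second, one must confirm (as done above) that the prefix bound in the cardinality term is nonnegative so that the term is monotone in $k$ and the query is meaningful. Conceptually, the lemma merely records that adding at most $\degree(u)-1-\sigma'_u$ chips to $u$ keeps it non-full, so no additional firing on $u$ occurs and $\delta(u,\cdot)$ does not change on this range — which is precisely why \Cref{algorithm:update} inserts exactly $\num_u = \degree(u)-1-\sigma'_u$ fresh nodes with $\moment = 0$.
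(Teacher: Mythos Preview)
Your proof is correct and in fact more direct than the paper's. The paper argues by induction on $k$: assuming $(u,1),\ldots,(u,p-1)$ are already known to be key pairs, it concludes $\delta(u,p-1)=0$ exactly, hence $b=p-1$ and $k-1-b=0$; it then uses the specific state of $D_u$ at this point (all nodes with $\moment \le \bc^{\down}(u)$ have been removed into $\bq_u$ by $\cnt$) to get that the cardinality term is exactly $0$, so the left side equals $p-1 < \degree(u)-1-\sigma'_u$. You bypass both the induction and the structural fact about $D_u$ by using only the coarse bounds $b \le k-1$ (from $\delta(u,k-1)\ge 0$) and cardinality $\ge 0$, which already suffice for the strict inequality. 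Your route is shorter and self-contained; the paper's route yields the sharper information that the left side is exactly $k-1$ and that $\delta(u,\cdot)$ vanishes on this initial range, which is not needed for the lemma itself but dovetails with the analysis in \Cref{lemma:updateb} and with why \Cref{algorithm:update} inserts precisely $\num_u=\degree(u)-1-\sigma'_u$ fresh nodes.
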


\begin{proof}
We can prove it by doing the mathematical induction on $k$. Assuming the legal $k$ forms a prefix $[1,p-1],p \leq \degree(u)-1-\sigma'_u$, we have $\delta(u,p-1)=0$ and thus we have $b=p-1$. So, we have $p-1-b=0$. Because in the current $D_u$, there is no node $x$ such that $\moment_x \leq \bc^{\down}(u)$. Therefore, the left-hand side of \Cref{formula:keypaircondition} has only $b$ left. Since $b=p-1 < \degree(u)-1-\sigma'_u$, the inequality holds for $p$. 
\end{proof}

\begin{lemma}
\label{lemma:updateb}
Let $y$ be the node in $D_u$ such that $\rank_{D_u}(y) = b-(\degree(u)-1-\sigma'_u)$, and $x$ be the node in $D'_u$ such that $\rank_{D'_u}(x) = b$, where $D'_u$ is the final $D_u$ that stores all key pairs $(u,k)$. Then the equation $\moment_x=\moment_y-\bc^{\down}(u)+b$ holds for all $b> \degree(u)-1-\sigma'_u$.
\end{lemma}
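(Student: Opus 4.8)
The plan is to prove the identity directly from the key-pair characterization of \Cref{lemma:keypaircondition}, not from the behavior of \updatedsupward{} itself. First I would fix notation: write the (pre-update) $D_u$ as $\{m_1<m_2<\cdots\}$ sorted by $\moment$, set $\num:=\degree(u)-1-\sigma'_u$, and list the key pairs of $u$ by increasing moment as $(u,p_1),(u,p_2),\dots$, so that the rank-$b$ node $x$ of $D'_u$ is $(u,p_b)$ (hence $\moment_x=p_b$) and the rank-$(b-\num)$ node $y$ of $D_u$ has $\moment_y=m_{b-\num}$. Recalling from the proof of \Cref{lemma:keypaircondition} that $\beta(k):=k-1-\delta(u,k-1)$ counts the key pairs of moment $<k$, we get $\beta(p_r)=r-1$; and from \Cref{lemma:updatea} together with the monotonicity of \Cref{lemma:delta-differs-at-most-one} (so $\beta$ is non-decreasing), $p_i=i$ and $\delta(u,i)=0$ for $1\le i\le\num$, hence $\beta(k)\ge\num$ whenever $k>\num$. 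So it suffices to show $p_b=m_{b-\num}-\bc^{\down}(u)+b$ for every $b>\num$.

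The second step is to convert \Cref{lemma:keypaircondition} into a threshold rule. For an index $k$ with $\beta(k)=c$ one has $\delta(u,k-1)=k-1-c$, so \Cref{lemma:keypaircondition} reads: $(u,k)$ is a key pair iff $\bigl|\{\,j:m_j\le\bc^{\down}(u)+k-1-c\,\}\bigr|>c-\num$, which for $c\ge\num$ is equivalent to $k\ge\tau(c):=m_{c-\num+1}-\bc^{\down}(u)+c+1$ (and, once $c-\num+1$ exceeds the number of nodes of $D_u$, to no $k$ at all — which matches $|D'_u|=|D_u|+\num$, so $b-\num$ is a valid index whenever the rank-$b$ node exists). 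Since $\{k:\beta(k)=c\}$ is a block of consecutive integers whose only key pair is its right endpoint $p_{c+1}$ (each key pair has a distinct rank), the threshold rule forces $p_{c+1}\ge\tau(c)$ and — whenever this block has more than one element — also $p_{c+1}-1<\tau(c)$, hence $p_{c+1}=\tau(c)$; taking $c=b-1$ gives $p_b=\tau(b-1)=m_{b-\num}-\bc^{\down}(u)+b$, exactly as wanted, provided the block $\{k:\beta(k)=b-1\}$ is not a singleton.

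The hard part will be ruling out singleton blocks — equivalently, showing that no two consecutive integers both exceeding $\num$ are simultaneously key pairs — since for a singleton block \Cref{lemma:keypaircondition} only supplies the inequality $p_b\ge\tau(b-1)$, not the reverse one. I would dispatch this by induction on $b$: if $\{k:\beta(k)=b-1\}=\{p_b\}$ then $p_b=p_{b-1}+1$; when $b-1=\num$ (so $\num\ge1$, since the block $\beta=0$ is never a singleton because every node of $D_u$ has $\moment>\bc^{\down}(u)$), $p_{b-1}=p_\num=\num$ forces $p_b=\num+1\ge\tau(\num)=m_1-\bc^{\down}(u)+\num+1$, i.e.\ $m_1\le\bc^{\down}(u)$, a contradiction; when $b-1>\num$, the inductive hypothesis applied to the rank-$(b-1)$ node gives $p_{b-1}=m_{b-1-\num}-\bc^{\down}(u)+(b-1)$, so $p_b=m_{b-1-\num}-\bc^{\down}(u)+b$, and combining with $p_b\ge\tau(b-1)=m_{b-\num}-\bc^{\down}(u)+b$ yields $m_{b-1-\num}\ge m_{b-\num}$, contradicting the strict increase of the $m_j$. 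Hence singleton blocks never occur, the computation of the second step applies for every $b>\num$, and we conclude $\moment_x=p_b=m_{b-\num}-\bc^{\down}(u)+b=\moment_y-\bc^{\down}(u)+b$. (Throughout, the only property of $D_u$ used besides its sortedness is the hypothesis of this setting that all its moments exceed $\bc^{\down}(u)$, which is precisely what makes the $b-1=\num$ case collapse.)
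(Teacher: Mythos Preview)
Your approach mirrors the paper's: both proceed by induction on $b$, taking \Cref{lemma:updatea} as the base and using \Cref{lemma:keypaircondition} to pin down the moment of the next key pair. Your explicit threshold $\tau(c)$ and the block decomposition of $\beta$ are a slightly more organized packaging of exactly the substitution the paper carries out directly (the paper checks that the claimed $k$ satisfies the inequality and that $k-1$ does not, which is your ``$p_{c+1}\ge\tau(c)$ and $p_{c+1}-1<\tau(c)$'').

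There is one real slip. You list the pre-update moments as $m_1<m_2<\cdots$ and, in ruling out singleton blocks, treat $m_{b-1-\num}\ge m_{b-\num}$ as contradicting ``the strict increase of the $m_j$''. But at this stage $D_u$ is the \emph{union} of the children's key-pair structures, and two distinct children $v_1,v_2\in\son(u)$ can each contribute a key pair with the same moment $k$; the $m_j$ are therefore only non-decreasing, and equality is not a contradiction. The repair is immediate and does not change your structure: in the singleton case your own computation gives $p_b=m_{b-1-\num}-\bc^{\down}(u)+b$ together with $p_b\ge m_{b-\num}-\bc^{\down}(u)+b$, whence $m_{b-1-\num}=m_{b-\num}$ and the desired identity $p_b=m_{b-\num}-\bc^{\down}(u)+b$ holds anyway. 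Equivalently (and more cleanly), $\tau(c)-\tau(c-1)=m_{c-\num+1}-m_{c-\num}+1\ge1$, so the inductive hypothesis $p_{b-1}=\tau(b-2)$ already forces $\tau(b-1)\ge p_{b-1}+1$, and singleton blocks genuinely never occur once $b>\num$ --- just for this reason rather than strictness.
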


\begin{proof}
Firstly, we assume that the first $b\geq \degree(u)-1-\sigma'_u$ key pairs are determined for $D'_u$. This is because the first $\degree(u)-1-\sigma'_u$ key pairs are determined in \Cref{lemma:updatea}. 

Now we will find the next key pair $(u,k)$ of rank $b+1$, which is the minimum $k$ satisfying \Cref{lemma:keypaircondition}. Note that $b$ here denotes the number of key pairs $(u,k')$ where $k'\leq k-1$, can be rewritten as $k-1-\delta(u,k-1)$ (\Cref{lemma:deltaequal}) which happens to be the same $b$ as in \Cref{formula:b}. Therefore, we can use this $b$ to test the correctness by \Cref{lemma:keypaircondition}. Note that, this works only when $k$ is less or equal to the minimum possible one. That is equivalent to say, if there are multiple $k$s satisfying \Cref{lemma:keypaircondition} concerning $b$, we should only take the minimum one.

Now we claim that $k$ is equal to $\moment_y-\bc^{\down}(u)+b+1$ where $y$ is the $b+1-(\degree(u)-1-\sigma'_u)$-th key pair in current $D_u$.

To prove this, we first substitute this value into $\bc^{\down}(u)+(k-1-b)$, the term 
\begin{align}
\left\lvert\left\{x\in D_u\mid \moment_x\leq \bc^{\down}(u)+(k-1-b)\right\}\right\rvert \label{formula:momentyterm}
\end{align} becomes 
\[
\left\lvert\left\{x\in D_u\mid \moment_x\leq \moment_y \right\}\right\rvert,
\] which is no less than the rank of $y$, as $b+1-(\degree(u)-1-\sigma'_u)$.

Therefore, the left-hand side of \Cref{lemma:keypaircondition} is no greater than $(\degree(u)-1-\sigma'_u)-1$. Since this is less than the right-hand side of \Cref{formula:keypaircondition}, the inequality holds. Notice that if we decrease $k$ by $1$, the term \Cref{formula:momentyterm} will become strictly less than the rank of $y$, thus the left-hand side of \Cref{lemma:keypaircondition} is larger than $(\degree(u)-1-\sigma'_u)-1$, which is no less than the right-hand side of \Cref{lemma:keypaircondition}. Therefore, this claimed value $k$ is the smallest possible $k$ as desired.

Since $\left\lvert\left\{x\in D_u\mid \moment_x\leq \bc^{\down}(u)+(k-1-b)\right\}\right\rvert$ is no greater than $|D_u|$, in \Cref{lemma:keypaircondition}, we have 
\begin{align}
b+1 \leq \degree(u)-1-\sigma'_u+|D_u|,
\end{align} which is exactly the number of key pairs described in both \Cref{lemma:updatea} and \Cref{lemma:updateb}. Therefore, no key pair belonging to vertex $u$ is missing.
\end{proof}

\begin{lemma}
\label{lemma:dsmemory}
$\newnode$ will be called $O(n)$ times only in the whole execution of \Cref{algorithm:main}. It means that at any moment while executing \Cref{algorithm:main}, there are $O(n)$ nodes storing in any $D_u$ or $\bq_u$ in total.
\end{lemma}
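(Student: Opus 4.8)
The plan is to audit the pseudocode of \Cref{algorithm:main}, pin down exactly where $\newnode$ is invoked, bound the number of invocations by a degree sum over the tree, and then pass from ``total nodes ever created'' to ``nodes alive at any instant''. First I would observe, by inspecting \Cref{algorithm:upward}, \Cref{algorithm:downward}, \Cref{algorithm:main} and all the $\mathcal{D}$-subroutines ($\findmin$, $\inctime$, $\deltaquery$, $\cnt$, $\deltasum$, $\mergeupward$, $\splitds$, $\revertds$), that the \emph{only} line that ever creates a fresh node is \Cref{update:oper5} inside $\updatedsupward$ (\Cref{algorithm:update}), where the loop over $i\in[1,\num_u]$ executes $\insert(D_u,\newnode(0,\dfn_u))$ exactly $\num_u$ times. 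Every other $\mathcal{D}$-operation only relocates existing nodes among the trees $D_u$ and the lists $\bq_u$, or deletes them: $\mergeupward$ (\Cref{lemma:merge}) and $\splitds$ (\Cref{lemma:split}) move nodes without duplicating them, $\cnt$ moves nodes from $D_u$ into $\bq_u$, and $\revertds$ moves the $\bq_u$-nodes back into $D_u$ and discards the $\num_u$ nodes that the matching $\updatedsupward$ had inserted. Moreover $\updatedsupward$ is run exactly once for each non-leaf vertex $u$ during $\upward$ (\Cref{algorithm:upward:update-ds}), whereas $\downward$ never runs it and leaves return from $\upward$ (\Cref{algorithm:upward:leaf-begin}--\Cref{algorithm:upward:leaf-end}) before that line.

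Next I would bound $\num_u$. When $\updatedsupward(u)$ is called, \Cref{algorithm:upward:psi} has already overwritten $\sigma'_u$ with $\psi_u(\bc^{\down}(u))$, which by \Cref{lemma:remaining-chips} equals the number of chips remaining on $u$ once $\subtree(u)$ is stabilized, hence a non-negative integer; therefore $\num_u=\degree(u)-1-\sigma'_u$ (set on \Cref{update:oper2}) is at most $\degree(u)-1$, so the loop on \Cref{update:oper5} runs at most $\degree(u)-1$ times. (Invoking \Cref{lemma:c-down} in addition gives $0\le\sigma'_u<\degree(u)$ and hence $0\le\num_u\le\degree(u)-1$, but only the upper bound is needed here.) Summing over all vertices, with leaves contributing $0$,
\[
\#\{\text{calls to }\newnode\}=\sum_{u\in V(G)}\num_u\le\sum_{u\in V(G)}\bigl(\degree(u)-1\bigr)=2|E(G)|-n=n-2,
\]
using $|E(G)|=n-1$ since $G$ is a tree. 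Thus $\newnode$ is invoked fewer than $n$ times, i.e.\ $O(n)$ times, over the whole run of \Cref{algorithm:main}.

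Finally I would convert this into the ``alive at any moment'' statement. Since nodes are produced only by $\newnode$, one per call, and no subsequent operation ever duplicates a node, the map sending each node currently residing in some $D_u$ or $\bq_u$ to the unique $\newnode$ call that created it is injective; hence at any point of the execution the number of nodes stored across all the $D_u$'s and $\bq_u$'s is at most the number of $\newnode$ calls executed so far, which is at most $n-2=O(n)$. This also bounds the number of per-node auxiliary array cells ever allocated, which is what the $O(n)$-memory clause of \Cref{theorem:ds} will rely on. The argument is pure bookkeeping with no genuinely hard step; the only place needing care is the second paragraph, where one must use that at the moment $\updatedsupward(u)$ fires, $\sigma'_u$ already holds the post-firing value $\psi_u(\bc^{\down}(u))$ (guaranteed by \Cref{algorithm:upward:psi}) and not the pre-firing value $\presigma_u$, and to remember to sum over \emph{all} vertices so that the degree sum collapses to $2|E(G)|-n$.
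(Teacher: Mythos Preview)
Your proposal is correct and follows essentially the same approach as the paper's proof: identify that $\newnode$ is only invoked in the loop of $\updatedsupward$, bound $\num_u\le\degree(u)-1$, and sum over all vertices using $\sum_u\degree(u)=2(n-1)$. You supply more justification than the paper does (in particular, you explicitly argue $\sigma'_u\ge 0$ via \Cref{lemma:remaining-chips} and compute the sharp bound $n-2$), but the underlying argument is the same.
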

\begin{proof}
We can see that $\newnode$ will be called $num_u$ times (\Cref{update:oper4}) in each $\updatedsupward(u)$. From \Cref{update:oper2} we can see that $num$ is at most $\degree(u)-1$. Therefore, in all $\updatedsupward(u), u \in V(G)$, $\newnode$ will be called at most $\sum_{u \in V(G)} \degree(u)=O(n)$ times. Since no duplicating operation is involved throughout all operations, there are $O(n)$ nodes storing in any $D_u$ or $\bq_u$ in total at any moment while executing \Cref{algorithm:main}.
\end{proof}

Now we are ready to prove \Cref{lemma:update}.
\begin{proof}[Proof of \Cref{lemma:update}]
With \Cref{lemma:updatea} and \Cref{lemma:updateb}, we know how to modify current $D_u$ into one with all key pairs belonging to $u$. Specifically, \Cref{lemma:updatea} implies the first $\degree(u)-1-\sigma'_u$ key pairs and \Cref{lemma:updateb} implies all nodes in current $D_u$ can be directly modified altogether without changing the relative order among them. This allows us to call $\inctime$ to proceed with the update. This is vital because we are modifying $\moment$ while $D_u$ is sorted by $\moment$.

Firstly, we call $\inctime(\roott(D_u),0,-\bc^{\down}(u))$ to add a constant $-\bc^{\down}$ for every existing node in current $D_u$ (\Cref{update:oper1}). Then we create and insert $\degree(u)-1-\sigma'_u$ nodes with $\moment=0$ (\Cref{update:oper2} to \Cref{update:oper5}). Here we store $\num_u = \degree(u)-1-\sigma'_u$ for the future use. Lastly, we call $\inctime(\roott(D_u),1,0)$ to add a value of their rank to themselves. One can observe that these three operations will match the correct $\moment$ value mentioned in \Cref{lemma:updatea} and \Cref{lemma:updateb}. Therefore, \Cref{lemma:update} will update $D_u$ correctly so that it contains all key pairs $(u,k)$. 

Each $\inctime$ operation costs $O(1)$ runtime. By \Cref{lemma:dsmemory}, there are $O(n)$ insertions for all $\updatedsupward$ during the execution of \Cref{algorithm:main}. Since an insertion in $D_u$ costs $O(\log n)$ in amortized time. Therefore, the total time cost is $O(n \log n)$.
\end{proof} 

Now we will analyze $\revertds$, which is a process to revert the modification to $D_u$ in both $\cnt(u)$ and $\updatedsupward(u)$.

\IncMargin{1em}
\begin{algorithm}[H]
  \SetKwData{Left}{left}\SetKwData{This}{this}\SetKwData{Up}{up}  \SetKwComment{Comment}{$\triangleright$\ }{}
  \SetKwFunction{Swap}{swap}
  \SetKwFunction{insert}{insert}
  \SetKwFunction{delete}{delete}
    \SetKw{Break}{break}

  \SetKwProg{Fn}{Function}{:}{}
  \BlankLine

  $\IncTime (\roott(D_u),-1,0)$\; \label{revert:update1}
  \While{$D_u\neq \nil$}{  \label{revert:update2}
    $x\leftarrow \FindMin(D_u)$\;  \label{revert:update3}
    \If{$\moment_x = 0$}{  \label{revert:update4}
      \delete($D_u$,$x$)\;  \label{revert:update5}
    }
    \Else{\Break\;}  \label{revert:update6}
  }
    
  $\IncTime (\roott(D_u),0,\bc^{\downarrow}(x))$\;  \label{revert:update7}
  
  \For{$x \in \bq_u$}{  \label{revert:cnt1}
    $\insert(\roott{D_u},x)$\; \label{revert:cnt2}
  }
  \caption{\revertds($u$)}\label{algorithm:revert}
\end{algorithm}\DecMargin{1em}

\begin{lemma}
\label{lemma:revert}
    For any vertex $u \in V(G)$, $\revertds(u)$ will revert $D_u$ to the exact status before calling $\cnt(u)$ in \Cref{algorithm:upward} that it contains all key pairs $(v,k),v \in \son(u)$.
    During the execution of \Cref{algorithm:main}, all $\revertds$ operations take $O(n \log n)$ time in total.
\end{lemma}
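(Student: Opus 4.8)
The operation $\revertds(u)$ is built to be the two-sided inverse of the pair of modifications ``$\cnt(u)$, then $\updatedsupward(u)$'' that were applied to $D_u$ during $\upward$, so the plan is to track the \emph{logical content} of $D_u$ — its set of nodes together with their $\moment$ values — through those two operations and then check line by line that $\revertds(u)$ cancels them, after which the running time falls out of an accounting argument against \Cref{lemma:dsmemory}. I take as given (it is one component of the simultaneous induction packaged in \Cref{theorem:ds}, and it rests on \Cref{lemma:split}) that when $\revertds(u)$ is invoked inside $\downward(u,G)$ the content of $D_u$ is exactly what it was right after $\updatedsupward(u)$ finished in $\upward$; call that state $D_u^{(2)}$. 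Recall what $D_u^{(2)}$ records: $\cnt(u)$ first moved into $\bq_u$ exactly the nodes with $\moment_x\le\bc^{\down}(u)$ (\Cref{lemma:computeC}; cf.\ the precondition of \Cref{lemma:update}), leaving a structure $D_u^{(1)}$ whose every node has $\moment_x>\bc^{\down}(u)$; then $\updatedsupward(u)$ applied, in order, (a)~$\inctime(\roott(D_u),0,-\bc^{\down}(u))$, (b)~insertion of $\num_u=\degree(u)-1-\sigma'_u$ fresh nodes each with $\moment=0$, and (c)~$\inctime(\roott(D_u),1,0)$. A short observation (made implicitly inside \Cref{lemma:update}) is that in $D_u^{(2)}$ all $\moment$ values are distinct — the fresh nodes carry $\moment=1,\dots,\num_u$ and every original node carries $\moment\ge\num_u+2$ — and that step (c) is order-preserving, so the ranks (c) used coincide with the ranks present in $D_u^{(2)}$.

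\textbf{Correctness of $\revertds$.} I then verify \Cref{algorithm:revert} line by line. \Cref{revert:update1}, $\inctime(\roott(D_u),-1,0)$, subtracts from each $\moment_y$ its rank in $D_u$; since those ranks are the ones (c) used, this exactly undoes (c), returning every node to its post-(b) value — in particular the $\num_u$ fresh nodes go back to $\moment=0$ and every surviving original node has $\moment\ge1$. The loop of \Cref{revert:update2}--\Cref{revert:update6} repeatedly pops the minimum via $\findmin$ and deletes it while its $\moment$ equals $0$, stopping at the first node of positive $\moment$; as the $\moment=0$ nodes are precisely the $\num_u$ fresh ones and they occupy an initial segment, this deletes exactly them, undoing (b). \Cref{revert:update7} (the code writes $\bc^{\downarrow}(x)$, which is to be read as $\bc^{\down}(u)$) adds $\bc^{\down}(u)$ back to every remaining $\moment$, undoing (a) and returning $D_u$ to $D_u^{(1)}$. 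Finally \Cref{revert:cnt1}--\Cref{revert:cnt2} re-insert every node of $\bq_u$; these nodes' $\moment$ values were untouched after $\cnt(u)$ and are exactly the nodes $\cnt(u)$ had removed, so $D_u$ is now the union $\bigcup_{v\in\son(u)}D_v$ of the children's key pairs, i.e.\ the state immediately before $\cnt(u)$. The root $r$ is handled identically: the $\sigma'$-update in \Cref{algorithm:upward:psi} leaves $\sigma'_r<\degree(r)$, hence $\num_r\ge0$ and the argument goes through verbatim.

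\textbf{Running time.} \Cref{revert:update1} and \Cref{revert:update7} are single $\inctime$ calls at a splay-tree root, $O(1)$ each (correctness of the affected $\moment$ values being restored by lazy push-down on later accesses). In the loop of \Cref{revert:update2} every iteration does one $\findmin$ and one $\Delete$, each amortized $O(\log n)$; and every node deleted there was created (and inserted) by some $\updatedsupward$ call, so by \Cref{lemma:dsmemory} the number of such iterations across \emph{all} $\revertds$ calls in \Cref{algorithm:main} is $O(n)$, contributing $O(n\log n)$. Likewise $\sum_u|\bq_u|=O(n)$ by \Cref{lemma:dsmemory}, so the re-insertion loops cost $O(n\log n)$ in total. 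Summing yields the claimed $O(n\log n)$ bound for all $\revertds$ operations.

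\textbf{Main obstacle.} The one point that is not pure bookkeeping is justifying that \Cref{revert:update1} really inverts the $\inctime$ performed by $\updatedsupward(u)$ even though, between those two moments, $D_u$ was merged into $D_{\parent(u)}$ (where further $\inctime$, $\cnt$, $\updatedsupward$ and $\revertds$ acted) and then extracted again. This needs two auxiliary facts that should be isolated first: (i)~$\inctime(\roott(D_u),a,b)$ realizes the map $\moment_y\mapsto\moment_y+\rank_{D_u}(y)\cdot a+b$ \emph{independently of the shape} of the splay tree, hence is unaffected by the intervening splays and rotations as long as the logical content is preserved; and (ii)~by \Cref{lemma:split}, $\splitds(\parent(u),u)$ does restore that logical content — and therefore all ranks — to its post-$\updatedsupward(u)$ value, which in turn relies on $\revertds(\parent(u))$ having already been established one level up. So \Cref{lemma:revert} is, strictly speaking, one strand of an induction over the $\downward$ recursion (alongside \Cref{lemma:split}), and that is the sense in which it is used inside \Cref{theorem:ds}; everything else is routine.
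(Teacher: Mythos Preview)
The proposal is correct and takes essentially the same approach as the paper's proof: both argue that $\revertds(u)$ is a two-part procedure that symmetrically undoes first $\updatedsupward(u)$ and then $\cnt(u)$, with the $O(n\log n)$ time bound coming from the same accounting against \Cref{lemma:dsmemory}. Your version is considerably more detailed---tracking $\moment$ values and ranks through each step, verifying that the $\inctime$ inversions line up because step~(c) is order-preserving, and explicitly flagging the dependence on \Cref{lemma:split} for the inductive use inside \Cref{theorem:ds}---whereas the paper dispatches the whole thing in three sentences by simply asserting the symmetry and matching the cost to that of $\updatedsupward$.
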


\begin{proof}[Proof of \Cref{lemma:revert}]
$\revertds(u)$ can be divided into two parts:
\begin{itemize}
    \item \Cref{revert:update1} to \Cref{revert:update7}: Revert modification of $\updatedsupward(u)$ on $D_u$ from \Cref{update:oper1} to \Cref{update:oper7}.
    \item \Cref{revert:cnt1} to \Cref{revert:cnt2}: Revert modification of $\cnt(u)$ on $D_u$ from \Cref{computeC:loop} to \Cref{computeC:break}.
\end{itemize}

It is easy to see that these revert operations are symmetric to the previous modification and thus produce the original $D_u$ after calling $\revertds(u)$. Since deletion and insertion on $D_u$ both take in amortized $O(\log n)$, the time complexity for all $\revertds$ costs the same as all $\updatedsupward$ as $O(n \log n)$. 
\end{proof}

\subsection{Overall Analysis}
\label{sec:ds:overall-analysis-of-ds}
Now we are ready to prove \Cref{theorem:ds} from \Cref{lemma:deltaquery} ($\deltaquery$), \Cref{lemma:merge} ($\mergeupward$), \Cref{lemma:computeC} ($\cnt$), \Cref{lemma:deltasum} ($\deltasum$), \Cref{lemma:update} ($\updatedsupward$), \Cref{lemma:revert} ($\revertds$), \Cref{lemma:split} ($\splitds$) and \Cref{lemma:dsmemory} (Memory). After that, we will prove the main result \Cref{theorem:main}.

\dstheorem*

\begin{proof}[Proof of \Cref{theorem:ds}]
Firstly, we focus on \Cref{algorithm:upward}. For any leaf vertex $u$, $D_u$ is set to empty, which is correctly set. 

For the current vertex $u$, the $D_v$ is correctly maintained as containing all key pairs $(v, k)$ for all $v \in \son(u)$ by inductive hypothesis.

We know that $D_u$ contains the collection of all the key pairs from $D_v$ for all $v \in D_u$. This is because we have called $\mergeupward(u,v)$ for all $v \in \son(u)$ in an arbitrary order $\mathcal{I}$. In each call of $\mergeupward(u, v)$, by \Cref{lemma:merge}, $D_u$ will contain all the nodes from $D_v$. This proves that $D_u$ contains all the key pairs from $D_v$ for all $v \in \son(u)$. It implies that the assumption in \Cref{lemma:computeC} has been satisfied. Therefore, $\cnt(u)$ will return the correct value of $\bc^{\down}(u)$ and transport $x\in D_u$ satisfying $moment_x\leq \bc^{\down}(u)$ to $\bq_u$. By \Cref{lemma:deltasum}, $\deltasum$ produces the correct result. Now $D_u$ is the union of all key pairs $(v,k), v \in \son(u), k > \bc^{\down}(u)$, which satisfies the assumption of \Cref{lemma:update}. Therefore, $\updatedsupward(u)$ will produce a correct $D_u$ for $u$ such that it contains all key pairs $(u,k)$. 

Now we analyze \Cref{algorithm:downward}. For a non-root vertex $u$, at the time we visit $u$, we can use the inductive hypothesis to assume the process on $\parent(u)$ has been finished correctly. It means the value of $\bc(\parent(u))$ has been calculated correctly, and $D_u$ is restored to the status before $\mergeupward(u, v)$ happens. By \Cref{lemma:deltaquery}, $\deltaquery$ produces the correct result $\delta(u,\bc(\parent(u))$. By \Cref{lemma:revert}, $\revertds(u)$ restore $D_u$ to the union of all key pairs $(v,k), v \in \son(u)$. By enumerating in the reversed order of $\mathcal{I}$, $\splitds(u,v)$ are called repeatedly. We can see that such order of calling $\splitds(u,v), v \in \son(u)$ will satisfy the assumption of \Cref{lemma:split}. Therefore, $\splitds$ will produce the correct $D_v$ each time, which contains all key pairs of $v$. 

The overall time complexity can be immediately derived from combining \Cref{lemma:deltaquery}, \Cref{lemma:merge}, \Cref{lemma:computeC}, \Cref{lemma:deltasum}, \Cref{lemma:update}, \Cref{lemma:revert}, and \Cref{lemma:split} that all these subroutines cost $O(n\log n)$ time in total.

The space complexity is analyzed in \Cref{lemma:dsmemory}.
\end{proof}

By \Cref{theorem:ds}, we can prove \Cref{theorem:upward} and \Cref{theorem:downward}. With them, we can now prove \Cref{theorem:main}:

\maintheorem*

\begin{proof}
Let's analysis the procedure of \Cref{algorithm:main}:
\begin{itemize}
    \item From \Cref{algorithm:main:sp-start} to \Cref{algorithm:main:sp-end} we will skip the case with recurrent instances. The value of $\sum_{u \in V(G)} \sigma_u$ can be found by summing in $O(|V|)$, so this part will be finished in $O(n)$ time and $O(1)$ costs of memory.
    \item In \Cref{algorithm:main:root} and \Cref{algorithm:main:init} we will initialize the root $r$ and the vector $\sigma'$. Since it's just a memory copy operation, it uses $O(n)$ time and $O(n)$ extra memory.
    \item In \Cref{algorithm:main:upward} we call \upward($r$, $G$, $\sigma'$). By \Cref{theorem:upward} the procedure finishes in $O(n \log n)$ time. 
    \item In \Cref{algorithm:main:downward} we call \downward($r$, $G$, $\sigma'$). By \Cref{theorem:downward} the procedure finishes in $O(n \log n)$ time. 
    \item From \Cref{algorithm:main:recover-start} to \Cref{algorithm:main:recover-end}, we will recover the terminal configuration based on the value of $\bc(u)$ for all $\bc \in V(G)$. The iteration of the pair $(u, v)$ is equivalent to iterating all the edges in the graph $G$. Since $|E(G)| = |V(G)| - 1$ in a tree $G$, iterating over all edges (each edge will be iterated exactly twice) will use $O(n)$ time with $O(1)$ extra memory.
\end{itemize}

In addition to storing the tree structure with $O(n)$ memory, $\upward$ only needs a global variable $\sigma'$ to pass during the recursion calling, which is a vector of size $n$. Similarly, for $\downward$, the algorithm only needs to store a variable $u$ and $k$, which uses $O(1)$ memory on each vertex. We also need two vectors of size $n$ to store the computed $\bc^{\down}$ and $\bc$. By \Cref{sec:ds}, there are also some additional arrays of $O(n)$ length to support operations. Therefore, excluding $\mathcal{D}$, we only need $O(n)$ memory. Since $\mathcal{D}$ takes $O(n)$ memory at any moment by \Cref{theorem:ds} (proved in \Cref{sec:ds:overall-analysis-of-ds}), the whole algorithm still takes $O(n)$ memory.

Therefore, the algorithm finds the correct configuration in $O(n \log n)$ time with $O(n)$ memory.
\end{proof}
\section{Algorithms on Other Structured Graphs}

\label{sec:further}

In this section, we mainly discuss how to modify the tree algorithm to reach $O(n)$ time complexity when the input graph is a path. We also give an algorithm on solving the sandpile prediction problem on a clique, which also runs in $O(n)$ time and $O(n)$ memory.

\subsection{Sandpile Prediction on Paths}
\label{sec:path}

\paththeorem*

\begin{definition}[$\text{Path}_n$]
$\text{Path}_n$ is defined as an undirected graph $G(V,E)$ such that $V=\{1,2,\dots,n\}$ and $E=\{(u,u+1) \mid 1\leq u < n-1\}$.
\end{definition}
Since $\text{Path}_n$ is also of the tree structure, if we call \Cref{algorithm:main} directly, we can solve any sandpile instance on $\text{Path}_n$ with $O(n \log n)$ time and $O(n)$ memory by \Cref{theorem:main}. We conjecture that the runtime is actually $O(n)$. The key idea to prove this result is through the \textit{Deque Conjecture}, which is a corollary of the famous unproven \textit{Dynamic Optimality Conjecure} \cite{sleator1985self}. The recent known result of the Deque Conjecture is by Seth Pettie \cite{pettie2007splay}. 

Now we will show that by modifying $\downward$ (\Cref{algorithm:downward}) and $\revertds$ (\Cref{algorithm:revert}), \Cref{algorithm:main} will have an $O(n)$ runtime when the input graph $G$ is $\text{Path}_n$. The modified algorithm does not rely on \textit{Deque Conjecture}. 

We fix the root at vertex $1$. In this way, every vertex $u \in [1,n-1]$ has exactly one child, $u+1$. Firstly, we will prove that $\upward(r,G,\sigma')$ (\Cref{algorithm:upward}) runs in $O(n)$ time. 

\begin{lemma}
\label{chain:upward}
Given an sandpile instance $S(G,\sigma)$ such that $G$ is a $\text{Path}_n$, $\upward(r,G,\sigma')$ runs in $O(n)$ time with $O(n)$ extra memory.
\end{lemma}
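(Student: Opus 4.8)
The plan is to exploit that a rooted path is an extremely degenerate tree: with the root fixed at vertex $1$, every non-leaf vertex $u$ has the single child $u+1$, so the generic splay-tree bookkeeping of \Cref{sec:ds} collapses to something manifestly linear (correctness is just the special case of \Cref{theorem:upward}, so only the running time and memory are at issue). First I would dispose of the cheap parts: since $D_u$ is reset to $\varnothing$ on \Cref{algorithm:upward:init} and there is exactly one child, the loop \Cref{algorithm:upward:merge-ds-begin}--\Cref{algorithm:upward:merge-ds-end} reduces to a single call $\mergeupward(u,u+1)$ that merges $D_{u+1}$ into the just-emptied $D_u$ — a constant-time pointer reassignment, with no small-to-large charging needed — and likewise the DFS indexing, the leaf case, the $\sigma'$ updates, the $\deltasum$ call (pure arithmetic), and the two $\inctime$ calls inside each $\updatedsupward$ are $O(1)$ per vertex, hence $O(n)$ overall. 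All integers involved stay $\poly(n)$ by \Cref{lemma:bounds-of-terminal-instance}, so word-RAM arithmetic is $O(1)$.

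The real content is bounding the remaining splay-tree operations — the \findmin{} and \Delete{} calls inside \cnt{} (\Cref{algorithm:calculatenum}) and the \Insert{} calls inside \updatedsupward{} (\Cref{algorithm:update}) — all of which act on the \emph{single} splay tree that is handed up the path. The key observation is that this tree is used purely as a stack: in \cnt{} every operation only inspects or deletes the current minimum, while in \updatedsupward{} the freshly created node (there is at most $\num_u=\degree(u)-1-\sigma'_u\le 1$ of them) has $\moment=0$ immediately after \Cref{update:oper5}, which is strictly below every surviving node (whose $\moment$ exceeds $\bc^{\down}(u)\ge 0$ after the shift on \Cref{update:oper1}), and the final $\inctime(\roott(D_u),1,0)$ on \Cref{update:oper7} keeps it the unique minimum. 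So each \Insert{} pushes a new minimum and each \Delete{} pops the minimum. A one-line induction then shows $D_u$ is always a right-leaning path: pushing a new minimum attaches it as the left child of the (leftmost) root and a single rotation makes it the new root with the previous tree as its right subtree, while popping the root — whose left subtree is empty — simply promotes its right subtree. Hence every \findmin{} returns the root at $O(1)$ cost and every \Insert{} or \Delete{} is $O(1)$; since only $O(n)$ nodes are ever created (\Cref{lemma:dsmemory}), \cnt{} performs $O(n)$ pops in total and $O(n)$ \findmin{} calls in total (one per pop, plus one per invocation of \cnt{}), so the splay work is $O(n)$. If one prefers not to pin down the exact \Insert{}/\Delete{} code, the same bound follows from \Cref{theorem:dynamic-finger-theorem}: all accessed nodes are consecutive minima, so consecutive accesses differ in rank by at most one, and the bound $O(m+n+\sum_j\log(1+d(a_{j-1},a_j)))$ — extended to insertions and deletions as in \cite{cole2000dynamic,cole2000dynamic2} — is $O(n)$.

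Combining, $\upward(r,G,\sigma')$ spends $O(1)$ amortized work at each of the $n$ vertices plus $O(n)$ total splay work, i.e.\ $O(n)$ time. For memory: the recursion on a path has depth $n$ with $O(1)$ state per frame; $\sigma'$, $\dfn$, $\num$, and the auxiliary arrays of \Cref{sec:ds} are each $O(n)$; and by \Cref{lemma:dsmemory} the splay trees together with the lists $\bq_u$ hold $O(n)$ nodes at any instant — so $O(n)$ extra memory suffices. The step I expect to require the most care is making the ``used as a stack, hence a right path'' invariant robust against lazy propagation: one has to verify that the $\inctime$ operations never reorder nodes and that pushing a tag down the right spine touches only $O(1)$ nodes, so the $O(1)$-per-operation claim genuinely holds for the implementation of \Cref{sec:ds}; the Dynamic-Finger route avoids this at the cost of invoking the stronger form of \cite{cole2000dynamic,cole2000dynamic2}.
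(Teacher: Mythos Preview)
Your proposal is correct, and your Dynamic-Finger fallback is exactly the paper's argument: the paper observes that $\mergeupward(u,u+1)$ is an $O(1)$ swap into the empty $D_u$, so a single splay tree is carried up the path; it then notes that all $\findmin$/$\Delete$ calls in $\cnt$ and all $\Insert$ calls in $\updatedsupward$ touch the minimum-rank element, so consecutive accesses differ in rank by at most one, and invokes \Cref{theorem:dynamic-finger-theorem} to get $O(n)$ total.

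Your primary route---the ``right-leaning path'' invariant showing each operation is worst-case $O(1)$---is a genuinely more elementary alternative that the paper does not pursue. It buys independence from the Dynamic Finger Theorem (a heavy result) at the cost of pinning down implementation details of $\Insert$/$\Delete$ and checking that lazy propagation cooperates. Your invariant does hold for the standard splay implementation: on a right path the root is the minimum, so $\findmin$ halts immediately and $\pushdown$ touches only the right child; inserting a fresh minimum attaches left of the root and a single $\zig$ restores the right path; deleting the root (left child $\nil$) promotes the right subtree. The one place to be slightly careful is the terminating $\findmin$ in $\cnt$ that does \emph{not} delete---you correctly count it, and it too is $O(1)$ on a right path. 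The paper sidesteps all of this by appealing directly to Dynamic Finger, which is cleaner to state but less self-contained.
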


\begin{proof}
We will prove the lemma similar to the proof of \Cref{theorem:downward}, where the total time complexity relies on the time cost for all $D_u$ operations. We assume currently we are at non-leaf vertex $u$. Since $D_u$ is initialized to be $\varnothing$ and the current visit vertex $u$ has exactly one child $v=u+1$, $\mergeupward(u,v)$ will be called only once and merges $D_v$ with an empty splay tree. By the small-to-large principle, $D_u$ will inherit $D_v$ directly taking $O(1)$ time. Therefore, throughout the whole execution of \Cref{algorithm:upward}, we are doing operations on one splay tree $T$.

By \Cref{lemma:computeC}, the time cost of all $\cnt$ is dominated by $O(n)$ calls of $\findmin(T)$ operation on $T$ in total, which finds the minimum rank node each time. For $\updatedsupward(u)$, by \Cref{lemma:update}, the time cost is dominated by the insertion of $O(n)$ nodes, each with $\moment=0$, which is always being inserted as the node with the minimum rank each time. Here we combine these two parts, and apply the dynamic finger theorem \Cref{theorem:dynamic-finger-theorem}, since the rank difference between any two accesses is at most $1$, the total time complexity is $O(n)$. Therefore, $\upward(r,G,\sigma')$ runs in $O(n)$ time.

The memory usage remains $O(n)$.
\end{proof}

In the original $\downward$, although we can analyze $\splitds(u,v)$ similar to $\mergeupward(u,v)$, for $\deltaquery$ and $\revertds$, we cannot apply the dynamic finger theorem directly to achieve the linear runtime. Therefore, we propose the following alternate process \Cref{algorithm:downwardm} ($\downwardm$) and \Cref{algorithm:revertm} ($\revertm$) for the path case. 

\IncMargin{1em}
\begin{algorithm}
  \SetKwData{Left}{left}\SetKwData{This}{this}\SetKwData{Up}{up}
  \SetKwComment{Comment}{$\triangleright$\ }{}
  \SetKwProg{Fn}{Function}{:}{}
  \If{$u=r$}{
    $k \leftarrow 0$\;
    $count \leftarrow 0$\;
  }
  \Else {
    $count \leftarrow count+\delquem(D_u, \bc(u-1))$\;\label{downwardm:calrank}
    $k\leftarrow \bc(\parent(u))-count$\;\label{downwardm:calk}
  }
  $count\leftarrow$ \revertm($u$,$count$)\; 
  $\bc(u) \leftarrow \bc^{\down}(u) + k$\;

  \If{$\son(u) \neq \nil$}{
    \downwardm($\son(u)$, $G$,$count$)\;
  }
  \caption{\downwardm($u$,$G$,$count$)}\label{algorithm:downwardm}
\end{algorithm}\DecMargin{1em}

\IncMargin{1em}
\begin{algorithm}[H]
  \SetKwData{Left}{left}\SetKwData{This}{this}\SetKwData{Up}{up}  \SetKwComment{Comment}{$\triangleright$\ }{}

  \SetKwProg{Fn}{Function}{:}{}
  \BlankLine

  $\inctime (\roott(D_u),-1,-count)$\;\label{revertm:minus}
  \If{$num_u\leq count$}{\label{revertm:del}
    $count\leftarrow count-num_u$\;
  }
  \Else{
    $num_u\leftarrow num_u-count$\;
    $count \leftarrow 0$\; \label{revertm:rankzero}
    \While{$num_u>0$}{ \label{revertm:delete1}
      $\delete(D_u,\findmin(D_u))$\;
      $num_u\leftarrow num_u-1$\; \label{revertm:delete2}
    }
  }

  $\IncTime (\roott(D_u),0,\bc^{\downarrow}(x))$\;\label{revertm:plus}

  \If{$count>0$}{\label{revertm:ins}
    $count\leftarrow count+|\bq_u|$\; \label{revertm:rankadd}
  }
  \Else{
    \For{$x \in \bq_u$}{ \label{revertm:insert1}
      $\insert(\roott(D_u),x)$\; \label{revertm:insert2}
    }
    
  }
  \Return $count$ \; \label{revertm:return}
  \caption{\revertm($u$,$count$)}\label{algorithm:revertm}
\end{algorithm}\DecMargin{1em}

The following lemma shows an additional property for nodes in $D_u$, which is helpful to our path algorithm: 

\begin{lemma}
\label{chain:property}
Consider the execution of $\downward(u,G)$ (\Cref{algorithm:downward}). For any node $x \in D_u$ such that $\moment_x \leq \bc(\parent(u))$, we have $\moment_x \leq \bc(u)$ after calling $\revertds(u)$ (\Cref{algorithm:revert}).
\end{lemma}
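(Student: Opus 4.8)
The plan is to follow a single node $x$ of $D_u$ across the two reshapings of $D_u$ that surround the call $\revertds(u)$, and to compare its moment with the two thresholds $\bc(\parent(u))$ and $\bc(u)$, using \Cref{lemma:value-of-c} to relate those thresholds. (We may assume $u\neq r$, since $\bc(\parent(u))$ appears in the statement.)

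First I would record the two states of $D_u$ that matter. When $\downward(u,G)$ begins, the correctness analysis behind \Cref{theorem:ds} tells us $D_u$ holds exactly the key pairs $(u,k)$, $k\in\Z^{+}$; after $\revertds(u)$ it holds exactly the key pairs $(v,k)$, $v\in\son(u)$, by \Cref{lemma:revert}. I would split the nodes present before the revert into the $\num_u=\degree(u)-1-\sigma'_u$ ``fresh'' nodes created by $\updatedsupward(u)$ — by \Cref{lemma:updatea} these are the nodes of ranks $1,\dots,\num_u$ and represent $(u,1),\dots,(u,\num_u)$ — and the ``old'' nodes, which were already in $D_u$ before $\updatedsupward(u)$ ran and originally carry the key pairs $(v,k)$ with $v\in\son(u)$, $k>\bc^{\down}(u)$. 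From the code of $\revertds$ (\Cref{algorithm:revert}): after the tag $\inctime(\roott(D_u),-1,0)$ each fresh node has moment $0$ while each old node has moment $k-\bc^{\down}(u)>0$, so the while-loop deletes precisely the $\num_u$ fresh nodes; the following tag returns every surviving old node to its original moment $k$, and the re-inserted $\bq_u$ nodes all have moment $\le\bc^{\down}(u)$. Thus a fresh node is simply removed, the statement is vacuous for it, and only old nodes require an argument.

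Next I would determine the moment of an old node in the pre-revert state. If an old node $x$ sits at rank $b$ in $D_u$ just before $\revertds(u)$, then \Cref{lemma:updateb} gives $\moment_x=\moment_y-\bc^{\down}(u)+b$, where $y$ is the pre-$\updatedsupward$ node of rank $b-\num_u$; since $\moment_y$ is exactly the original key-pair value $k$ carried by $x$, this is $\moment_x=k-\bc^{\down}(u)+b$. Because moments increase strictly with rank in $D_u$, we have $\bigl|\{x'\in D_u:\moment_{x'}\le\moment_x\}\bigr|=b$, so \Cref{lemma:deltaequal} gives $\delta(u,\moment_x)=\moment_x-b=k-\bc^{\down}(u)$. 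Hence the moment that $x$ carries after $\revertds(u)$, namely its original value $k$, equals $\delta(u,m)+\bc^{\down}(u)$, where $m$ is its moment before the revert.

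Finally I would conclude. Let $x$ be an old node with pre-revert moment $m=\moment_x\le\bc(\parent(u))$. Monotonicity of $\delta(u,\cdot)$ (\Cref{lemma:delta-differs-at-most-one}) gives $\delta(u,m)\le\delta(u,\bc(\parent(u)))$, and \Cref{lemma:value-of-c} gives $\bc(u)=\bc^{\down}(u)+\delta(u,\bc(\parent(u)))$. Combining this with the identity of the previous step, the moment of $x$ after $\revertds(u)$ is $\delta(u,m)+\bc^{\down}(u)\le\delta(u,\bc(\parent(u)))+\bc^{\down}(u)=\bc(u)$, as claimed. The main obstacle I anticipate is the bookkeeping of the middle step: one must be sure that \Cref{lemma:updateb} and \Cref{lemma:deltaequal} are read against the state of $D_u$ that stores the $(u,\cdot)$ key pairs, and that the rank arithmetic through the $\inctime$ tags of $\updatedsupward$ and $\revertds$ — in particular the shift by $\num_u$ from inserting and then deleting the fresh nodes — lines up exactly.
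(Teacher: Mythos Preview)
Your proof is correct and takes essentially the same approach as the paper. The paper's argument is slightly more direct — it writes the revert transformation as $\moment_x \mapsto \moment_x - \rank_{D_u}(x) + \bc^\down(u)$ and checks only the boundary node of rank $\tau=\lvert\{x:\moment_x\le\bc(\parent(u))\}\rvert$, using that the relative order is preserved — but your phrasing through the identity $\moment_x^{\text{after}}=\delta(u,\moment_x^{\text{before}})+\bc^\down(u)$ and the monotonicity of $\delta$ is the same core computation.
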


\begin{proof}
Let $\tau$ denote $\left\lvert\left\{x\in D_u|\moment_x\leq \bc(\parent(u))\right\}\right\rvert$. During the execution of \Cref{algorithm:downward}, we will change all $\moment_x$ back to $\moment_x -\rank_{D_u}(x)+\bc^{\down}(u)$ and compute $\bc(u)=\bc^{\down}(u)+k=\bc(\parent(u))-\tau+\bc^{\down}(u)$ by \Cref{lemma:deltaequal}. Since the relative order between nodes remains the same after calling $\revertds(u,v)$, we only have to prove $\moment_z \leq \bc(u)$ where $\rank_{D_u}(z)=\tau$. Since we have $\moment_z -\rank_{D_u}(z)+\bc^{\down}(u)=\moment_z -\tau+\bc^{\down}(u)\leq \bc(\parent(u))-\tau+\bc^{\down}(u)=\bc(u)$, we have proved our lemma.
\end{proof}

\IncMargin{1em}
\begin{algorithm}[H]
  \SetKwData{Left}{left}\SetKwData{This}{this}\SetKwData{Up}{up}

  \SetKwProg{Fn}{Function}{:}{}
  \BlankLine
  $count\leftarrow 0$\; 

  \While{$D_u\neq \varnothing$}{ \label{pathquery:while}
    $x\leftarrow \findmin(\roott(D_u))$\; \label{pathquery:findmin}
    \If{$\moment_x\leq k$}{ \label{pathquery:if}
      $\delete(D_u,x)$\; \label{pathquery:delete}
      $count\leftarrow count+1$\; \label{pathquery:count}
    }
    \Else{\Break\;} \label{pathquery:break}
  }

  \Return $count$\; \label{pathquery:return}
  
  \caption{\delquem($u$, $k$)}\label{algorithm:deltaquerym}
\end{algorithm}\DecMargin{1em}

\Cref{algorithm:main:downward} of \Cref{algorithm:main} ($\downward(r,G)$) will be replaced with $\downwardm(r,G,0)$ (\Cref{algorithm:downwardm}). An extra subroutine $\delquem$ (\Cref{algorithm:deltaquerym}) is also needed in our path algorithm:
\begin{lemma}
\label{lemma:chainquery}
$\delquem(u,k)$ returns the number of node $x \in D_u$ such that $\moment_x \leq k$ and deletes them from $D_u$.
\end{lemma}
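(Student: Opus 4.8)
The plan is to show that Algorithm~\ref{algorithm:deltaquerym} is nothing more than a greedy scan of $D_u$ in increasing $\moment$-order, and to make this precise with a loop invariant. Recall from \cref{sec:ds} that each $D_u$ is a splay tree whose nodes are ordered by their $\moment$ values, so $\findmin(D_u)$ returns a node $x\in D_u$ of minimum $\moment_x$; moreover, as with every traversal of the tree, the splay executed inside $\findmin$ pushes down the pending lazy tags along the accessed path, so the value $\moment_x$ tested on \cref{pathquery:if} is the correct materialized value.

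First I would fix the tree $D_u^{\mathrm{orig}}$ present at the call and state the invariant maintained at the top of the \textbf{while} loop (\cref{pathquery:while}): the nodes removed from $D_u$ so far form an initial segment of $D_u^{\mathrm{orig}}$ in non-decreasing $\moment$-order, every removed node has $\moment\le k$, and $count$ equals the number of removed nodes. This is trivial before the first iteration ($count=0$, nothing removed). For the inductive step, let $x=\findmin(D_u)$, a node of minimum $\moment$ among those still present. If $\moment_x\le k$, then \cref{pathquery:delete} and \cref{pathquery:count} remove $x$ and increment $count$, extending the removed initial segment by one node whose $\moment$ is $\le k$, so the invariant is preserved. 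If $\moment_x> k$, then every still-present node $y$ has $\moment_y\ge \moment_x> k$, hence no still-present node has $\moment\le k$; so by the invariant the removed set is already exactly $\{x\in D_u^{\mathrm{orig}} : \moment_x\le k\}$, and the algorithm breaks and returns its cardinality (\cref{pathquery:return}). Termination is immediate: each iteration deletes one node from the finite tree or executes \textbf{break}.

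I do not expect a genuine obstacle here; the only delicate point is that the argument leans on two facts about the data structure established elsewhere — the $\moment$-ordering of $D_u$ and the correctness of the lazy-tag propagation carried out by the splay operations — so that the comparison on \cref{pathquery:if} uses the true value of $\moment_x$. (For the linear-time analysis in \cref{sec:path} one additionally observes that each successful iteration consists of a $\findmin$ followed by deletion of that same minimum-rank node, so across the whole path algorithm the accesses generated by $\delquem$ are accesses to minimum-rank elements, to which the Dynamic Finger Theorem (\cref{theorem:dynamic-finger-theorem}) applies; but that is beyond the scope of this lemma, which asserts only correctness.)
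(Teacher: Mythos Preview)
Your proposal is correct and follows essentially the same approach as the paper's proof: both argue that repeatedly extracting the minimum-$\moment$ node and stopping as soon as one exceeds $k$ yields exactly the set $\{x\in D_u:\moment_x\le k\}$, using the fact that $D_u$ is ordered by $\moment$. Your version is simply a more formal restatement via an explicit loop invariant (and you additionally note the lazy-tag correctness and the dynamic-finger remark), but the underlying argument is identical.
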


\begin{proof}
During the process of $\delquem(u,k)$, we repeatedly find the node $x \in D_u$ where $\rank_{D_u}(x)$ is the minimum until $D_u$ becomes empty (\Cref{pathquery:while}). For each $x$, we check if $\moment_x \leq k$ (\Cref{pathquery:if}) holds. If so, we will delete it from $D_u$ (\Cref{pathquery:delete}) and increase the counter by $1$ (\Cref{pathquery:count}). Otherwise, since nodes are ordered by $\moment$ from small to large, we have found all nodes satisfying the condition. Thus we exit the loop (\Cref{pathquery:break}). Since $count$ keeps track of the number of nodes $x$ with $\moment_x \leq k$, we should return $count$ as the result (\Cref{pathquery:return}).
\end{proof}

Now we are ready to analyze \Cref{algorithm:downwardm}.

\begin{lemma}[Correctness of \Cref{algorithm:downwardm}]
\label{chain:downward}
\Cref{algorithm:downwardm} calculates the correct value of all $\bc(u)$ for $u \in V(G)$.
\end{lemma}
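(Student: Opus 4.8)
I would prove correctness by induction along the path, from the root $v_1=1$ to the leaf $v_n=n$ (so $\parent(v_{i+1})=v_i$), mirroring the correctness argument already available for the general‑tree procedure \downward{} (\Cref{theorem:downward}). The governing fact is \Cref{lemma:value-of-c}: $\bc(v_1)=\bc^{\down}(v_1)$ and $\bc(v_i)=\bc^{\down}(v_i)+\delta\bigl(v_i,\bc(\parent(v_i))\bigr)$ for $i\ge 2$, where all $\bc^{\down}$ values are correct because \upward{} is correct on paths (\Cref{theorem:upward}, \Cref{chain:upward}). Thus it suffices to show that the variable $k$ produced inside \downwardm$(v_i,G,\cdot)$ equals $\delta(v_i,\bc(\parent(v_i)))$ for every $i$.

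The conceptual point of \downwardm{} is that it never re‑walks the splay tree to evaluate $\delta$; instead it maintains a running counter $count$ which, together with the current state of the single splay tree $T$, encodes the key‑pair structure of the vertex currently being processed. I would formalize this as an invariant asserting that, when \downwardm$(v_i,G,c)$ is invoked: (i) $\bc(v_1),\dots,\bc(v_{i-1})$ are already correct; (ii) writing $\mathcal K_i$ for the (fixed) set of moments of the key pairs of $v_i$ (\Cref{def:keypair}), the number of key pairs $(v_i,j)$ with $j\le\bc(v_{i-1})$ that are \emph{not} currently stored in $T$ is exactly $c$; and (iii) apart from this discrepancy, $T$ holds the key pairs of $v_i$, up to a bounded set of freshly created nodes that an earlier call to \revertm{} chose to defer rather than materialise (and the correspondingly missing $\bq$‑nodes), all of which have moment $\le\bc(v_{i-1})$ and will therefore be swept out by the imminent \delquem{} call. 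The base case $i=1$ is immediate: $c=0$, nothing has been deferred, $T$ stores all key pairs of $v_1$, and the root branch of \Cref{algorithm:downwardm} outputs $k=0$.

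For the inductive step, \Cref{lemma:chainquery} says $\delquem(D_{v_i},\bc(v_{i-1}))$ both returns the number of nodes of $T$ with moment $\le\bc(v_{i-1})$ and deletes them; by (ii) the resulting value of $count$ equals $\lvert\{j:(v_i,j)\in\mathcal K_i,\ j\le\bc(v_{i-1})\}\rvert$, which by \Cref{lemma:deltaequal} is $\bc(v_{i-1})-\delta(v_i,\bc(v_{i-1}))$. Hence $k=\bc(v_{i-1})-count=\delta(v_i,\bc(\parent(v_i)))$, so $\bc(v_i)=\bc^{\down}(v_i)+k$ is correct by \Cref{lemma:value-of-c}. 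It remains to verify that $\revertm(v_i,count)$ restores the invariant for $v_{i+1}$. I would show that, apart from the count‑bookkeeping, $\revertm$ applies to $T$ exactly the transformations of $\revertds$ (\Cref{algorithm:revert}, \Cref{lemma:revert}): the two $\inctime$ shifts, deletion of the $\num_{v_i}$ zero‑moment nodes, and reinsertion of $\bq_{v_i}$; whenever a node that ought to be deleted as ``small'' is instead kept (or an element of $\bq_{v_i}$ is instead not reinserted), its contribution is moved into the returned $count$. This is sound precisely because \Cref{chain:property} guarantees that any node small at level $i$ (moment $\le\bc(v_{i-1})$) remains small at level $i+1$ (moment $\le\bc(v_i)$ after reverting), and because every $\bq_{v_i}$‑node has moment $\le\bc^{\down}(v_i)\le\bc(v_i)$; a case analysis on whether $count\ge\num_{v_i}$ — the two branches of \Cref{algorithm:revertm} — then shows that the returned value and the final state of $T$ are exactly those required by (ii) and (iii) at $v_{i+1}$.

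The step I expect to be the main obstacle is this last one: proving that the lazy, count‑folding shortcut inside \revertm{} is \emph{conservative} — that across both branches the counter is updated to capture precisely the created nodes it drops and the $\bq_{v_i}$‑nodes it fails to reinsert, and that the residual moment arithmetic performed by the two $\inctime$ calls matches what the honest $\updatedsupward$/$\revertds$ pair would have left behind (\Cref{algorithm:update}). This is the one place where one must actually track the $\moment$‑values node by node; \Cref{chain:property} is the essential tool that lets the deferred count telescope correctly down the path, and once it is in place the rest is a routine transcription of the existing analysis of \downward{}.
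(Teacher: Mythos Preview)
Your proposal is correct and follows essentially the same approach as the paper's own proof: an induction down the path, with an invariant on what the passed‑in $count$ records (namely the key pairs of the current vertex already absent from the splay tree, all of moment $\le \bc(\parent(u))$), using \Cref{lemma:chainquery} and \Cref{lemma:deltaequal} to extract $k=\delta(u,\bc(\parent(u)))$, \Cref{lemma:value-of-c} to get $\bc(u)$, and \Cref{chain:property} together with a case split on the two branches of \revertm{} to propagate the invariant. Your write‑up is in fact more explicit about the invariant (your items (i)--(iii)) than the paper's proof.
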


\begin{proof}
We will prove the correctness of \Cref{algorithm:downwardm} by induction. We assume the computation on any vertex $v$ visited before $u$ is correct. For vertex $u$, we maintain a variable $count$ in \Cref{algorithm:downwardm} denoting the number of pairs $(u,\moment_u)$ satisfying $v \leq \bc(\parent(u))$. When $u$ is the root, we initialize $count$ to $0$. In the following, we assume $u$ is not the root. 

By \Cref{lemma:chainquery}, we know that after executing \Cref{downwardm:calrank}, $count$ will be increased by the number of key pairs $(u,x)$ satisfying $x \leq \bc(\parent(u))$. By induction, $count$ stores the number of key pairs $(\parent(u),y)$ satisfying $y \leq \bc(\parent(\parent(u)))$. By \Cref{chain:property}, we know these key pairs $(\parent(u),k)$ satisfy $k \leq \bc(\parent(u))$ after being reverted. Therefore, after calling $\downwardm$ on $u$ (\Cref{downwardm:calrank}), $count$ will be correctly maintained. Moreover, all nodes satisfying $\moment_x \leq \bc(\parent(u))$ are deleted from $D_u$. This process can be regarded as aggregating all previously counted key pairs to one counter since they remain to be legal in the following recursion.
 
By \Cref{lemma:deltaequal}, $k=\delta(u,\bc(\parent(u)))$ is computed as $\bc(\parent(u))-count$ (\Cref{downwardm:calk}).

$\revertm$ is a modified version $\revertds$, which also reverts $\updatedsupward$'s modification on $D_u$. In the beginning, the last $\inctime$ operation in $\updatedsupward$ (\Cref{revertm:minus}) still needs to be reverted. Previously in $\updatedsupward(u)$, $\num_u$ stores the number of insertions. Since now we aggregate nodes of $\rank \in [1,count]$, the actual rank of any node $x \in D_u$ should be $\rank_{D_u}(x)+count$. Thus when we revert insertions in $\updatedsupward(u)$, we have to check how many of them are already aggregated in $count$. We update $\num_u$ and $count$ correspondingly (\Cref{revertm:del} to \Cref{revertm:rankzero}). If there is any inserted node in $D_u$ needs deleted, we can simply repeatedly acquire them by calling $\findmin(D_u)$ and deleting them (\Cref{revertm:delete1} to \Cref{revertm:delete2}). After undoing insertions, the first $\inctime$ operation should be reverted as well (\Cref{revertm:plus}). Lastly, to deal with nodes stored in $bq_u$, we first check if nodes in $bq_u$ should be added back to $D_u$ (\Cref{revertm:insert1} to \Cref{revertm:insert2}). If not, we simply increase $count$ (\Cref{revertm:rankadd}). This can be done by checking if $count$ is positive (\Cref{revertm:ins}. We return the new $rank$ back to $\downwardm$ in the end (\Cref{revertm:return}). After $\revertm$, $\bc(u)$ is computed as $\bc^{\down}(u)+k$ by \Cref{lemma:value-of-c}. Then we continue visiting $u$'s only child if exists.

By applying mathematical induction on the arguments above, the algorithm proceeds the correct value of $\bc(u)$ for all $u \in V(G)$, which proves the correctness of \Cref{algorithm:downwardm}.
\end{proof}

\begin{lemma}[Time and Memory Used in \Cref{chain:downward}]
\label{chain:downward2}
\Cref{algorithm:downwardm} calculates the correct value of all $\bc(u)$ for $u \in V(G)$ in a total of $O(n)$ time and $O(n)$ memory.
\end{lemma}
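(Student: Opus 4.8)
The plan is to peel off the three assertions. Correctness is already \Cref{chain:downward}, so only the $O(n)$ time and $O(n)$ memory remain, and the memory bound is the easy half. Exactly as in the proof of \Cref{chain:upward}, on a path rooted at vertex $1$ every vertex has a unique child, so each $\mergeupward$ in the $\upward$ phase was a merge with an empty tree and $\downwardm$ never invokes $\splitds$; hence throughout $\downwardm(r,G,0)$ all of the $D_u$'s are literally one splay tree $T$ that flows down the chain. By \Cref{lemma:dsmemory} $T$ holds $O(n)$ nodes, the auxiliary arrays ($\moment,\timestamp,\ta,\tb$, the lists $\bq_u$) have length $O(n)$ in total, and although the recursion of $\downwardm$ has depth $\Theta(n)$ on a path, each frame is $O(1)$; so the total space is $O(n)$.

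For the time bound I would first count the splay‑tree operations done in $\downwardm$ and then charge their total cost to the Dynamic Finger Theorem (\Cref{theorem:dynamic-finger-theorem}), the same device used for $\upward$ in \Cref{chain:upward}. The only operations touching $T$ are: (i) the $\findmin$/$\Delete$ steps of $\delquem$ (\Cref{algorithm:deltaquerym}, behaviour given by \Cref{lemma:chainquery}); (ii) the $\findmin$/$\Delete$ steps of the deletion loop inside $\revertm$ (\Cref{algorithm:revertm}); (iii) the reinsertions of the nodes cached in $\bq_u$ inside $\revertm$; and (iv) an $O(1)$ number of $\inctime$ calls on the current root per vertex, each costing $O(1)$. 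Every node is removed by some $\delquem$ at most once, so (i) is $O(n)$ operations; the deletion loop of $\revertm$ only removes nodes that $\updatedsupward$ created, of which there are $O(n)$ in all by \Cref{lemma:dsmemory}, so (ii) is $O(n)$; and $\sum_u|\bq_u|=O(n)$ since each node enters at most one $\bq_u$, so (iii) is $O(n)$. Thus there are $m=O(n)$ splay operations, plus $O(n)$ extra for the $\inctime$'s.

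The heart of the argument is that every one of these $m$ accesses and insertions touches a node of rank $O(1)$ in $T$ — a ``front of the deque'' pattern — so consecutive accesses differ by $O(1)$ in rank and \Cref{theorem:dynamic-finger-theorem} yields total cost $O\!\left(m+|T|+\sum\log(1+O(1))\right)=O(n)$, after which adding the $O(n)$ cost of the $\inctime$'s finishes the proof. To establish the rank‑$O(1)$ claim I would invoke the invariant already proved for \Cref{chain:downward}: right before $\revertm(u)$ runs, $T$ equals the tree that $\updatedsupward(u)$ produced in the $\upward$ phase (with the ranks $[1,count]$ virtually folded into the counter). In that state the $\num_u$ nodes that $\updatedsupward$ inserted form a $\moment$‑prefix of $T$ (their $\moment$‑values were set strictly below those of the surviving key pairs of $\son(u)$, by the computation in the proof of \Cref{lemma:update}), and the $\bq_u$ nodes were exactly the key pairs $\cnt(u)$ popped for having the smallest $\moment$, so they are reinserted at the front; likewise $\delquem(D_u,\cdot)$ always strips a $\moment$‑prefix, equivalently a rank‑prefix, of $T$. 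Hence every access or insertion is within $O(1)$ of rank $1$, and consecutive ones are at rank distance $O(1)$ — which is precisely why $\downwardm$, $\revertm$, $\delquem$ were rewritten to touch only the front: unlike the top‑down walk of $\deltaquery$/$\revertds$ in the general algorithm, which would need the (still‑open) Deque Conjecture, the front‑only pattern is handled by the already‑proved Dynamic Finger Theorem. The delicate point, and the main obstacle, is threading this prefix invariant cleanly through the interleaving of $\delquem$, the two deletion/insertion phases inside $\revertm$, and the residual effect of the $\upward$ phase, so that the Dynamic Finger hypothesis applies to the whole access sequence at once; granting that, the $O(n)$ time bound is immediate.
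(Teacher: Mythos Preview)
Your proposal is correct and follows essentially the same approach as the paper: count the splay operations in $\delquem$ and $\revertm$ as $O(n)$ in total, observe that every such operation accesses the minimum-rank node of the single surviving splay tree, and invoke the Dynamic Finger Theorem (\Cref{theorem:dynamic-finger-theorem}) to conclude $O(n)$ time, with the memory bound coming from \Cref{lemma:dsmemory}. Your write-up is considerably more careful than the paper's in justifying the ``front-only'' invariant (explaining why the $\num_u$ nodes and the $\bq_u$ nodes form a $\moment$-prefix), but the underlying argument is identical.
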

\begin{proof}
Similar to the proof of \Cref{theorem:downward}, the total time complexity relies on the time cost for all $D_u$ operations. $\delquem$ is implemented as deleting the node with the minimum rank after finding it. Thus in each $\delquem$, every node will be deleted exactly once except for one node which will only be found but not deleted. In $\revertm(u,rank)$, we can have a similar analysis for $\insert$, $\delete$ and $\findmin$.

Overall, there are $O(n)$ times of operations of these three kinds in total. Notice that all these operations during the execution of \Cref{algorithm:downwardm} access the node with the minimum rank. By applying the dynamic finger theorem \Cref{theorem:dynamic-finger-theorem}, the total time complexity is $O(n)$. By \Cref{lemma:dsmemory}, the memory usage is also $O(n)$ same as the previous $\downward$.
\end{proof}

Combining \Cref{chain:upward}, \Cref{chain:downward}, \Cref{chain:downward2}, we are able to prove \Cref{theorem:main-path}.

\subsection{Sandpile Prediction on Cliques}
\label{sec:clique}
We also study one of the most classic structured graphs and come up with a bound showing that one only needs to simulate $O(n)$ firings to reach the terminal configuration or determine that it will not terminate.

\begin{restatable}[Sandpile Prediction on a Clique]{theorem}{theoremclique}
\label{theorem:clique}
Given a sandpile instance $S(G,\sigma)$ such that $G$ is a clique on $n$ vertices. There is an algorithm that can determine whether $S$ will terminate and compute the terminal configuration of $S$ in $O(n)$ time and $O(n)$ memory.
\end{restatable}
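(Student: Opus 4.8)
The plan is to reduce the entire problem on $K_n$ to computing a single integer — the total number of firings — and then to extract that integer in linear time. Since $\deg(v)=n-1$ for every vertex and $F(v)$ adds $1$ to every vertex other than $v$ and subtracts $n-1$ from $v$, if the instance terminates with firing vector $\bc$ and we put $C:=\sum_{u\in V(G)}\bc(u)$, then for each $v$
\[
\sigma^{T}_{v}=\sigma_v+\bigl(C-\bc(v)\bigr)-\bc(v)(n-1)=\sigma_v+C-\bc(v)\cdot n .
\]
As $\sigma^{T}$ is a terminal configuration with non-negative entries, $0\le\sigma^{T}_v\le n-2$, so $\sigma^{T}_v$ is forced to be $(\sigma_v+C)\bmod n$ and $\bc(v)=\lfloor(\sigma_v+C)/n\rfloor$; in particular the terminal configuration and the whole firing vector are determined by the single number $C$.

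Next I would prove the ``$O(n)$ firings'' bound $C\le n-1$. If $C\ge n$ then $\sigma_v+C\ge n$ for every $v$, hence $\bc(v)=\lfloor(\sigma_v+C)/n\rfloor\ge 1$, so $\bc-\vone$ is a non-negative integer vector; a one-line substitution shows that the left-hand side of the system in \Cref{theorem:inequality} at each vertex $v$ is unchanged when $\bc$ is replaced by $\bc-\vone$ (the $-1$ picked up from the $n-1$ neighbours cancels the $+(n-1)$ contributed by the diagonal term), so $\bc-\vone$ is again a feasible solution, contradicting the fact from \Cref{theorem:inequality} that $\bc$ is the pointwise-minimum feasible solution. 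Conversely, for $C\in\{0,1,\dots,n-1\}$ set $\bc^{C}_v:=\lfloor(\sigma_v+C)/n\rfloor$; then $\norm{\bc^{C}}_1=\bigl(\norm{\sigma}_1+nC-\sum_v\,(\sigma_v+C)\bmod n\bigr)/n$, and the same computation as above shows $\bc^{C}$ is a feasible solution of \Cref{linearsystem} whenever $\sum_v\,(\sigma_v+C)\bmod n=\norm{\sigma}_1$ and $(\sigma_v+C)\not\equiv n-1\pmod n$ for all $v$. Call such a $C$ \emph{good}. By \Cref{theorem:inequality} the instance terminates iff some good $C$ exists in $\{0,\dots,n-1\}$: a good $C$ supplies the feasible solution $\bc^{C}$, and if the instance terminates then by the first paragraph $\norm{\sigma^{T}}_1=\norm{\sigma}_1$ and $\sigma^T_v\le n-2$, so $\norm{\bc}_1$ is good and, by the bound just proved, lies in $\{0,\dots,n-1\}$. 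Finally, if $C^{\star}$ is the smallest good $C$, then $C^{\star}=\norm{\bc}_1$ and the terminal configuration is $\sigma^{T}_v=(\sigma_v+C^{\star})\bmod n$: if $C^{\star}<\norm{\bc}_1$ held, then (by monotonicity of $t\mapsto\lfloor(\sigma_v+t)/n\rfloor$ and the first paragraph) $\bc^{C^{\star}}$ would be a feasible solution lying pointwise below $\bc$, contradicting minimality.

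It remains to find $C^{\star}$ in $O(n)$ time. Let $S(C):=\sum_v\bigl((\sigma_v+C)\bmod n\bigr)$; this function has period $n$, so scanning $C=0,1,\dots,n-1$ suffices. First I would compute the histogram $h[j]=\lvert\{v:\sigma_v\equiv j\pmod n\}\rvert$ and $S(0)=\sum_{j}j\,h[j]$ in $O(n)$ time, and then use the recurrence $S(C)=S(C-1)+n\bigl(1-h[(n-C)\bmod n]\bigr)$, which holds because exactly the vertices with $\sigma_v\equiv n-C$ are the ones whose residue wraps from $n-1$ down to $0$ when $C$ is incremented. A vertex is full at a given $C$ iff $h[(n-1-C)\bmod n]>0$. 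Scanning once, $C^{\star}$ is the first $C$ with $S(C)=\norm{\sigma}_1$ and $h[(n-1-C)\bmod n]=0$; if no such $C$ exists the instance is recurrent and we return $\bot$, otherwise we output $\sigma^{T}_v=(\sigma_v+C^{\star})\bmod n$ for all $v$. This is $O(n)$ time and $O(n)$ memory, the only caveat being that $\norm{\sigma}_1$ need only be tracked up to $n(n-1)$, since any larger total already forces a recurrent instance. The step needing the most care is the one identifying $C^{\star}$ as the smallest good value rather than an arbitrary one, i.e., matching the minimal feasible solution of \Cref{theorem:inequality} to the correct shift $C$; once that is pinned down, the remainder is routine bookkeeping with residues.
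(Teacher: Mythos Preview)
Your approach is correct and genuinely different from the paper's. The paper proves directly (by a pigeonhole argument on the last $n-1$ firings) that a terminating instance fires at most $n-2$ times in total, and then simulates those firings in $O(n)$ time via a bucket sort on the values $\sigma_v$, carrying a global counter so that each firing costs $O(1)$. Your argument instead exploits the algebraic symmetry of $K_n$: once you observe that the terminal configuration depends only on the single integer $C=\|\bc\|_1$, you bound $C\le n-1$ via the Least Action Principle (\Cref{theorem:inequality}) and reduce the problem to scanning $n$ candidate values of $C$, each checkable in $O(1)$ using the histogram of residues $\sigma_v\bmod n$. Both routes are $O(n)$ time and space; yours is more structural (it reveals that the terminal configuration on $K_n$ is literally $(\sigma_v+C^\star)\bmod n$), while the paper's is more self-contained, not invoking the inequality-system machinery. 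The one place in your write-up that deserves an extra sentence of care is exactly where you flagged it: showing that the \emph{smallest} good $C$ equals $\|\bc\|_1$ uses that a good $C$ satisfies $\|\bc^{C}\|_1=C$ (from condition~(a)), so that $C^\star<\|\bc\|_1$ would give a feasible $\bc^{C^\star}\le\bc$ with strictly smaller $\ell_1$-norm, contradicting pointwise minimality of $\bc$.
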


\IncMargin{1em}
\begin{algorithm}

  \SetKwInOut{Input}{input}\SetKwInOut{Output}{output}  

  \SetKwProg{Fn}{Function}{:}{}
  \Input{$G$, configuration $\sigma$}
  \Output{the terminal configuration $\sigma^{T}$ of the instance $S(T, \sigma)$}
  \BlankLine 
  $count \leftarrow 0$ \; \label{clique:countinit}
  \For{$u \in V$} {
    \While {$\sigma_u \geq n-1$} {  \label{clique:firingfirstloop}
        $\sigma_u \leftarrow \sigma_u - (n-1)-1$ \; \label{clique:firingfirst1}
        $count \leftarrow count + 1$ \; \label{clique:firingfirst2}
        \If {$count \geq n-1$} { \label{clique:checka1}
            \Return $\bot$\; \label{clique:checka2}
        }
    }
  }
  $j \leftarrow 0$ \;
  \For{$u \in V$} { \label{clique:fori}
    $\bucket_{\sigma_i}.\text{append}(u)$ \;
    $j \leftarrow \max(j,\sigma_i)$
  }
  \While{$j > 0$} { \label{clique:forj}
        \For {$x \in \bucket_j$} {
            \If {$\sigma_{x}+count\geq n-1$} {
                $\sigma_{x} \leftarrow \sigma_{x} - (n-1)-1$ \; \label{clique:fire1}
                $count \leftarrow count + 1$ \; \label{clique:fire2}
                \If {$count \geq n-1$} { \label{clique:checkb1}
                    \Return $\bot$\; \label{clique:checkb2}
                }
            }
        }
        $j \leftarrow j-1$ \;
    }
    \For{$u \in V$} { \label{clique:lastfor}
        $\sigma_i \leftarrow \sigma_i + count$ \;
  }
  \Return{$\sigma$}\;
  
  \caption{\textsc{SolveClique}($n$,$\sigma$)}\label{algorithm:clique}
\end{algorithm}\DecMargin{1em}

To begin with, we first bound the total number of firing on cliques. 

\begin{lemma}[Firing Bound for Sandpile on Clique]
\label{lemma:cliquebound}
Given a sandpile instance $S(G,\sigma)$ such that $G$ is a clique on $n$ vertices. The instance will terminate if and only if the total number of firings is no greater than $n-2$.
\end{lemma}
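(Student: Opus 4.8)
The plan is to reduce both directions of the equivalence to the Least Action Principle (\Cref{theorem:inequality}) together with an invariant that is special to complete graphs. Interpret ``the total number of firings is no greater than $n-2$'' as ``there is a finite sequence of firings, of total length at most $n-2$, taking $\sigma$ to a terminal configuration''. The backward implication is then immediate from the definition: the existence of any finite stabilizing sequence makes $S$ a terminal instance. So the whole content lies in the forward implication, and I would also observe that when $S$ is \emph{not} terminal there is no finite stabilizing sequence at all, so the predicate fails there, which together with the forward direction yields the stated ``iff''.

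For the forward direction, assume $S$ terminates, so that by \Cref{theorem:orders} the firing vector $\bc$ is well defined; write $C \defeq \sum_{v\in V(G)}\bc(v)$ for the total number of firings. I would first record the clique arithmetic: every vertex has degree $n-1$ and each firing of any vertex delivers one chip to each of the other $n-1$ vertices, so after all firings vertex $v$ carries exactly $\sigma_v + C - n\,\bc(v)$ chips, and the $v$-th inequality of the system in \Cref{theorem:inequality}, specialized to a clique, reads $C - n\,\bc(v) + \sigma_v < n-1$ for every $v$. The key step --- the only place the clique structure is really used --- is the observation that the expression $C - n\,\bc(v)$ occurring in all these inequalities is invariant under the uniform shift $\bc \mapsto \bc - \mathbf{1}$, since that shift decreases $C$ by $n$ and decreases $n\,\bc(v)$ by $n$. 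Consequently, if $\bc \ge \mathbf{1}$ coordinatewise, then $\bc - \mathbf{1}$ is again a non-negative integer solution of the system lying strictly below $\bc$, contradicting the minimality guaranteed by \Cref{theorem:inequality}; hence some vertex $v_0$ satisfies $\bc(v_0) = 0$ (a case that also covers $\bc = \mathbf{0}$ outright). Substituting $\bc(v_0) = 0$ into its own inequality gives $C + \sigma_{v_0} < n-1$, and since $\sigma_{v_0} \ge 0$ and $C$ is an integer we obtain $C \le n-2$.

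Finally I would connect this back to \Cref{algorithm:clique}: on a clique every stabilizing run reaches the terminal configuration in exactly $C \le n-2$ firings when $S$ terminates, and never gets stuck --- i.e.\ always has a full vertex to fire --- when $S$ does not; hence the algorithm may safely stop the greedy simulation and report $\bot$ the moment its firing counter reaches $n-1$, and otherwise it outputs the correct terminal configuration obtained by adding the accumulated count back to every vertex.

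The main obstacle I anticipate is making the ``some vertex never fires'' step fully rigorous: it requires \Cref{theorem:inequality} in the exact form ``$\bc$ is the minimum, under the coordinatewise partial order, over \emph{all} non-negative integer feasible solutions'', plus a careful check that the shifted vector $\bc - \mathbf{1}$ is genuinely feasible (its non-negativity under the standing assumption $\bc \ge \mathbf{1}$, together with the invariance of $C - n\,\bc(v)$). Everything else is elementary counting on the complete graph; one should also note that the degenerate small cases $n \le 2$ behave consistently, since there the bound $C \le n-2$ and the termination predicate both degenerate in the same way.
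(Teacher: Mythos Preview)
Your argument is correct but takes a different route from the paper. The paper proves the forward implication by a direct combinatorial contradiction: if a stabilizing sequence has length $k\ge n-1$, then among the last $n-1$ firings some vertex $v$ is never the one fired (pigeonhole on $n$ vertices versus $n-1$ firings); since $v$ is adjacent to every other vertex, it receives one chip in each of those $n-1$ firings, so it ends with at least $n-1=\degree(v)$ chips, contradicting terminality. Your approach instead goes through the Least Action Principle (\Cref{theorem:inequality}) and the observation that the clique inequalities $C-n\,\bc(v)+\sigma_v<n-1$ are invariant under the uniform shift $\bc\mapsto\bc-\mathbf{1}$, forcing $\min_v\bc(v)=0$ by minimality and then reading off $C\le n-2$ from the inequality at that vertex. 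The paper's proof is shorter and entirely self-contained; yours is more structural and in fact isolates a general phenomenon (the shift invariance holds on \emph{any} graph, since $\sum_{w\in N(v)}1=\degree(v)$, so the minimum firing number is always zero in a terminating instance), which is a pleasant bonus even though only the clique case is needed here. Both arguments yield the same bound with the same ease; your treatment of the trivial converse direction is also slightly more explicit than the paper's.
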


\begin{proof}
Assume that the configuration will become terminal after firing vertices $u_1, u_2, \cdots, u_k$ ($k \geq n-1$). Then consider the last $n-1$ firing operations $A = \{u_{k-n+2}, u_{k-n+3}, \cdots, u_{k}\}$. There must exist a vertex $v \in V(G)$ such that $v \notin A$ since $|A| < |V|$. Note that $\sigma_{v} \geq 0$ before the $(k-n+2)$-th operation, and it will receive one additional chip in the last $n-1$ firings. This implies after the last operation, $\sigma_{v} \geq n-1 = \degree(v)$. This contradicts with our assumption that the configuration will become terminal after all the $k$ operations. Therefore, if a sandpile instance is a terminal instance, then the total number of the firing operations must be no greater than $n-2$.
\end{proof}

\begin{proof}[Proof of \Cref{theorem:clique}]
The general idea of \Cref{algorithm:clique} is to simulate for the first $n-1$ times of firings. By \Cref{lemma:cliquebound}, we know that if it terminates, we have the final configuration. We define a variable $count$ to keep track of the number of firings, initialized as $0$ (\Cref{clique:countinit}). If it exceeds $n-1$, then the instance will not terminate. We check this condition whenever we apply a firing (\Cref{clique:checka1} to \Cref{clique:checka2};\Cref{clique:checkb1} to \Cref{clique:checkb2}). 

For any vertex $u$, we have $\degree(u)=n-1$. Therefore, when firing vertex $u$, it is not efficient to add chips to each neighbor. By keeping a counter $count$ to keep track of the number of firings, we are able to express the current number of chips on vertex $u$ as $\sigma_u + count$. For any firing on vertex $u$, we first increase $count$ by $1$. Since vertex $u$ cannot profit from this firing and $n-1$ chips will be removed, we decrease $\sigma_u$ by $(n-1)+1$.

Initially, since $\sigma_u$ could be large, we keep firing every vertex $u$ (\Cref{clique:firingfirst1} to \Cref{clique:firingfirst2}) until $\sigma_u<n-1$ (\Cref{clique:firingfirstloop}). After this stage, we have $0 \leq \sigma_u<n-1$ for $u \in V$. We store each vertex $u$ into the corresponding $\bucket_{\sigma_u}$. Here we implement each $\bucket$ as a deque.

Notice that in our simulation, $\sigma_u$ will only decrease after any firing on vertex $u$. Since we maintain each vertex $u$ in the corresponding $\bucket_{\sigma_u}$, we can simulate all firings by simply iterating $\bucket$ in descending order. Assuming we are currently visiting vertex $u$ in $\bucket_j$, we have the number of chips equal to $j+count$ by definition. If $j+count \geq n-1$, then vertex $u$ can be fired once. We update $\sigma_u$ and $count$ (\Cref{clique:fire1} to \Cref{clique:fire2}) if $u$ can be fired. $\sigma_u$ will become negative after this firing. Since $count < n-1$, if $\sigma_u \leq 0$, $\sigma_u + count$ must be 
smaller than $n-1$. Thus any vertex will be fired at most once and no vertex will be missed if it can be fired. Since $count<n-1$, when $j=0$, $\sigma_u+count=0+count<n-1$, no firing will happen. Thus we terminate the enumeration.

In this way, we successfully track and simulate all firings if there are less than $n-1$ of them. By \Cref{lemma:cliquebound}, there will be at most $n-1$ firings. \Cref{clique:fori} is $O(n)$. Since $0 \leq \sigma_u < n-1$,  we only need $O(n)$ $\bucket$s in total and thus \Cref{clique:forj} is $O(n)$. Therefore, \Cref{algorithm:clique} runs in $O(n)$ times. Since each vertex only exists in one $\bucket$ at any moment, \Cref{algorithm:clique} takes $O(n)$ memory.

\end{proof}

\printbibliography

@inproceedings{RS17,
author = {Ramachandran, Akshay and Schild, Aaron},
title = {Sandpile Prediction on a Tree in near Linear Time},
year = {2017},
publisher = {Society for Industrial and Applied Mathematics},
address = {USA},
abstract = {In the sandpile model, we are given an undirected graph G and an initial list of chip counts on each vertex of G and we may fire degree(v) chips from any vertex v to its neighbors. Doing chip moves either results in a unique terminal configuration or recurs forever. On many families of graphs - including trees - the problem of computing the final configuration is P-complete [13] and simulation can take as long as Θ(n3) time. We give a O(n log5 n) time algorithm for trees that computes the terminal configuration or shows that chip firing will not terminate.},
booktitle = {Proceedings of the Twenty-Eighth Annual ACM-SIAM Symposium on Discrete Algorithms},
pages = {1115–1131},
numpages = {17},
location = {Barcelona, Spain},
series = {SODA '17}
}

@article{bak1987self,
  title={Self-organized criticality: An explanation of the 1/f noise},
  author={Bak, Per and Tang, Chao and Wiesenfeld, Kurt},
  journal={Physical review letters},
  volume={59},
  number={4},
  pages={381},
  year={1987},
  publisher={APS}
}

@article{bjorner1991chip,
  title={Chip-firing games on graphs},
  author={Bj{\"o}rner, Anders and Lov{\'a}sz, L{\'a}szl{\'o} and Shor, Peter W},
  journal={European Journal of Combinatorics},
  volume={12},
  number={4},
  pages={283--291},
  year={1991},
  publisher={Elsevier}
}

@article{goles1996sand,
  title={Sand pile as a universal computer},
  author={Goles, Eric and Margenstern, Maurice},
  journal={International Journal of Modern Physics C},
  volume={7},
  number={02},
  pages={113--122},
  year={1996},
  publisher={World Scientific}
}

@article{moore1999computational,
  title={The computational complexity of sandpiles},
  author={Moore, Cristopher and Nilsson, Martin},
  journal={Journal of statistical physics},
  volume={96},
  number={1-2},
  pages={205--224},
  year={1999},
  publisher={Springer}
}

@article{smalley1985renormalization,
  title={A renormalization group approach to the stick-slip behavior of faults},
  author={Smalley Jr, RF and Turcotte, Donald Lawson and Solla, Sara A},
  journal={Journal of Geophysical Research: Solid Earth},
  volume={90},
  number={B2},
  pages={1894--1900},
  year={1985},
  publisher={Wiley Online Library}
}

@article{linkenkaer2001long,
  title={Long-range temporal correlations and scaling behavior in human brain oscillations},
  author={Linkenkaer-Hansen, Klaus and Nikouline, Vadim V and Palva, J Matias and Ilmoniemi, Risto J},
  journal={Journal of Neuroscience},
  volume={21},
  number={4},
  pages={1370--1377},
  year={2001},
  publisher={Soc Neuroscience}
}

@article{biondo2015modeling,
  title={Modeling financial markets by self-organized criticality},
  author={Biondo, Alessio Emanuele and Pluchino, Alessandro and Rapisarda, Andrea},
  journal={Physical Review E},
  volume={92},
  number={4},
  pages={042814},
  year={2015},
  publisher={APS}
}

@article{phillips2014fractals,
  title={Fractals and self-organized criticality in proteins},
  author={Phillips, JC},
  journal={Physica A: Statistical Mechanics and Its Applications},
  volume={415},
  pages={440--448},
  year={2014},
  publisher={Elsevier}
}

@article{ramos2009avalanche,
  title={Avalanche prediction in a self-organized pile of beads},
  author={Ramos, O and Altshuler, Ernesto and Maloy, KJ},
  journal={Physical review letters},
  volume={102},
  number={7},
  pages={078701},
  year={2009},
  publisher={APS}
}

@article{scheinkman1994self,
  title={Self-organized criticality and economic fluctuations},
  author={Scheinkman, Jose A and Woodford, Michael},
  journal={The American Economic Review},
  volume={84},
  number={2},
  pages={417--421},
  year={1994},
  publisher={JSTOR}
}

@article{dmitriev2021identification,
  title={Identification of self-organized critical state on twitter based on the retweets’ time series analysis},
  author={Dmitriev, Andrey and Dmitriev, Victor},
  journal={Complexity},
  volume={2021},
  pages={1--12},
  year={2021},
  publisher={Hindawi Limited}
}

@article{kron2009society,
  title={Society as a self-organized critical system},
  author={Kron, Thomas and Grund, Thomas},
  journal={Cybernetics \& Human Knowing},
  volume={16},
  number={1-2},
  pages={65--82},
  year={2009},
  publisher={Imprint Academic}
}

@article{smyth2019self,
  title={Self-organized criticality in geophysical turbulence},
  author={Smyth, WD and Nash, JD and Moum, JN},
  journal={Scientific reports},
  volume={9},
  number={1},
  pages={3747},
  year={2019},
  publisher={Nature Publishing Group UK London}
}

@article{chialvo2004critical,
  title={Critical brain networks},
  author={Chialvo, Dante R},
  journal={Physica A: Statistical Mechanics and its Applications},
  volume={340},
  number={4},
  pages={756--765},
  year={2004},
  publisher={Elsevier}
}

@article{beggs2003neuronal,
  title={Neuronal avalanches in neocortical circuits},
  author={Beggs, John M and Plenz, Dietmar},
  journal={Journal of neuroscience},
  volume={23},
  number={35},
  pages={11167--11177},
  year={2003},
  publisher={Soc Neuroscience}
}

@book{aschwanden2011self,
  title={Self-organized criticality in astrophysics: The statistics of nonlinear processes in the universe},
  author={Aschwanden, Markus},
  year={2011},
  publisher={Springer Science \& Business Media}
}

@book{bak2013nature,
  title={How nature works: the science of self-organized criticality},
  author={Bak, Per},
  year={2013},
  publisher={Springer Science \& Business Media}
}

@article{dhar2006theoretical,
  title={Theoretical studies of self-organized criticality},
  author={Dhar, Deepak},
  journal={Physica A: Statistical Mechanics and its Applications},
  volume={369},
  number={1},
  pages={29--70},
  year={2006},
  publisher={Elsevier}
}

@article{saba2014self,
  title={Self-organized critical phenomenon as a q-exponential decay—Avalanche epidemiology of dengue},
  author={Saba, H and Miranda, JGV and Moret, MA},
  journal={Physica A: Statistical Mechanics and its Applications},
  volume={413},
  pages={205--211},
  year={2014},
  publisher={Elsevier}
}

@article{dhar1990self,
  title={Self-organized critical state of sandpile automaton models},
  author={Dhar, Deepak},
  journal={Physical Review Letters},
  volume={64},
  number={14},
  pages={1613},
  year={1990},
  publisher={APS}
}

@inproceedings{babai2007sandpile,
  title={Sandpile transience on the grid is polynomially bounded},
  author={Babai, L{\'a}szl{\'o} and Gorodezky, Igor},
  booktitle={Proceedings of the eighteenth annual ACM-SIAM symposium on Discrete algorithms},
  pages={627--636},
  year={2007}
}

@inproceedings{durfee2018nearly,
  title={Nearly tight bounds for sandpile transience on the grid},
  author={Durfee, David and Fahrbach, Matthew and Gao, Yu and Xiao, Tao},
  booktitle={Proceedings of the Twenty-Ninth Annual ACM-SIAM Symposium on Discrete Algorithms},
  pages={605--624},
  year={2018},
  organization={SIAM}
}

@article{holroyd2008chip,
  title={Chip-firing and rotor-routing on directed graphs},
  author={Holroyd, Alexander E and Levine, Lionel and M{\'e}sz{\'a}ros, Karola and Peres, Yuyal and Propp, James and Wilson, David B},
  journal={In and out of equilibrium 2},
  pages={331--364},
  year={2008},
  publisher={Springer}
}

@article{cole2000dynamic,
  title={On the dynamic finger conjecture for splay trees. Part I: Splay sorting log n-block sequences},
  author={Cole, Richard and Mishra, Bud and Schmidt, Jeanette and Siegel, Alan},
  journal={SIAM Journal on Computing},
  volume={30},
  number={1},
  pages={1--43},
  year={2000},
  publisher={SIAM}
}

@article{cole2000dynamic2,
  title={On the dynamic finger conjecture for splay trees. Part II: The proof},
  author={Cole, Richard},
  journal={SIAM Journal on Computing},
  volume={30},
  number={1},
  pages={44--85},
  year={2000},
  publisher={SIAM}
}

@article{sleator1985self,
  title={Self-adjusting binary search trees},
  author={Sleator, Daniel Dominic and Tarjan, Robert Endre},
  journal={Journal of the ACM (JACM)},
  volume={32},
  number={3},
  pages={652--686},
  year={1985},
  publisher={ACM New York, NY, USA}
}

@incollection{brodal2018finger,
  title={Finger search trees},
  author={Brodal, Gerth Stolting},
  booktitle={Handbook of Data Structures and Applications},
  pages={171--178},
  year={2018},
  publisher={Chapman and Hall/CRC}
}

@article{pettie2007splay,
  title={Splay trees, Davenport-Schinzel sequences, and the deque conjecture},
  author={Pettie, Seth},
  journal={arXiv preprint arXiv:0707.2160},
  year={2007}
}

@article{abrams2023connections,
  title={Connections between abelian sandpile models and the K-theory of weighted Leavitt path algebras},
  author={Abrams, Gene and Hazrat, Roozbeh},
  journal={European Journal of Mathematics},
  volume={9},
  number={2},
  pages={21},
  year={2023},
  publisher={Springer}
}

@article{dukes2019abelian,
  title={The Abelian sandpile model on Ferrers graphs—a classification of recurrent configurations},
  author={Dukes, Mark and Selig, Thomas and Smith, Jason P and Steingrimsson, Einar},
  journal={European Journal of Combinatorics},
  volume={81},
  pages={221--241},
  year={2019},
  publisher={Elsevier}
}

@article{montoya2022predictability,
  title={On the predictability of the abelian sandpile model},
  author={Montoya, J Andres and Mejia, Carolina},
  journal={Natural Computing},
  volume={21},
  number={1},
  pages={69--79},
  year={2022},
  publisher={Springer}
}

@article{kim2020stochastic,
  title={A stochastic variant of the abelian sandpile model},
  author={Kim, Seungki and Wang, Yuntao},
  journal={Journal of Statistical Physics},
  volume={178},
  number={3},
  pages={711--724},
  year={2020},
  publisher={Springer}
}

@article{dukes2021sandpile,
  title={The sandpile model on the complete split graph, Motzkin words, and tiered parking functions},
  author={Dukes, Mark},
  journal={Journal of Combinatorial Theory, Series A},
  volume={180},
  pages={105418},
  year={2021},
  publisher={Elsevier}
}

@article{martucci2021hospital,
  title={Hospital management in the COVID-19 emergency: Abelian Sandpile paradigm and beyond},
  author={Martucci, Roberta and Mascia, Corrado and Simeoni, Chiara and Tassi, Filippo},
  journal={arXiv preprint arXiv:2102.11974},
  year={2021}
}

@article{montoya2019abelian,
  title={The abelian sandpile model, non\# parsimonious simulations and unpredictability},
  author={Montoya, J Andres and Mejia, Carolina},
  year={2019}
}

@article{eckmann2023abelian,
  title={Abelian sandpiles on cylinders},
  author={Eckmann, Jean-Pierre and Nagnibeda, Tatiana and Perriard, Aymeric},
  journal={Journal of Physics A: Mathematical and Theoretical},
  volume={56},
  number={17},
  pages={175001},
  year={2023},
  publisher={IOP Publishing}
}

@article{chen2019sandpile,
  title={The sandpile group of a polygon flower},
  author={Chen, Haiyan and Mohar, Bojan},
  journal={Discrete Applied Mathematics},
  volume={270},
  pages={68--82},
  year={2019},
  publisher={Elsevier}
}

@article{meszaros2020distribution,
  title={The distribution of sandpile groups of random regular graphs},
  author={Meszaros, Andras},
  journal={Transactions of the American Mathematical Society},
  volume={373},
  number={9},
  pages={6529--6594},
  year={2020}
}

@article{zhou2021sandpile,
  title={The sandpile group of a family of nearly complete graphs},
  author={Zhou, Yufang and Chen, Haiyan},
  journal={Bulletin of the Malaysian Mathematical Sciences Society},
  volume={44},
  number={2},
  pages={625--637},
  year={2021},
  publisher={Springer}
}

@article{alfaro2021structure,
  title={The structure of sandpile groups of outerplanar graphs},
  author={Alfaro, Carlos A and Villagran, Ralihe R},
  journal={Applied Mathematics and Computation},
  volume={395},
  pages={125861},
  year={2021},
  publisher={Elsevier}
}

@article{montoya2009complexity,
  title={On the complexity of sandpile prediction problems},
  author={Montoya, J Andres and Mejia, Carolina},
  journal={Electronic Notes in Theoretical Computer Science},
  volume={252},
  pages={229--245},
  year={2009},
  publisher={Elsevier}
}

@book{klivans2018mathematics,
  title={The mathematics of chip-firing},
  author={Klivans, Caroline J},
  year={2018},
  publisher={CRC Press}
}

@article{tardos1988polynomial,
  title={Polynomial bound for a chip firing game on graphs},
  author={Tardos, G{\'a}bor},
  journal={SIAM journal on discrete mathematics},
  volume={1},
  number={3},
  pages={397--398},
  year={1988},
  publisher={SIAM}
}

@article{eriksson1991no,
  title={No polynomial bound for the chip firing game on directed graphs},
  author={Eriksson, Kimmo},
  journal={Proceedings of the American Mathematical Society},
  volume={112},
  number={4},
  pages={1203--1205},
  year={1991}
}

@article{biggs1997algebraic,
  title={Algebraic potential theory on graphs},
  author={Biggs, Norman},
  journal={Bulletin of the London Mathematical Society},
  volume={29},
  number={6},
  pages={641--682},
  year={1997},
  publisher={Cambridge University Press}
}

@inproceedings{rabani1998local,
  title={Local divergence of Markov chains and the analysis of iterative load-balancing schemes},
  author={Rabani, Yuval and Sinclair, Alistair and Wanka, Rolf},
  booktitle={Proceedings 39th Annual Symposium on Foundations of Computer Science (Cat. No. 98CB36280)},
  pages={694--703},
  year={1998},
  organization={IEEE}
}

@article{diaconis1991growth,
  title={A growth model, a game, an algebra, Lagrange inversion, and characteristic classes},
  author={Diaconis, Persi and Fulton, William},
  journal={Rend. Sem. Mat. Univ. Pol. Torino},
  volume={49},
  number={1},
  pages={95--119},
  year={1991}
}

@article{lawler1992internal,
  title={Internal diffusion limited aggregation},
  author={Lawler, Gregory F and Bramson, Maury and Griffeath, David},
  journal={The Annals of Probability},
  pages={2117--2140},
  year={1992},
  publisher={JSTOR}
}

@misc{choure2012random,
      title={Random Walks, Electric Networks and The Transience Class problem of Sandpiles}, 
      author={Ayush Choure and Sundar Vishwanathan},
      year={2012},
      eprint={1105.3368},
      archivePrefix={arXiv},
      primaryClass={cs.DM}
}

@article{gabow1988linear,
  title={A linear-time algorithm for finding a minimum spanning pseudoforest},
  author={Gabow, Harold N and Tarjan, Robert E},
  journal={Information Processing Letters},
  volume={27},
  number={5},
  pages={259--263},
  year={1988},
  publisher={Elsevier}
}

@article{garber2009predicting,
  title={Predicting extreme avalanches in self-organized critical sandpiles},
  author={Garber, Anja and Hallerberg, Sarah and Kantz, Holger},
  journal={Physical Review E},
  volume={80},
  number={2},
  pages={026124},
  year={2009},
  publisher={APS}
}

@misc{montoya2011computational,
  title={The Computational Complexity of The Abelian Sandpile Model},
  author={Montoya, Juan Andres and Mejia, Carolina},
  year={2011}
}

@article{lovasz1993random,
  title={Random walks on graphs},
  author={Lov{\'a}sz, L{\'a}szl{\'o}},
  journal={Combinatorics, Paul erdos is eighty},
  volume={2},
  number={1-46},
  pages={4},
  year={1993}
}

@inproceedings{chandra1989electrical,
  title={The electrical resistance of a graph captures its commute and cover times},
  author={Chandra, Ashok K and Raghavan, Prabhakar and Ruzzo, Walter L and Smolensky, Roman},
  booktitle={Proceedings of the twenty-first annual ACM symposium on Theory of computing},
  pages={574--586},
  year={1989}
}

@article{rubinfeld1990cover,
  title={The cover time of a regular expander is O (n log n)},
  author={Rubinfeld, Ronitt},
  journal={Information processing letters},
  volume={35},
  number={1},
  pages={49--51},
  year={1990},
  publisher={Elsevier}
}

@article{xu2013explicit,
  title={An explicit formula of hitting times for random walks on graphs},
  author={Xu, Hao and Yau, Shing-Tung},
  journal={arXiv preprint arXiv:1312.0065},
  year={2013}
}

@misc{sauerwaldsun2011spectral,
  title={Spectral Graph Theory and Applications},
  author={Sauerwald, Thomas and Sun, He},
  howpublished={\url{https://resources.mpi-inf.mpg.de/departments/d1/teaching/ws11/SGT/Lecture7.pdf}},
  year={2011}
}

@article{sipser1996expander,
  title={Expander codes},
  author={Sipser, Michael and Spielman, Daniel A},
  journal={IEEE transactions on Information Theory},
  volume={42},
  number={6},
  pages={1710--1722},
  year={1996},
  publisher={IEEE}
}

@inproceedings{song2002expander,
  title={Expander graphs for digital stream authentication and robust overlay networks},
  author={Song, Dawn and Zuckerman, David and Tygar, JD},
  booktitle={Proceedings 2002 IEEE Symposium on Security and Privacy},
  pages={258--270},
  year={2002},
  organization={IEEE}
}

@inproceedings{ajtai1987deterministic,
  title={Deterministic simulation in LOGSPACE},
  author={Ajtai, Mikl{\'o}s and Koml{\'o}s, J{\'a}nos and Szemer{\'e}di, Endre},
  booktitle={Proceedings of the nineteenth annual ACM symposium on Theory of computing},
  pages={132--140},
  year={1987}
}

@book{kowalski2019introduction,
  title={An introduction to expander graphs},
  author={Kowalski, Emmanuel},
  year={2019},
  publisher={Soci{\'e}t{\'e} math{\'e}matique de France Paris}
}
\appendix
\section{Uniqueness Analysis on Sandpile with Sinks}
\label{sec:sinkmodel}
\begin{definition}[Auxiliary Graph and Auxiliary Instance]
\label{def:auxiliary-graph}
Let $S(G, \sigma, M)$ be a sandpile instance with a non-empty set of sinks $M$, and $A = |V(G)| + \lvert\lvert \sigma \rvert\rvert_1$. For any sink vertex $v_i \in M$, we will create $A$ auxiliary nodes $v'_{i,1}, v'_{i,2}, \cdots, v'_{i,A}$ and add an edge between $(v_i, v'_i)$. The constructed graph $G'(V', E')$ is called the Auxiliary Graph of $S(G, \sigma, M)$.

Define the configuration $\sigma'$ as 

\[
\sigma'_v = \begin{cases}\sigma_v & v \in V\\ 0 & \text{otherwise}\end{cases}
\]

The instance $S'(G', \sigma')$ is called the auxiliary instance of $S(G, \sigma, M)$.

\end{definition}

In short, the auxiliary graph can be regarded as a graph obtained by attaching $A$ vertices without any chips to each sink in the graph, which makes it impossible to fire.

For any sandpile instance $S(G, \sigma, M)$ with sinks $M \ne \varnothing$. The following conditions were held for the auxiliary graphs.

\begin{itemize}
    \item For all $u \in V$, we have $u \in V'$. That is, $V \subseteq V'$.
    \item For all $(u, v) \in E$, we have $(u, v) \in E'$. That is, $E \subseteq E'$.
    \item For all $(u, v) \in E'$ and $u, v \in V$, we have $(u, v) \in E$.
    \item For all $u \in V \setminus M$, $\degree_{G}(u) = \degree_{G'}(u)$.
\end{itemize}

These conditions imply that for any $v \in V$, firing vertex $v$ in both instances will have the same behavior. Formally, every firing operation on $G$ corresponds to a firing operation on $G'$ (and vice versa), and the following equation always holds if we perform the firing operation on both graphs simultaneously.

Furthermore, since the firing operation won't increase the number of chips in the whole graph. So at any time, for the sink vertex $v \in M$, we have $\sigma_v \leq A < \degree_{G'}(v)$. Thus, for any $v \in M$, the vertex $v$ will never be full in the instance $S'$.

More precisely:

\begin{enumerate}
    \item For any $u \in V$, if it is full in the instance $S$, then it is also full in the instance $S'$.
    \item For any $u \in V'$, if it is full in the instance $S'$, then we must have $u \in V$, and $u$ is also full in the instance $S$.
    \item For any full vertex $u \in V$, the equation \Cref{formula:thesigmaintheoriginalgraphequalstothesigmainthenewgraph} still holds after firing vertex $u$ in both instances $S$ and $S'$.
\end{enumerate}

This shows that firing operation in the auxiliary graph is equivalent to performing operations in the original graph. It tells us the properties of the original sandpile instance can be transformed into the sandpile instance with sinks. \Cref{lemma:unique-terminal-configuration} is the most important one, and it can be generalized as \Cref{lemma:unique-terminal-configuration-with-sinks}.

\begin{lemma}[Unique Terminal Configuration with Sinks]
\label{lemma:unique-terminal-configuration-with-sinks}
Let $S(G, \sigma, M)$ be a sandpile instance with a non-empty set of sinks $M$. Let $T \subseteq V(G)\setminus M$ be any subset of vertices excluding sinks. Suppose the process that keeps firing all the full vertices in $T$ until $\sigma$ is local terminal in $T$. Then: 
\begin{enumerate}
    \item Any firing order will reach the same terminal configuration.
    \item For each vertex $u$, Any firing order will vertex $u$ the same number of times.
\end{enumerate}
\end{lemma}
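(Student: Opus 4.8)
The plan is to reduce the statement to its sink-free counterpart, \Cref{lemma:unique-terminal-configuration}, by passing to the auxiliary instance $S'(G',\sigma')$ from \Cref{def:auxiliary-graph}. First I would record the three structural facts already isolated in the discussion above: (i) every non-sink vertex has the same neighborhood and the same degree in $G$ and in $G'$, so firing such a vertex has identical effect on the restriction of the configuration to $V(G)$; (ii) by conservation of chips the total number of chips in $S'$ is $\lvert\lvert\sigma\rvert\rvert_1 \le A < \degree_{G'}(v_i)$ for every sink $v_i \in M$, hence no vertex of $M$ is ever full in $S'$ and sinks never fire there; and (iii) for $v \in T \subseteq V(G)\setminus M$ the predicate ``$\sigma_v < \degree(v)$'' is the same in $G$ and in $G'$, so ``local terminal in $T$'' means the same thing in $S$ and in $S'$.

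Combining (i)--(iii), the process ``keep firing full vertices of $T$ until $\sigma$ is local terminal in $T$'' is, move for move, the same process in $S$ and in $S'$: the set of firings available at any reachable configuration coincides (they all lie in $T$, which contains no sink), and the stopping condition coincides. Consequently a firing order for $S$ and the corresponding firing order for $S'$ produce configurations that agree on $V(G)$ at every step, and each vertex is fired the same number of times in both.

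Now I would invoke \Cref{lemma:unique-terminal-configuration} with the ambient graph $G'$, which carries no designated sinks, and the vertex subset $T \subseteq V(G) \subseteq V(G')$: it gives that every firing order of the $T$-process in $S'$ reaches the same local terminal configuration and fires each vertex the same number of times. Transporting this back along the correspondence of the previous paragraph yields both conclusions for $S$: the terminal configuration restricted to the non-sink vertices is order-independent (the chips that pile up on sinks are immaterial in the sink model), and every vertex is fired the same number of times regardless of the order. If one additionally wants the $T$-process for $S$ to terminate in the first place, rather than merely conditional uniqueness, it follows from the same reduction together with the termination half of \Cref{lemma:unique-terminal-configuration}, or from \Cref{lemma:sandpilesinkterminate} applied to each connected component.

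I expect the only delicate point to be the bookkeeping in the bijection of firing sequences --- in particular checking that a run of the $T$-process in $S$ never calls for a firing that the $S'$-run cannot mirror, and conversely --- but this is exactly what facts (i)--(iii) guarantee, so no new idea beyond the auxiliary-graph construction is required; the genuine confluence content is already supplied by \Cref{lemma:unique-terminal-configuration}.
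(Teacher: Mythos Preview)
Your proposal is correct and follows essentially the same approach as the paper: reduce to the sink-free setting via the auxiliary instance $S'(G',\sigma')$ and invoke \Cref{lemma:unique-terminal-configuration} on $G'$ with the subset $T$. Your write-up is in fact more careful than the paper's in spelling out the three structural facts that make the bijection of firing sequences work, but the underlying argument is the same.
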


\begin{proof}
Let $G'$ be the auxiliary graph and $S'(G', \sigma)$ be the auxiliary instance. Note that the following condition holds for the instance $S'$:

\begin{align}
\sigma_u = \sigma'_u \forall u \in V(G)
\label{formula:thesigmaintheoriginalgraphequalstothesigmainthenewgraph}
\end{align}

For any subset of non-sink vertices in the original graph $T \subseteq V\setminus M$, consider the firing process to make $S$ local terminal in $T$. Apply the same firing operation in $S'$ will obtain the same results as in $S$. By \Cref{lemma:unique-terminal-configuration}, all the firing orders will obtain the same configuration. This proves all the firing orders will make $\sigma$ become the same final configuration.
\end{proof}

The proof showed \Cref{lemma:unique-terminal-configuration-with-sinks} even further, that the unique configuration obtained by making $S$ local terminal in $T$ is exactly the same as the one in $S'$.

\begin{corollary}[Corollary of \Cref{lemma:unique-terminal-configuration-with-sinks}]
    
Let $S(G, \sigma, M)$ be a sandpile instance with a non-empty set of sinks $M$, and $S'(G', \sigma', M)$ be the auxiliary instance of $S$. Let $T \subseteq V(G)\setminus M$ be any subset of vertices excluding sinks. Consider the following procedure:

\begin{itemize}
\item Keep firing all the full vertices in $T$ in the instance $S(G, \sigma, M)$ until $S$ is local terminal in $T$.
\item Keep firing all the full vertices in $T$ in the instance $S'(G', \sigma')$ until $S'$ is local terminal in $T$.
\end{itemize}

Then $\sigma_v = \sigma'_v$ for all $v \in V(G) \setminus M$ after the procedure.
\end{corollary}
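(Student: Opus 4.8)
The plan is to derive this corollary directly from the machinery already set up for \Cref{lemma:unique-terminal-configuration-with-sinks}: namely the firing correspondence between a sink instance $S(G,\sigma,M)$ and its auxiliary instance $S'(G',\sigma',M)$ (every firing in $G$ mirrors a firing in $G'$ and vice versa, the two configurations stay synchronized on $V(G)\setminus M$, and no sink or auxiliary vertex is ever full in $S'$), together with \Cref{lemma:unique-terminal-configuration} applied to $S'$ and the subset $T$. The only ingredient beyond the lemma's proof is to carry the synchronization \emph{explicitly} through the two processes rather than merely invoking it for the existence of a common terminal configuration.

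Concretely, I would proceed in four steps. (i) First observe that, as long as the two configurations agree on $V(G)\setminus M$, ``local terminal in $T$'' means the same thing for $S$ and for $S'$: every $u \in T \subseteq V(G)\setminus M$ satisfies $\degree_G(u)=\degree_{G'}(u)$ (one of the four structural properties of the auxiliary graph in \Cref{def:auxiliary-graph}), so $u$ is full in $S$ iff it is full in $S'$; the auxiliary and sink vertices lie outside $T$ and are never full in $S'$, hence do not affect the stopping condition. (ii) Let $\pi=(u_1,\dots,u_k)$ be the firing sequence that the first bullet performs on $S$ (so every $u_i\in T$). I claim $\pi$ is also a legal firing sequence on $S'$, and that after firing any prefix the two configurations coincide on $V(G)\setminus M$; this is an induction on the prefix length using (i) (so $u_j$ is full in $S'$ whenever it is full in $S$) and the fact that $\neighbor_{G'}(u_j)=\neighbor_G(u_j)$ for $u_j\notin M$, so firing $u_j$ adds exactly $1$ to each common non-sink neighbor in both instances while the chips delivered to sinks are discarded in $S$ and retained harmlessly (and never re-fired) in $S'$. (iii) When $\pi$ halts, $S$ is local terminal in $T$, hence by (i) the configuration of $S'$ after running $\pi$ on it is also local terminal in $T$, and it agrees with the output of the first bullet on $V(G)\setminus M$. (iv) Finally, apply \Cref{lemma:unique-terminal-configuration} to the instance $S'$ with subset $T$: any order of firing the full vertices of $T$ in $S'$ reaches the same local terminal configuration, so the ``native'' process of the second bullet produces exactly the configuration obtained by running $\pi$ on $S'$. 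Chaining (iii) and (iv) yields $\sigma_v=\sigma'_v$ for all $v\in V(G)\setminus M$ after the procedure.

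I expect the main obstacle to be the bookkeeping in step (ii): verifying that a single firing of $u_j\in T$ induces the identical change on $V(G)\setminus M$ in both instances. This is where all four structural properties of \Cref{def:auxiliary-graph} are needed --- no edges inside $V(G)$ are created, no non-sink degree changes, and a non-sink vertex has no auxiliary neighbors --- together with the convention that chips landing on a sink of $S$ are treated as removed (cf.\ \Cref{formula:thesigmaintheoriginalgraphequalstothesigmainthenewgraph}). One should also make explicit that sinks never become full in $S'$, which is precisely where $A=|V(G)|+\lvert\lvert\sigma\rvert\rvert_1$ being at least the total chip count is invoked; otherwise a sink could fire in $S'$ and break the parallelism. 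Once that is pinned down, the rest is a routine induction.
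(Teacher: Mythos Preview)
Your proposal is correct and follows essentially the same approach as the paper: the paper does not give a separate proof of the corollary but simply remarks that the proof of \Cref{lemma:unique-terminal-configuration-with-sinks} already shows it, since the firing sequence on $S$ can be replayed verbatim on $S'$ (keeping the configurations synchronized on $V(G)\setminus M$) and then \Cref{lemma:unique-terminal-configuration} forces any other firing order on $S'$ to reach the same local terminal configuration. Your steps (i)--(iv) are exactly this argument spelled out in more detail.
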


Finally, we will prove that the number of firings that happened on each vertex is bounded by $O(|M|^4 \cdot (\lvert\lvert \sigma \rvert\rvert_1 + n)^4 )$.

\begin{lemma}[Upper Bound of the Firing Number]
\label{lemma:upper-bound-of-firing-number-with-sinks}
For a sandpile instance $S(G, \sigma, M)$ with connected graph $G$ and the non-empty set of sinks $M$. The firing number of each vertex is bounded by $O(|M|^4 \cdot (\lvert\lvert \sigma \rvert\rvert_1 + n)^4 )$.
\end{lemma}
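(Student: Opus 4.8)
The plan is to reduce to the sink-free case via the auxiliary instance of \Cref{def:auxiliary-graph} and then invoke the $\Theta(n^4)$ firing bound of Tardos~\cite{tardos1988polynomial}.

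First I would pass from $S(G,\sigma,M)$ to its auxiliary instance $S'(G',\sigma')$. Recall that $G'$ attaches $A = n + \norm{\sigma}_1$ fresh degree-one vertices to each sink of $G$, so $|V(G')| = n + |M|\cdot A = O\!\left(|M|(\norm{\sigma}_1 + n)\right)$, and $G'$ is connected since $G$ is. As observed in the discussion following \Cref{def:auxiliary-graph}, firing operations on the non-sink vertices of $G$ are in exact bijection with firing operations in $G'$: every sink $v_i$ of $G$ has $\degree_{G'}(v_i) = \degree_G(v_i) + A > \norm{\sigma}_1 \geq \sigma'_{v_i}$ at all times (chips never increase under firing), so $v_i$ never becomes full and never fires in $S'$; consequently the auxiliary vertices never receive any chip and never fire either. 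Since $S$ is a terminal instance by \Cref{lemma:sandpilesinkterminate}, the corresponding firing sequence terminates in $S'$ as well, so $S'$ is a terminating sink-free sandpile instance on $|V(G')|$ vertices.

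Next I would apply Tardos~\cite{tardos1988polynomial}: the total number of firings in any terminating sink-free sandpile on $N$ vertices is $O(N^4)$. Taking $N = |V(G')| = O\!\left(|M|(\norm{\sigma}_1 + n)\right)$ shows that the total number of firings in $S'$ — and in particular the number of times \emph{any} individual vertex fires in $S'$ — is $O\!\left(|M|^4(\norm{\sigma}_1 + n)^4\right)$. Finally, because the bijection of the previous step carries the stabilizing firing sequence of $S$ to (a sequence equal to) the stabilizing firing sequence of $S'$, the number of times any vertex $v \in V(G)\setminus M$ fires in $S$ equals the number of times it fires in $S'$, which is $O\!\left(|M|^4(\norm{\sigma}_1 + n)^4\right)$ as claimed. (Sinks fire zero times by definition, so the bound trivially covers $v \in M$ as well.)

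The only point requiring care — and it is essentially spelled out already in the paragraph preceding this lemma — is the firing bijection together with the fact that sinks and auxiliary vertices remain non-full throughout the run of $S'$; once those facts are in hand, the conclusion is a direct substitution into Tardos's bound, and no genuinely hard step is involved.
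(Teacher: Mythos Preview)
Your proposal is correct and follows essentially the same route as the paper: pass to the auxiliary instance $S'(G',\sigma')$ of \Cref{def:auxiliary-graph}, use the firing bijection established there to identify firing counts in $S$ with those in $S'$, and then apply Tardos's $O(|V(G')|^4)$ bound with $|V(G')| = n + |M|\cdot(\norm{\sigma}_1+n)$. The paper's proof is terser but the argument is the same.
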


\begin{proof}
Consider the auxiliary instance $S'(G', \sigma')$. By \Cref{lemma:unique-terminal-configuration}, each firing on $S$ corresponds to a firing on $S'$, so for all $v \in V(G) \setminus M$, the number of firings that happened on the vertex $v$ in $S$ equals to the number of firings that happened on the vertex $v$ in $S'$.

On the other hand, in the graph $G'$, we have $|V(G')| = n + |M| \cdot (\lvert\lvert \sigma\rvert\rvert_1 + n)$. By \cite{tardos1988polynomial}, the firing number of each vertex will be no more than $|V(G')|^4$, which is exactly $O(|M|^4 \cdot (\lvert\lvert \sigma \rvert\rvert_1 + n)^4 )$
\end{proof}

\section{Sandpile on Trees with Sinks}
\label{sec:sinktree}
We will discuss how to adapt our tree algorithm to solve the sandpile prediction problem with at most three sinks on a tree.

\begin{restatable}[Sandpile Prediction on Tree with Three Sinks]{theorem}{treesink}
\label{theorem:treesink}
Given a sandpile instance $S(G,\sigma,M)$ such that $G$ is a tree and the sink vertices set $M$ satisfying $|M| \leq 3$, there is an algorithm that can compute the terminal configuration of $S$ in $O(n \log n + \log \lvert\lvert \sigma \rvert\rvert_1 \cdot \log n)$ time, with $O(n)$ memory. $\sigma$ denotes the initial configuration.
\end{restatable}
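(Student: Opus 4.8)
The plan is to bootstrap off the machinery of \Cref{sec:overview} and \Cref{sec:ds} by exploiting the single structural fact that distinguishes a sink from an ordinary vertex: a sink never fires. Dispose first of the trivial cases. If $|M|=0$ this is \Cref{problem:prediction}, handled by \Cref{theorem:main}. If $|M|=1$, root the tree at the unique sink $s$ and run a variant of \Cref{algorithm:main} in which the recurrent test is skipped (\Cref{lemma:sandpilesinkterminate} guarantees termination) and the root is never fired, i.e.\ we hard-wire $\bc^{\down}(s)=\bc(s)=0$ in place of the call to $\cnt(s,\cdot)$; every lemma in \Cref{sec:overview} used for non-root vertices needs only $|\son(u)|<\degree(u)$, which holds once the tree is rooted, so correctness and the $O(n\log n)$ bound carry over.

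For $|M|\in\{2,3\}$, root at one sink $r\in M$ and call the remaining $\le 2$ sinks \emph{secondary}. The key observations about a secondary sink $b$ are: (i) $\delta(b,x)=0$ for all $x$, since chips pushed onto $b$ are discarded and $\parent(b)\notin\subtree(b)$; and (ii) no chip ever crosses $b$, so the instance induced on $\subtree(b)$ (with $b$ as a root-sink) and the instance on $G$ with $\subtree(b)\setminus\{b\}$ deleted (with $b$ now a degree-$1$ leaf-sink) evolve independently, and by \Cref{lemma:unique-terminal-configuration-with-sinks} their firing numbers, together with $\bc(b)=0$, reassemble the global firing numbers and hence (via \Cref{formula:recover-from-firing-number}) the terminal configuration. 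Peeling off the subtree of each secondary sink that is not already a leaf, in order of decreasing depth, strictly decreases the total size of ``non-leaf secondary'' subtrees, so after $O(1)$ such peelings we have reduced to $O(1)$ disjoint sub-instances, each a tree carrying one root-sink and at most two \emph{leaf}-sinks, of total size $O(n)$.

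It remains to run upward/downward on a tree with a root-sink and $\le 2$ leaf-sinks. A leaf-sink $b$ must not be treated as an ordinary leaf (there $\delta(b,k)=k$, here $\delta(b,k)\equiv 0$): in $\upward$ we drop the propagation of $\sigma'_b$ to $\parent(b)$, and $D_b$ must encode the identically-zero $\delta$, i.e.\ the set of \emph{all} positive integers as key pairs. We store this degenerate $D_b$ as a single compressed-run node carrying a length parameter truncated at the firing-number bound $K=O(\poly(n,\lvert\lvert\sigma\rvert\rvert_1))$ from \Cref{lemma:upper-bound-of-firing-number-with-sinks} (no vertex fires more than $K$ times, so truncation is harmless), and extend $\mergeupward$, $\cnt$, $\deltaquery$, $\deltasum$, $\updatedsupward$, $\revertds$, $\splitds$ to treat such a node like an ordinary node whose rank-dependent $\inctime$ remains affine; the point at which $\cnt$ or $\deltaquery$ enters such a run is located by one binary search over $[0,K]$, which over the $O(1)$ leaf-sinks contributes the $O(\log\lvert\lvert\sigma\rvert\rvert_1\cdot\log n)$ term. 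Everything else is unchanged, so \Cref{theorem:ds} still yields $O(n\log n)$ for the remaining operations and $O(n)$ memory; adding the $O(1)$ recursive one-sink calls gives the claimed time and space bounds.

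The main obstacle is the last step: checking that the $O(1)$ compressed-run nodes do not corrupt any invariant of the splay-tree layer — that the lazy tags $\ta,\tb$, the $\timemin/\timemax$ aggregates, the key-pair characterization of \Cref{lemma:keypaircondition}, and the dynamic-finger amortized analysis all survive a node standing for a block of key pairs, and that $\revertds$/$\splitds$ correctly undo operations that may have split such a block. A cleaner route that avoids compressed nodes entirely is pure degree-accounting: never create $D_b$ for a leaf-sink $b$, exclude $b$ from the data-structure children of $v:=\parent(b)$, and note that then $\psi_v(k)=\sigma_v-\lvert\{x\in D_v:\moment_x\le k\}\rvert-k\,(\,[v\ne r]+\#\{\text{leaf-sink children of }v\}\,)$, so the leaf-sink edge behaves exactly like an extra edge to the parent and $\cnt$ needs only a constant per-vertex adjustment (keeping the whole algorithm at $O(n\log n)$); I would present this as the implementation and reserve the compressed-run viewpoint as the explanation of why the uniform treatment still costs only the stated additive overhead.
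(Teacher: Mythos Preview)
Your compressed-run representation is essentially the paper's approach: the paper encodes the infinite arithmetic progressions of key pairs arising below a sink via ``omitted'' nodes carrying $(t,d)$-tags (\Cref{sec:difference}), verifies that $\inctime$, $\Insert$, $\Delete$ and the splay invariants survive (\Cref{lemma:sinks:inctime}--\Cref{lemma:sinks:delete}), and at the one vertex where several sink-containing subtrees meet it bypasses $\mergeupward$ in favour of a direct binary search for $\bc^{\down}$ (\Cref{general:mergeandsplit}), which is exactly your $O(\log\lVert\sigma\rVert_1\cdot\log n)$ term. Your peeling decomposition coincides with the paper's component decomposition (\Cref{sec:treedecom}); the one difference is that the paper roots each component at an \emph{internal} vertex so that the (at most three) sinks lie in distinct child-subtrees (\Cref{lemma:root-each-component}), guaranteeing that no non-root vertex ever has two sink descendants---your rooting at a sink does not ensure this, so two leaf-sinks can share an ancestor strictly below the root.

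The ``cleaner'' degree-accounting route has a genuine gap. Your adjusted $\psi_v$ is correct, so $\cnt$ at the parent $v$ of a leaf-sink works. But the infinite-key-pair pathology is not local to the sink $b$: it propagates to every ancestor of $b$ through $\updatedsupward$. Concretely, take the path $r\text{--}v_1\text{--}v_2\text{--}b$ with $r,b$ sinks. Every two chips added to $v_2$ trigger one firing (one chip to $v_1$, one lost to $b$), so $\delta(v_2,k)=\lfloor(\sigma'_{v_2}+k)/2\rfloor$ and $v_2$ itself has an unbounded set of key pairs. After your adjusted $\cnt(v_2)$, the call $\updatedsupward(v_2)$ must populate $D_{v_2}$ with all of them; but with $b$ excluded, $D_{v_2}$ before the update is empty, and the unmodified $\updatedsupward$ inserts only $\degree(v_2)-1-\sigma'_{v_2}\le 1$ node (\Cref{lemma:updatea},\Cref{lemma:updateb}). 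Downstream, $\deltaquery(v_2,\bc(v_1))$ then returns the wrong count and $\bc(v_2)$ is wrong. Nor can you push the accounting up by pretending $v_2$ is itself sink-like at $v_1$: $\delta(v_2,k)\approx k/2$ is neither the $k$ of an ordinary child nor the $0$ of a sink, so $\psi_{v_1}$ acquires a half-slope that no integer per-vertex coefficient captures. Some compressed encoding of arithmetic-progression runs is therefore unavoidable along every sink-to-root path, which is precisely what your first approach (and the paper) provides; the degree-accounting alternative cannot replace it.
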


We first introduce a decomposing subroutine to convert the sink to the leaf of the tree, and we can regard the sink as a special vertex in the tree and apply the algorithm on trees without sink, thus the algorithm \Cref{algorithm:main}.

Therefore, we need to decompose the tree into several components and design an alternative way to store key pairs information on each $D_u$, and prepare different data structure realizations of the function $\cnt$ and $\deltaquery$, $\mergeupward$ and $\splitds$,$\updatedsupward$ and $\revertds$.

\subsection{Decomposing the Tree into Several Components}
\label{sec:treedecom}

We will decompose the given tree into several new trees so that each sink vertex will be the leaf of each tree. The chips on the sink vertex will be ignored, and no firing operation could happen on sink vertices. Since there's only one simple path between any pair of vertices in a tree, all the sink vertices divide the tree into several independent parts. Hence, we can divide the tree using the sink vertex, as described in \Cref{figure:decomposing-trees}.

\begin{figure}[h]
\centering
\begin{tikzpicture}

\begin{scope}[shift={(-4,0)}]
 \tikzstyle{every node}=[fill=black!30,circle,minimum size=0.6cm,inner sep=1pt]
 \tikzstyle{sink}=[fill=blue!50,circle,minimum size=0.6cm,inner sep=1pt]
  \node[sink] (1) at (0,0){};
  \node (2) at (-1.5,-1){};
  \node (3) at (1.5,-1){};
  \node (5) at (-2.5,-2){};
  \node[sink] (6) at (-1.5,-2){};
  \node (7) at (-0.5,-2){};
  \node (4) at (1,-2){};
  \node (8) at (2,-2){};
  \node (12) at (-1,-3){};
  \node (9) at (0.5,-3){};
  \node[sink] (13) at (1.5,-3){};
  \node (14) at (2.5,-3){};
  \node (15) at (-1.5,-4){};
  \node (16) at (-0.5,-4){};
  \node (17) at (2,-4){};
  \foreach \from/\to in {1/2,1/3,3/4,2/5,2/6,2/7,3/8,4/9,6/12,8/13,8/14,12/15,12/16,13/17}
    \draw (\from) -- (\to);
\end{scope}

\begin{scope}[shift={(4,0)}]
 \tikzstyle{every node}=[fill=black!30,circle,minimum size=0.6cm,inner sep=1pt]
 \tikzstyle{sink}=[fill=blue!50,circle,minimum size=0.6cm,inner sep=1pt]
  \node[sink] (1) at (0,0){};
  \node (2) at (-1.5,-1){};
  \node (3) at (1.5,-1){};
  \node (5) at (-2.5,-2){};
  \node[sink] (6) at (-1.5,-2){};
  \node (7) at (-0.5,-2){};
  \node (4) at (1,-2){};
  \node (8) at (2,-2){};
  \node (12) at (-1,-3){};
  \node (9) at (0.5,-3){};
  \node[sink] (13) at (1.5,-3){};
  \node (14) at (2.5,-3){};
  \node (15) at (-1.5,-4){};
  \node (16) at (-0.5,-4){};
  \node (17) at (2,-4){};
  \foreach \from/\to in {1/2,1/3,3/4,2/5,2/6,2/7,3/8,4/9,6/12,8/13,8/14,12/15,12/16,13/17}
    \draw (\from) -- (\to);
    
    \draw[draw=red,line width=2pt,dotted] (-1.25,-1.5)[rounded corners = 0.3cm] -- (-1.75,-1.5)[rounded corners = 0.3cm] --(-2.5,-3.5)[rounded corners = 0.3cm] --(-1.5,-4.5)[rounded corners = 0.3cm] --(0.5,-4.5)[rounded corners = 0.3cm] --cycle;
    \draw[draw=blue,line width=2pt,dotted] (-3.5,-2.5)[rounded corners = 0.3cm] --(0.5,-2.5)[rounded corners = 0.3cm] --(-1,-1)[rounded corners = 0.3cm]--(0.75,0)[rounded corners = 0.3cm]--(-0.125,0.625)[rounded corners = 0.3cm]--(-2,-1)[rounded corners = 0.3cm]  --cycle;
    \draw[draw=purple,line width=2pt,dotted] (1.75,-4.75)[rounded corners = 0.3cm] --(2.5,-4.25)[rounded corners = 0.3cm] --(1.75,-2.25)[rounded corners = 0.3cm] --(1,-2.75)[rounded corners = 0.3cm] --cycle;
    \draw[draw=green,line width=2pt,dotted] (0,-3.5)[rounded corners = 0.3cm] --(3,-3.5)[rounded corners = 0.3cm] --(2,-0.875)[rounded corners = 0.3cm] --(0.125,0.75)[rounded corners = 0.3cm]--(-0.625,-0.125)[rounded corners = 0.3cm] --(1,-1)[rounded corners = 0.3cm] --cycle;
\end{scope}

\end{tikzpicture}
\caption{Decompose a tree with sink vertices into multiple components, where each component is a tree in which all the sink vertices have a degree of exactly $1$.}
\label{figure:decomposing-trees}
\end{figure}
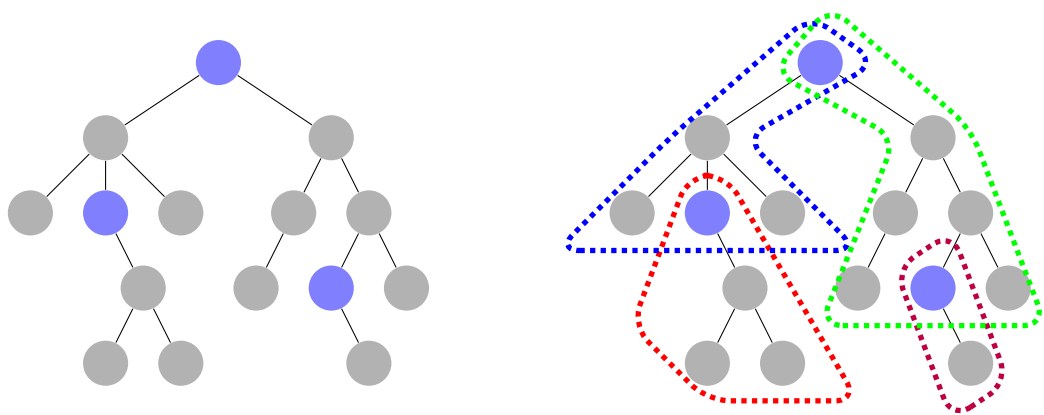

More precisely, consider the forest obtained by removing all the sink vertices in the tree. Because we cannot perform any firing operations on a sink vertex, each component is independent from each other. It implies we can treat the original tree as the forest we obtained. Furthermore, since removing sinks in the graph will change the degree of the neighbours for the sinks, we have to add a dummy sink for these vertices to make sure their degree won't be changed after the decomposition.


The procedure we described to divide the tree is exactly the way to remove vertices described in \Cref{definition:vertex-removal}. The instance we obtained after the decomposition is exactly $S \setminus M$, as we showed in \Cref{figure:decomposed-forests}.

\begin{figure}[H]
\centering
\begin{tikzpicture}

\begin{scope}[shift={(-6,0)}]
 \tikzstyle{every node}=[fill=black!30,circle,minimum size=0.6cm,inner sep=1pt]
 \tikzstyle{sink}=[fill=blue!50,circle,minimum size=0.6cm,inner sep=1pt]
  \node[sink] (6) at (0,0){};
  \node (12) at (0.5,-1){};
  \node (15) at (0,-2){};
  \node (16) at (1,-2){};
  \foreach \from/\to in {6/12,12/15,12/16}
    \draw (\from) -- (\to);
\end{scope}

\begin{scope}[shift={(-2,0)}]
 \tikzstyle{every node}=[fill=black!30,circle,minimum size=0.6cm,inner sep=1pt]
 \tikzstyle{sink}=[fill=blue!50,circle,minimum size=0.6cm,inner sep=1pt]
  \node[sink] (1) at (0,0){};
  \node (2) at (-0.5,-1){};
  \node (5) at (-1.25,-2){};
  \node[sink] (6) at (-0.5,-2){};
  \node (7) at (0.25,-2){};
  \foreach \from/\to in {1/2,2/5,2/6,2/7}
    \draw (\from) -- (\to);
\end{scope}

\begin{scope}[shift={(0,0)}]
 \tikzstyle{every node}=[fill=black!30,circle,minimum size=0.6cm,inner sep=1pt]
 \tikzstyle{sink}=[fill=blue!50,circle,minimum size=0.6cm,inner sep=1pt]
  \node[sink] (1) at (0,0){};
  \node (3) at (0.5,-1){};
  \node (4) at (0,-2){};
  \node (8) at (1,-2){};
  \node (9) at (-0.5,-3){};
  \node[sink] (13) at (0.5,-3){};
  \node (14) at (1.5,-3){};
  \foreach \from/\to in {1/3,3/4,3/8,4/9,8/13,8/14}
    \draw (\from) -- (\to);
\end{scope}

\begin{scope}[shift={(3,0)}]
 \tikzstyle{every node}=[fill=black!30,circle,minimum size=0.6cm,inner sep=1pt]
 \tikzstyle{sink}=[fill=blue!50,circle,minimum size=0.6cm,inner sep=1pt]
  \node[sink] (13) at (0,0){};
  \node (17) at (0.5,-1){};
  \foreach \from/\to in {13/17}
    \draw (\from) -- (\to);
\end{scope}

\end{tikzpicture}
\caption{The forest we obtained after the decomposition. Each component can be treated as an independent sandpile prediction problem with sinks. After our decomposition, all the sinks are located in the leaf vertices.}
\label{figure:decomposed-forests}
\end{figure}
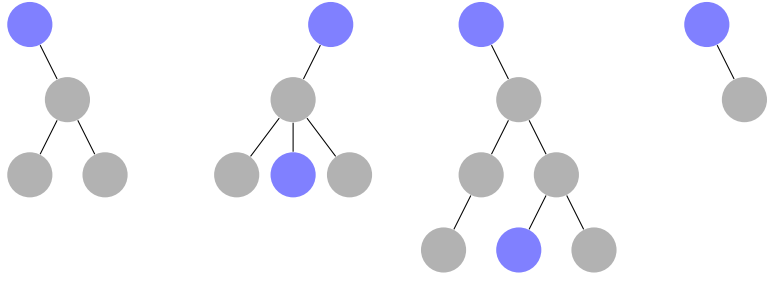

Specially, for a component, if all the vertices in the component are sink vertices, then we can remove this component directly since no firing operation could happen in this component. For the rest of the section, we will assume there's at least one non-sink vertex in each component.

By our decomposition, each component will be a tree such that the degree of every sink vertex is exactly $1$. And the number of vertices in all components is bounded by the following lemma.

\begin{lemma}
The number of the vertices in the decomposed graph is at most $2n-2$.
\end{lemma}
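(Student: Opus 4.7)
The plan is to enumerate the vertices of the decomposed graph directly by tracing, via \Cref{definition:vertex-removal}, how each vertex and each edge of the original tree contributes. I would partition the decomposed vertex set into two classes: the $n-|M|$ original non-sink vertices, and the newly created sink vertices, one per ordered pair $(v,w)$ with $v\in M$ and $(v,w)\in E(G)$.

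Second, I would determine which newly created sinks actually survive in the retained components. A new sink $v'$ produced from the edge $(v,w)$ is, by construction, adjacent only to $w$ in the decomposed graph. If $w\in M$ as well, then $w$ is also removed, so $v'$ ends up isolated in a component consisting of a single sink vertex; by the convention stated in the paragraph just before the lemma, such sink-only components are discarded. Consequently, the retained new sinks are in bijection with the edges of the original tree that have exactly one endpoint in $M$; call this quantity $e_{MN}$. The total vertex count in the decomposed graph is therefore $(n-|M|)+e_{MN}$.

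Third, I would plug in the two bounds. We may assume $|M|\ge 1$, since otherwise the statement reduces to $n\le 2n-2$, which is immediate for any tree with $n\ge 2$. Under this assumption $n-|M|\le n-1$, and because the tree has only $n-1$ edges in total we also have $e_{MN}\le n-1$. Adding these two estimates yields $2(n-1)=2n-2$, as claimed.

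The main subtlety to get right will be the treatment of edges with both endpoints in $M$: each such edge creates two new sinks, both of which become isolated and are discarded with the sink-only components. Ignoring this would give only the naive estimate $(n-|M|)+\sum_{v\in M}\degree(v)$, which can substantially exceed $2n-2$ when sinks cluster together. Properly excluding the isolated sinks is exactly what makes the bound $2n-2$ tight, as witnessed already by the three-vertex path with the middle vertex chosen as the unique sink (there $(n-|M|)+e_{MN}=2+2=4=2n-2$).
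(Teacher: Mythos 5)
Your proof is correct and follows essentially the same elementary counting argument as the paper, which bounds the number of components containing each vertex $v$ by $\degree(v)$ and sums to get $\sum_{v} \degree(v) = 2|E(G)| = 2n-2$. Your version is slightly more explicit -- you identify exactly which dummy sinks survive (those arising from edges with exactly one endpoint in $M$) and bound the two classes of vertices by $n-1$ each -- but the underlying idea of bounding the count by twice the number of tree edges is the same.
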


\begin{proof}
For each vertex $v \in V(G)$, the number of the components containing $v$ is at most $\degree(v)$. So the number of the vertices in the decomposed graph is no more than $\sum_{u \in V(G)} \degree(u) = 2|E(G)| = 2n - 2$.
\end{proof}

Thus our decomposition divides the whole graph into several trees in which only leaf vertices could be sinks while maintaining a total vertex number of $O(n)$.

Furthermore, we will root each tree in our decomposed graph. For the case in which the tree contains no more than three sinks, each component will also contain no more than three sinks. Then it's possible to choose a proper root, such that all the sinks lies in the different subtrees $\subtree(v_i)$ for $v_i \in \son(r)$. More formally:

\begin{lemma}
\label{lemma:root-each-component}
For each component, there exists a vertex $v \in V(G)$, such that if we root the tree at vertex $v$, then the sinks in the component will be located within the subtrees of different children of $v$.
\end{lemma}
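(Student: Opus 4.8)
The statement to prove is \Cref{lemma:root-each-component}: in a tree component with at most three sink vertices (all of which, after the decomposition of \Cref{sec:treedecom}, are leaves), there exists a vertex $v$ such that rooting the component at $v$ places every sink in the subtree of a distinct child of $v$. The plan is a short combinatorial argument by case analysis on the number of sinks $\ell \in \{1,2,3\}$, using the fact that in a tree, any set of $k$ vertices has a well-defined ``meeting structure'' governed by their pairwise paths.

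First I would handle $\ell=1$: if there is a single sink $s$, root the tree at $s$ itself (or at any neighbor of $s$ — but rooting at $s$ trivially works since then the condition is vacuous, there being only one sink and it lying in the single subtree below $\son(s)$ if we root at a neighbor; cleanest is to pick $v$ to be the unique neighbor of $s$, so $s$ is a leaf-child of $v$ and the condition holds trivially). For $\ell=2$ with sinks $s_1,s_2$, take the path $\pi$ between $s_1$ and $s_2$ and let $v$ be any interior vertex of $\pi$ (such a vertex exists since $s_1,s_2$ are distinct leaves, hence non-adjacent — their path has length at least $2$); then $s_1$ and $s_2$ lie ``on opposite sides'' of $v$, i.e.\ in the subtrees of the two neighbors of $v$ along $\pi$, which are distinct children of $v$. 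For $\ell=3$ with sinks $s_1,s_2,s_3$, consider the \emph{median} (Steiner point) $v$ of the three vertices — the unique vertex lying on all three pairwise paths $\pi_{12},\pi_{13},\pi_{23}$; rooting at $v$, each $s_i$ lies in a different one of the three branches of $v$ that the Steiner tree of $\{s_1,s_2,s_3\}$ enters, and these branches correspond to distinct children of $v$. I would also dispatch the degenerate subcase where the median coincides with one of the $s_i$: since each $s_i$ is a leaf after decomposition, the median cannot be a degree-one vertex unless the other two sinks are both in its unique neighbor's subtree — in which case I fall back to choosing $v$ to be that neighbor (a vertex of degree $\ge 2$) and re-run the $\ell=2$-style argument on the remaining two sinks while noting $s_i$ itself is then a leaf-child of $v$.

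The key steps, in order: (1) invoke the decomposition of \Cref{sec:treedecom} to ensure all sinks are leaves, so no two sinks are adjacent and every sink-to-sink path has length $\ge 2$; (2) for $\ell \le 2$, pick an interior vertex of the (single) sink-sink path; (3) for $\ell = 3$, pick the Steiner/median vertex and argue the three branches are distinct; (4) resolve the degenerate coincidence cases by shifting $v$ to an adjacent internal vertex. Throughout, the only tree-theoretic fact needed is that a tree is acyclic, so the path between any two vertices is unique and any vertex on that path separates its two endpoints into the subtrees rooted at two distinct neighbors.

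I do not expect a genuine obstacle here — the lemma is elementary — but the one point requiring care is the bookkeeping in the degenerate case where the chosen $v$ equals a sink: because sinks are \emph{leaves}, such a $v$ has exactly one child, so at most that coincidence can occur for one sink at a time, and passing to $v$'s unique neighbor restores $\deg(v) \ge 2$ while keeping the coinciding sink as a leaf-child. Stating this cleanly — perhaps by uniformly \emph{defining} $v$ as the unique vertex minimizing $\max_i \dist(v, s_i)$ subject to $v$ not being a leaf, or more simply as the Steiner point of the sinks with the tie-break ``if the Steiner point is a sink, take its neighbor'' — will make the proof two or three sentences rather than an unwieldy case tree. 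That is the only writing decision I need to get right.
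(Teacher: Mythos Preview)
Your proposal is correct and follows essentially the same approach as the paper. The paper's proof also does a case split on the number of sinks: for one sink it roots at any non-sink vertex, for two sinks it picks an interior vertex of the unique $s_1$--$s_2$ path, and for three sinks it chooses the vertex on the $u_1$--$u_2$ path with minimum distance to $u_3$ --- which is exactly the Steiner/median point you describe --- and argues by contradiction that no two sinks can share a child's subtree. One simplification you can make: your degenerate-case worry (the median coinciding with a sink) cannot arise, because after the decomposition all sinks are leaves and a leaf can never be the median of three distinct leaves in a tree with more than two vertices; the paper implicitly uses this by restricting to the interior of the $u_1$--$u_2$ path.
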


\begin{proof}
If there is only one sink in this tree, we can root the tree at an arbitrary non-sink vertex.

If there are two sinks in the tree, let's denote them as $u_1$ and $u_2$. Consider the unique simple path $P$ between vertex $u_1$ and $u_2$. Since they are two leaves in the tree, there must be at least one vertex on the path $P$ excluding $u_1$ and $u_2$ (Otherwise, there will be no non-sink vertex in this component). Take any such vertex and root the tree at it, then both the sinks will be located within the subtrees of different children.

If there are three sinks in the tree, let's denote them as $u_1, u_2$ and $u_3$. Consider the unique simple path $P$ between vertex $u_1$ and $u_2$ (excluding $u_1$ and $u_2$), let $v$ be the vertex on the path with the minimum distance from the vertex $u_3$. Then $u_1, u_2, u_3$ must be located within the subtrees of different children of $v$ if we root the tree at $v$. If not, without loss of generality, assume $u_2$ and $u_3$ are all located in the subtree of $v'$ ($v' \in \son(v)$), this implies:

\begin{enumerate}
    \item $v'$ is on the unique simple path from $u_1$ and $u_2$.
    \item $v'$ is on the unique simple path from $v$ and $u_3$.
\end{enumerate}

Denote $\dist(u,v)$ as the distance between the vertex $u$ and $v$. The condition (2.) implies $\dist(v,u_3) = \dist(v,v') + \dist(v',u_3) = 1 + \dist(v',u_3)$, which means the distance between $v'$ and $u_3$ are smaller than the distance between $v$ and $u_3$. And by the condition (1.) $v'$ is also on the path from $u_1$ and $u_2$. This contradicts that $v$ is the one with the minimum distance.

\end{proof}

After decomposing the given tree into several components in \Cref{sec:treedecom}, each component becomes a tree in which every sink vertex is a leaf vertex. Since the components are independent of each other, we can decompose each component $C_i$ as a sub-instance $S_i(G_i, \sigma_i, M_i)$. By taking the root described in \Cref{lemma:root-each-component}, we can assume all the sink vertices are in the different subtrees of the children of the root.

For each $S_i(G_i, \sigma_i, M_i)$, let's consider its auxiliary graph $G'_i$. For each sink vertex $v \in M_i$, since $\degree(v) > \sum_{u \in V(G_i)} \sigma_u$, no firing operation is possible on vertex $v$. 

For any sink vertex $u$, since there's no chips on the vertex $u$ in the beginning, and $\degree(u) \geq A = n + \lvert\lvert \sigma \rvert\rvert_1$, we have $\delta(u, k) = 0$ for all $0 \leq k \leq n + \lvert\lvert \sigma \rvert\rvert_1$. The bound is sufficiently large as the value of $\delta(u, k)$ for $k > n + \lvert\lvert \sigma \rvert\rvert_1$ won't be used in our procedure. Thus we can treat $\delta(u, k)$ as always equals to $0$ for any sink vertex $u$. This is also consistent with our intuition from the original problem: no matter how many chips are put into the sink, no additional chips are returned upwards.

For any sink vertex $u$, as no firing operation occurs on $u$, we have $\delta(u, k) = 0$ for any $k\geq 0$. According to the definition of key pairs (\Cref{def:keypair}), every $(u,k)$ is a key pair for $u$. Consequently, there exists an infinite number of key pairs corresponding to a sink vertex.

\subsection{Key Pairs Maintenance with Difference}
\label{sec:difference}

To facilitate our operations, we will introduce an additional vector called $\diff_x$. Let $x$ is currently in the splay tree $D_u$, and let $k = \rank_{D_u}(x)$. Then $\diff_x$ is defined as follows:

\begin{itemize}
    \item If $k = 1$ (i.e. $x$ is the node corresponding to the pair $(u, k)$ with the minimum value of $k$), then $\diff_{x} = \moment_{x}$.
    \item Otherwise. $\diff_{x} = \moment_{x} - \moment_{\pre_{D_u}(x)}$.
\end{itemize}

In other words, $\diff_{x}$ denotes the difference of the value of $\moment$ for the node $x$ and its predecessor in the splay tree. 

\begin{lemma}
\label{lemma:diff}
For any splay tree $D_u$ and an arbitrary node $x \in D_u$, the following equation holds:
\end{lemma}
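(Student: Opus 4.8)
The plan is to prove the stated identity by exploiting the fact that $\diff$ is, by construction, the first-difference sequence of $\moment$ along the in-order (rank) ordering of $D_u$. Hence the equation should be the telescoping/prefix-sum statement that $\moment_x$ equals the sum of $\diff_y$ over all nodes $y \in D_u$ with $\rank_{D_u}(y) \le \rank_{D_u}(x)$, i.e.\ over $x$ together with all its predecessors. I would prove this by induction on $\rank_{D_u}(x)$. Note first that, since all key pairs for a fixed vertex $u$ have distinct second coordinates $k$, the order on $\moment$ is a total order, so $\pre_{D_u}(x)$ and ranks are well-defined.

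For the base case $\rank_{D_u}(x) = 1$, the node $x$ is the minimum-rank node, so by the first clause of the definition of $\diff$ we have $\diff_x = \moment_x$; the right-hand side is the single term $\diff_x$, and the identity holds. For the inductive step, suppose $\rank_{D_u}(x) = k > 1$ and the identity holds for the node of rank $k-1$, which is exactly $w := \pre_{D_u}(x)$. By the second clause of the definition, $\diff_x = \moment_x - \moment_{\pre_{D_u}(x)} = \moment_x - \moment_w$, hence $\moment_x = \moment_w + \diff_x$. Applying the inductive hypothesis to $w$ and adding $\diff_x$ expresses $\moment_x$ as the sum of $\diff$ over all nodes of rank at most $k$, completing the induction.

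If the intended statement is instead phrased along the root-to-$x$ path of $D_u$ rather than as a rank prefix, the same inductive argument applies after rewriting the prefix sum as a path sum: for each node $z$ on the path at which one descends into $\rightson(z)$, the set of nodes of rank $\le \rank_{D_u}(z)$ is exactly $\subtreeT(\leftson(z)) \cup \{z\}$, and these sets over all such $z$ (together with $x$'s own left subtree if the walk arrives from above) partition $\{y : \rank_{D_u}(y) \le \rank_{D_u}(x)\}$. This is the same decomposition already used in the analysis of $\deltaquery$ (\Cref{algorithm:deltaquery}) and $\cnt$, so no new idea is needed; one only sums the $\diff$-prefix-sum identity over the partition.

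I do not expect a genuine obstacle in this lemma itself — it is a structural invariant whose purpose is to justify storing $\diff$ in place of $\moment$. The care that is actually required lies downstream: verifying that $\diff$ remains consistent under the $\inctime$ lazy updates (which add an affine-in-rank quantity and therefore change only $\diff$ of the affected node and of the successor just past the updated range) and under splay rotations (where one must push down tags and recompute the $\diff$ of the node whose predecessor changes before any structural change). The present lemma is the clean identity those later arguments will repeatedly invoke, so keeping its statement and proof as a pure telescoping fact is the right scope.
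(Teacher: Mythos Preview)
Your proposal is correct and follows essentially the same approach as the paper: both prove the prefix-sum identity by induction on $\rank_{D_u}(x)$, handling the base case $\rank_{D_u}(x)=1$ via the first clause of the $\diff$ definition and the inductive step via the telescoping relation $\moment_x = \moment_{\pre_{D_u}(x)} + \diff_x$. Your additional paragraph about a root-to-$x$ path reformulation and the downstream maintenance concerns is extraneous to this lemma (the paper's statement is exactly the rank-prefix version you prove first), but it does no harm.
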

\begin{align}
\moment_x = \sum_{\substack{y \in D_u\\ \rank_{D_u}(y) \leq \rank_{D_u}(x)}} \diff_{y}\label{formula:momentequalssumofdiff}
\end{align}

\begin{proof}
We will prove the lemma by induction on $\rank_{D_u}(x)$.

If $\rank_{D_u}(x) = 1$, then by the definition of $\diff_{x}$, we have $\moment_{x} = \diff_{x}$, which is equivalent to \Cref{formula:momentequalssumofdiff}.

Now assume the induction hypothesis holds for all $\rank_{D_u}(x) < k$. For the node $x$ with $\rank_{D_u}(x) = k$, we have $\moment_{x} = \moment_{y} + \diff_{y}$, where $y = \pre_{D_u}(x)$. By the induction hypothesis, we have:

\[\moment_{y} = \sum_{\substack{z \in D_u\\ \rank_{D_u}(z) \leq k-1}} \diff_{z}\]

This implies $\moment_x = \sum_{y \in D_u \land \rank_{D_u}(y) \leq \rank_{D_u}(x)} \diff_{y}$. That means the lemma also holds for $\rank_{D_u}(x) = k$, establishing the induction step.
\end{proof}

\Cref{lemma:diff} shows that we can obtain the information for all $\moment_x$ by only maintaining $\diff_x$. Therefore, in our modified data structure, we will only maintain $\diff_x$ during the procedure. To obtain the value of $\moment_z$ for a given node $z$, we will additionally maintain $\sumdiff_z$ for every node $z$, representing the sum of $\diff_x$ for all $x \in \subtree(x)$ (or $0$ if $z = \nil$). With this, we can determine the value of $\moment_x$ and perform a binary search on the splay tree while properly maintaining the sum of $\diff_x$. More precisely:

\begin{lemma}
\label{lemma:findpre}
Let $D_u$ be a splay tree with omitted nodes and let $W$ be a given integer. There exists an algorithm to find an unomitted node $x \in D_u$ with the largest value of $\moment_{x}$ such that $\moment_{x} \leq W$, which costs an amortized time of $O(\log n)$.
\end{lemma}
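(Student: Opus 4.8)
The plan is to augment $D_u$ with a small amount of extra subtree information and then resolve the query by one top‑down descent (plus at most one short corrective descent), leaving the tree in a state where the classical amortized analysis of splay trees applies. Concretely, on top of $\sumdiff_z = \sum_{y \in \subtreeT(z)} \diff_y$, which we already maintain, I would store at each node $z$ a boolean $\texttt{alive}_z$ indicating whether $\subtreeT(z)$ contains at least one unomitted node. Both are subtree aggregates of $O(1)$‑combinable quantities, so they are recomputed from the two children in $O(1)$ whenever the shape of $D_u$ changes; exactly as for $\moment$, these recomputations are folded into the rotation/\pushdown{} bookkeeping of \Cref{sec:dsrotatemaintain}, so maintaining them does not affect any asymptotic cost. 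Recall from \Cref{lemma:diff} that along any root‑to‑$z$ path, if $P$ denotes the total $\diff$‑mass of the nodes whose rank is strictly below the leftmost rank of the current subtree, then $\moment_z = P + \sumdiff_{\leftson(z)} + \diff_z$; this is the quantity we accumulate as we walk, so no $\moment$ field is needed.

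First I would run the main descent. Starting at $\roott(D_u)$ with $P \leftarrow 0$, an empty answer, and an empty \emph{pending left subtree}, at each visited node $now$ I call $\pushdown(now)$, compute $m \leftarrow P + \sumdiff_{\leftson(now)} + \diff_{now}$, and branch. If $m > W$, every legal node is in $\leftson(now)$, so descend left with $P$ unchanged. If $m \le W$, then $now$ and all of $\leftson(now)$ are legal, and since we want the largest $\moment$ we should look to the right; before descending into $\rightson(now)$ with $P \leftarrow m$, I update the answer to $now$ if $now$ is unomitted (it beats any previously recorded unomitted node, having larger rank), and otherwise — if $now$ is omitted but $\texttt{alive}_{\leftson(now)}$ — I overwrite the pending left subtree with $\leftson(now)$. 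Because each ``$m\le W$, go right'' step strictly increases rank, the pending left subtrees recorded are nested and the last one recorded dominates all earlier ones, so a single pointer suffices.

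When the main descent falls off the tree, the recorded answer is correct unless it is empty while the true answer lies in the last pending left subtree; in that case I run the corrective descent: starting at the root of that pending subtree, repeatedly go right while the right child is $\texttt{alive}$, otherwise take the current node if unomitted, otherwise go left; this reaches the rightmost unomitted node of that subtree, which is the answer. Finally I $\splay{}$ the returned node (or, if there is none, the deepest node touched) to the root of $D_u$. Each descent has length $O(\text{depth})$, the corrective descent stays within a subtree hanging off the first path, and the node we splay is at least as deep as everything we touched; so by the standard amortized analysis of splay trees~\cite{sleator1985self} the splay absorbs all of this at amortized cost $O(\log n)$, and the $O(1)$ aggregate updates along the splay path do not change this.

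I expect the main obstacle to be the omitted‑node bookkeeping rather than the prefix‑sum search, which is textbook order‑statistics‑tree material. Specifically: pinning down the exact invariant for $\texttt{alive}$ under lazy \pushdown{} and all three rotation cases (\zig, \zigzig, \zigzag) so that it is always consistent with the (possibly deferred) structural changes; and verifying that at most one ``pending left subtree'' ever matters and that its rightmost unomitted node is reachable by the single monotone corrective descent above, so that we never pay more than a constant number of root‑to‑leaf passes. Once those are in place, correctness follows from the monotonicity of $\moment$ in rank (\Cref{lemma:diff}) and the search invariant ``the answer, if any, has rank at most that of the rank‑maximal legal node seen so far,'' and the $O(\log n)$ amortized bound follows as above.
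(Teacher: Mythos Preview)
You have misread what ``omitted'' means here. In \Cref{sec:difference} the omitted key pairs are \emph{not} physical nodes of $D_u$; they are virtual, encoded only through the tags $(t_x,d_x)$ carried by the (physical) nodes. Consequently every $x\in D_u$ is unomitted, and the lemma is nothing more than a predecessor query in a BST whose keys $\moment_x$ are available only as prefix sums of $\diff$. The paper's proof is precisely the first half of your ``main descent'': walk down from the root maintaining the running prefix sum $s$ (your $P$), compute $\moment_{now}=s+\sumdiff_{\leftson(now)}+\diff_{now}$, go right (recording $now$ as the current candidate and updating $s$) or left according to the comparison with $W$, and finally splay the last node touched. No \texttt{alive} flag and no corrective descent are needed or used.

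Because every physical node is unomitted, your extra machinery never fires and your algorithm happens to give the right answer for the lemma as stated. But in the generalized problem you are imagining (some physical nodes ``dead''), your final rule ``return \texttt{answer} if nonempty, else search \texttt{pending}'' is wrong: if \texttt{pending} was recorded \emph{after} \texttt{answer} along the descent, every node in \texttt{pending} has strictly larger rank than \texttt{answer}, so the rightmost alive node of \texttt{pending} beats \texttt{answer} and should be returned instead. (Also, the successive pending left subtrees are pairwise disjoint, not nested; what is true is that their rank intervals are increasing.) The easy fix would be to clear \texttt{answer} whenever you overwrite \texttt{pending}, but for the actual lemma none of this apparatus is necessary.
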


\begin{proof}
We will traverse the splay tree starting from the root $r$. During the traversal, we will maintain the sum of $\diff_z$ for all $\rank_{D_u}(z) \leq \rank_{D_u}(x)$ at the time we visit node $x$, denoted as $s$.

According to \Cref{formula:momentequalssumofdiff} in \Cref{lemma:diff}, we know that:

\[\moment_{x} = \sum_{\substack{z \in D_u\\ \rank_{D_u}(z) \leq \rank_{D_u}(x)}} \diff_{z}\]

Hence, the sum of $\diff_z$ we maintain is exactly the value of $\moment_x$.

\begin{itemize}
\item If $\moment_x \leq W$, then $x$ will be one of the candidates. This implies that we only need to find a better answer among the nodes with $\moment_y \geq \moment_x$, which are located in the right subtree of node $x$. If $\rightson(x) = \nil$, we have finished our traversal. Otherwise, we need to perform the following updates:
\begin{itemize}
    \item $x \leftarrow \rightson(x)$
    \item $s \leftarrow s + \sumdiff_{\leftson(x)} + \diff_{x}$
\end{itemize}
\item Otherwise, the value of $\moment_z$ for the candidate nodes $z$ must be smaller than $\moment_x$, and they should be located in the left subtree of node $x$. If $\leftson(x) = \nil$, we have finished our traversal. Otherwise, we need to perform the following updates:
\begin{itemize}
    \item $s \leftarrow s - \sumdiff_{\leftson(x)} - \diff_{x}$
    \item $x \leftarrow \leftson(x)$
\end{itemize}
\end{itemize}

By applying the procedure above, we can maintain the value of $\moment_x$ at the time we visit it. Thus, the procedure is equivalent to accessing a node in the splay tree. According to \cite{sleator1985self}, the procedure costs an amortized $O(\log n)$ time.
\end{proof}

Since it is not possible to store an infinite number of nodes in a splay tree, we can maintain a splay tree with fewer nodes, but not necessarily full. Consequently, we omit some key pairs and include additional information on the nodes in the compact splay tree to represent the series of omitted key pairs. The key pairs corresponding to the nodes in the splay, together with the omitted key pairs expressed through information on the splay tree, precisely constitute the full set of key pairs. The $\diff_x$ still represents the difference between $\moment_x$ and $\moment_{\pre(x)}$ (or if $\rank_{D_u}(x)=1$, $\diff_x$ simply equals $\moment_x$) in the skeletal splay tree while skipping the omitted key pairs.


Specifically, for any node $x \in D_u$, we will maintain two tags on the node $x$ called $t_x$ and $d_x$. These tags represent a series of key pairs $(u, \moment_x - i \cdot d_x)$ for all $1 \leq i \leq t_x$. Additionally, we ensure that the splay tree maintains the correct ordering of the nodes. All nodes, including the omitted ones, must be ordered by $\moment_x$. Formally, for any node $x$ such that $\pre(x) \neq \nil$, the following condition must hold.

\begin{align}
\moment_{\pre(x)} \leq \moment_{x} - t_{x} \cdot d_{x} \label{formula:omittednodesareordered}
\end{align}

We will maintain an additional pair of tags $(t, d)$ on each splay tree $D_u$, representing another series of nodes in the form $\moment_z + d, \moment_z + 2d, \cdots, \moment_z + t \cdot d$, where $z$ is the node with the maximum rank in the splay tree $D_u$. During maintenance, the value of $t$ will be either $0$ or $+\infty$, depending on whether there is a sink in the subtree.

By compressing nodes in this manner, for any sink vertex $v \in M$, after the decomposition (\Cref{sec:treedecom}), $v$ will become a leaf. As a result, $D_v$ can be represented as a splay tree with satisfying:

\begin{itemize}
\item The tree contains exactly one node $x$ with $\diff_x = 1$.
\item The pair associated with the splay tree $D_v$ is $(+\infty, 1)$.
\end{itemize}

Consider the $\inctime$ modification operation described in $\updatedsupward$ and $\revertds$ (\Cref{algorithm:update} and \Cref{algorithm:revert}). We can easily modify the implementation to adapt to the way of maintaining key pairs described earlier.

\begin{lemma}
\label{lemma:sinks:inctime}
The $\inctime$ clause in $\updatedsupward$ and $\revertds$ can be modified to accommodate the key pair maintenance approach, without compromising the time and space complexity.
\end{lemma}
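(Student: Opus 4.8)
The plan is to re-express the two uses of $\inctime$ inside $\updatedsupward$ (\Cref{algorithm:update}) and $\revertds$ (\Cref{algorithm:revert}) — both of which act on the entire splay tree, i.e.\ are of the form $\inctime(\roott(D_u),a,b)$ — entirely in terms of the difference encoding, and then verify that the reinterpretation (i) realises the same transformation of the implicit full sequence of $\moment$ values, (ii) preserves every invariant, and (iii) keeps the same cost. The first step is to fix, once and for all, that the ``rank'' occurring in $\inctime$ means the rank in the \emph{full} key-pair sequence, counting the $t_x$ omitted pairs attached to each node $x$ and the tail; correspondingly $\size$ on splay nodes is redefined to count represented key pairs. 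This is exactly the count already needed for the binary-search routine of \Cref{lemma:findpre}, so it introduces no new maintenance.

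Next I would compute the effect of ``$\moment_y \leftarrow \moment_y + \rho(y)\cdot a + b$ for all $y$'', where $\rho$ is the full rank. Using \Cref{lemma:diff} ($\moment$ is the prefix sum of $\diff$) together with the identity $\rho(x)=\rho(\pre_{D_u}(x))+t_x+1$, one gets: $\diff_y$ changes by $(t_y+1)a$ for every non-minimal explicit node, while the minimal explicit node's $\diff$ (which equals its $\moment$) changes by $(t_{\min}+1)a+b$; a one-line computation with the omitted pairs $\moment_x-i\,d_x$ (which sit at full ranks $\rho(x)-i$) shows each stride $d_x$ changes by exactly $a$, and the same computation applied to the tail pairs $\moment_z+i\,d$ shows $d$ changes by $a$ and $t$ stays $+\infty$. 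All of this is realised by a lazy tag adding $a\cdot(t_\bullet+1)$ to the $\diff$ of every node in a subtree (whose effect on $\sumdiff$ is $a\cdot\size$ of that subtree), a lazy tag adding $a$ to every $d_\bullet$, a single additive $+b$ on the minimal node, and a direct rewrite of the tree-level $(t,d)$. Each is $O(1)$ amortised and composes under the same associativity used for the original $\inctime$, so rotations and push-downs are handled exactly as in \Cref{sec:dsrotatemaintain}.

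It then remains to check invariants. The delicate one is the ordering condition \Cref{formula:omittednodesareordered}, namely $\moment_{\pre(x)}\le \moment_x-t_x d_x$: subtracting the transformed quantities and using $\rho(x)=\rho(\pre(x))+t_x+1$ shows the slack changes by exactly $+a$ on every node. Since the two $\inctime$ calls in $\revertds$ are, by construction, the exact inverses of the two in $\updatedsupward$ (which by \Cref{lemma:updatea} and \Cref{lemma:updateb} left a valid configuration), the only case with $a<0$, the $\inctime(\cdot,-1,0)$ of \Cref{algorithm:revert}, cannot push any slack below zero; and for $\updatedsupward$ the resulting $\moment$ values are precisely those demanded by \Cref{lemma:keypaircondition}, which the reinterpretation reproduces verbatim. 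The remaining invariant, $\sumdiff_z=\sum_{y\in\subtreeT(z)}\diff_y$, is maintained by the tags above. For the budget: the modified $\inctime$ does $O(1)$ work plus tag propagation, the number of invocations is unchanged, and $\newnode$ is still called $O(n)$ times by \Cref{lemma:dsmemory} — sinks add only one tail-bearing leaf per component, carrying no extra nodes — so the bounds of \Cref{theorem:ds} ($O(n\log n)$ time, $O(n)$ memory) survive.

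The main obstacle I expect is purely bookkeeping, not conceptual: keeping straight the three notions of size — explicit-node count (splay balance), represented-key-pair count (the full rank $\rho$ and the $\sumdiff$ correction), and the separate $+\infty$ tail — and making sure the per-node stride $d_x$ and the tree stride $d$ are shifted in lockstep with $\diff$ so that \emph{omitted} pairs still land on the $\moment$ values required by \Cref{lemma:keypaircondition}. Once the full-rank convention is fixed, every step collapses to applying $\rho(x)=\rho(\pre(x))+t_x+1$ in the right place, so I would organise the write-up around that single identity.
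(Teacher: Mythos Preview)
Your proposal is correct and follows essentially the same approach as the paper: compute that $\diff_x$ shifts by $a(t_x+1)+b\cdot[\rank_{D_u}(x)=1]$ and each stride $d_x$ (and the tree-level $d$) shifts by $a$, then realise this via lazy tags exactly as in \Cref{sec:dsrotatemaintain}. You are in fact more thorough than the paper's brief argument---you explicitly treat the tail $(t,d)$, verify the ordering invariant \Cref{formula:omittednodesareordered} via the slack-shifts-by-$a$ computation, and separate the explicit-node count from the full-rank count---none of which the paper spells out.
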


\begin{proof}
For a modification $\inctime(\roott(D_u), a, b)$, it will increment the value of $k$ in the key pair $(u,k)$ by $i \cdot a + b$ for all the key pairs maintained by $D_u$, where $i$ represents the rank of the key pair when they are sorted in ascending order.

Observing on the $\diff_x=\moment_x-\moment_{\pre(x)}$ in the skeletal splay tree, since there is $t_x$ omitted nodes between $\moment_x$ and $\moment_{\pre(x)}$(or $0$ if $\rank_{D_u}(x)=1$), the difference will be increased by $a\cdot (t_x+1)+b\cdot [\rank_{D_u}(x)=1]$. For the omitted key pairs, we only increase the $d_x$ by $a$, since the gap between them is expanded by $a$.

This observation indicates that the modification can be converted to performing subtree addition operations and querying the sum of a subtree. This can be done by using the same lazy propagation technique described in Section \Cref{sec:dsrotatemaintain}. To obtain a specific value of $\moment_x$ for a given node $x$, we can apply the same traversal procedure described in \Cref{lemma:findpre}.. These operations will require a total of $O(n \log n)$ time in the entire process, with $O(n)$ memory.

\end{proof}

We show that there exists a way to insert a node to a splay tree containing omitted nodes in an amortized $O(\log n)$ time.

\begin{lemma}
\label{lemma:sinks:insert}
There exists an algorithm to insert a node to a splay tree containing omitted nodes in an amortized $O(\log n)$ time.
\end{lemma}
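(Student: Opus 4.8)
The plan is to reduce the insertion problem to the standard splay-tree insertion, being careful about how the omitted-node encoding is affected. Recall that a compact splay tree $D_u$ stores the full (possibly infinite) set of key pairs implicitly: each physical node $x$ carries tags $(t_x, d_x)$ encoding the suppressed key pairs $(u, \moment_x - i d_x)$ for $1 \le i \le t_x$, and there is a global tail tag $(t, d)$ attached to the maximum-rank node. A new key pair $(u, k)$ arrives (from a call in \updatedsupward) and must be inserted so that the invariant \eqref{formula:omittednodesareordered} and the $\diff$/$\sumdiff$ bookkeeping are preserved.

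First I would locate the insertion position: using the traversal of \Cref{lemma:findpre}, find the node $x$ of largest $\moment_x$ with $\moment_x \le k$ (and its successor $y = \suc_{D_u}(x)$, obtained by one more step of the walk or a $\splay$ plus a right-then-leftmost descent). This costs amortized $O(\log n)$. There are three cases to handle. (1) If $\moment_x = k$, the pair is already represented by a physical node and nothing need be created — or, if the algorithm's contract guarantees distinctness, this case does not occur. (2) If $k$ falls strictly between $\moment_x$ and $\moment_y$ and is \emph{not} one of the suppressed pairs encoded in $y$'s tag, I create a fresh node $z$ with $\newnode$, set $\moment_z = k$ (i.e. $\diff_z = k - \moment_x$), splice it in as the appropriate child via a $\splay(x)$ followed by a single rotation-style link, and then re-split $y$'s tag $(t_y, d_y)$ into the part that lies below $k$ (which stays attached to $y$ but with $t_y$ decreased) and the part that lies below $z$ (attached to $z$'s new tag), updating $\diff_y, \diff_z$ accordingly. (3) If $k$ coincides with one of the suppressed pairs encoded in $y$'s tag (or in the global tail $(t,d)$), I "materialize" that suppressed pair: split the arithmetic progression in the tag at $k$ into two shorter progressions, promote $k$ to a physical node, and distribute $t_y$ between the two pieces. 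In every case the number of physical nodes grows by at most one and all affected $\diff$, $\sumdiff$, and lazy tags are on the $O(\log n)$ nodes visited by the walk plus the $O(1)$ nodes involved in the link, so they can be refreshed bottom-up in $O(\log n)$.

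The main obstacle I anticipate is the tag arithmetic in cases (2) and (3): the encoding conflates "number of suppressed pairs" $t$ with "gap" $d$, and these interact with the global $\inctime$ lazy tags $\ta, \tb$ (which act on $\diff$ and must be pushed down along the search path before any structural change, exactly as flagged in the remark after \Cref{algorithm:inctime}). I would prove a small invariant lemma stating that after pushing down all lazy tags along the access path, a tag $(t_x, d_x)$ and the value $\moment_x$ together determine the suppressed progression exactly, and that splitting the progression at $k$ produces tags $(t', d_x)$ and $(t'', d_x)$ with $t' + t'' + 1 = t_x$ — here using \eqref{formula:momentequalssumofdiff} of \Cref{lemma:diff} to translate between $\moment$-coordinates and $\diff$-coordinates. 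Given that invariant, correctness is a case check and the running time is dominated by the single splay access of \Cref{lemma:findpre}, giving the claimed amortized $O(\log n)$ bound. The tail tag $(t,d)$ of $D_u$ needs the same treatment when $k$ exceeds $\moment_z$ for the current maximum-rank node $z$: split $(t,d)$ at $k$, materialize $k$ as the new maximum node, and hand the overflow progression to the new node's tag.
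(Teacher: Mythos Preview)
Your approach is essentially the same as the paper's: locate the insertion position via the traversal of \Cref{lemma:findpre}, split any omitted-node progression that the new key pair falls into, and then perform a constant number of ordinary splay-tree insertions. The case analysis and the handling of the tail tag $(t,d)$ also match in spirit.

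There is one concrete technical slip in your case~(2). You propose to create a single physical node $z$ with $\moment_z = k$ and attach the lower half of $y$'s split progression to $z$'s tag. But the encoding stores the omitted pairs as $\moment_z - i\,d_z$ for $1 \le i \le t_z$, i.e.\ as an arithmetic progression with a \emph{single} common gap $d_z$. When $k$ is not itself a progression point of $y$'s tag, the first gap from $k$ down to the next omitted pair is $k - (\moment_y - (j{+}1)d_y) \ne d_y$, while all subsequent gaps are $d_y$; this cannot be encoded by a single $(t_z,d_z)$. The paper avoids this by inserting a \emph{separate} auxiliary node whose $\moment$ value is exactly the progression point $\moment_y - (j{+}1)d_y$, so both halves remain arithmetic with gap $d_y$, and then inserting $x$ as well. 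Consequently the number of physical nodes may grow by two (and in the tail case by up to three), not one as you state---your invariant $t' + t'' + 1 = t_y$ only applies to your case~(3). This does not affect the amortized $O(\log n)$ bound, so the fix is easy, but as written your case~(2) would corrupt the data structure.
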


\begin{proof}
Consider how we insert a new node $x$ to a splay tree with omitted nodes. If the splay tree is empty, we can simply let the new node as the root of the splay tree. Otherwise,  We will find the node $x_L$ with the maximum rank such that $\moment_{z} \leq \moment_{x}$. This can be done by \Cref{lemma:findpre}. And we will find $x_{R} = \suc(x_L)$ as the successor of $x_L$. Specially, if there's no such node $x_L$, then we will define $x_{L} = \nil$ and $x_{R}$ be the node with the minimum rank.

Our algorithm will insert the node to the proper position so that $\moment_{x_L} \leq \moment_{x} \leq \moment_{x_R}$ (Note that if $x_L = \nil$, then $\moment_{x_L} = 0$). However, it is possible that there are nodes $y$ whose $\moment_y$ falls within the interval $(\moment_{x_L}, \moment_{x_R})$, but were omitted and represented as $\moment_{x_R} - i \cdot d_{x_R}$ for some $1 \leq i \leq t_{x_R}$. If we perform the insertion operation directly, \Cref{formula:omittednodesareordered} might no longer hold, which breaks the order in the splay tree. In such cases, we need to break this series of omitted nodes.

\begin{itemize}
    \item If $x$ will become the node with the maximum rank after the insertion, we need to update the pair $(t, d)$ on the splay tree $D_u$.
    \begin{enumerate}
        \item Let $z$ be the node with the maximum rank in $D_u$ before we insert $x$ to the splay tree.     \item Before the insertion, there's a series of omitted nodes in the form of $\moment_{z} + i \cdot d$ for all $1 \leq d \leq t$. Since the value of $t$ could be either $0$ or $+\infty$, we can ignore the case for $t = 0$ and assume $t = +\infty$.
        \item For the nodes in the form $\moment_{z} + d$, $\moment_{z} + 2d$, $\cdots$, $\moment_{z} + Kd$ for $K = \left\lceil \frac{\moment_x - \moment_z}{d}\right\rceil-1$, we can insert a node $y$ with $\moment_{y} = \moment_{z} + Kd$, $d_{y} = d$ and $t_{y} = K-1$ to the splay tree directly, so that these nodes will be omitted on the vertex $y$
        \item For the nodes in the form $\moment_{z} + (K+1)d, \moment_{z} + (k+2)d, \cdots$, we can insert a node $y$ with $moment_{y} = \moment_{z} + (K+1)d$ and $t_{y} = d_{y} = 0$. After that, we won't need to update the pair $(t, d)$ on the splay tree at all, since $t = +\infty$ in this case.
        
    \end{enumerate}
    \item If there's no such $1 \leq i \leq t_{x_R}$ such that $\moment_{x_R} - (i+1) \cdot d_{x_R} \leq \moment_x < \moment_{x_R} - i \cdot d_{x_R}$, then we don't need to break any series of omitted nodes.
    \item Otherwise, let $j$ be the integer that satisfies $\moment_{x_R} - (j+1) \cdot d_{x_R} \leq \moment_x < \moment_{x_R} - j \cdot d_{x_R}$. In this case, we need to break the series of omitted nodes $\moment_{x_R} - i \cdot d_{x_R}$ ($1 \leq i \leq t_{x_R}$) at $i = j$. Let $A = \moment_{x_R} - (j+1) \cdot d_{x_R}$, and we need to perform the following operations:
    \begin{enumerate}
        \item Let $T = t_{x_R}$.
        \item Update the tag $t_{x_R}$ to $j$. This ensures that any node $z$ omitted at vertex $x_R$ satisfies $\moment_z \geq \moment_x$. The nodes in the form $\moment_{x_R} - i \cdot d_{x_R}$ for $j+1 \leq i \leq T$ will be lost after this update.
        \item Insert a new node $z$ into the splay tree with $\moment_{z} = \moment_{x_R} - (j+1) \cdot d_{x_R}$. Note that there will be no omitted nodes with values of $\moment$ in the interval $(\moment_{z}, \moment_{x})$, so no series of omitted nodes will be broken during this insertion.
        \item Update the tags $d_{z}$ and $t_{z}$ to $d_{x_R}$ and $T - j - 1$, respectively. After this update, any nodes in the form $\moment_{z} - i \cdot d_{z}$ will be added as omitted nodes. Since $d_{z} = d_{x_R}$ and $\moment_{z} = \moment_{x_R} - (j+1) \cdot d_{x_R}$, all these nodes (including the node $z$ itself) correspond exactly to the lost nodes mentioned in step (2). Therefore, all the omitted nodes remain unchanged after fixing the ordering.
    \end{enumerate}
\end{itemize}
Since the procedure contains at most two \insert{} operations in an ordinary splay tree, the procedure finishes in an amortized $O(\log n)$ time.
\end{proof}

And similarly, we can delete a specific node in a splay tree with the omitted nodes.

\begin{lemma}
\label{lemma:sinks:delete}
There exists an algorithm to delete a node to a splay tree containing omitted nodes in an amortized $O(\log n)$ time.
\end{lemma}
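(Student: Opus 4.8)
The plan is to mirror the structure of the proof of \Cref{lemma:sinks:insert}, since deletion is the exact reverse operation, and to reduce the general case to deleting an explicit node. First I would split into two cases according to whether the key pair to be removed is stored explicitly as a node $x$ of the skeletal splay tree, or is one of the \emph{omitted} pairs — either represented through the tags $(t_{x_R},d_{x_R})$ of some node $x_R$, or through the tree-level tail pair $(t,d)$. In the latter case I would first \emph{materialize} the target pair: break the relevant arithmetic series at exactly that position, turning the target into its own explicit node, using the same series-splitting bookkeeping (adjusting $t_{x_R}$, inserting a boundary node that carries the residual $(t,d)$-tags, and so on) already worked out in the insertion proof. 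This preprocessing costs $O(1)$ ordinary splay-tree operations, after which it only remains to delete an explicit node.

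To delete an explicit node $x$: its own omitted series $\{(u,\moment_x - i\cdot d_x) : 1 \le i \le t_x\}$ encodes key pairs lying strictly between $\moment_{\pre(x)}$ and $\moment_x$, and these must survive the deletion. If $t_x = 0$ there is nothing to preserve and I would simply perform a standard amortized-$O(\log n)$ splay-tree deletion of $x$. If $t_x > 0$, I would first insert a new explicit node $y$ immediately before $x$ with $\moment_y = \moment_x - d_x$, $d_y = d_x$ and $t_y = t_x - 1$; the omitted range of $y$ coincides exactly with the old omitted range of $x$, so no further series needs to be split, and the only invariant to re-check is \Cref{formula:omittednodesareordered}, which holds because $\moment_{\pre(x)} \le \moment_x - t_x d_x = \moment_y - t_y d_y$. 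Now $t_x = 0$ and $x$ can be deleted as before; after the deletion the predecessor of $\suc(x)$ becomes $y$ (or $\pre(x)$ when $t_x$ was already $0$), and the ordering invariant for $\suc(x)$ is preserved because $\moment_{\pre(x)} \le \moment_x \le \moment_{\suc(x)} - t_{\suc(x)} d_{\suc(x)}$ held beforehand.

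Two boundary configurations need separate attention. When $x$ is the minimum-rank node we automatically have $t_x = 0$ — an omitted pair below it would be smaller than the minimum — so the deletion falls into the trivial case; this is in fact the only case actually invoked by the callers $\cnt$ (\Cref{algorithm:calculatenum}), $\revertds$, $\delquem$ and the various $\findmin$-based deletions. When $x$ is the maximum-rank node carrying the tree-level tail $(t,d)$ with $t=0$ there is nothing extra to do, while if $t=+\infty$ I would first materialize the first tail pair into an explicit node $w$ with $\moment_w = \moment_x + d$, keeping the tail $(+\infty,d)$ now anchored at $w$, before deleting $x$. In every sub-case the structural work amounts to $O(1)$ ordinary splay-tree accesses, insertions and deletions plus $O(1)$ tag updates, and — exactly as in \Cref{lemma:findpre} and \Cref{sec:dsrotatemaintain} — before each structural change I would push down the lazy $(\ta_x,\tb_x)$ tags and recompute the $\diff$ and $\sumdiff$ fields of the $O(1)$ affected nodes. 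Hence the whole deletion costs amortized $O(\log n)$. I expect the main obstacle to be purely organizational: tracking which node absorbs the orphaned omitted series in each of the interior / minimum / maximum-with-tail configurations, and re-verifying \Cref{formula:omittednodesareordered} together with the consistency of the difference-based ($\diff$/$\sumdiff$) encoding after every reconnection — the same pitfalls as in the insertion lemma, routine but error-prone.
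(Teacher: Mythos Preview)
Your proposal is correct and follows the same two-case structure (explicit target vs.\ omitted target) as the paper. The paper handles the explicit-node-with-$t_x>0$ case more economically: rather than inserting a fresh node $y$ to carry the orphaned series and then deleting $x$, it simply updates $x$ in place via $t_x \leftarrow t_x-1$ and $\diff_x \leftarrow \diff_x - d_x$, so that $x$ now represents what was formerly the first omitted pair. Your insert-then-delete variant is still $O(\log n)$, just one splay operation heavier. For the omitted-target case the paper likewise splits the series directly around the deleted position rather than first materializing it; the end state is identical to yours.

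One remark to correct: your claim that the minimum-rank explicit node ``automatically has $t_x=0$'' is not justified. The omitted series $\{\moment_x - i\,d_x : 1\le i\le t_x\}$ of that node encodes key pairs with moments strictly below $\moment_x$, and nothing in the invariants forbids this---for instance, the compressed nodes created in the modified $\cnt$ (\Cref{general:computecanddeltasum}) and later reinserted by $\revertds$ can land at the bottom of the tree with $t_z>0$. This does not affect the correctness of the lemma, since your general case already covers it, but the assertion about which sub-case the callers exercise should be removed.
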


\begin{proof}
The call of a $\delete(D_u, x)$ can be divided into two cases.

\begin{itemize}
    \item If $x$ is not an omitted node in $D_u$:
    \begin{itemize}
        \item If $t_{x} = 0$, which means there are no omitted nodes represented by node $x$, we can simply delete node $x$ from the splay tree.
        \item Otherwise, we update $t_{x} \leftarrow t_{x} - 1$ and $\diff_{x} \leftarrow \diff_{x} - d_{x}$. It is straightforward to see that this update does not affect any of the omitted nodes, and node $x$ is no longer a part of the splay tree $D_u$.
    \end{itemize}
    \item If $x$ is a node omitted in the form $\moment_{y} - i \cdot d_{y}$ for $1 \leq i \leq t_{y}$ on the node $y$.
    \begin{itemize}
        \item If $i = t_{y}$, i.e. $x$ is the node with the minimum value of $\moment_{x}$ in all the nodes omitted on the node $y$. Then we can update $t_{y} \leftarrow t_{y} - 1$ and finish the deletion.
        \item Otherwise, the omitted nodes will be divided into two parts. The first part consists of nodes in the form $\moment_{y} - j \cdot d_{y}$ for $1 \leq j < i$, and the second part consists of nodes in the form $\moment_{y} - j \cdot d_{y}$ for $i < j \leq t_{y}$. To maintain all these nodes properly, we need to update $t_{y} \leftarrow i-1$ and insert a new node $z$ with $\moment_{z} = \moment_{y} - (i+1) \cdot d_{y}$, $d_{z} = d_{y}$, and $t_{z} = t_{y} - i - 1$.
    \end{itemize}
\end{itemize}
\end{proof}

The following four lemmas show that, in such a way of maintaining key pairs, interfaces used in \Cref{algorithm:main} can be modified to adapt. Proofs can be located in \Cref{sec:proofinprelim}.

\begin{lemma}[Merge and Split]
\label{general:mergeandsplit}
   $\mergeupward{}$ and $\splitds$ can be modified to have the same performance as in \Cref{lemma:merge} and \Cref{lemma:split} when $\mathcal{D}$ maintains key pairs with difference in the way described in \Cref{sec:difference}. An additional cost of $O(\log^2 n + \log \lvert\lvert \sigma \rvert\rvert_1 \log n)$ is added to the overall complexity.
\end{lemma}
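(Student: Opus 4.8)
The plan is to reuse the $\mergeupward$ and $\splitds$ implementations of \Cref{sec:mergeds} and \Cref{sec:splitds} almost verbatim on the \emph{real} (non-omitted) nodes of each $D_u$, while treating the at-most-one infinite tail of a splay tree symbolically through its $(t,d)$ pair. The structural input is \Cref{lemma:root-each-component} from \Cref{sec:treedecom}: after decomposition every sink is a leaf, and each component is rooted so that its (at most three) sinks lie in the subtrees of distinct children of the root. Hence $\subtree(u)$ contains at most one sink for every non-root $u$, so $D_u$ carries at most one infinite tail ($t=+\infty$), inherited from the unique child $v^\ast\in\son(u)$ whose subtree holds that sink.

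For a non-root $u$ I would implement $\mergeupward(u,v)$ by keeping the small-to-large discipline \emph{on the number of real nodes} --- with the tree that owns the infinite tail, if any, always declared the larger --- iterating over the real nodes of the smaller tree in rank order and splicing each into the other tree by the amortized-$O(\log n)$ insertion of \Cref{lemma:sinks:insert}; when $v=v^\ast$ its tail is folded in by updating the result's $(t,d)$ pair, which is sound because for a subtree containing a sink $\delta(u,\cdot)$ eventually stabilizes, so the tail occupies precisely the moments exceeding those of all real key pairs. Correctness follows from the invariants of \Cref{sec:difference}: \Cref{formula:momentequalssumofdiff} and the ordering condition \Cref{formula:omittednodesareordered} are preserved by \Cref{lemma:sinks:inctime,lemma:sinks:insert,lemma:sinks:delete}, and an accounting as in \Cref{lemma:dsmemory} still bounds the number of real nodes by $O(n)$ at any time. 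For the running time, the dynamic-finger/small-to-large argument of \Cref{lemma:merge} applies unchanged to the real nodes, so all non-root $\mergeupward$ calls still cost $O(n\log n)$ in total; $\splitds(u,v)$ is the mirror image, recovering $D_v$'s real nodes via $\timestamp$ exactly as in \Cref{lemma:split}, extracting them with \Cref{lemma:sinks:delete}, and returning the tail to $D_{v^\ast}$ when $v=v^\ast$, again for $O(n\log n)$ total.

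The genuinely new cost is concentrated at the root $r$, which may have up to three children that each carry an infinite tail. There I would not merge those children at all: merge $r$'s finitely-represented children normally to form $D_r$, keep the (at most three) tailed children's trees aside, and compute $\cnt(r)$ not by the pop-up of \Cref{algorithm:calculatenum} but by a binary search for the least $k$ with $\psi_r(k)<\degree(r)$ over $k\in[0,O(n+\lvert\lvert\sigma\rvert\rvert_1)]$ --- a sound range by the stabilization observation of \Cref{sec:treedecom} together with the firing bound of \cite{tardos1988polynomial} --- evaluating $\psi_r(k)$ from $\sigma'_r$, one $\deltaquery$-style query on $D_r$, and at most three more such queries on the tailed trees; this costs $O(\log(n+\lvert\lvert\sigma\rvert\rvert_1)\cdot\log n)=O(\log^2 n+\log\lvert\lvert\sigma\rvert\rvert_1\log n)$, which is exactly the claimed additive term. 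The remaining root operations ($\deltasum(r)$, $\updatedsupward(r)$, $\revertds(r)$) are adapted in the obvious way, with the tail handled already by \Cref{lemma:sinks:inctime}.

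I expect the main obstacle to be the bookkeeping that makes $\mergeupward$ and $\splitds$ exact inverses once omitted nodes are present: a $\mergeupward$ that splices a finite sibling's real key pair into the interior of $v^\ast$'s tail region must break that tail (the ``Otherwise'' branch of \Cref{lemma:sinks:insert}), and the matching $\splitds$ must re-fuse exactly the two segments that were separated with their $(t,d)$ tags restored --- so most of the effort goes into checking that the break/fuse operations of \Cref{lemma:sinks:insert} and \Cref{lemma:sinks:delete} are symmetric and that $\diff$, $\sumdiff$, and the lazy tags stay consistent through the attendant rotations, together with verifying that $[0,O(n+\lvert\lvert\sigma\rvert\rvert_1)]$ is a correct search range for $\cnt(r)$.
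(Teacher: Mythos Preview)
Your approach is essentially the paper's: for every non-root $u$ you always merge the finite sibling into the (unique) sink-containing tree, and at the root you forego merging the tailed children and instead binary-search $\bc^{\down}(r)$ by evaluating $\psi_r(k)$ via $O(1)$ splay queries per step; this is exactly what the paper does, and your accounting of the extra $O(\log^2 n + \log\lVert\sigma\rVert_1\log n)$ matches. One small correction: the search range $[0,O(n+\lVert\sigma\rVert_1)]$ is not what \cite{tardos1988polynomial} gives you here (that bound is for sink-free instances); the paper instead invokes \Cref{lemma:upper-bound-of-firing-number-with-sinks} on the auxiliary graph to get a range polynomial in $n+\lVert\sigma\rVert_1$, whose logarithm is still $O(\log n+\log\lVert\sigma\rVert_1)$, so your stated cost survives once you cite the right bound.
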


\begin{lemma}[Update and Revert]
\label{general:updateandrevert}
$\updatedsupward${} and $\revertds{}$ can be modified to have the same performance as in \Cref{lemma:update} and \Cref{lemma:revert} when $\mathcal{D}$ maintains key pairs with difference in the way described in \Cref{sec:difference}.
\end{lemma}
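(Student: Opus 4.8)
The plan is to note that \Cref{algorithm:update} ($\updatedsupward$) and \Cref{algorithm:revert} ($\revertds$) are built solely from four primitives — \inctime, node insertion, node deletion, and \findmin — plus $O(1)$ scalar bookkeeping per call (maintaining $\num_u$, reading $\bc^{\down}(u)$, and appending/re-inserting the list $\bq_u$). In the difference representation of \Cref{sec:difference} each of these primitives already has a counterpart that respects omitted nodes in amortized $O(\log n)$ time: \inctime by \Cref{lemma:sinks:inctime}, insertion by \Cref{lemma:sinks:insert}, deletion by \Cref{lemma:sinks:delete}, and \findmin by the same $\sumdiff$-guided descent used in \Cref{lemma:findpre} (walk to the leftmost real node, and, should it carry an omitted series, report the least element of that series). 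The modified $\updatedsupward$ and $\revertds$ are then obtained by substituting each primitive with its omitted-node-aware version, and I would argue correctness and running time separately.

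For correctness I would invoke the proofs of \Cref{lemma:update} and \Cref{lemma:revert} essentially verbatim. Those proofs, together with \Cref{lemma:keypaircondition}, \Cref{lemma:updatea} and \Cref{lemma:updateb}, only ever reason about the $\moment$-values and ranks of key pairs, never about the physical layout of the splay tree; and by \Cref{lemma:diff} with the ordering invariant \Cref{formula:omittednodesareordered}, the triple (splay tree with $\diff$-fields, per-node tags $(t_x,d_x)$, tree-level pair $(t,d)$) faithfully encodes the whole — possibly infinite — sequence of key pairs and all their $\moment$-values. Hence every intermediate assertion carries over unchanged. The one point I would check explicitly is that the $\num_u=\degree(u)-1-\sigma'_u$ nodes that $\updatedsupward$ inserts with $\moment=0$, and that $\revertds$ later deletes, are genuine (non-omitted) nodes occupying ranks $1,\dots,\num_u$: this holds because before $\updatedsupward$ runs, $D_u$ consists only of key pairs with $\moment>\bc^{\down}(u)\ge 0$, and \inctime never turns a real node into an omitted one, so after $\revertds$ undoes the two \inctime calls the only key pairs of $\moment=0$ are exactly those $\num_u$ inserted real nodes; consequently the zero-$\moment$ deletion loop in $\revertds$ hits precisely them, lands in the easy branch of \Cref{lemma:sinks:delete}, never splits an omitted series, and terminates after exactly $\num_u$ iterations, whereupon \findmin returns a real node with $\moment>0$.

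For the complexity I would count primitive calls. One run of $\updatedsupward(u)$ issues $O(1)$ \inctime's and $\num_u=O(\degree(u))$ insertions; one run of $\revertds(u)$ issues $O(1)$ \inctime's and $O(\num_u+|\bq_u|)$ calls among \findmin, \Delete and \Insert. \Cref{lemma:dsmemory} still gives $\sum_u\num_u=O(n)$ and $\sum_u|\bq_u|=O(n)$ (each key-pair node is created once and popped at most once), so over the whole run of \Cref{algorithm:main} the number of primitive calls is $O(n)$, each amortized $O(\log n)$ by the lemmas above, for the claimed $O(n\log n)$ total; the node budget remains $O(n)$ since no real nodes beyond those of \Cref{lemma:dsmemory} are created and the tag fields cost only $O(1)$ words per node.

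The substantive difficulty is not in this lemma but was already discharged earlier: verifying that a uniform shift of all $\moment$-values followed by a rank-proportional stretch (the two \inctime calls that bracket the insertions in $\updatedsupward$, and their inverses in $\revertds$) preserves the invariant \Cref{formula:omittednodesareordered} and correctly rescales every $d_x$ and the tree-level $d$ — this is exactly \Cref{lemma:sinks:inctime}. Granting that, the only real obstacle here is the bookkeeping just described — pinning down that the $\moment=0$ nodes stay real so that \findmin and the deletion loop of $\revertds$ need never break an omitted series — after which the lemma follows.
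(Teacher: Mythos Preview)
Your proposal is correct and follows essentially the same approach as the paper: both reduce $\updatedsupward$ and $\revertds$ to their constituent primitives (\inctime, insertion, deletion) and invoke the already-established adaptations (\Cref{lemma:sinks:inctime}, \Cref{lemma:sinks:insert}, \Cref{lemma:sinks:delete}) of those primitives to the difference representation. Your write-up is considerably more careful than the paper's---you explicitly account for \findmin, verify that the $\moment=0$ nodes inserted by $\updatedsupward$ remain real (non-omitted) so that $\revertds$'s deletion loop never has to split an omitted series, and give an explicit primitive-call count---whereas the paper's proof is a terse two-paragraph appeal to the primitive lemmas without these checks.
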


\begin{lemma}[ComputeC and DeltaSum]
\label{general:computecanddeltasum}
$\cnt{}$ and $\deltasum{}$ can be modified to have the same performance as in \Cref{lemma:computeC} and \Cref{lemma:deltasum} when $\mathcal{D}$ maintains key pairs with difference in the way described in \Cref{sec:difference}.
\end{lemma}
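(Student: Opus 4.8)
The plan is to revisit \Cref{algorithm:calculatenum} and \Cref{algorithm:deltasum} line by line and check that every step still goes through when $\mathcal{D}$ stores key pairs in the compressed difference form of \Cref{sec:difference}. The only essential change is to the inner loop of $\cnt$: the original algorithm pops key pairs out of $D_u$ one at a time with $\findmin$, but now a single unomitted node $x\in D_u$ together with its tags $(t_x,d_x)$ (and, at the top, the splay-tree-level tail pair $(t,d)$) encodes a whole arithmetic progression of key pairs $\moment_x-t_xd_x,\dots,\moment_x-d_x,\moment_x$. So I would process $D_u$ block by block in increasing rank order, and for each block compute in closed form how long a prefix of the block the loop of \Cref{computeC:if} accepts before breaking.

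First I would note that the acceptance test of \Cref{computeC:if} is monotone along a block: if a block consists of key pairs $a,a+d,a+2d,\dots$ and the loop currently has $now=now_0$, $count=count_0$, then after accepting the first $j$ of them one has $now=a+(j-1)d$ and $count=count_0+j+[u\ne r]\,(a-now_0+(j-1)d)$, so the quantity $count+[u\ne r]\,(\moment_x-1-now)$ tested before accepting the $(j{+}1)$-st pair is a nondecreasing affine function of $j$. Hence the accepted indices inside a block form a prefix $\{1,\dots,j^\ast\}$, and $j^\ast$ solves a single linear inequality, computable in $O(1)$. If $j^\ast$ equals the block length the whole block is consumed: I delete $x$ from $D_u$, move the block into $\bq_u$ as one compressed entry, and update $now,count$ by the closed-form formulas before continuing. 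If $j^\ast$ is a proper prefix, I split the block using the node-split mechanism behind \Cref{lemma:sinks:delete} and \Cref{lemma:sinks:insert}: the accepted prefix becomes one compressed entry appended to $\bq_u$ and the remaining suffix stays as the re-tagged node $x$ in $D_u$, then the loop breaks exactly as before. The infinite tail $(t,d)$ with $t=+\infty$ is handled the same way, and it is always cut to a finite prefix because $count$ grows without bound while $\sigma'_u-\degree(u)$ is fixed. The final line \Cref{computeC:cornercase} and the return value $now+p$ are unchanged. Since each of the $O(n)$ actual splay nodes, plus the single tail pair of each $D_u$, is touched in $O(1)$ blocks at $O(\log n)$ splay cost per block, all $\cnt$ calls together cost $O(n\log n)$, matching \Cref{lemma:computeC}; correctness is immediate because these block operations simulate the original per-key-pair loop step for step.

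For $\deltasum$ I would augment $\bq_u$ with a running counter of the total number of key pairs it represents, incremented by the block sizes inside the modified $\cnt$; then $\deltasum(u)$ returns $\bc^{\down}(u)\cdot|\son(u)|$ minus this counter, still in $O(1)$. The proof of \Cref{lemma:deltasum} carries over verbatim once one observes that it only uses the \emph{number} of key pairs $x$ with $\moment_x\le\bc^{\down}(u)$, and that this number --- counting omitted key pairs, and counting the $k$ key pairs each sink child of $u$ contributes with $\moment\le k$ --- is exactly what the modified $\cnt$ stashes into $\bq_u$. I also need the compressed entries that $\cnt$ puts in $\bq_u$ to be of a shape the modified $\revertds$ of \Cref{general:updateandrevert} can re-insert, which is automatic since those entries are ordinary tagged nodes handled by \Cref{lemma:sinks:insert}.

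The step I expect to be the main obstacle is the bookkeeping for the block split in $\cnt$: choosing the new tags on the retained suffix and on the stashed prefix so that (i) the multiset of key pairs is preserved, (ii) the ordering invariant \Cref{formula:omittednodesareordered} is maintained, and (iii) the entry handed to $\bq_u$ is exactly what $\revertds$ will later reinsert --- together with verifying that the $t=+\infty$ tail is genuinely always truncated, so that no block of unbounded size is ever fed through the closed-form prefix computation.
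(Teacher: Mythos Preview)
Your plan for non-root vertices is correct and essentially matches the paper's: process the compressed blocks of key pairs in rank order, absorb whole blocks when possible, and split once when the acceptance test first fails mid-block. Your organization is in fact slightly cleaner than the paper's --- you fold the omitted key pairs into the loop itself, whereas the paper keeps the loop over unomitted nodes only and then has to account for the next node's omitted prefix inside the modified \Cref{computeC:cornercase} line via a more complicated closed-form for $p$. Either way the total cost is $O(n\log n)$, and your $\deltasum$ counter argument is exactly right.

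The gap is the root vertex $r$. By the choice of root in \Cref{lemma:root-each-component}, up to three children of $r$ can each carry a sink, so each of their $D_{v_i}$ has an infinite tail. The proof of \Cref{general:mergeandsplit} explicitly \emph{skips} $\mergeupward$ at the root for precisely this reason, so when your modified $\cnt(r,\sigma'_r)$ would run there is no single merged splay tree $D_r$ to pop from --- there are up to four separate trees (the merged sink-free children plus up to three sink-carrying ones). Your block-by-block pop-up loop assumes a single $D_u$; popping in global rank order across several trees, some with $t=+\infty$, is not the same operation and you give no argument for it. The paper handles $u=r$ by an entirely different route: it binary-searches $\bc^{\down}(r)$, evaluating $\psi_r(k)$ at each probe by querying the $\delta$ values in the at most four trees separately, at total cost $O(\log^2 n)$. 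You need either this binary search, or an explicit multi-way pop-up argument that controls the work when several participating trees carry infinite tails.
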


\begin{lemma}[DeltaQuery]
\label{general:deltaquery}
$\deltaquery${} can be modified to have the same performance as in \Cref{lemma:deltaquery} and \Cref{lemma:deltasum} when $\mathcal{D}$ maintains key pairs with difference in the way describing in \Cref{sec:difference}.
\end{lemma}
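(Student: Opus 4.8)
The plan is to adapt the top-down tree walk of \Cref{algorithm:deltaquery} to the compressed, difference-based representation of \Cref{sec:difference}, keeping the interface unchanged: $\deltaquery(u,k)$ must still return $\delta(u,k)$, which by \Cref{lemma:deltaequal} equals $k$ minus the number of key pairs $(u,k')$ with $k'\le k$. The only real change is that the key pairs are now stored implicitly, so I need to \emph{count} those with $\moment\le k$ directly from $\diff$, $\sumdiff$, the per-node tags $(t_x,d_x)$, and the tree-level tail $(t,d)$ of $D_u$, rather than reading $\moment_x$ off the nodes and using $\size(\leftson(\cdot))$.

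First I would replace the three pieces of state used by the original walk. In place of reading $\moment_{now}$ I would carry a running prefix sum $s$ of $\diff$ over all nodes of rank at most that of the current node, updated exactly as in the proof of \Cref{lemma:findpre} when descending left or right (using the $\sumdiff$ of the discarded subtree), so that $s=\moment_{now}$ whenever $now$ is visited; as in \Cref{algorithm:deltaquery} I would call $\pushdown$ on $now$ on entry, which is the modified $\inctime$ lazy propagation of \Cref{lemma:sinks:inctime}, to keep $\diff$ and $\sumdiff$ correct. In place of $\size(\leftson(now))$ I would maintain a subtree aggregate equal to $\sum_{x\in\subtreeT}(1+t_x)$, the total number of key pairs (explicit, plus those omitted at a node) living in that subtree; this aggregate is unaffected by $\inctime$ (which only shifts the $d_x$'s) and is updated under rotations, insertions, deletions, merges and splits exactly like $\size$, at no extra cost. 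Finally, each visited node $now$ is treated as a \emph{bundle}: the explicit pair $(u,\moment_{now})$ together with the omitted pairs $(u,\moment_{now}-i\,d_{now})$ for $1\le i\le t_{now}$; by the ordering condition \Cref{formula:omittednodesareordered}, this whole bundle lies strictly above every key pair in the left subtree and strictly below every key pair in the right subtree.

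With this setup the walk has three cases at $now$, decided by comparing $k$ against the bundle endpoints $\moment_{now}-t_{now}d_{now}$ and $\moment_{now}$. If $\moment_{now}\le k$, add the left-subtree aggregate plus $1+t_{now}$ to the running count and recurse right. If $\moment_{now}-t_{now}d_{now}>k$, add nothing and recurse left. Otherwise the bundle straddles $k$; then the entire left subtree is $\le k$ (by \Cref{formula:omittednodesareordered}, everything there is $\le\moment_{now}-t_{now}d_{now}\le k$) and exactly $t_{now}-\lceil(\moment_{now}-k)/d_{now}\rceil+1$ of the omitted pairs at $now$ are $\le k$ (handling the corner cases $t_{now}=0$, $d_{now}=0$, and $k$ exactly at an endpoint), these are added, and the walk stops because the right subtree is entirely $>k$. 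After the walk I splay the last-accessed node as in \Cref{algorithm:deltaquery}, and then add the contribution of the tree-level tail $(t,d)$, namely the pairs $\moment_z+i\,d$ for $z$ the maximum-rank node: since $t\in\{0,+\infty\}$ and every query uses $k\le n+\lVert\sigma\rVert_1$ (as argued in \Cref{sec:treedecom}), this is the finite value $\max(0,\lfloor(k-\moment_z)/d\rfloor)$ when $t=+\infty$ and $0$ when $t=0$; in particular for a sink leaf, where every $(u,k')$ is a key pair, the total count is exactly $k$ and the returned $\delta$ is $0$, as required. Returning $k$ minus the accumulated count yields $\delta(u,k)$ by \Cref{lemma:deltaequal}. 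For complexity, the walk is a single root-to-leaf descent followed by one $\splay$, hence amortized $O(\log n)$ per call; since $\deltaquery$ is invoked $O(n)$ times across \Cref{algorithm:downward} and there are $O(n)$ nodes in total (the decomposition of \Cref{sec:treedecom} has $O(n)$ vertices and the compressed representation adds $O(1)$ nodes per sink), the total cost is $O(n\log n)$ with $O(n)$ memory, matching \Cref{lemma:deltaquery}.

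The main obstacle I expect is the straddle case: getting the count of boundary omitted pairs exactly right and, more importantly, arguing that it is sound to halt the walk there — which rests entirely on \Cref{formula:omittednodesareordered} forcing $\moment$ to be monotone across the whole explicit-and-omitted key-pair sequence. A secondary, routine point is checking that the prefix-sum bookkeeping for $s$ and the new subtree aggregate are preserved by every structural operation the data structure performs ($\inctime$ pushdown, rotations, and the insert/delete primitives of \Cref{lemma:sinks:insert} and \Cref{lemma:sinks:delete}), so that the $\moment$ values and bundle counts seen during the walk are the true ones.
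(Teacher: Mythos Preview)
Your proposal is correct and follows essentially the same approach as the paper: reduce to counting key pairs $\le k$ via \Cref{lemma:deltaequal}, maintain the subtree aggregate $\sum(1+t_x)$ to count whole bundles, and handle the single straddling bundle (and the tail $(t,d)$) arithmetically. The only cosmetic difference is that the paper first invokes \Cref{lemma:findpre} to locate the boundary node and then adds the successor's partial-bundle count, whereas you fold the search and the counting into one root-to-leaf walk; both rely on \Cref{formula:omittednodesareordered} in the same way and give the same $O(\log n)$ amortized cost per call.
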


\subsection{Algorithm}
\label{sec:generalalgo}

Now we are ready to prove the main theorem of solving sandpile prediction on trees with sinks.

\begin{proof}[Proof of \Cref{theorem:treesink}]

Firstly, let's decompose the tree using the decomposition described in \Cref{sec:treedecom}. After the decomposition, every component will be a tree such that every sink vertex is a leaf.

This converts the problem into the traditional sandpile prediction problem described in \Cref{problem:prediction} such that the given graph is a tree. We now proceed to prove that the subroutine described in \Cref{general:mergeandsplit}, \Cref{general:updateandrevert} and \Cref{general:computecanddeltasum} returns the correct value, and operates the key pairs in the same manner as described in \Cref{sec:ds}.

\begin{itemize}
\item In \Cref{general:mergeandsplit}, we proved that the $\mergeupward$ and $\splitds$ operations return the splay tree that contains all the key pairs in $D_v$ for all $v \in \son(v)$. Thus, they exhibit the same behavior as the merge and split operations described in \Cref{lemma:merge} and \Cref{lemma:split}.
\item In \Cref{general:updateandrevert}, we consider implementing each clause in the $\updatedsupward$ and $\revertds$ operations under the current circumstances as described in \Cref{sec:ds}. Therefore, this would accurately replicate the original version.
\item In \Cref{general:computecanddeltasum}, we established the correctness of the result and the consistency of the information.
\end{itemize}

Combining the consistent results and the correctness established in \Cref{general:mergeandsplit}, \Cref{general:updateandrevert} and \Cref{general:computecanddeltasum}, and applying the analysis found in \Cref{sec:ds}, we can obtain the final result.






\end{proof}

\begin{remark}[Sandpile Prediction on Path with Sinks]
The time complexity can be improved to $O(n)$ if the given graph is $\text{Path}_n$. We can modify algorithms in \Cref{sec:path} in a similar way.
\end{remark}




\section{Splay Trees Maintenance}
\subsection{Rotation}
\label{sec:dsrotatemaintain}
The fundamental operation of the splay tree is $\splay$. A $\splay(x)$ operation is called whenever we access any node $x$. By \cite{sleator1985self} splaying the node after we access it will give us amortized $O(\log n)$ time complexity for inserting, deleting and searching. In $\splay(x)$, we will make node $x$ to the root of the splay tree while maintaining the in-order traverse of the tree unchanged. The way to do this is by performing a series of \textit{splay steps} \cite{sleator1985self}. Each splay step might contain one or two rotations of $x$ and its current $\father(x)$, which moves $x$ closer to the root while the overall in-order sequence remains unchanged. More precisely, let $x$ be an arbitrary non-root node in $D_u$ and $y = \father(x)$. a single $\rotate(x)$ will make $x$ become the father of $y$ and $y$ become one of the two children of $x$. 

We analyze here that as long as we call $\pushup$ and $\pushdown$ at the proper timing during each splay step, we are able to maintain $\timemin$, $\timemax$ correctly and the lazy propagation mechanism is still correct. Specifically, when a rotation happens, it might cause:

\begin{itemize}
    \item $\timemin_x$ and $\timemax_x$ should be recalculated base on the current $\subtreeT$.

    \item $\ta_x$ and $\tb_x$ might take effect on the wrong set of nodes.
\end{itemize}


We design $\pushdown(x)$ described in \Cref{algorithm:pushdown} to modify $\leftson(x)$ and $\rightson(x)$'s information according to $\ta_x$ and $\tb_x$. After that, we clear up these two values. In this way, we ensure that these lazy tags are always taking effect on the correct set of nodes.

\IncMargin{1em}
\begin{algorithm}
  \SetKwData{Left}{left}\SetKwData{This}{this}\SetKwData{Up}{up}
  \SetKwComment{Comment}{$\triangleright$\ }{}
  \SetKwFunction{SolveUpward}{\upward}
  \SetKwFunction{SolveDownward}{\downward}
  \SetKwFunction{CalculateNumberOfFirings}{\cnt}
  \SetKwFunction{UpdateDataStructure}{\updatedsupward}
  \SetKwFunction{MergeDataStructure}{\mergeupward}
  \SetKwFunction{DeltaSum}{\deltasum}
  \SetKwFunction{InitializeDataStructure}{\initds}
  \SetKwFunction{DeltaQuery}{\deltaquery}
  \SetKwFunction{RevertDataStructure}{\revertds}
  \SetKwFunction{SplitDataStructure}{\splitds}
  \SetKwInOut{Input}{input}\SetKwInOut{Output}{output}  
  \SetKwFunction{FSolveDownward}{SolveDownward}
  \SetKwFunction{IncTime}{IncTime}

  \SetKwProg{Fn}{Function}{:}{}

  \If{$\leftson(u) \ne \nil$}{
    $\IncTime(\leftson(u), \ta_u, \tb_u)$\;
  }
  \If{$\rightson(u) \ne \nil$}{
    $\IncTime(\rightson(u), \ta_u, \tb_u + \left(\size(\leftson(u)) + 1\right) \cdot \ta_u)$\;
  }
  \BlankLine

  \caption{\pushdown($u$)}\label{algorithm:pushdown}
\end{algorithm}\DecMargin{1em}

After $x$ is involved in a rotation, it is likely that $\leftson(x)$ and $\rightson(x)$ will change, thus $\timemin$ and $\timemax$ will need to be recalculated since they should denote the minimum and maximum $\timestamp$ value of the current subtree of $x$.

We design $\pushup(x)$ described in \Cref{algorithm:pushup} to recalculate $\timemin_x$ and $\timemax_x$ based on the information of its children for any node $x \in D_u$. To prove the correctness, we concentrate on the change of $x$'s corresponding $\leftson(x)$, $\rightson(x)$, assuming $\leftson(x)$ and $\rightson(x)$'s subtree structure remains unchanged, thus having the correct value of $\timemin$ and $\timemax$.

\IncMargin{1em}
\begin{algorithm}
  \SetKwData{Left}{left}\SetKwData{This}{this}\SetKwData{Up}{up}
  \SetKwComment{Comment}{$\triangleright$\ }{}
  \SetKwFunction{SolveUpward}{\upward}
  \SetKwFunction{SolveDownward}{\downward}
  \SetKwFunction{CalculateNumberOfFirings}{\cnt}
  \SetKwFunction{UpdateDataStructure}{\updatedsupward}
  \SetKwFunction{MergeDataStructure}{\mergeupward}
  \SetKwFunction{DeltaSum}{\deltasum}
  \SetKwFunction{InitializeDataStructure}{\initds}
  \SetKwFunction{DeltaQuery}{\deltaquery}
  \SetKwFunction{RevertDataStructure}{\revertds}
  \SetKwFunction{SplitDataStructure}{\splitds}
  \SetKwInOut{Input}{input}\SetKwInOut{Output}{output}  
  \SetKwFunction{FSolveDownward}{SolveDownward}

  \SetKwProg{Fn}{Function}{:}{}

    $\timemin_x \leftarrow \timestamp_x$\;
    $\timemax_x \leftarrow \timestamp_x$\;
    \If{$\leftson(x) \ne \nil$}{
        $\timemin_x \leftarrow \min(\timemin_x, \timemin_{\leftson(x)})$\;
        $\timemax_x \leftarrow \max(\timemax_x, \timemax_{\leftson(x)})$\;
    }
    \If{$\rightson(x) \ne \nil$}{
        $\timemin_x \leftarrow \min(\timemin_x, \timemin_{\rightson(x)})$\;
        $\timemax_x \leftarrow \max(\timemax_x, \timemax_{\rightson(x)})$\;
    }
  
  \BlankLine

  \caption{\pushup($x$)}\label{algorithm:pushup}
\end{algorithm}\DecMargin{1em}

Let $L = \{ y \mid y \in \subtreeT(\leftson(x))\}$ and $R = \{ y \mid y \in \subtreeT(\rightson(x))\}$. Specially, if $\leftson(x) = \nil$, then $L = \varnothing$ (and similarly for $R$). Then $\subtreeT(x) = L \cup R \cup \{x\}$. We only analyze $\timemin$ here since $\timemax$ shares the same transition logic:

\begin{align*}
\min_{y \in \subtreeT(x)} \timestamp_y &= \min(\min_{y \in L} \timestamp_y, \min_{y \in R} \timestamp_y, \timestamp_x) \\
&= \min(\timemin_{\leftson(x)}, \timemin_{\rightson(x)}, \timestamp_x)
\end{align*}

Terms are ignored when $\leftson(x)$ or $\rightson(x)$ is $\nil$. \Cref{algorithm:pushup} consists of the exact transition as the aggregation from $\leftson(x)$ and $\rightson(x)$'s result, which compute the correct value of $\timemin_x$ and $\timemax_x$ as result.

The key operation to change the structure of the splay tree is $\rotate(x)$, as we mentioned in \Cref{figure:zig}. For any non-root node $x$, the $\rotate(x)$ procedure will make the parent of the node $x$ become the child of $x$, which means the distance of the node $x$ to the root is decreased by exactly one. 

Assume $y$ is the parent of the node $x$ before the rotation. After calling $\rotate(x)$, the set of the nodes corresponding to the vertex $x$ and $y$ are all changed, so we need to perform $\pushup$ operation on them. Note that $y$ is one of the children of $x$, thus we need to update the information of $y$ before updating the node $x$, so we need to call $\pushup(y)$ and $\pushup(x)$ in order every time we finish a $\rotate(x)$.

We will show the following example of how the push-up performs on the $\zig$, $\zigzag$ and $\zigzig$ operation in the splay tree, which was mentioned in \cite{sleator1985self}.

\begin{itemize}
    \item The $\zig(x)$ function is called once $x$ is not the root node of the splay but $\father(x)$. It will perform a $\rotate$ operation on vertex $x$, so that $x$ becomes the root of the splay. The \Cref{figure:zig} shows the procedure of $\zig(x)$. So after $\zig(x)$, we need to call $\pushup(y)$ and then $\pushup(x)$. The details of the operation are described in \Cref{figure:zig}.
\begin{figure}[h]
\centering
\begin{tikzpicture}
\begin{scope}[shift={(-3,0)}]
 \tikzstyle{every node}=[fill=black!30,circle,minimum size=0.6cm,inner sep=1pt]

  \node (1) at (0,0) {$y$}; 
  \node (2) at (-0.5,-1) {$a$}; 
  \node[special] (3) at (0.5,-1) {$x$}; 
  \node (4) at (0,-2) {$b$}; 
  \node (5) at (1,-2) {$c$}; 

  \foreach \from/\to in {1/2,1/3,3/4,3/5}
    \draw[edge] (\from) -> (\to);
\end{scope}

\begin{scope}[shift={(0,0)}]
\draw[line width=2pt,->] (-1.25,-1) -- (1.25,-1);
\node at (0,-0.5){$\rotate(x)$};
\node at (0,-1.5){$\pushup(y)$};
\node at (0,-2){$\pushup(x)$};
\end{scope}

\begin{scope}[shift={(3,0)}]
 \tikzstyle{every node}=[fill=black!30,circle,minimum size=0.6cm,inner sep=1pt]

  \node (1) at (-0.5,-1) {$y$}; 
  \node (2) at (0,-2) {$b$}; 
  \node[special] (3) at (0,0) {$x$}; 
  \node (4) at (-1,-2) {$a$}; 
  \node (5) at (0.5,-1) {$c$}; 

  \foreach \from/\to in {3/1,3/5,1/2,1/4}
    \draw[edge] (\from) -> (\to);
\end{scope}
\end{tikzpicture}
\caption{The figure corresponds to a call of $\zig(x)$, which is simply rotating the node $x$.}
\label{figure:zig}
\end{figure}
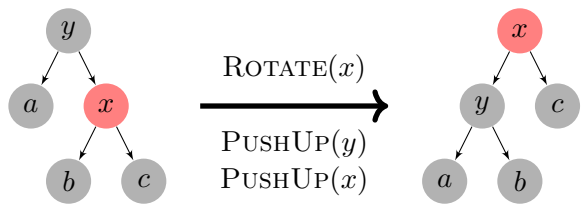
    \item The $\zigzig(x)$ function is called once $x$ and $y = \father(x)$ is not the root node of the splay, and they are both left (or right) children. In the procedure, we will rotate the node $y$ and then rotate the node $X$. After the rotation $y$ becomes the parent of $z$ and $x$ becomes the parent of $x$. The details of the operation are described in \Cref{figure:zigzig}.
\begin{figure}[h]
\centering
\begin{tikzpicture}
\begin{scope}[shift={(-3.5,0)}]
 \tikzstyle{every node}=[fill=black!30,circle,minimum size=0.6cm,inner sep=1pt]

  \node (0) at (0,0) {$z$}; 
  \node (1) at (-0.5,-1) {$y$}; 
  \node[special] (2) at (-1,-2) {$x$}; 
  \node (3) at (-1.5,-3) {$a$}; 
  \node (4) at (-0.5,-3) {$b$}; 
  \node (5) at (0,-2) {$c$}; 
  \node (6) at (0.5,-1) {$d$}; 

  \foreach \from/\to in {0/1,1/2,2/3,2/4,1/5,0/6}
    \draw[edge] (\from) -> (\to);
\end{scope}

\begin{scope}[shift={(-0.5,0)}]
\draw[line width=2pt,->] (-2,-1) -- (1,-1);
\node at (-0.5,-0.5){$\rotate(y)$};
\node at (-0.5,-1.5){$\pushup(z)$};
\node at (-0.5,-2){$\pushup(y)$};
\end{scope}

\begin{scope}[shift={(2.5,0)}]
 \tikzstyle{every node}=[fill=black!30,circle,minimum size=0.6cm,inner sep=1pt]
  \node (y) at (0,0) {$y$}; 
  \node[special] (x) at (-1,-1) {$x$}; 
  \node (z) at (1,-1) {$z$}; 
  \node (a) at (-1.5,-2) {$a$}; 
  \node (b) at (-0.5,-2) {$b$}; 
  \node (c) at (0.5,-2) {$c$}; 
  \node (d) at (1.5,-2) {$d$}; 
  \foreach \from/\to in {y/x,y/z,x/a,x/b,z/c,z/d}
    \draw[edge] (\from) -> (\to);
\end{scope}

\begin{scope}[shift={(5.5,0)}]
\draw[line width=2pt,->] (-1.5,-1) -- (1.5,-1);
\node at (0,-0.5){$\rotate(x)$};
\node at (0,-1.5){$\pushup(y)$};
\node at (0,-2){$\pushup(x)$};
\end{scope}

\begin{scope}[shift={(8,0)}]
 \tikzstyle{every node}=[fill=black!30,circle,minimum size=0.6cm,inner sep=1pt]
  \node[special] (x) at (0,0) {$x$}; 
  \node (y) at (0.5,-1) {$y$}; 
  \node (z) at (1,-2) {$z$}; 
  \node (a) at (-0.5,-1) {$a$}; 
  \node (b) at (0,-2) {$b$}; 
  \node (c) at (0.5,-3) {$c$}; 
  \node (d) at (1.5,-3) {$d$}; 
  \foreach \from/\to in {x/y,y/z,x/a,y/b,z/c,z/d}
    \draw[edge] (\from) -> (\to);
\end{scope}

\end{tikzpicture}
\caption{The figure corresponds to a call of $\zigzig(x)$}
\label{figure:zigzig}
\end{figure}
    \item The $\zigzag(x)$ function is called once $x$ and $\father(x)$ is not the root node of the splay, and $x$ is a left child and $\father(x)$ is a right child, or vice-versa. In the procedure, we will rotate the node $x$ twice. After the rotation $x$ will be the parent of the node $y$ and $z$. The details of the operation are described in \Cref{figure:zigzag}.
\begin{figure}[h]
\centering
\begin{tikzpicture}
\begin{scope}[shift={(-3.5,0)}]
 \tikzstyle{every node}=[fill=black!30,circle,minimum size=0.6cm,inner sep=1pt]

  \node (z) at (0,0) {$z$}; 
  \node (y) at (-0.5,-1) {$y$}; 
  \node (d) at (0.5,-1) {$d$}; 
  \node (a) at (-1,-2) {$a$}; 
  \node[special] (x) at (0,-2) {$x$}; 
  \node (b) at (-0.5,-3) {$b$}; 
  \node (c) at (0.5,-3) {$c$}; 
  \foreach \from/\to in {z/y,z/d,y/a,y/x,x/b,x/c}
    \draw[edge] (\from) -> (\to);

\end{scope}

\begin{scope}[shift={(-0.5,0)}]
\draw[line width=2pt,->] (-2,-1) -- (1.25,-1);
\node at (-0.5,-0.5){$\rotate(x)$};
\node at (-0.5,-1.5){$\pushup(y)$};
\node at (-0.5,-2){$\pushup(x)$};
\end{scope}

\begin{scope}[shift={(2.5,0)}]
 \tikzstyle{every node}=[fill=black!30,circle,minimum size=0.6cm,inner sep=1pt]
  \node (z) at (0,0) {$z$}; 
  \node[special] (x) at (-0.5,-1) {$x$}; 
  \node (d) at (0.5,-1) {$d$}; 
  \node (y) at (-1,-2) {$y$}; 
  \node (c) at (0,-2) {$c$}; 
  \node (a) at (-1.5,-3) {$a$}; 
  \node (b) at (-0.5,-3) {$b$}; 
  \foreach \from/\to in {z/x,z/d,x/y,x/c,y/a,y/b}
    \draw[edge] (\from) -> (\to);
\end{scope}

\begin{scope}[shift={(5,0)}]
\draw[line width=2pt,->] (-1.5,-1) -- (1.5,-1);
\node at (0,-0.5){$\rotate(x)$};
\node at (0,-1.5){$\pushup(z)$};
\node at (0,-2){$\pushup(x)$};
\end{scope}

\begin{scope}[shift={(8,0)}]
 \tikzstyle{every node}=[fill=black!30,circle,minimum size=0.6cm,inner sep=1pt]
  \node[special] (x) at (0,0) {$x$}; 
  \node (y) at (-1,-1) {$y$}; 
  \node (z) at (1,-1) {$z$}; 
  \node (a) at (-1.5,-2) {$a$}; 
  \node (b) at (-0.5,-2) {$b$}; 
  \node (c) at (0.5,-2) {$c$}; 
  \node (d) at (1.5,-2) {$d$}; 
  \foreach \from/\to in {x/y,x/z,y/a,y/b,z/c,z/d}
    \draw[edge] (\from) -> (\to);
\end{scope}

\end{tikzpicture}
\caption{The figure corresponds to a call of $\zigzag(x)$}
\label{figure:zigzag}
\end{figure}
\end{itemize}

Finally, we need to consider when we should perform the $\pushdown{}$ operation. Once we traverse the splay tree from the root, we need to push down all the tags from the root to the current node. This implies every time we need to search a specific node, we need to push down all the nodes from the root to that node in order. This includes the \Insert{}, \Delete{} and \splay{} function in the splay tree. Furthermore, for the function \findmin{} (described in \Cref{algorithm:findmin}), \deltaquery{} (described in \Cref{algorithm:deltaquery}), \findoneintree{} (described in \Cref{algorithm:findoneintree}) and \findonebeforetree{} (described in \Cref{algorithm:findonebeforetree}), it is equivalent to search a specific node in the splay tree. Since we will splay the node we searched in all these functions, the tags will be pushed down correctly in the \splay{} procedure.

\subsection{Merging by Small-To-Large Technique}
\label{sec:mergeds}
We will analyze the $\mergeupward$ (\Cref{algorithm:merge}) operation by proving the following lemma. 

\mergelemma*

\IncMargin{1em}
\begin{algorithm}[H]
  \SetKwData{Left}{left}\SetKwData{This}{this}\SetKwData{Up}{up}
  \SetKwFunction{swap}{Swap}
  \SetKwFunction{insert}{insert}

  \SetKwProg{Fn}{Function}{:}{}
  \BlankLine

  \If{$\size(D_u)< \size(D_v)$}{ \label{merge:if}
    $\res_v\leftarrow 1$\; \label{merge:bool1}
    \swap($D_u$,$D_v$)\; \label{merge:swap}
  }
  \Else{$\res_v\leftarrow 0$\;} \label{merge:bool2}

  \For{$x \in D_v$ by increasing order}{ \label{merge:for}
    \insert($D_u$, $x$)\; \label{merge:insert}
  }
  \caption{\mergeupward($u$, $v$)}\label{algorithm:merge}
\end{algorithm}\DecMargin{1em}

When merging two splay trees $D_u$ and $D_v$, we will need to use the classic small-to-large technique to merge them efficiently.

The small-to-large technique is easy to describe: when we want to merge two splay trees $D_u$ and $D_v$ , we always insert all nodes from the splay tree with a smaller size to the one with a larger size one by one in ascending order (\Cref{merge:insert}).

In \Cref{algorithm:merge}, we need to guarantee that $D_u$ contains all nodes from $D_v,v \in \son(u)$ after merging. Therefore, we will simply swap $D_u$ and $D_v$ (\Cref{merge:swap}) if $\size(D_u)<\size(D_v)$ (\Cref{merge:if}). Here we store a boolean value $\res_v$ to keep track of such swapping (\Cref{merge:bool1} and \Cref{merge:bool2}).

The analysis is also very trivial to reach a $O(n \log^2 n)$ upper bound: We can see that the time cost of all $\mergeupward$ operations is equivalent to merging a series of splay trees into one. Note that every $\mergeupward$'s time cost is proportional to the size of the smaller splay. In the next $\mergeupward$, the size value of the smaller splay is at most doubled than the previous one. Therefore, for each node, it will be inserted into other splay trees at most $O(\log n)$ times. Since the insertion on splay trees is $O(\log n)$ amortized, it costs $O(n\log^2 n)$ time in total. However, if we guarantee that all nodes are inserted in increasing order during the process (\Cref{merge:for}), we will be able to trigger \Cref{theorem:dynamic-finger-theorem} to reach a better bound.

\begin{corollary}
\label{corollary:seriessplay}
Given a series of splay trees: $T_1,T_2,\cdots,T_k$ such that $\sum_{i=1}^{k}\size(T_i)=n$, if we call $\mergeupward$ $k-1$ times in arbitrary order to merge them into one, the total time cost for all $\mergeupward$ operations is $O(n \log n)$. Note that between any two $\mergeupward$ operations, it is allowed to have other operations on splay trees separately.
\end{corollary}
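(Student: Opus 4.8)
The plan is to view the $k-1$ calls to $\mergeupward$ as the internal nodes of a binary \emph{merge tree} $M$ with leaves $T_1,\dots,T_k$, each internal node merging the two splay trees currently held by its children; the total time of all merges is then the sum, over internal nodes $\mu$ of $M$, of the cost of that one merge, since the size comparison and the swap on \Cref{merge:swap} add only $O(1)$ per node. At $\mu$, if the two child trees hold $a$ and $b$ nodes with $b\le a$, \Cref{algorithm:merge} reinserts the $b$ nodes of the smaller tree into the larger one, one at a time, \emph{in increasing $\moment$-order} (\Cref{merge:for}). Since the standard sum-of-log-subtree-sizes splay potential always lies in $[0,O(n\log n)]$ and \Cref{algorithm:main} performs only $O(n)$ splay operations overall (\Cref{lemma:dsmemory}, \Cref{theorem:ds}), it suffices to bound the sum of the \emph{dynamic-finger amortized costs} of the insertions performed inside merges; converting amortized to actual cost, and absorbing the effect of the interleaved non-merge operations, each costs at most an additive $O(n\log n)$.

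For one merge at $\mu$, list the smaller tree's nodes in $\moment$-order as $x_1<\dots<x_b$. A splay-tree insertion splays the new node to the root, so immediately before inserting $x_j$ the node $x_{j-1}$ is the root, and the nodes of the current tree lying strictly between $x_{j-1}$ and $x_j$ are exactly the original nodes $y$ of the larger tree with $\moment_{x_{j-1}}<\moment_y<\moment_{x_j}$ (no earlier $x_i$ lies strictly between two consecutive $x$'s). Hence the finger distance is $d(x_{j-1},x_j)=1+c_j$, where $c_j$ counts such $y$ and $\sum_{j=2}^{b}c_j\le a$ because these intervals are pairwise disjoint. Applying \Cref{theorem:dynamic-finger-theorem} to the access sequence $x_1,\dots,x_b$ (an insertion being, up to $O(1)$ pointer work, an access to its splice point) and then concavity of $\log$,
\[
\text{cost of merge }\mu \;=\; O\!\left(\log n+\sum_{j=2}^{b}\log(1+c_j)\right)\;\le\; O\!\left(\log n+b\log\!\Bigl(1+\tfrac{a}{b}\Bigr)\right),
\]
the leading $\log n$ paying for the first insertion's root-to-leaf descent.

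Summing over $M$, the $O(\log n)$ terms contribute $O(n\log n)$ since $M$ has $k-1\le n$ internal nodes. For $\sum_\mu b_\mu\log(1+a_\mu/b_\mu)$, distribute each summand over the $b_\mu$ nodes of the smaller tree, charging each such node $O(\log(1+a_\mu/b_\mu))$. By the small-to-large invariant a fixed node $x$ lies in the smaller part of at most $\lceil\log_2 n\rceil$ merges, and at such a merge its current tree grows from $p$ to $r\ge 2p$; the dynamic-finger bound charges $x$ only $O(\log(r/p))$ there (not $O(\log n)$), and since along $x$'s reinsertion history these sizes increase geometrically the charges telescope to $O(\log n)$ in total for $x$. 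Over the $O(n)$ nodes this is $O(n\log n)$, so together with the earlier $O(n\log n)$ and the potential bookkeeping the merges cost $O(n\log n)$. Since the statement allows other operations between merges, I note that the per-merge bound above is stated relative to the state of the larger tree at the start of that merge and the $O(\log n)$ slack already pays for re-homing the finger at the root, so intervening operations are harmless.

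I expect the main obstacle to be the clean invocation of \Cref{theorem:dynamic-finger-theorem}, which as quoted measures rank differences inside a \emph{fixed} tree over an access (not insertion) sequence, whereas here the larger tree grows during the merge. I would handle this as in the classical finger-search literature: model each insertion of $x_j$ as an access to its current predecessor followed by $O(1)$ structural surgery and a splay, and observe that the identity $d(x_{j-1},x_j)=1+c_j$ holds regardless of which earlier $x_i$ have already been spliced in, since all of them lie to the left of $x_{j-1}$. The secondary points---the one-time $O(n)$ warm-up term of the finger bound, the interleaved non-merge operations, and (for the fully general statement rather than the use inside \Cref{algorithm:main}, where the relevant deletions occur earlier and deeper in the recursion) the geometric-growth claim used in the telescoping step---are then dispatched by the single global splay potential, which is $O(n\log n)$-bounded and raised by only $O(\log n)$ per non-merge operation.
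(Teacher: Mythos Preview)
Your approach is essentially the same as the paper's: both hinge on the per-merge bound $O(b\log((a+b)/b))$ for inserting $b$ ordered items into a splay tree of size $a$, and then sum over the merge tree. The paper streamlines both steps---it cites the per-merge bound directly as \Cref{theorem:merge-two-set} rather than re-deriving it from \Cref{theorem:dynamic-finger-theorem} (which sidesteps your acknowledged worry about insertions into a growing tree), and it sums via the one-line induction $C\,|A|\log|A|+C\,|B|\log|B|+C\,|B|\log(n/|B|)\le C\,n\log n$ at the last merge---but your telescoping charge argument reaches the same conclusion.
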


We will need a modified version of \Cref{theorem:dynamic-finger-theorem}:
\begin{theorem}[\cite{brodal2018finger}]
\label{theorem:merge-two-set}
The total time to perform $n$ insertions on a splay tree of size $m$ is $O(n\log \frac{m+n}{n})$ if the insertions are performed on items in increasing order of ranks.
\end{theorem}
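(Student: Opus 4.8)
The plan is to prove the bound by a weighted–potential analysis of splay trees, using crucially that the $n$ insertions are performed in increasing order of rank, so that successive insertions touch the tree in a localized region. Write $N=m+n$ for the final number of keys, let $x_1<x_2<\dots<x_n$ be the inserted keys, and for $0\le j\le n$ let $g_j$ be the number of \emph{original} keys lying between $x_j$ and $x_{j+1}$ (with the obvious meaning for $g_0$ and $g_n$), so that $\sum_{j=0}^n g_j=m$. The first step is to reduce an insertion to an access: inserting $x_j$ amounts to searching for its position, creating the new node there, and splaying it to the root, which costs (up to $O(1)$ restructuring) the same amortized time as splaying the node on whose search path $x_j$ is created; and since $x_{j-1}$ sits at the root immediately before (it was splayed there by the previous insertion), that search path descends only into the portion of the tree lying strictly to the right of $x_{j-1}$, past at most $g_{j-1}$ original keys plus $x_j$ itself.

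Second, I would run the standard Sleator--Tarjan weighted access lemma with the potential $\Phi=\sum_{v}\log s(v)$, where $s(v)$ is the total weight in $v$'s subtree, under a weight assignment for which the region ``just past $x_{j-1}$'' carries total weight $O(g_{j-1}+1)$. The target inequalities are: the access to $x_j$'s insertion point, given $x_{j-1}$ at the root, charges $O\!\bigl(1+\log\tfrac{W}{w}\bigr)=O\!\bigl(\log(g_{j-1}+2)\bigr)$; and the potential increase caused by inserting the fresh node $x_j$ is likewise $O(\log(g_{j-1}+2))$. Telescoping the amortized costs over $j=1,\dots,n$, the only surviving term is $\sum_{j} O(\log(g_j+2))$, and it remains to estimate this sum. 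By concavity of $\log$ (Jensen's inequality), since there are $n+1$ gaps with $\sum_j g_j=m$,
\[
\sum_{j=0}^{n}\log(g_j+2)\ \le\ (n+1)\log\!\left(2+\frac{m}{n+1}\right)\ =\ O\!\left(n\log\frac{m+n}{n}\right),
\]
which is the claimed bound.

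The main obstacle will be exactly the second step. A naive appeal to the Dynamic Finger Theorem (\Cref{theorem:dynamic-finger-theorem}) does \emph{not} suffice, because the additive term there is $\Theta(\text{tree size})=\Theta(m)$ over the whole sequence, which swamps the target once $m\gg n$. One therefore has to redo the relevant part of the finger-search analysis with a weight function concentrated near the moving finger (the last inserted key), so that the one-time potential discrepancies between the ``before'' and ``after'' configurations are charged \emph{locally} — as $O(\log(g_j+2))$ per step rather than $O(m)$ globally. This localization is the technical heart of the argument, and is precisely where the cited result of \cite{brodal2018finger} does the work; the remaining ingredients — the reduction from insertion to access, the telescoping of the potential, and the final Jensen estimate — are routine.
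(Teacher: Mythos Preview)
The paper does not prove this statement: it is quoted verbatim from \cite{brodal2018finger} and used as a black box in the proof of \Cref{corollary:seriessplay}. There is therefore no ``paper's own proof'' to compare your proposal against.

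As for the proposal itself, the outline is the right shape --- reduce each insertion to a splay from the previously inserted key at the root, bound the $j$th step by $O(\log(g_{j-1}+2))$ where $g_{j-1}$ is the number of original keys in the gap, and sum via Jensen to get $O(n\log\tfrac{m+n}{n})$. Your diagnosis of the obstacle is also correct: a direct appeal to \Cref{theorem:dynamic-finger-theorem} leaves an additive $\Theta(m)$ term that kills the bound when $m\gg n$, so one must carry out a localized potential analysis in which the potential swing at each step is $O(\log(g_j+2))$ rather than paying a one-time $\Theta(m)$ globally. The gap in your write-up is a circularity: you identify this localization as ``the technical heart'' and then say it ``is precisely where the cited result of \cite{brodal2018finger} does the work'' --- but the statement you are asked to prove \emph{is} that cited result, so you cannot invoke it. A self-contained argument would have to actually specify the moving weight assignment (or an equivalent device) and verify both the amortized access cost and the potential change at insertion are $O(\log(g_j+2))$; everything else in your sketch (the reduction, the telescoping, the Jensen step) is routine once that is done.
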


Now we are ready to prove \Cref{lemma:merge}. 
\begin{proof}[Proof of \Cref{corollary:seriessplay}]
The goal is to proof the overall complexity is no more than $C\cdot n\log n$, where $C$ is a deterministic constant. We can do the mathematical induction here. For $k=1$, it is obviously true. Now assuming it is true for $k=1 \cdots i-1$, now we look at the case where $k=i$. Focusing on the last $\mergeupward(A,B)$ operation we performed where $A$ and $B$ are splay trees, we have $\size(A)+\size(B) = n$. By the inductive hypothesis, we know that the previous merging processes cost $C\cdot (\size(A)\log \size(A)+\size(B)\log \size(B))$ time. Without loss of generality, we assume $\size(A)>\size(B)$. Therefore, we will insert the nodes of $B$ to $A$ one by one in increasing order. By \Cref{theorem:merge-two-set}, this costs at most $C_1\cdot \size(B)\log \frac{n}{\size(B)}$ time, where $C_1$ is a constant. Taking $C\ge C_1$, we get $C\cdot \size(A)\log \size(A)+C\cdot \size(B)\log \size(B)+C_1\cdot \size(B)\log(\frac{n}{\size(B)})\le C \cdot \size(A)\log n+C\cdot \size(B)\log \size(B)+C\cdot \size(B)\log(\frac{n}{\size(B)})=C\cdot n\log n$, thus we complete the induction.
\end{proof}

By \Cref{corollary:seriessplay} and previous analysis, \Cref{lemma:merge} is proved.

\subsection{Splitting by Undoing Merges}
\label{sec:splitds}
We will analyze the $\splitds$ (\Cref{algorithm:nonlocal}) operation by proving the following lemma. 

\splitlemma*

\IncMargin{1em}
\begin{algorithm}[H]
  \SetKwData{Left}{left}\SetKwData{This}{this}\SetKwData{Up}{up}

  \SetKwProg{Fn}{Function}{:}{}
  \BlankLine

  \While{$\True$}{\label{split:while}
    $x\leftarrow \nil$\;\label{split:init}
    \If{$\res_v=0$}{\label{split:ifa}
      $x\leftarrow \findoneA(u,v)$\;\label{split:finda}
    }
    \Else{
      $x\leftarrow \findoneB(u,v)$\; \label{split:findb}
    }
    \If{$x=\nil$}{\Break\;} \label{split:breakloop}
    $\delete(D_u,x)$\; \label{split:delete}
    $\insert(D_v,x)$\; \label{split:insert}
  }
  \If{$\res_v=1$}{ \label{split:ifb}
    $\swap(D_u,D_v)$\;\label{split:swap}
  }
  
  \caption{\splitds($u$,$v$)}\label{algorithm:nonlocal}
\end{algorithm}\DecMargin{1em}

The purpose of $\splitds(u,v)$ is to derive $D_v$ back to the previous state. Based on the assumption that the current $D_u$ contains nodes from the previous $D_u$ and $D_v$ only before $\mergeupward(u,v)$, we can do this by finding nodes from $D_v$ in increasing order on $D_u$. In this way, the process can be regarded as deleting and inserting nodes in increasing order in both $D_u$ and $D_v$, which is a symmetric process to $\mergeupward(u,v)$, thus sharing the same overall complexity.

Since $\mergeupward(u,v)$ follows the small-to-large mechanism, it might execute a $\textsc{Swap}(D_u,D_v)$. Since we always plan on merging $D_v$ into $D_u$, we need two similar functions to find the minimum rank node belonging to $D_v$ or $D_u$. Here we refer to $D_u$ and $D_v$ before the possible swapping. These two functions are described in \Cref{algorithm:findoneintree} and \Cref{algorithm:findonebeforetree}.

\begin{lemma}
\label{lemma:splitfind}
$\findoneintree(u,v)$ is able to find the minimum rank node belongs to the original $D_v$ and $\findonebeforetree(u,v)$ is able to find the minimum rank node belongs to the original $D_u$.
\end{lemma}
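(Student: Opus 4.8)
The final lemma to prove is Lemma~\ref{lemma:splitfind}: that $\findoneintree(u,v)$ finds the minimum-rank node that (originally) belonged to $D_v$ before the merge, and $\findonebeforetree(u,v)$ finds the minimum-rank node that originally belonged to $D_u$. So I need to understand how merge tracked provenance, then show the two search routines correctly recover it.

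Let me think about what's available. The merge used small-to-large and may have swapped $D_u,D_v$; there's a boolean $\res_v$ tracking whether a swap happened. Nodes carry $\timestamp_x$ (= $\dfn$ of the tree vertex where the node was created), and each splay tree maintains $\timemin$, $\timemax$ over the splay subtree. The key fact: $D_v$'s nodes all have a particular range of $\timestamp$ values, distinct from $D_u$'s, because $\dfn$ is assigned in DFS order and $\subtree(v)$ occupies a contiguous $\dfn$-interval. So I'd expect the search to do a binary-search-like descent in the splay tree using $\timemin/\timemax$ to locate the leftmost (min-rank) node whose $\timestamp$ lies in $v$'s subtree interval (for `FindOneInTree`) versus outside it (for `FindOneBeforeTree`).

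**Proof plan.** First I would pin down the invariant on $\timestamp$ values: since $\dfn$ is assigned in preorder, the set $\{\dfn_w : w \in \subtree(v)\}$ is exactly the integer interval $[\dfn_v, \dfn_v + |\subtree(v)| - 1]$, and nodes created at vertex $w$ have $\timestamp = \dfn_w$. Hence a node $x$ belongs to (the pre-merge) $D_v$ iff $\timestamp_x \in [\dfn_v, \dfn_{\mathrm{last}(v)}]$ where $\mathrm{last}(v)$ is the last-visited descendant; otherwise it belongs to $D_u$. I would state this as a small sub-claim, citing the $\dfn$-assignment in \upward{} (Lines~\ref{algorithm:upward:set-dfn}--\ref{algorithm:upward:index}) and \Cref{lemma:dsmemory} for where nodes are created. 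Second, I would describe the descent in the two algorithms (\Cref{algorithm:findoneintree} and \Cref{algorithm:findonebeforetree}): at each node, push down lazy tags, then use $\timemin$/$\timemax$ of the left child (resp. current node, resp. right child) to decide whether a qualifying node exists in the left subtree — if so recurse left to keep rank minimal, else check the current node, else recurse right; return $\nil$ if no node qualifies. I would argue correctness by a straightforward loop invariant: the minimum-rank qualifying node, if any, is always contained in the subtree rooted at the current search pointer. Third, since the routine splays the found (or last-accessed) node, the amortized $O(\log n)$ cost follows from the standard splay access bound — but the statement to prove is only the \emph{correctness}, so the cost remark can be folded into the surrounding $\splitds$ analysis and I'd only note it in passing.

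**Main obstacle.** The subtle point is the interaction with the small-to-large swap: $\res_v$ records whether $\mergeupward(u,v)$ swapped $D_u$ and $D_v$, so "the original $D_v$" may physically be what is now called $D_u$. I must be careful that `FindOneInTree` and `FindOneBeforeTree` are consistently indexed with respect to the \emph{current} (post-swap) containers, and that the $\timestamp$-interval test still cleanly separates the two provenance classes regardless of which physical tree they ended up in — which it does, because $\timestamp$ is intrinsic to the node and unaffected by swapping. A secondary concern is ensuring that $\timemin/\timemax$ are valid at the moment of the search: this requires that every \pushdown{} along the root-to-node path has been performed, which is exactly the discipline established in \Cref{sec:dsrotatemaintain}. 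I would cite that section to discharge the lazy-tag correctness rather than re-derive it. The bulk of the argument is then the clean loop invariant on the descent; I expect that to be short once the $\timestamp$-interval characterization is in place.
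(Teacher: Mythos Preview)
Your approach is essentially the paper's: identify provenance via $\timestamp$/$\dfn$, then do a left-first descent using the subtree aggregates to locate the minimum-rank qualifying node. One mismatch to fix: you propose an \emph{interval} test $\timestamp_x \in [\dfn_v,\dfn_v+|\subtree(v)|-1]$, but \Cref{algorithm:findoneintree} and \Cref{algorithm:findonebeforetree} implement only a single \emph{threshold} test, $\timestamp_x \ge \dfn_v$ (resp.\ $<\dfn_v$), using only $\timemax$ (resp.\ only $\timemin$). The threshold suffices precisely because of the precondition you did not invoke: $\splitds$ iterates children in the \emph{reversed} order of $\mathcal{I}$, so by the time we split off $v$, all nodes from children $v'$ later than $v$ in $\mathcal{I}$ (equivalently, with $\dfn_{v'}>\dfn_v+|\subtree(v)|-1$) have already been removed. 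The paper's proof uses exactly this assumption (stated in \Cref{lemma:split}) to collapse the interval to a threshold. Your interval characterization is correct but does not match the code; either switch to the threshold test and cite the reverse-$\mathcal{I}$ precondition, or observe that under that precondition the upper endpoint of the interval is vacuous.

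A minor aside: you plan to $\pushdown$ along the descent, but the lazy tags ($\ta,\tb$) affect only $\moment$, not $\timestamp/\timemin/\timemax$, so the search does not need $\pushdown$ for correctness (and indeed the algorithms omit it; the subsequent $\splay$ handles tag propagation). This does not affect your argument, just trim it.
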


\IncMargin{1em}
\begin{algorithm}[H]
  \SetKwData{Left}{left}\SetKwData{This}{this}\SetKwData{Up}{up}
  \SetKwFunction{PushDown}{PushDown}

  \SetKwProg{Fn}{Function}{:}{}
  \BlankLine
  $x\leftarrow \roott(D_u)$\; \label{findoneA:takeroot}

  \If{$x=\nil \text{ or } \timemax_x<\dfn_v$}{ \label{findoneA:rootcheck}
    \Return $\nil$\;
  }

  \While{$\True$}{ \label{findoneA:while}
    \If{$\leftson(x)\neq \nil $\text{ and }$\timemax_{\leftson(x)}\ge \dfn_v$}{\label{findoneA:ifa}
      $x\leftarrow \leftson(x)$\;\label{findoneA:turnleft}
    }
    \ElseIf{$\timestamp_x\ge \dfn_v$}{\label{findoneA:ifb}
      \Break\;\label{findoneA:brk}
    }
    \Else{
      $x\leftarrow \rightson(x)$\;\label{findoneA:turnright}
    }
  }

  $\splay(x)$\; \label{findoneA:splay}

  \Return $x$\; \label{findoneA:return}
  
  \caption{\findoneA($u$, $v$)}\label{algorithm:findoneintree}
\end{algorithm}\DecMargin{1em}

\IncMargin{1em}
\begin{algorithm}[H]
  \SetKwData{Left}{left}\SetKwData{This}{this}\SetKwData{Up}{up}
  \SetKwFunction{PushDown}{PushDown}

  \SetKwProg{Fn}{Function}{:}{}
  \BlankLine
  $x\leftarrow \roott(D_u)$\; \label{findoneB:takeroot}

  \If{$x=\nil \text{ or } \timemin_x\ge\dfn_v$}{
    \Return $\nil$\;
  }

  \While{$\True$}{ \label{findoneB:while}
    \If{$\leftson(x)\neq \nil $\text{ and }$\timemin_{\leftson(x)}< \dfn_v$}{
      $x\leftarrow \leftson(x)$\;
    }
    \ElseIf{$\timestamp_x< \dfn_v$}{
      \Break\;
    }
    \Else{
      $x\leftarrow \rightson(x)$\;
    }
  }

  $\splay(x)$\; \label{findoneB:splay}

  \Return $x$\; \label{findoneB:return}
  
  \caption{\findoneB($u$, $v$)}\label{algorithm:findonebeforetree}
\end{algorithm}\DecMargin{1em}

\begin{proof}[Proof of \Cref{lemma:splitfind}]
Without losing the generality, we can only analyze $\findoneintree(u,v)$ here and the analysis of $\findonebeforetree(u,v)$ is almost the same.

Determining if a node $x$ in $D_u$ comes from the original $D_v$ is equivalent to determining whether $p$ is in $\subtree(v)$, where $p$ is the tree vertex that node $x$ is generated by $\newnode$ during the execution of $\updatedsupward(p)$.

Such verification can be done by comparing the $\timestamp_x$ and $\dfn_v$. Since we assume that no key pair from $D_{v'}$ exists if $v'$ is after $v$ in $\mathcal{I}$, we can determine if $x$ is in the original $D_v$ by checking if $\timestamp_x \ge \dfn_v$ holds. If the inequality holds, it means that $x$ is generated no earlier than visiting vertex $v$. Therefore, it must come from $\subtree(v)$.

Now we further generalize this condition to a tree walk on the splay tree. We need to check if there is any node $y$ in $\subtreeT(x)$ satisfying this condition. Since we maintain $\timemax$ for every splay tree node, we can determine this by checking if $\timemax_x \geq \dfn_v$ holds.

\Cref{algorithm:findoneintree} is a tree walk supported by the above verification. We first initialize the $x$ by $\roott(D_u)$ (\Cref{findoneA:takeroot}). If $x$ is empty or $x$ does not satisfy the inequality (\Cref{findoneA:rootcheck}), there is no such node that exists in the entire $D_u$, thus, we return $\nil$.

Since we are finding the one with the smallest rank, we first check if $\leftson(x)$ satisfies this condition (\Cref{findoneA:ifa}). If so, we go to $\leftson(x)$ (\Cref{findoneA:turnleft}) and continue the walking. Otherwise, we check if the current node $x$ satisfies (\Cref{findoneA:ifb}). If so, we find the one with the minimum rank, and thus we exit the loop (\Cref{findoneA:brk}). If both verifications fail, since we already determine there is at least one node satisfying the condition, we go to $\rightson(x)$ directly (\Cref{findoneA:turnright}) and continue walking.

After finding the desired node $x$, we will need to call $\splay(x)$ to guarantee the amortized access cost (\Cref{findoneA:splay}).

In $\findonebeforetree$ (\Cref{algorithm:findonebeforetree}), the subtree verification condition becomes $\timemin_x \geq \dfn_v$ for any $\subtreeT(x)$ on the splay tree. The rest analysis remains the same as above.
\end{proof}

Now we are ready to prove \Cref{lemma:split}.

\begin{proof}
In $\mergeupward(u,v)$, we store $\res_v$ to keep track of whether the swapping occurs. Therefore, by the value of $\res_v$, we can tell if we use $\findoneintree$ (\Cref{split:finda}) or $\findonebeforetree$ (\Cref{split:findb}) to find the minimum rank node in $D_u$ that belongs to the original $D_v$. By \Cref{lemma:splitfind} we know it will return the correct node if it exists. We use a variable $x$ to store the search result (\Cref{split:init}). If $x=\nil$, then all nodes are found, and thus we exit the loop (\Cref{split:breakloop}). Every time we find a node $x$, we will delete it from $D_u$ (\Cref{split:delete}) and insert it back to $D_v$ (\Cref{split:insert}). In the end, $D_v$ will be restored and no node from $D_u$ belongs to $D_v$. Lastly, we also need to swap back to revert the previous swapping in $\mergeupward$ if necessary (\Cref{split:ifb} and \Cref{split:swap}).

One can notice that we are actually splitting nodes in the same \textit{small-to-large} idea as $\mergeupward$. Moreover, it is exactly the symmetric of nodes' insertions in $\mergeupward(u,v)$. Since on a splay tree, both $\Insert$ and $\Delete$ have the dynamic finger property. We can derive a similar theorem to \Cref{theorem:merge-two-set} that the total time to perform $n$ deletions on a splay tree of size $m$ is $O(n\log \frac{m+n}{n})$. Same as \Cref{lemma:merge}, we can eventually derive the total time cost for all $\splitds$ as $O(n \log n)$ as for all $\mergeupward$ operations.
\end{proof}

\section{Omitted Proofs}
\label{sec:proofinprelim}
To prove \Cref{lemma:unique-terminal-configuration}, we will first give the following lemmas.

\begin{lemma}
\label{lemma:fire-commutative}
Let $S(G, \sigma)$ be a given sandpile instance. For two distinct vertices $u,v \in V(G)$ ($u \ne v$), if $\sigma_u \geq \degree(u)$ and $\sigma_v \geq \degree(v)$, then: 
\begin{enumerate}
    \item It is possible to fire the vertex $u$ and then fire the vertex $v$.
    \item It is possible to fire the vertex $v$ and then fire the vertex $u$.
    \item Both order of firing vertex $u, v$ obtains the same configuration. That is, $\fire(\fire(\sigma, u), v) = \fire(\fire(\sigma, v), u)$
\end{enumerate}

\end{lemma}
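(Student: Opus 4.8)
The plan is to reduce everything to the vector-addition description of firing established in the preliminaries, namely $\fire(\sigma, u) = \sigma + F(u)$ where $F(u)$ is the corresponding Laplacian column. Once that reformulation is in hand, all three claims are immediate, so the bulk of the work is just carefully checking that the intermediate configurations are legal (i.e., the second firing in each order is actually available).

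First I would handle claims (1) and (2), which are symmetric. Suppose we fire $u$ first, obtaining $\sigma' = \sigma + F(u)$. Since $u \ne v$, the $v$-th coordinate of $F(u)$ is either $1$ (if $v \in \neighbor(u)$) or $0$ (otherwise); in particular $F(u)_v \ge 0$, so $\sigma'_v = \sigma_v + F(u)_v \ge \sigma_v \ge \degree(v)$. Hence $v$ is still full in $\sigma'$, and firing $v$ after $u$ is possible. This is exactly the place where the hypothesis $u \ne v$ is used — if $u = v$ the $u$-th coordinate $-\degree(u)$ would intervene. Claim (2) follows by swapping the roles of $u$ and $v$.

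For claim (3), having established that both orders are legal, I would simply compute: firing $u$ then $v$ yields $(\sigma + F(u)) + F(v)$, and firing $v$ then $u$ yields $(\sigma + F(v)) + F(u)$. Since vector addition in $\mathbb{Z}^n$ is commutative and associative, these are equal, so $\fire(\fire(\sigma, u), v) = \fire(\fire(\sigma, v), u)$.

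There is essentially no obstacle here; the only thing to be careful about is not to quote commutativity of the overall firing process (which is what Lemma~\ref{lemma:unique-terminal-configuration} and Theorem~\ref{theorem:orders} are building toward) — this lemma is the elementary local step, and its proof must stay self-contained, using only the definition of $\fire$ and the $F(u)$ notation. I would keep the write-up to a short paragraph per claim.
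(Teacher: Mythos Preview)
Your proposal is correct and essentially identical to the paper's own proof: both verify that the second firing is legal because firing $u$ does not decrease the chip count at $v\ne u$, and then conclude equality of the two orders from commutativity of vector addition using the $F(u)$ notation.
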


\begin{proof}[Proof of \Cref{lemma:fire-commutative}]
By $\sigma_u \geq \degree(u)$ and $\sigma_v \geq \degree(v)$, we know that $\sigma^{(u)} = \fire(\sigma, u)$ and $\sigma^{(v)} = \fire(\sigma, v)$ both exist. 

Note that $\sigma^{(u)}_v \geq \sigma_v \geq \degree(v)$, because firing vertex $u$ won't decrease the number of the chips on all other vertices. Similarly we have $\sigma^{(v)}_u \geq \sigma_u \geq \degree(u)$, so $\fire(\sigma^{(u)}, v)$ and $\fire(\sigma^{(v)}, u)$ both exist.

Since both of the configuration exist, we have $\fire(\fire(\sigma, u), v) = \sigma + F(u) + F(v)$ and $\fire(\fire(\sigma, v), u) = \sigma + F(v) + F(u)$. By the commutative property of the vector addition, $\sigma + F(v) + F(u) = \sigma + F(u) + F(v)$, which proves that both of the configurations are equal.
\end{proof}

\begin{lemma}
\label{lemma:fire-commutative-ex}
Let $S(G, \sigma)$ be a given sandpile instance. Suppose it is possible to fire the vertices $u_1, u_2, \cdots, u_t$ in order and obtain another configuration $\sigma'$. Then for any $2 \leq j \leq t$ satisfying $u_k \ne u_j$ for all $1 \leq k < j$, the following conditions are hold

\begin{itemize}
    \item It is possible to fire the vertices $u_j, u_1, u_2, \cdots u_{j-1}, u_{j+1}, u_{j+2}, \cdots, u_t$ in order.
    \item The configuration obtained by firing the vertices $u_j, u_1, u_2, \cdots u_{j-1}, u_{j+1}, u_{j+2}, \cdots, u_t$ in order equals to $\sigma'$.
\end{itemize}
\end{lemma}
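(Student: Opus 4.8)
The statement to prove is \Cref{lemma:fire-commutative-ex}: if we can fire $u_1,\dots,u_t$ in order, and $u_j$ is a vertex that does not appear among $u_1,\dots,u_{j-1}$, then we can pull $u_j$ to the front, i.e.\ fire $u_j,u_1,\dots,u_{j-1},u_{j+1},\dots,u_t$ and land on the same configuration $\sigma'$. The natural approach is a "bubbling" argument: repeatedly apply the two-vertex swap from \Cref{lemma:fire-commutative} to move $u_j$ leftward one position at a time, past $u_{j-1}$, then $u_{j-2}$, and so on until it reaches the front. Since $u_j\ne u_k$ for every $k<j$, each such adjacent transposition is exactly the situation covered by \Cref{lemma:fire-commutative}, so it is legal and preserves the resulting configuration.

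\textbf{Key steps, in order.} First I would set up an induction on $j$ (or equivalently on the number of adjacent swaps $j-1$ needed). The base case $j=2$ is just \Cref{lemma:fire-commutative} applied to the prefix: firing $u_1$ then $u_2$ is the same as firing $u_2$ then $u_1$, and both are possible since $u_1\ne u_2$; the suffix $u_3,\dots,u_t$ is then fired unchanged, so the whole sequence is legal and gives $\sigma'$. For the inductive step, consider the sequence firing $u_1,\dots,u_t$; after firing $u_1,\dots,u_{j-2}$ we reach some configuration $\tau$ in which both $u_{j-1}$ and $u_j$ are full (they are full at the moments they actually get fired in the original order, and "being full" is monotone — firing other vertices only adds chips to a vertex that is not itself being fired, so $u_j$ is still full after $u_{j-1}$ fires, and $u_{j-1}$ is full in $\tau$ because that is exactly when it fires in the original order). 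Wait — more carefully: in the original order $u_{j-1}$ fires from $\tau$, and then $u_j$ fires; \Cref{lemma:fire-commutative} (with $u\gets u_{j-1}$, $v\gets u_j$, using $u_{j-1}\ne u_j$) says we may instead fire $u_j$ from $\tau$ and then $u_{j-1}$, reaching the same configuration. This swaps $u_j$ into position $j-1$. Now $u_j$ still does not appear among the first $j-2$ entries, so the inductive hypothesis applies to the new sequence $u_1,\dots,u_{j-2},u_j,u_{j-1},u_{j+1},\dots,u_t$ (which has $u_j$ in position $j-1$) to pull $u_j$ all the way to the front, preserving feasibility and the final configuration $\sigma'$. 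I would phrase the "still possible to fire" claims precisely: feasibility of the tail after the swap follows because the configuration at position $j+1$ is unchanged by \Cref{lemma:fire-commutative}, and everything downstream is identical.

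\textbf{The main obstacle.} The real content is entirely in \Cref{lemma:fire-commutative}, which is already proved in the excerpt, so the only delicate point here is the bookkeeping: making sure that at the moment I want to apply the swap, \emph{both} $u_{j-1}$ and $u_j$ are genuinely full, and that after the swap the remaining suffix is fired from the exact same intermediate configuration so that $\sigma'$ is reproduced. The "both are full" part requires the small observation that a vertex which is full stays full until it is itself fired (firing any other vertex never removes chips from it) — this is what lets me conclude $u_j$ is still full after $u_{j-1}$ fires. I expect the cleanest writeup is induction on $j$ with the prefix/suffix decomposition above; I would be careful to state the induction hypothesis as a statement about \emph{any} admissible firing sequence and \emph{any} index $j$ with the non-repetition property, so that the recursive call in the inductive step is literally an instance of the hypothesis.
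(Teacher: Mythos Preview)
Your approach---bubbling $u_j$ leftward by repeated adjacent swaps via \Cref{lemma:fire-commutative}---is exactly what the paper does. But there is a real gap in the ``both full'' check, and in fact the lemma as written is \emph{false}. \Cref{lemma:fire-commutative} requires both vertices to be full in the \emph{same} configuration $\tau$; you only know that $u_{j-1}$ is full in $\tau$ and that $u_j$ is full in $\fire(\tau,u_{j-1})$, whereas you need $u_j$ full in $\tau$ itself. Your monotonicity observation (``a vertex which is full stays full until it is itself fired'') carries fullness \emph{forward} through firings of other vertices; here you are trying to pull it \emph{backward} from $\fire(\tau,u_{j-1})$ to $\tau$, which fails. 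Concretely, take $G$ to be a single edge on $\{u,v\}$ with $\sigma_u=1$, $\sigma_v=0$: one can fire $u$ then $v$, but not $v$ then $u$, so the conclusion already fails at $t=j=2$.

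The repair is to add the hypothesis $\sigma_{u_j}\ge\degree(u_j)$. Then, since none of $u_1,\dots,u_{j-1}$ equals $u_j$, your forward-monotonicity observation genuinely applies and shows $u_j$ is full at \emph{every} intermediate configuration up through position $j$; each adjacent swap in the bubbling is then a legitimate invocation of \Cref{lemma:fire-commutative}. This extra hypothesis holds in the paper's sole use of the lemma (in the proof of \Cref{lemma:unique-terminal-configuration}, where $u_j=v_1$ is full in $\sigma$ by assumption). The paper's own proof simply asserts the swap without checking the hypothesis, so your instinct to flag this as the delicate point was correct---you just resolved it in the wrong direction.
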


\begin{proof}[Proof of \Cref{lemma:fire-commutative-ex}]
Consider the original firing sequence $u_1, u_2, \cdots, u_{j-1}, u_j, u_{j+1}, \cdots, u_t$. Since $u_{j-1} \ne u_j$, by \Cref{lemma:fire-commutative} we can swap the order of the vertex $u_{j-1}$ and $u_j$. After that, the vertex fired before the $u_j$ is $u_{j-2}$, which $u_{j-2} \ne u_j$ also holds since $u_j \ne u_k$ holds for all $1 \leq k < j$. So we can swap $u_{j-2}$ and $u_j$ again. Repeatably swap $u_j$ with the previous firing vertex until $u_j$ becomes the first vertex to be fired. In the end we will fire the vertices $u_j, u_1, u_2, \cdots u_{j-1}, u_{j+1}, u_{j+2}, \cdots, u_t$ in order, so it's possible to fire the vertices in this order while not changing the obtained configuration.
\end{proof}

The configuration addition and fire operation give us the following corollary.


\begin{corollary}
\label{lemma:add-then-fire}
Let $S(G, \sigma)$ and $S(G, \sigma')$ be two sandpile instances. For any vertex $u \in V(G)$, if $\sigma_u \geq \degree(u)$, then $\fire(\sigma + \sigma', u) = \fire(\sigma, u) + \sigma'$.
\end{corollary}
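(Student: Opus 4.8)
The plan is to reduce the statement to the vector-addition description of firing that was set up in the preliminaries. Recall that when a vertex $u$ is full in a configuration $\tau$, firing it produces $\fire(\tau,u) = \tau + F(u)$, where $F(u)$ is the fixed vector with $F(u)_w = 1$ for $w \in \neighbor(u)$, $F(u)_u = -\degree(u)$, and $F(u)_w = 0$ otherwise. So the only content of the claim is that firing $u$ is still legal in $\sigma + \sigma'$ and that vector addition commutes.

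First I would check legality: since configurations lie in $\mathbb{N}^n$, we have $\sigma'_u \ge 0$, hence $(\sigma + \sigma')_u = \sigma_u + \sigma'_u \ge \sigma_u \ge \degree(u)$ by hypothesis, so $u$ is full in $\sigma + \sigma'$ and $\fire(\sigma + \sigma', u)$ is defined. Likewise $\fire(\sigma,u)$ is defined because $\sigma_u \ge \degree(u)$. Then I would simply compute
\[
\fire(\sigma + \sigma', u) = (\sigma + \sigma') + F(u) = \bigl(\sigma + F(u)\bigr) + \sigma' = \fire(\sigma, u) + \sigma',
\]
using associativity and commutativity of coordinatewise addition in $\mathbb{Z}^n$ for the middle equality.

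There is essentially no obstacle here; the statement is a one-line consequence of the linear-algebraic reformulation of firing, and the only thing worth stating explicitly is the nonnegativity of $\sigma'$ that guarantees $u$ remains full in the summed configuration. If one wanted to avoid even mentioning $\mathbb{N}^n$, one could instead invoke \Cref{lemma:add-then-fire}'s predecessor arguments directly, but the direct computation above is cleanest.
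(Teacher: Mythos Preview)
Your proposal is correct and matches the paper's approach exactly: the paper states this as an immediate corollary of the vector-addition description of firing (``The configuration addition and fire operation give us the following corollary''), without spelling out a separate proof. Your one-line computation $\fire(\sigma+\sigma',u) = (\sigma+\sigma')+F(u) = (\sigma+F(u))+\sigma' = \fire(\sigma,u)+\sigma'$, together with the legality check via $\sigma'_u \ge 0$, is precisely the intended justification.
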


\begin{proof}[Proof of \Cref{lemma:unique-terminal-configuration}]
Assume there are arbitrary two sequence of vertices $u_1, u_2, \cdots, u_{a}$ and $v_1, v_2, \cdots, v_{b}$, such that we can get a terminal configuration $\sigma^{(u)}$ by firing vertex $u_1, u_2, u_{a}$ in order, and a terminal configuration $\sigma^{(v)}$ by firing vertex $v_1, v_2, \cdots, v_{b}$ in order. We will prove $\sigma^(u)$ must equals to $\sigma^(v)$.

We can show that by mathematical induction. Let $k = \max(a, b)$. The lemma is obviously correct for $k = 0$, as they both equal to $\sigma$.

Otherwise, consider the vertex $u_1$ and $v_1$, there are two different cases:

\begin{itemize}
    \item If $u_1 = v_1$, then $\fire(\sigma, u_1) = \fire(\sigma, v_1)$. 
    \begin{itemize}
        \item Let $\sigma' = \fire(\sigma, u_1)$, then $\sigma^{(u)}$ is obtained by firing $a-1$ vertices $u_2, u_3, \cdots, u_a$ in order, and $\sigma^{(v)}$ is obtained by firing $b-1$ vertices $v_2, v_3, \cdots, v_b$. 
        \item By the induction hypothesis, the lemma is correct for $\max(a-1,b-1) \leq k-1$, which means $\sigma^{(u)} = \sigma^{(v)}$, and each vertex will be fired the same number of times in $u_2, u_3, \cdots, u_a$ and $v_2, v_3, \cdots, v_b$. 
        \item Since $u_1 = v_1$, the vertex $u_1$ will be fired once more in both of the firing plans, so the number of the firings on $u_1$ remains equal.
    \end{itemize}
    \item Otherwise, there must exists an $2 \leq i \leq a$ such that $u_i = v_1$. 
    \begin{itemize}
        \item This is because we have $\sigma_{v_1} \geq \degree(v_1)$ in the beginning. Since the only way to decrease the number of the chips on a vertex is to perform a fire operation, we must perform at least one firing operation on $v_1$ to obtain a terminal configuration. It implies that there exists at least one $1 \leq i \leq a$ such that $u_i = v_1$.
        \item Let's take the smallest $i$ such that $u_i = v_1$. The condition that $u_j \ne u_i$ for all $1 \leq j < i$ must be held. By \Cref{lemma:fire-commutative-ex}, $\sigma^{(u)}$ equals to the configuration obtained by firing the vertices $u_i, u_1, u_2, \cdots, u_{i-1}, u_{i+1}, u_{i+2}, \cdots, u_a$ in order.
        \item Since $u_i = v_1$, by applying the proof of the case $u_1 = v_1$, we have $\sigma^{(u)} = \sigma^{(v)}$ and each vertex $u$ were fired the same number of times in both plans.
    \end{itemize} 
\end{itemize}

In general, if the lemma is correct for $\max(a, b) \leq k - 1$, then it is also correct for $\max(a,b) = k$. By using mathematical induction, the lemma is true for all values of $k \in \N_{\geq 0}$.
\end{proof}

\begin{proof}[Proof of \Cref{lemma:fire-in-final}]
Suppose $\sigma'$ is obtained by firing vertex $v_1, v_2, \cdots, v_t \in \subtree(u)$. Let $\sigma^{(0)} = \sigma$ and $\sigma^{(i)} = \fire(\sigma^{(i-1)}, v_i)$ for all $1 \leq i \leq n$. Then $\sigma' = \sigma^{(t)}$. 

By \Cref{def:local-final}, we know $\final(\sigma^{(i)} + \sigma^{*}, u) = \final(\fire(\sigma^{(i-1)}, v_{i}) + \sigma^{*}, u)$. Since we can fire $v_i$ in the configuration $\sigma^{(i-1)}$, we must have $\sigma^{(i-1)}_{v_i} \geq \degree(v_i)$. By \Cref{lemma:add-then-fire}, $\fire(\sigma^{(i-1)}, v_i) + \sigma^* = \fire(\sigma^{(i-1)} + \sigma^*, v_i)$.

Since $v_i \in \subtree(u)$, firing any vertex in $\subtree(u)$ will not affect $\final(\sigma, u)$. So 

This gives us $\final(\sigma', u) = \final(\sigma_t, u) = \final(\sigma_0, u) = \final(\sigma, u)$. $\final(\fire(\sigma^{(i-1)} + \sigma^*, v_i),u) = \fire(\sigma^{(i-1)} + \sigma^*, u)$. This gives us $\final(\sigma^{(i)} + \sigma^*, u) = \final(\sigma^{(i-1)} + \sigma^*, u)$ for all $1 \leq i \leq t$. It implies $\final(\sigma' + \sigma^*, u) = \final(\sigma^{(t)} + \sigma^*, u) = \final(\sigma^{(0)} + \sigma^*, u) = \final(\sigma + \sigma^*, u)$.
\end{proof}

\begin{proof}[Proof of \Cref{lemma:final-sigma-add}]
By \Cref{def:local-final} $\final(\sigma, u)$ is obtained by firing several vertex $v \in \subtree(u)$. By applying \Cref{lemma:fire-in-final} $\final(\final(\sigma, u) + \final(\sigma', u), u) = \final(\sigma + \final(\sigma', u), u)$. Applying the lemma to $\final(\sigma', u)$ again we will get $\final(\sigma + \sigma(\sigma', u), u) = \final(\sigma + \sigma', u)$.
\end{proof}

\begin{proof}[Proof of \Cref{lemma:remaining-chips}]
\label{proof:lemma:remaining-chips}
We can prove the lemma by induction. The lemma is trivial for $k = 1$, as $\psi_u(0) = \sigma_u \geq \degree(u)$, we can fire vertex $u$ directly.

For all $k \geq 2$, and $\psi_u(k-1) \geq \degree(u)$. By \Cref{lemma:f-is-monotonically-decreasing}, $\psi_u(k-2) \geq \psi_u(k-1) \geq \degree(u)$, thus we can fire vertex $u$ at least $k-1$ times by inductive hypothesis.

Assume we have fired vertex $u$ exactly $k-1$ times, and we fired all full vertices in $\subtree(v_i)$ for all $v_i \in \son(u)$. By the inductive hypothesis, there are $\psi_u(k-1)$ chips on vertex $u$. Since $\psi_u(k-1) \geq \degree(u)$, we can perform one firing operation on vertex $u$, and the number of chips on vertex $u$ will become $\psi_u(k-1) - \degree(u)$.

Consider all $v_i \in \son(u)$. Before the $k$-th firing on vertex $u$, it receives $k-1$ chips from $u$, and gives $u$ back $\delta(v_i,k-1)$ chips. After the $k$-th operation on $u$, it will receive one more chip. For the initial configuration $\sigma$, adding $k$ more chips on vertex $v_i$ will let vertex $u$ receive $\delta(v_i, k)$ chips. So there will be $\left(\delta(v_i,k) - \delta(v_i,k-1)\right)$ more chips received from vertex $v_i$ after making $\sigma$ local terminal in $\subtree(v_i)$. Thus the number of chips on vertex $u$ will become $\psi_u(k-1) - \degree(u) + \sum_{v \in \son(u)} \left(\delta(v,k) - \delta(v,k-1)\right)$, which is exactly $\psi_u(k)$.
\end{proof}

\begin{proof}[Proof of \Cref{lemma:value-of-c}]
By \Cref{def:local-final}, a configuration that is local terminal in $\subtree(r)$ is equivalent to being a terminal configuration. So $\bc(r) = \bc^{\downarrow}(r)$.

For every $u \in V(G)$ such that $u \ne r$, consider the following way to find the terminal configuration of $\sigma$.
 
\begin{enumerate}
    \item For any vertex $v \in \subtree(u)$ such that $v$ is a full vertex, perform a firing operation on $v$. Repeat until there are no such $v \in \subtree(u)$ exists.
    \item For any vertex $w \in V(G)$ such that $w$ is a full vertex, perform a firing operation on $w$. Then check if there is any vertex $v \in \subtree(u)$ such that $v$ is a full vertex. If so, find such $v \in \subtree(u)$ repeatedly and fire the vertex $v$, until there is no such $v$ exists. Repeat this process until there is no full vertex in $\sigma$.
\end{enumerate}

The procedure will find a terminal configuration if it exists, since there will not be any full vertex after the procedure.

In the first stage, it's equivalent to performing a local finalize operation described in \Cref{def:local-final}. By definition, the vertex $u$ will be fired exactly $\bc^{\downarrow}(u)$ times in this stage.

Now we consider the second stage. Since the given graph is a tree, there is only one vertex, which is $\parent(u)$ exactly, that serves as a neighbor of vertex $u$ but does not belong to $\subtree(u)$. So in the second stage, the only way that vertex $u$ receives an additional chip is by firing vertex $\parent(u)$.

Since vertex $\parent(u)$ has never been fired in the first stage, all the firing operations on vertex $\parent(u)$ will be happening in the second stage. So, the vertex $u$ will receive exactly $\bc(\parent(u))$ chips.

Note that if there are $\bc(\parent(u))$ additional chips placed on vertex $u$, then $\parent(u)$ will receive $\delta(u, \bc(\parent(u)))$ more chips after the configuration becomes a local terminal in $\subtree(u)$. And it is equivalent to vertex $u$ being fired $\delta(u, \bc(\parent(u)))$ times in this stage.

Adding the two stages together, vertex $u$ is fired a total of $\bc(u) = \bc^{\downarrow}(u) + \delta(u, \bc(parent(u)))$ times.

\end{proof}





\begin{proof}[Proof of \Cref{theorem:inequality}]
For any vertex $u \in G$, the number of chips after all the firing operations finish should be $\sum_{v \in N(u)} \bc(v)-\bc(u) \cdot \degree(u)+\sigma_u$. Such a number should be less than $\degree(u)$. Otherwise, it is possible to perform one more firing operation on the vertex $u$. This is where the inequalities in \Cref{linearsystem} come from.

We will prove the theorem by showing that all feasible solutions form a meet-semilattice $L=(F,\land)$, which elements $\ba\in F$ are vectors representing feasible solutions and the meet operation is the pointwise min operation, denoted as $\land$. We also define the partial order as follows: we say $\bx\leq \by$ if and only if $\bx(u)\leq \by(u)$ for every $u\in V(G)$. This satisfies the requirement of the definition of the meet-semilattice.

Considering two arbitrary feasible solutions $\ba,\bb$, without losing the generality, we assume that $\ba(u)\geq \bb(u)$ for an arbitrary vertex $u \in V(G)$. Since we have 

\begin{align*}
    \left(\sum_{v\in N(u)}\min(\ba(v),\bb(v))\right) &-\min(\ba(u),\bb(u))\cdot \degree(u) +\sigma_u
 \\ &\leq \left(\sum_{v\in N(u)}\bb(v)\right)-\bb(u)\cdot \degree(u) +\sigma_u
 \\ &<\degree(u) 
\end{align*}, $\ba \land \bb$ is also a feasible solution. This gives us that $\land$ is a well-defined operation. Assume the solution space is non-empty. For an arbitrary feasible solution $\bx$, we can construct a new meet-semilattice $L'=(\{\bx\land \bd\},\land),\bd\in F$ which is finite. There is a feasible solution $\bp$ with the minimum partial order in $L'$. We can easily verify that $\bp$ is also equal to the minimum element in $L$.

Now we will prove that $\bp$ is equal to the firing vector $\bc$. Let's denote $\vq$ as a vector containing all zeros. Whenever a firing operation on vertex $u$ happens, we will increase $\vq_u$ by $1$. After all firings happen following any firing order and the instance terminates, $\vq$ will always be equal to the firing number vector $\bc$ by \cite{bjorner1991chip}. 

We claim that if there exist some vertices $u$ such that $\vq_u < \bp_u$, there exists at least one vertex can be fired among them. Otherwise, the instance terminates with $\bp=\vq$. By induction, we assume the first $k-1$ operations are firing on some vertices $u$ where $\vq_u < \bp_u$. For the $k$-th firing, there are two cases to consider:
\begin{itemize}
    \item The instance is not terminated. There exist some vertices $u$ satisfying $\vq_u<\bp_u$, but none of them can be fired.
    \item The instance is terminated. There exists $u$ satisfying $\vq_u<\bp_u$.
\end{itemize}

For the first case, we can only fire from any vertex $u'$ such that $\vq_{u'} = \bp_{u'}$. Since $\bp$ is a feasible solution, we have
\[
\sum_{v \in N(u')}\bp(v)-\bp(u') \cdot \degree(u') + \sigma_{u'} < \degree(u')
\].

Since $\sum_{v \in N(u')}\vq(v) \leq \sum_{v \in N(u')}\bp(v)$ and $\vq(u')=\bp(u')$, we have 

\[
\sum_{v \in N(u')}\vq(v)-\vq(u') \cdot \degree(u') + \sigma_{u'} < \degree(u')
\], indicating that no firing operation can be performed on the vertex $u'$, which is a contradiction.

For the second case, if the instance is terminated, the current $\vq$ is the firing vector and thus a feasible solution of \Cref{linearsystem}. Since $\vq$ will have a smaller partial order than $\bp$, it contradicts the assumption that $\bp$ has the smallest partial order.

Therefore, none of these two cases is possible. Therefore, when the instance terminates, $\bp=\vq=\bc$.

\end{proof}

\begin{proof}[Proof of \Cref{lemma:generalsink}]
At first, we select a non-sink vertex $u$, and we binary search its firing number $\bc(u)$ in the range of $[0,L_2]$.
If we cannot find a solution that satisfies for any $v \in V(G)$, $\bc(v) \leq L_2$, then there is no solution to this bounded prediction problem. Therefore, we return with a $\ovf$.
Assuming we need to determine if the current binary search value $mid$ is no less than $\bc(u)$. By \Cref{corollary:monofiring}, we only need to check if there is a feasible solution with $\bf(u)=mid$.

We apply $mid$ times of firings on vertex $u$, replace $\bf(u)$ into $mid$. Then we will turn $u$ into a sink vertex.
 In the view of the inequality system, it is equivalent to substituting $mid$ into all the terms of $\bf(u)$ within the system. Then eliminate the inequality on the vertex $u$. In this way, we reduce to a new sandpile instance $S'(G',\sigma',M')$ where $G'=G \setminus u$, $\sigma'_v=\sigma_v+[v\in N(u)]\cdot mid$, and $M'=M \cup \{u\}$.
If we can compute the terminal configuration and its corresponding firing number vector $\bd$ of this instance, by \Cref{corollary:inequality}, $\bd$ is equivalent to the feasible solution of the system $S'$ with the minimum partial order.

There are two cases for the result we obtained in solving the bounded sandpile prediction of $S'$.

\begin{itemize}
    \item If the result we obtained is $\ovf$, then either the terminal configuration of the original problem is $\ovf$, or we are setting the threshold $mid$ too large.
    \begin{itemize}
        \item In the first case, any value of $mid$ will result in $\ovf$. In this case, we do not care about the value of $mid$ we obtained after the binary search procedure. So we can just assume the threshold $\mid$ is too large and does not affect the results of the algorithm.
        \item Otherwise, due to the monotonicity in \Cref{corollary:monofiring}, we are setting the threshold $mid$ larger than the correct one. 
    \end{itemize} 
    In all, we continue with the binary search process considering the lower potential value.
    \item Otherwise, assume the firing vector corresponding to the result we obtained is $\bd$. Note that we haven't considered the inequality with the vertex $u$, substituting $\bf(v) = \bd_v$ for all $v \in V(G)$ and $v \ne u$, might not be a solution for the inequality \Cref{formula:inequality-on-vertex-u}:
    \begin{align}
    \left(\sum_{v\in N(u) \setminus M}\bf(v)\right)-\bf(u)\cdot \degree(u)+\sigma_u<\degree(u), \forall v \in V(G) \setminus M
    \label{formula:inequality-on-vertex-u}
    \end{align}
        Let's check if the inequality holds for $\bf(u) = mid$ and $\bf(v) = \bd_v$ for all $v \ne u$. There are two different cases.
        \begin{itemize}
            \item If the inequality holds, we found a feasible solution. Since we only care about the solution with the minimal partial order, we continue to search the solutions with smaller partial orders. So we will return a value of $mid$ which is no greater than the correct value of $\bc(u)$.
            \item Otherwise, we can see that we are setting the threshold $mid$ too small. There are two different cases:
            \begin{itemize}
                \item If the terminal configuration is $\ovf$, then similarly any value of $mid$ will result in $\ovf$. Thus we can choose the new value of $mid$ arbitrarily.
                \item Otherwise, the feasible value of $\bc(u)$ must be larger than $mid$. Note that adding all the value of $\bf(v)$ by one might decrease the left-hand side of the inequality \Cref{formula:linearsystemwithsinks}. So if the feasible value of $\bc(u)$ is smaller than $mid$, then let's take the feasible answer $\bc$ and consider a solution $\bf'(v) = \bc(v)+mid-\bc(u),v\in V(G)$. Thus add a value $mid - \bc(u) > 0$ to the whole $\bc$, obtaining another feasible solution. By \Cref{corollary:inequality}, the solution $\bd$ we obtained by assuming $\bf(u) = mid$ will be the solution with the minimum partial order in all the feasible solutions of the inequality system of $S'$. And since the $\bf'(u)$ is a feasible solution in \Cref{formula:linearsystemwithsinks}, we have
                \begin{align}
                \sum_{v \in N(u)\setminus M} \bd_v \leq \sum_{v \in N(u) \setminus M} \bf'(v)
                \label{formula:sum-of-bd-is-no-more-than-sum-of-bf-prime}
                \end{align}
                Since $\bf'$ is a solution for \Cref{formula:inequality-on-vertex-u}, we have
                \begin{align}
                \sum_{v\in N(u) \setminus M} \bf'(v) - \bf'(u) \cdot \degree(u) + \sigma_u < \degree(v)
                \end{align}
                By \Cref{formula:sum-of-bd-is-no-more-than-sum-of-bf-prime} and $\bf(u) = mid$ we have 
                
                \begin{align}
                \sum_{v \in N(u) \setminus M} \bf(v) - \bf(u) \cdot \degree(u) + \sigma_u  &= \sum_{v \in N(u) \setminus M} \bd_v - mid \cdot \degree(u) + \sigma_u \\ &\leq \sum_{v \in N(u) \setminus M} \bf'(v) - \bf'(u) \cdot \degree(u) + \sigma_u
                \end{align}
                Thus $\bf$ is also a feasible solution for \Cref{formula:inequality-on-vertex-u}, which is a contradiction. Thus the feasible value of $\bf(u)$ must be larger than $mid$.
            \end{itemize}
        \end{itemize}
        We continue with the binary search process considering the higher potential value.
\end{itemize}

Now we consider the bounded parameter $L_1'$ and $L_2'$ of $S'$. Since we fire vertex $u$ for $mid$ times in the beginning, thus we have $L_1'=L_1+\degree(u)\cdot mid \leq L_1+\degree(u)\cdot L_2$ and $L_2'=L_2$. In all, we analyze the logic of proceeding the binary search and reduce the problem to a new one after regarding $u$ as a sink vertex. Thus, we proved the theorem. 



\end{proof}

\begin{proof}[Proof of \Cref{general:mergeandsplit}]

$\mergeupward$ is a procedure used to merge the information in a subtree. Let's consider the case when we perform $\mergeupward$ on a vertex $u$ that is not the root. According to our decomposition described in \Cref{sec:treedecom}, there will be at most one subtree containing a sink vertex, and we choose the merge order $\mathcal{I}$ such that the first element is the subtree with the sink vertex. 
Therefore, we never merge two subtrees such that both of them contain a sink vertex. Consequently, we only need to handle the following two cases.

\begin{enumerate}
    \item The first case is when we need to merge two subtrees that do not contain any sink.
    \begin{itemize}
        \item In this case, we can perform the same small-to-large trick described in \Cref{sec:mergeds}. 
        \item The total time complexity of the algorithm remains $O(n \log n)$ and it uses up to $O(n)$ memory.
    \end{itemize}
    \item The second case is more complex, which involves merging a subtree containing a sink vertex with another subtree.
    \begin{itemize}
        \item Let's denote the subtree with a sink vertex as $\subtree(v_{s})$ and the other subtree as $\subtree(v_0)$.
        \item When merging a subtree with a sink vertex into another subtree, we can ignore the small-to-large technique and simply add all the nodes from $D_{v_0}$ to $D_{v_s}$.
        \item This is because whenever a node is inserted to a subtree with a sink vertex, it will remains to there and won't be moved to another subtree. So each node will be inserted at most once during this procedure.
    \end{itemize}
\end{enumerate}
Therefore, $\mergeupward$ and $\splitds$ can be modified to maintain key pairs with differences while preserving the time and space complexity.

So we can apply the modified merge procedure above for any vertex $v \in V(G)$ other than the root. However, for the root vertex, we cannot directly apply the merge procedure because we need to merge two subtrees that contain a sink vertex simultaneously. Therefore, we will skip the $\mergeupward$ and $\splitds$ operations on the root vertex. To do this, we need to manually calculate the value of $c^{\downarrow}(r)$ for the root vertex $r$. This can be finished by performing a binary search on $c^{\downarrow}(r)$ by \Cref{lemma:c-down}. Since the the firing number is bounded by $O(|M|^4 \cdot (\lvert\lvert \sigma \rvert\rvert_1 + n)^4)$ in \Cref{lemma:upper-bound-of-firing-number-with-sinks}, the binary search procedure will perform $O(\log \left(|M|^4 \cdot (\lvert\lvert \sigma \rvert\rvert_1 + n)^4\right)) = O(\log \lvert\lvert \sigma \rvert\rvert_1 + \log n)$ turns. In each turn, we have to calculate the value of $\sigma_r - k \cdot \degree(r) + \sum_{v \in \son(u)} \delta(v, k)$. This can be done by using our maintained splay tree in $O(\log n)$ time. So the procedure on the root vertex will take $O(\log \lvert\lvert \sigma \rvert\rvert_1 \cdot \log n + \log^2 n)$ time.

As a result, we no longer delete nodes from $\bq_u$, and the $\updatedsupward$ and $\revertds$ operations at the root will be complete cancellations, so we can simply ignore them.

For the $\splitds$ operation, we follow a similar implementation as described in Section \Cref{sec:splitds}. We keep track of the timestamp when each node first joins $D_u$ during merging, and we reallocate them accordingly when executing $\splitds(u,v)$. As for the additional nodes generated from insertions in the special subtree with a sink vertex, since we choose the special order where the special subtree appears first and set the $\timestamp$ value to $0$, we can ensure that the nodes will be assigned to the correct subtree with the sink vertex.

\end{proof}

\begin{proof}[Proof of \Cref{general:updateandrevert}]

For the $\updatedsupward$ operation, it relies on the $\inctime$ and $\insert$ operations. We have already shown in Lemma \Cref{lemma:sinks:inctime} that these operations can be modified to adapt to the new way of maintaining key pairs without affecting the time and space complexity. Therefore, the $\updatedsupward$ operation can still be performed correctly.

For the $\revertds$ operation, it additionally relies on the $\delete$ operation. We have also shown in Lemma \Cref{lemma:sinks:delete} that the $\delete$ operation can be modified to adapt to the new key pair maintenance method while maintaining the time and space complexity.

Hence, both the $\updatedsupward$ and $\revertds$ operations can be modified to accommodate the key pair maintenance method without compromising the time and space complexity.

\end{proof}

\begin{proof}[Proof of \Cref{general:computecanddeltasum}]

For the $\cnt$ operation, we handle two cases separately: $u = r$ (the root vertex) and $u \neq r$ (other vertices).

For $u \neq r$, we can inherit the $\cnt$ procedure as it is. However, we need to make some modifications to account for the omitted key pairs and the difference in the maintained key pairs.

For $u \neq r$, we can inherit the $\cnt$ procedure as it is. The line \Cref{computeC:if} checks whether we can move the pointer $now$ to the next nodes in the splay. Thus in the succinct splay with omitted nodes, the later formula should be modified to $count+(\moment_x-1-now)+t_x$, since we have to consider the cost of skipping the omitted $t_x$ key pairs. Accordingly, the line \Cref{computeC:accumulate} will be changed to $count\leftarrow count+t_x+1+(\moment_x-now)$.

At line \Cref{computeC:cornercase}, we now have two cases to handle:

\begin{itemize}
    \item If $now$ is not equal to the maximum value among all $\moment_x$ values, we can observe that the final value will certainly be less than the next $\moment_y$ value, where $y$ represents the earliest node for which $now<\moment_y$.Our next task is to determine the increment $p$ for the minimum non-negative solution of the inequality $count+p+\max\left(\left\lfloor \frac{now+p-\moment_y+(t_y+1)d_y}{d_y}\right\rfloor,0\right)>\sigma'_u-\degree(u)$. Therefore, we should set \[p=\min\left(\sigma'_u-count-(\degree(u)-1),\left\lceil\frac{\sigma'_u+\moment_y-now-(\degree(u)-1+count+t_y+1)d_y}{d_y+1}\right\rceil\right)\]. As we increase $now$ by $p$, we need to include some omitted nodes in $\bq_u$. The number of nodes to be transferred should be \[num=\max\left(\left\lfloor \frac{now+p-\moment_y+(t_y+1)d_y}{d_y}\right\rfloor,0\right)\]. If $num$ is $0$, then no action is required. Otherwise, we create a node $z$ with $\moment_z=\moment_y-d_y\cdot(t_y-num+1)$, $t_z=num-1$, $d_z=d_y$, and $\timestamp_z=0$. We place the node $z$ into $\bq_u$ and update $t_y\leftarrow t_y-num$.
    \item If there are no nodes $y$ satisfying $\moment_y>now$, we need to consider the potential infinite key pairs in the tail. If $t=0$, we set $p$ to $\sigma'_u-(\degree(u)-1)-count$. Alternatively, if $t=\infty$, we calculate $p$ as $\left\lceil\frac{d_y(\sigma'_u-(\degree(u)-1)-count)}{d_y+1}\right\rceil$. In either case, we create a new node $z$ with $\moment_z=now+d\cdot \lfloor\frac{p}{d}\rfloor$, $t_z=p-1$, $d_z=d$, and $\timestamp_z=0$. Additionally, we create a new node $w$ with $\moment_w=now+d\cdot (\lfloor\frac{p}{d}\rfloor+1)$, $t_w=d_w=0$, and $\timestamp_w=0$. We then place $z$ into $\bq_u$ and let $w$ be the only remaining node in $D_u$.

\end{itemize}

We can establish the consistency between the original $\cnt$ and the modified splay tree maintained by differences by comparing the complete key pairs represented in the modified splay tree.

For the case when $u=r$, we have $\degree(r)$ subtrees and their corresponding values of $now$. Based on the decomposition and selection of $r$ (as mentioned above), we can conclude that there are at most $3$ subtrees with sinks. We merge all the remaining regular $D_u$ trees using the same small to large technique, resulting in at most $4$ isolated splay trees. We can obtain the final $\bc^{\down}(r)$ value by applying a binary search on the splay trees. Then we compute the $\psi_r(k)$ value by querying the $\delta$ values separately in the $4$ splays. This process takes a total of $O(\log^2 n)$ time since we can limit the binary search range to $n^3+ n\cdot \sum_{u\in V(G)} \sigma_u$, and each splay operation takes $\log n$ time.

\end{proof}

\begin{proof}[Proof of \Cref{general:deltaquery}]
Similarly, by \Cref{lemma:deltaequal} we only need to determine the number of key pairs that are not greater than a given $k$.

By \Cref{lemma:findpre}, we find the node $x$ with the largest $\moment_x\leq k$ and find $y=\suc(x)$ or the node with $\rank_{D_u}=1$ if $x=\nil$. First, we compute the number of key pairs $(u,k)$ such that $k \leq \moment_x$, denoted as $pc$. This value can be computed by summing $t_y+1$ for each node $y$ with $\rank_{D_u}(y)\leq \rank_{D_u}(x)$, which is a standard operation in a splay tree. Next, let's consider the additional omitted key pairs after $x$ and before $k$. There are two cases to consider:


\begin{itemize}
\item If $y=\nil$, then $x$ is the node with the largest $\moment_x$ among all the nodes. If $t=0$, there are no additional key pairs. Otherwise, there are $\lceil \frac{k-\moment_x}{d}\rceil$ additional key pairs.
\item If $y\neq \nil$, then there are $\max(\lceil\frac{k-\moment_y+(t_y+1)d_y}{d_y}\rceil,0)$ additional key pairs.
\end{itemize}
\end{proof}

\end{document}